\newcommand\CMLLPAR{
\usepackage{cmll}
\newcommand\IPar{\mathord{\parr}}
}
\newcommand*{\inlineeq}[2][]{%
  \begingroup
    % Put \refstepcounter at the beginning, because
    % package `hyperref' sets the anchor here.
    \refstepcounter{equation}%
    \ifx\\#1\\%
    \else
      \label{#1}%
    \fi
    % prevent line breaks inside equation
    \relpenalty=10000 %
    \binoppenalty=10000 %
    \ensuremath{%
      % \displaystyle % larger fractions, ...
      #2%
    }%
    ~\@eqnnum
  \endgroup
}
\theoremstyle{definition}
\newtheorem{definition}{Definition}[section]
\newtheorem{remark}{Remark}[section]
\theoremstyle{theorem}
\newtheorem{lemma}{Lemma}[section]
\newtheorem{theorem}{Theorem}[section]
\newtheorem{proposition}{Proposition}[section]
\newcounter{examplectr}
\newenvironment{Example}{%
   \bigbreak\noindent%
   \refstepcounter{examplectr}%
   \textbf{$\blacktriangleright$\ Example \theexamplectr.\ }%
   }{\ \hfill$\blacktriangleleft$\par\bigbreak}
\numberwithin{examplectr}{section}
\newenvironment{Axicond}[1]
{\smallbreak\noindent{#1}\,}
{\smallbreak}
\newcommand\Proofcase{\smallbreak\noindent$\triangleright$\ }
\newcommand{\Endproof}{
  \ifmmode % if math mode, assume display: omit penalty etc.
  \else \leavevmode\unskip\penalty9999 \hbox{}\nobreak\hfill
  \fi
  \quad\hbox{$\Box$}
  \par\medskip}
\newcommand\Eqref[1]{(\ref{#1})}
\renewcommand{\phi}{\varphi}
\renewcommand\epsilon{\varepsilon}
\newcommand{\Implies}{\Rightarrow}
\newcommand\Equiv{\Leftrightarrow}
\newcommand{\St}{\mid}
\renewcommand{\Bot}{{\mathord{\perp}}}
\newcommand{\Top}{\top}
\newcommand\cB{\mathcal{B}}
\newcommand\cL{\mathcal{L}}
\newcommand\Fini{{\mathrm{fin}}}
\newcommand\Union{\bigcup}
\newcommand{\Linarrow}{\multimap}
\newcommand\Myleft{}
\newcommand\Myright{}
\newcommand\Web[1]{\Myleft|{#1}\Myright|}
\newcommand\Supp[1]{\operatorname{\mathsf{supp}}({#1})}
\newcommand\Emptymset{[\,]}
\newcommand\Mset[1]{[{#1}]}
\newcommand\Cl[1]{\mbox{\textrm{Cl}}({#1})}
\newcommand\CohName{\mathbf{Coh}}
\newcommand\NCohName{\mathbf{NCoh}}
\newcommand\COH{\CohName}
\newcommand\NCOH{\NCohName}
\newcommand\ITens{\otimes}
\newcommand\Tens[2]{{#1}\ITens{#2}}
\newcommand\Tensp[2]{({#1}\ITens{#2})}
\newcommand\IWith{\mathrel{\&}}
\newcommand\With[2]{{#1}\IWith{#2}}
\newcommand\Withp[2]{\left({#1}\IWith{#2}\right)}
\newcommand\IPlus{\oplus}
\newcommand\Plus[2]{{#1}\IPlus{#2}}
\newcommand\Orth[2][]{#2^{\Bot_{#1}}}
\newcommand\Bwith{\mathop{\&}}
\newcommand\Inj[1]{\overline\pi_{#1}}
\newcommand\One{1}
\newcommand\Card[1]{\#{#1}}
\newcommand\Locun[1]{1^J}
\newcommand\Isom\simeq
\newcommand\NUCS{\mathbf{nCoh}}
\newcommand\Comp{\mathrel\circ}
\newcommand\PSET{\mathbf{Set}_0}
\newcommand\Limpl[2]{{#1}\Linarrow{#2}}
\newcommand\Limplp[2]{\left({#1}\Linarrow{#2}\right)}
\newcommand\Nat{{\mathbb{N}}}
\newcommand\Bool{\mathbf{Bool}}
\newcommand\True{\mathbf t}
\newcommand\False{\mathbf f}
\newcommand\Diff[3]{\mathrm D_{#1}{#2}\cdot{#3}}
\newcommand\App[2]{({#1}){#2}}
\newcommand\Abst[3]{\lambda#1^{#2}\,{#3}}
\newcommand\Diffp[3]{\frac{\partial{#1}}{\partial{#2}}\cdot{#3}}
\newcommand\Diffpev[4]{\frac{\partial{#1}}{\partial{#2}}(#3)\cdot{#4}}
\newcommand\Derp[3]{\frac{\partial{#1}}{\partial{#2}}\cdot{#3}}
\newcommand\Derd[3]{\frac{d{#1}}{d{#2}}\cdot{#3}}
\newcommand\Derdn[4]{\frac{d^{#1}{#2}}{d{#3}^{#1}}\cdot{#4}}
\newcommand\Derdev[4]{\frac{d{#1}}{d{#2}}(#3)\cdot{#4}}
\newcommand\List[3]{#1_{#2},\dots,#1_{#3}}
\newcommand\Subst[3]{{#1}\left[{#2}/{#3}\right]}
\newcommand\Real{\mathbb{R}}
\newcommand\Realp{\mathbb{R}_{\geq 0}}
\newcommand\Intercc[2]{[#1,#2]}
\newcommand\Mfin[1]{\mathcal M_\Fini({#1})}
\newcommand\Ev{\operatorname{\mathsf{Ev}}}
\newcommand\Evlin{\operatorname{\mathsf{ev}}}
\newcommand\REL{\operatorname{\mathbf{Rel}}}
\newcommand\Norm[1]{\|{#1}\|}
\newcommand\Red{\beta_\Delta}
\newcommand\Redst[1]{\mathop{\mathsf{Red}}}
\newcommand\Symgrp[1]{\mathfrak S_{#1}}
\newcommand\Tuple[1]{\langle{#1}\rangle}
\newcommand\Cotuple[1]{\left[{#1}\right]}
\newcommand\Msetofsubst[1]{\bar F}
\newcommand\Inv[1]{{#1}^{-1}}
\newcommand\Invp[1]{({#1})^{-1}}
\newcommand\PCOH{\mathbf{Pcoh}}
\newcommand\Leftu{\lambda}
\newcommand\Rightu{\rho}
\newcommand\Assoc{\alpha}
\newcommand\Sym{\gamma}
\newcommand\Absval[1]{\left|{#1}\right|}
\newcommand\Msetsum[1]{\Sigma{#1}}
\newcommand\Retri\zeta
\newcommand\Retrp\rho
\newcommand\Impl[2]{{#1}\Rightarrow{#2}}
\newcommand\Implp[2]{({#1}\Rightarrow{#2})}
\newcommand\Tnat\iota
\newcommand\Num[1]{\underline{#1}}
\newcommand\Loop\Omega
\newcommand\Tseq[3]{{#1}\vdash{#2}:{#3}}
\newcommand\Timpl\Impl
\newcommand\Timplp\Implp
\newcommand\Simpl\Impl
\newcommand\Weak[1]{\operatorname{\mathsf{weak}}_{#1}}
\newcommand\Contr[1]{\operatorname{\mathsf{contr}}_{#1}}
\newcommand\Der[1]{\operatorname{\mathsf{der}}_{#1}}
\newcommand\Digg[1]{\operatorname{\mathsf{dig}}_{#1}}
\newcommand\Fun[1]{\widehat{#1}}
\newcommand\Id{\operatorname{\mathsf{Id}}}
\newcommand\Proj[1]{\mathsf{pr}_{#1}}
\newcommand\Excl[1]{\oc{#1}}
\newcommand\Excll[1]{\oc\oc{#1}}
\newcommand\Exclp[1]{\oc({#1})}
\newcommand\Relincl\eta
\newcommand\Relrestr\rho
\newcommand\Seely{\mathsf m}
\newcommand\Seelyz{\Seely^0}
\newcommand\Seelyt{\Seely^2}
\newcommand\Monoidal{\mu}
\newcommand\Monz{\Monoidal^0}
\newcommand\Mont{\Monoidal^2}
\newcommand\Compl{\,}
\newcommand\Curlin{\operatorname{\mathsf{cur}}}
\newcommand\Curlinp[1]{\Curlin(#1)}
\newcommand\Cur{\operatorname{\mathsf{Cur}}}
\newcommand\Kl[1]{{#1}_\oc}
\newcommand\Em[1]{{#1}^\oc}
\newcommand\Coalgw[1]{\mathsf w_{#1}}
\newcommand\Coalgc[1]{\mathsf c_{#1}}
\newcommand\Bnfeq{\mathrel{\mathord:\mathord=}}
\newcommand\Bnfor{\,\,\mathord|\,\,}
\newcommand\Eset[1]{\{#1\}}
\newcommand\Sfun{\mathbf S}
\newcommand\Scfun{\Sfun_{\Into}}
\newcommand\Sproj[1]{\pi_{#1}}
\newcommand\Sin[1]{\iota_{#1}}
\newcommand\Ssum{\sigma}
\newcommand\Sflip{\mathsf c}
\newcommand\Scflip{\Sym_{\Into,\Into}}
\newcommand\Stuple[1]{\Tuple{#1}_{\Sfun}}
\newcommand\Smont{\mathsf L}
\newcommand\Scmont{\widetilde{\Smont}}
\newcommand\Scmontn[1]{\widetilde{\Smont}^{(#1)}}
\newcommand\Sstr{\phi^1}
\newcommand\Sstrs{\phi^0}
\newcommand\Sstrc{\phi^{\multimap}}
\newcommand\Sdiff{\partial}
\newcommand\Sdfun{\mathsf D}
\newcommand\Coh[3]{#2\coh_{#1}#3}
\newcommand\Scoh[3]{#2\scoh_{#1}#3}
\newcommand\Incoh[3]{#2\incoh_{#1}#3}
\newcommand\Sincoh[3]{#2\sincoh_{#1}#3}
\newcommand\Neu[3]{#2\equiv_{#1}#3}
\newcommand\Matappa[2]{{#1}\cdot{#2}}
\newcommand\Saxcom{(\textbf{S-com})}
\newcommand\Saxzero{(\textbf{S-zero})}
\newcommand\Saxass{(\textbf{S-assoc})}
\newcommand\Saxwit{(\textbf{S-witness})}
\newcommand\Saxdist{(\textbf{S$\ITens$-dist})}
\newcommand\Saxfun{(\textbf{S$\ITens$-fun})}
\newcommand\Saxprod{(\textbf{S$\IWith$-pres})}
\newcommand\Csaxepi{(\textbf{CS-epi})}
\newcommand\Ccsaxepi{(\textbf{CCS-epi})}
\newcommand\Sone{\One}
\newcommand\Sonelem{\ast}
\newcommand\Win[1]{\Inj{#1}}
\newcommand\Wdiag{\Delta}
\newcommand\Into{\mathsf I}
\newcommand\Diffofst[1]{{#1}^{+}}
\newcommand\Stofdiff[1]{{#1}^{-}}
\newcommand\Sdiffst{\widetilde\partial}
\newcommand\Sdiffca{\delta}
\newcommand\Diffrac[2]{\frac{d #1}{d #2}}
\newcommand\Sfunadd{\tau}
\newcommand\Tinto{\widetilde\Sfun_\Into}
\newcommand\Adj{\mathrel{\dashv}}
\newcommand\Ftunit{\eta}
\newcommand\Ftcounit{\epsilon}
\newcommand\Daxchain{($\partial$\textbf{-chain})}
\newcommand\Daxlocal{($\partial$\textbf{-local})}
\newcommand\Daxclocal{(\textbf{C}$\partial$\textbf{-local})}
\newcommand\Daxclin{(\textbf{C}$\partial$\textbf{-lin})}
\newcommand\Daxlin{($\partial$\textbf{-lin})}
\newcommand\Daxwith{($\partial$\textbf{-}$\mathord{\IWith}$)}
\newcommand\Daxcwith{(\textbf{C}$\partial$\textbf{-}$\mathord{\IWith}$)}
\newcommand\Daxschwarz{($\partial$\textbf{-Schwarz})}
\newcommand\Daxcschwarz{(\textbf{C}$\partial$\textbf{-Schwarz})}
\newcommand\Daxcchain{(\textbf{C}$\partial$\textbf{-chain})}
\newcommand\Daxcalocal{($\partial$\textbf{ca-local})}
\newcommand\Daxcalin{($\partial$\textbf{ca-lin})}
\newcommand\Treesep{\quad\quad}
\newcommand\Textsep{\hspace{6em}}
\newcommand\Formsep{\hspace{8em}}
\newcommand\Sdfunit{\zeta}
\newcommand\Sdfmult{\theta}
\newcommand\Sdfstr{\psi^1}
\newcommand\Sdfstrs{\psi^0}
\newcommand\Kllin{\mathsf{Lin}_{\mathord\oc}}
\newcommand\Tdiffsymb{\mathsf D}
\newcommand\Tdiff[1]{\Tdiffsymb{#1}}
\newcommand\Tdiffm[2]{\Tdiffsymb^{#1}{#2}}
\newcommand\Tdnat[1]{\Tdiffsymb^{#1}\Tnat}
\newcommand\Lprojd[3]{\Sproj{#1}^{#2}(#3)}
\newcommand\Lprojs[1]{\Sproj{#1}}
\newcommand\Linj[2]{\Sin{#1}(#2)}
\newcommand\Linjd[3]{\Sin{#1}^{#2}(#3)}
\newcommand\Lsumd[2]{\Sdfmult^{#1}(#2)}
\newcommand\Lflipd[2]{\Sflip^{#1}(#2)}
\newcommand\Lsums{\Sdfmult}
\newcommand\Ldlet[3]{\partial(#1,#2,#3)}
\newcommand\Lzero{0}
\newcommand\Lplus[2]{#1+#2}
\newcommand\Ldiff[1]{\Tdiff{#1}}
\newcommand\Ldiffp[1]{\Tdiff({#1})}
\newcommand\Lfix[1]{\mathsf Y#1}
\renewcommand\Red{\to}
\newcommand\Tbundle[1]{\mathsf T#1}
\newcommand\Linred{\to_{\mathsf{lin}}}
\newcommand\Scfunnt[1]{\mathsf{nt}(#1)}
\newcommand\Coalgca[1]{\underline{#1}}
\newcommand\Coalgm[1]{h_{#1}}
\newcommand\Projc[1]{\Proj{#1}^\ITens}
\newcommand\Tuplec[1]{\Tuple{#1}^\ITens}
\newcommand\Comonca[1]{\underline{#1}}
\newcommand\Comonw[1]{\mathsf w_{#1}}
\newcommand\Comonc[1]{\mathsf c_{#1}}
\newcommand\Cm[1]{{#1}^{\ITens}}
\newcommand\Calgofcmon{\mathsf A}
\newcommand\Cmonofcalg{\mathsf M}
\newcommand\Cproj[1]{\Proj{#1}^\ITens}
\newcommand\Tot[2]{{#1}^{\mathord\otimes #2}}
\title{Coherent differentiation}
\author{Thomas Ehrhard\\
IRIF, CNRS and Université de Paris}
\begin{document}

\maketitle

\begin{abstract}
The categorical models of the differential lambda-calculus are additive categories because of the Leibniz rule which requires the summation of two expressions. This means that, as far as the differential lambda-calculus and differential linear logic are concerned, these models feature finite non-determinism and indeed these languages are essentially non-deterministic. We introduce a categorical framework for differentiation which does not require additivity and is compatible with deterministic models such as coherence spaces and probabilistic models such as probabilistic coherence spaces. Based on this semantics we sketch the syntax of a deterministic version of the differential lambda-calculus.

% https://popl22.hotcrp.com/
% ehrhard@irif.fr
% !POPL1904?

\end{abstract}

\section*{Introduction}
The differential $\lambda$-calculus has been introduced
in~\cite{EhrhardRegnier02}, starting from earlier investigations on
the semantics of Linear Logic (LL) in models based on various kinds of
topological vector spaces~\cite{Ehrhard00b,Ehrhard00c}. Later on we
proposed in~\cite{EhrhardRegnier04a,Ehrhard18} an extension of LL
featuring differential operations which appear as an additional
structure of the exponentials (the resource modalities of LL),
offering a perfect duality to the standard rules of dereliction,
weakening and contraction. The differential $\lambda$-calculus and
differential LL are about computing formal derivatives of programs and
from this point of view are deeply connected to the kind of formal
differentiation of programs used in Machine Learning for propagating
gradients (that is, differentials viewed as vectors of partial
derivatives) within formal neural networks. As shown by the
recent~\cite{BrunelMazzaPagani20,MazzaPagani21} formal transformations
of programs related to the differential $\lambda$-calculus can be used
for efficiently implementing gradient back-propagation in a purely
functional framework.  The differential $\lambda$-calculus and the
differential linear logic are also useful as the foundation for an
approach to finite approximations of programs based on the Taylor
expansion~\cite{EhrhardRegnier06a,BarbarossaManzonetto20} which
provides a precise analysis of the use of resources during the
execution of a functional program deeply related with implementations
of the $\lambda$-calculus in abstract machines such as the Krivine
Machine~\cite{EhrhardRegnier06b}.

One should insist on the fact that in the differential
$\lambda$-calculus derivatives are not taken wrt.~to a ground type of
real numbers as in~\cite{BrunelMazzaPagani20,MazzaPagani21} but can be
computed wrt.~elements of all types. For instance it makes sense to
compute the derivative of a function $M:{\Timplp\Tnat\Tnat}\to\Tnat$
wrt.~its argument which is a function from $\Tnat$, the type of
integers, to itself, thus suggesting the possibility of using this
formalism for optimization purposes in a model such as probabilistic
coherence spaces~\cite{DanosEhrhard08} (PCS) where a program of type
$\Tnat\to\Tnat$ is seen as an analytic function transforming
probability distributions on the integers. In~\cite{Ehrhard19} it is
also shown how such derivatives can be used to compute the expectation
of the number of steps in the execution of a program. A major obstacle
on the extension of programming languages with such derivatives is the
fact that probabilistic coherence spaces are not a model of the
differential $\lambda$-calculus in spite of the fact that the
morphisms, being analytic, are obviously differentiable. The main goal
of this paper being to circumvent this obstacle, let us first
understand it better.

These differential extensions of the $\lambda$-calculus and of LL
require the possibility of adding terms of the same type. For
instance, to define the operational semantics of the differential
$\lambda$-calculus, given a term $t$ such that $\Tseq{x:A}tB$ and a
term $u$ such that $\Tseq\Gamma uA$ one has to define a term
$\Diffp txu$ such that $\Tseq{\Gamma,x:A}{\Diffp txu}B$ which can be
understood as a \emph{linear substitution} of $u$ for $x$ in $t$ and
is actually a formal differentiation: $x$ has no reason to occur
linearly in $t$ so this operation involves the creation of linear
occurrences of $x$ in $t$ and this is done applying the rules of
ordinary differential calculus. The most important case is when $t$ is
an application $t=\App{t_1}{t_2}$ where
$\Tseq{\Gamma,x:A}{t_1}{\Impl CB}$ and $\Tseq{\Gamma,x:A}{t_2}{C}$. In
that case we set
\begin{align*}
  \Diffp{\App{t_1}{t_2}}{x}{u}
  =\App{\Diffp{t_1}xu}{t_2}
  +\App{\Diff{}{t_1}{(\Diffp{t_2}xu)}}{t_2}
\end{align*}
where we use \emph{differential application} which is a syntactic
construct of the language: given $\Tseq\Gamma s{\Impl CB}$ and
$\Tseq\Gamma vC$, we have $\Tseq\Gamma{\Diff{}sv}{\Impl CB}$.  This
crucial definition involves a sum corresponding to the fact that $x$
can appear free in $t_1$ and in $t_2$: this is the essence of the
``Leibniz rule'' $(fg)'=f'g+fg'$ which has nothing to do with
multiplication but everything with the fact that both $f$ and $g$ can
have non-zero derivatives wrt.~a common variable they share (logically
this sharing is implemented by a contraction rule).

For this reason the syntax of the differential $\lambda$-calculi and
linear logic features an addition operation on terms of the same type
and accordingly the categorical models of these formalisms are based
on additive categories. Operationally such sums correspond to a form
of finite non-determinism: for instance if the language has a ground
type of integers $\Tnat$ with constants $\Num n$ such that
$\Tseq\Gamma{\Num n}\Tnat$ for each $n\in\Nat$, we are allowed to
consider sums such as $\Num{42}+\Num{57}$ corresponding to the
non-deterministic superposition of the two integers (and not at all to
their sum $\Num{99}$ in the usual sense!). This can be considered as a
weakness of this approach since, even if one has nothing against
non-determinism \emph{per se} it is not satisfactory to be obliged to
enforce it for allowing differential operations which have nothing to
do with it \emph{a priori}.  So the fundamental question is:
\begin{quote}
  Does every logical approach to differentiation require non-determinism?
\end{quote}
We ground our negative answer to this question on the observation made
in~\cite{Ehrhard19} that, in the category of PCS, morphisms of the
associated cartesian closed category are analytic functions and
therefore admit all iterated derivatives (at least in the ``interior''
of the domain where they are defined).  Consider for instance in this
category an analytic $f:\One\to\One$ where $\One$ (the $\ITens$ unit
of LL) is the $[0,1]$ interval, meaning that
$f(x)=\sum_{n=0}^\infty a_nx^n$ with coefficient $a_n\in\Realp$ such
% TYPO
% that $\sum_{n=0}^\infty a_nx^n\leq 1$. The derivative
that $\sum_{n=0}^\infty a_n\leq 1$. The derivative
$f'(x)=\sum_{n=0}^\infty (n+1)a_{n+1}x^n$ has no reason to map $[0,1]$
to $[0,1]$ and can even be unbounded on $[0,1)$ and undefined at $x=1$
(and there are programs whose interpretation behaves in that
way). Though, if $(x,u)\in[0,1]^2$ satisfy $x+u\in[0,1]$ then
$f(x)+f'(x)u\leq f(x+u)\in[0,1]$. This is true actually of any
analytic morphism $f$ between two PCSs $X$ and $Y$: we can see the
differential of $f$ as mapping a summable pair $(x,u)$ of elements of
$X$ to the summable pair $(f(x),f'(x)\cdot u)$ of elements of
$Y$. Seeing the differential as such a pair of functions is central in
differential geometry as it allows, thanks to the chain rule, to turn
it into a \emph{functor} mapping a smooth map $f:X\to Y$ (where $X$
and $Y$ are now manifolds) to the function
$\Tbundle f:\Tbundle X\to\Tbundle Y$ which maps $(x,u)$ to
$(f(x),f'(x)\cdot u)$ where $\Tbundle X$ is the tangent bundle of $X$,
a manifold whose elements are the pairs $(x,u)$ of a point $x$ of $X$
and of a vector $u$ tangent to $X$ at $x$. The concept of
\emph{tangent category} has been introduced
in~\cite{Rosicky84,CockettCruttwell14} precisely to describe
categorically this construction and its properties. In spite of this
formal similarity our central concept of summability cannot be
compared with tangent categories in terms of generality, first because
when $(x,u)\in\Tbundle X$ it makes no sense to add $x$ and $u$ or to
consider $u$ alone (independently of $x$), and second because, given
$(x,u_0),(x,u_1)\in\Tbundle X$, the local sum
$(x,u_0+u_1)\in\Tbundle X$ is always defined in the tangent bundle,
whereas in our summability setting, the pair $(u_0,u_1)$ has no reason
to be summable.

\paragraph*{Content.}
We base our approach on a concept of summable pair that we axiomatize
as a general categorical notion in Section~\ref{sec:sum-cat}: a
\emph{summable category} is a category $\cL$ with
$0$-morphisms\footnote{That is, whose hom-sets are pointed sets.}
together with a functor $\Sfun:\cL\to\cL$ equipped with three natural
transformations from $\Sfun X$ to $X$: two projections and a sum
operation. The first projection also exists in the ``tangent bundle''
functor of a tangent category but the two other morphisms do not. Such
a summability structure induces a monad structure on $\Sfun$ (a
similar phenomenon occurs in tangent categories).
In Section~\ref{sec:sum-moncat} we consider the case where the
category is a cartesian SMC equipped with a resource comonad $\Excl\_$
in the sense of LL where we present differentiation as a distributive
law between the monad $\Sfun$ and the comonad $\Excl\_$. This allows
to extend $\Sfun$ to a strong monad $\Sdfun$ on the Kleisli category
$\Kl\cL$ which implements differentiation of non-linear maps.
In Section~\ref{sec:canonical-sum} we study the case where the functor
$\Sfun$ can be defined using a more basic structure of $\cL$ based on
the object $\With\Sone\Sone$ where $\IWith$ is the cartesian product
and $\Sone$ is the unit of $\ITens$: this is actually what happens in
the concrete situations we have in mind. Then the existence of the
summability structure becomes a \emph{property} of $\cL$ and not an
additional structure. We also study the differential structure in this
setting, showing that it boils down to a simple $\oc$-coalgebra
structure on $\With\Sone\Sone$. 

As a running example along the presentation of our categorical
constructions we use the category of coherence spaces, the first model
of LL historically~\cite{Girard87}. There are many reasons for this
choice. It is one of the most popular models of LL and of functional
languages, it is a typical example of a model of LL which is not an
additive category (in contrast with the relational model or the models
of profunctors), \emph{a priori} it does not exhibit the usual
features of a model of the differential calculus (no coefficients, no
vector spaces etc) and it strongly suggests that our coherent approach
to the differential $\lambda$-calculus might be applied to programming
languages which have nothing to do with probabilites, deep learning
or non-determinism.
In Section~\ref{sec:coh-diff-str} we describe the differential
structure of the coherence space model, showing that it provides an
example of a canonically summable differential category. We observe
that, in the \emph{uniform} setting of Girard's coherence space, our
differentiation does not satisfy the Taylor formula but that this
formula will hold if we use instead \emph{non-uniform} coherence
spaces of which we describe the differential structure.

In Section~\ref{sec:SMCC-summability} we consider the situation where
the underlying SMC is closed, that is, has internal hom objects. In
that case an additional condition on the summability structure is
required, expressing intuitively that the sum of two morphisms is
computed pointwise.

Last in Section~\ref{sec:syntax} we outline a syntax for a
differential $\lambda$-calculus corresponding to this semantics. This
concluding section should only be considered as an appetizer for a
more consistent paper on a differential and deterministic extension of
PCF which will be available soon.

\paragraph*{Related work.}
As already mentioned our approach has strong similarities with tangent
categories which have been a major source of inspiration, we explained
above the differences. There are also strong connections with
differential categories~\cite{BluteCockettLemaySeely20} with the main
difference again that differential categories are left-additive which
is generally not the case of $\Kl\cL$ in our case, we explained
why. There are also interesting similarities
with~\cite{CockettLemayLucyshyn20} (still in an additive setting): our
distributive law $\Sdiff_X$ might play a role similar to the one of
the distributive law introduced in the Section~5 of that paper. This
needs further investigations.

Recently~\cite{KerjeanPedrot20} have exhibited a striking connection
between Gödel's Dialectica interpretation and the differential
$\lambda$-calculus and differential linear logic, with applications to
gradient back-propagation in differential programming. One distinctive
feature of Pédrot's approach to Dialectica~\cite{Pedrot15} is to use a
``multiset parameterized type'' $\mathfrak M$ whose purpose is
apparently to provide some control on the summations allowed when
performing Pédrot's analogue of the Leibniz rule (under the
Dialectica/differential correspondence of~\cite{KerjeanPedrot20}) and
might therefore play a role similar to our summability functor
$\Sfun$. The precise technical connection is not clear at all but we
believe that this analogy will lead to a unified framework for
Dialectica interpretation and coherent differentiation of programs and
proofs involving denotational semantics, proof theory and differential
programming.

The differential $\lambda$-calculus that we obtain in
Section~\ref{sec:syntax} features strong similarities with the
calculus introduced in~\cite{BrunelMazzaPagani20,MazzaPagani21} for
dealing with gradient propagation in a functional setting. Both
calculi handle tuples of terms in the spirit of tangent categories
which allows to make the chain rule functorial thus allowing to reduce
differential terms without creating explicit summations.

% Several proofs which are not in the main text can be found in the Appendix.

\section{Preliminary notions and results}
This section provides some more or less standard technical material
useful to understand the paper. It can be skipped and used when
useful, in call-by-need manner.

\subsection{Finite multisets}
A finite multiset on a set $A$ is a function $m:A\to\Nat$ such that
the set $\Supp m=\Eset{a\in A\St m(a)\not=0}$ is finite, we use
$\Mfin A$ for the set of all finite multisets of elements of $A$. The
cardinality of $m$ is $\Card m=\sum_{a\in A}m(a)$. We use $\Emptymset$
for the empty multiset (so that $\Supp\Emptymset=\emptyset$ where
$\Supp m=\Eset{a\in A\St m(a)\not=0}$ is the \emph{support} of $m$)
and if $m_0,m_1\in\Mfin A$ then $m_0+m_1\in\Mfin A$ is defined by
$(m_0+m_1)(a)=m_0(a)+m_1(a)$. If $\List a1n\in A$ we use
$\Mset{\List a1n}$ for the $m\in\Mfin A$ such that $m(a)$ is the
number of $i\in\{1,\dots,n\}$ such that $a_i=a$. If
$m=\Mset{\List a1n}\in\Mfin A$ and $p=\Mset{\List b1p}\in\Mfin B$ then
$m\times p=\Mset{(a_i,b_j)\St i\in\Eset{1,\dots ,n}\text{ and
  }j\in\Eset{1,\dots,p}}\in\Mfin{A\times B}$. If
$M=\Mset{\List m0n}\in\Mfin{\Mfin A}$ we set
$\Msetsum M=\sum_{i=0}^nm_i\in\Mfin A$.

\subsection{The SMCC of pointed sets}\label{sec:pointed-sets}
Let $\PSET$ be the category of pointed sets. We use $0_X$ or simply
$0$ for the distinguished point of the object $X$. A morphism
$f\in\PSET(X,Y)$ is a function $f:X\to Y$ such that $f(0_X)=0_Y$.  The
terminal object is the singleton $\{0\}$. The cartesian product
$\With XY$ is the ordinary cartesian product, with
$0_{\With XY}=(0_X,0_Y)$. The tensor product $\Tens XY$ is defined as
\begin{align*}
  \Tens XY=\{(x,y)\in X\times Y\St x=0\Equiv y=0\}
\end{align*}
with $0_{\Tens XY}=(0_X,0_Y)$. The unit of the tensor product is the
object $\Sone=\{0,\Sonelem\}$ of $\PSET$. This category is enriched
over itself, the distinguished point of $\PSET(X,Y)$ being the
constantly $0_Y$ function. Actually, it is monoidal closed with
$\Limpl XY=\PSET(X,Y)$ and $0_{\Limpl XY}$ defined by
$0_{\Limpl XY}(x)=0_Y$ for all $x\in X$. A mono in $\PSET$ is a
morphism of $\PSET$ which is injective as a function.

Unless explicitly stipulated, all the categories $\cL$ we consider in
this paper are enriched over pointed sets, so this assumption will not
be mentioned any more. In the case of symmetric monoidal categories,
this also means that the tensor product of morphisms is ``bilinear''
wrt.~the pointed structure, that is: if $f\in\cL(X_0,Y_0)$ then
$\Tens f0=0\in\cL(\Tens{X_0}{X_1},\Tens{Y_0}{Y_1})$ and by symmetry we
have $\Tens 0f=0$.

\subsection{Monoidal and resource categories} %
\label{sec:resource-cat}
A  symmetric monoidal category (SMC) is a category $\cL$
equipped with a bifunctor $\cL^2\to\cL$ denoted as $\ITens$,
a monoidal unit $\Sone$ which is an object of $\cL$ and  %
$\Leftu_X\in\cL(\Tens\Sone X,X)$, %
$\Rightu_X\in\cL(\Tens X\Sone,X)$, %
$\Assoc_{X_0,X_1,X_2}
\in\cL(\Tens{\Tensp{X_0}{X_1}}{X_2},\Tens{X_0}{\Tensp{X_1}{X_2}})$ %
and %
$\Sym_{X_0,X_1}\in\cL(\Tens{X_0}{X_1},\Tens{X_1}{X_0})$ as associated
isomorphisms satisfying the usual McLane coherence commutations.
Given objects $\List X0{n-1}$ and $i<j$ in $\Eset{0,\dots,n-1}$, we
use %
$\Sym_{i,j}$ for the canonical swapping iso in %
$\cL(X_0\ITens\cdots\ITens X_{n-1},X_0\ITens\cdots\ITens X_{i-1}\ITens
X_j\ITens X_{i+1}\cdots \ITens X_{j-1}\ITens X_i\ITens
X_{j+1}\ITens\cdots\ITens X_{n-1})$.

\subsubsection{Commutative comonoids}

\begin{definition}
  In a SMC $\cL$ (with the usual notations), a commutative comonoid is
  a tuple %
  $C=(\Comonca C,\Comonw C,\Comonc C)$ where %
  $\Comonca C\in\cL$, %
  $\Comonw C\in\cL(\Comonca C,\Sone)$ and %
  $\Comonc C\in\cL(\Comonca C,\Tens{\Comonca C}{\Comonca C})$ %
  are such that the following diagrams commute.
  \[
    \begin{tikzcd}
      \Comonca C
      \ar[r,"\Comonc C"]
      \ar[dr,swap,"\Invp{\Leftu_{\Comonca C}}"]
      &
      \Tens{\Comonca C}{\Comonca C}
      \ar[d,"\Tens{\Comonw C}{\Comonca C}"]
      \\
      &
      \Tens\Sone{\Comonca C}
    \end{tikzcd}
    \Treesep
    \begin{tikzcd}
      \Comonca C
      \ar[r,"\Comonc C"]
      \ar[dr,swap,"\Comonc C"]
      &
      \Tens{\Comonca C}{\Comonca C}
      \ar[d,"\Sym_{\Comonca C,\Comonca C}"]
      \\
      &
      \Tens{\Comonca C}{\Comonca C}
    \end{tikzcd}
    \Treesep
    \begin{tikzcd}
      \Comonca C
      \ar[rr,"\Comonc C"]
      \ar[d,swap,"\Comonc C"]
      &[1em]&[1em]
      \Tens{\Comonca C}{\Comonca C}
      \ar[d,"\Tens{\Comonca C}{\Comonc C}"]
      \\
      \Tens{\Comonca C}{\Comonca C}
      \ar[r,"\Tens{\Comonc C}{\Comonca C}"]
      &
      \Tens{\Tensp{\Comonca C}{\Comonca C}}{\Comonca C}
      \ar[r,"\Assoc_{\Comonca C,\Comonca C,\Comonca C}"]
      &
      \Tens{\Comonca C}{\Tensp{\Comonca C}{\Comonca C}}
    \end{tikzcd}
  \]
  The category $\Cm\cL$ of commutative comonoids has these tuples as
  objects, and an element of $\Cm\cL(C,D)$ is an
  $f\in\cL(\Comonca C,\Comonca D)$ such that the two following
  diagrams commute
  \[
    \begin{tikzcd}
      \Comonca C
      \ar[r,"f"]
      \ar[rd,swap,"\Comonw C"]
      &
      \Comonca D
      \ar[d,"\Comonw D"]\\
      &
      \Sone
    \end{tikzcd}
    \Treesep
    \begin{tikzcd}
      \Comonca C
      \ar[d,swap,"\Comonc C"]
      \ar[r,"f"]
      &
      \Comonca D
      \ar[d,"\Comonc D"]\\
      \Tens{\Comonca C}{\Comonca C}
      \ar[r,"\Tens ff"]
      &
      \Tens{\Comonca D}{\Comonca D}
    \end{tikzcd}
  \]
\end{definition}

\begin{theorem}\label{th:comon-cat-cart}
  For any SMC $\cL$ the category $\Cm\cL$ is cartesian.
  The terminal object is %
  $(\Sone,\Id_\Sone,\Invp{\Leftu_\Sone})$ (remember that %
  $\Leftu_\Sone=\Rightu_\Sone$) simply denoted as $\Sone$ %
  and for any object $C$ the unique morphism $C\to\Sone$ is %
  $\Comonw C$.
  The cartesian product of $C_0,C_1\in\Cm\cL$ is %
  the object $\Tens{C_0}{C_1}$ of $\Cm\cL$ such that %
  $\Comonca{\Tens{C_0}{C_1}}=\Tens{\Comonca{C_0}}{\Comonca{C_1}}$ and
  the structure maps are defined as
  \[
    \begin{tikzcd}
      \Tens{\Comonca{C_0}}{\Comonca{C_1}}
      \ar[r,"\Tens{\Comonw{C_0}}{\Comonw{C_1}}"]
      &[2em]
      \Tens\Sone\Sone
      \ar[r,"\Leftu_\Sone"]
      &
      \Sone
      \\[-1em]
      \Tens{\Comonca{C_0}}{\Comonca{C_1}}
      \ar[r,"\Tens{\Comonc{C_0}}{\Comonc{C_1}}"]
      &
      \Tens{\Tens{\Comonca{C_0}}{\Comonca{C_0}}}
      {\Tens{\Comonca{C_1}}{\Comonca{C_1}}}
      \ar[r,"\Sym_{2,3}"]
      &
      \Tens{\Tens{\Comonca{C_0}}{\Comonca{C_1}}}
      {\Tens{\Comonca{C_0}}{\Comonca{C_1}}}      
    \end{tikzcd}
  \]
  The projections $\Cproj i\in\Cm\cL(\Tens{C_0}{C_1},C_i)$ are given by
  \[
    \begin{tikzcd}
      \Tens{\Comonca{C_0}}{\Comonca{C_1}}
      \ar[r,"\Tens{\Comonw{C_0}}{\Comonca{C_1}}"]
      &[2em]
      \Tens{\Sone}{\Comonca{C_1}}
      \ar[r,"\Leftu_{\Comonca{C_1}}"]
      &
      \Comonca{C_1}\\[-1em]
      \Tens{\Comonca{C_0}}{\Comonca{C_1}}
      \ar[r,"\Tens{\Comonca{C_0}}{\Comonw{C_1}}"]
      &[2em]
      \Tens{\Comonca{C_0}}{\Sone}
      \ar[r,"\Rightu_{\Comonca{C_1}}"]
      &
      \Comonca{C_0}
    \end{tikzcd}\,.
  \]
\end{theorem}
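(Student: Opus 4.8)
The plan is to handle the three assertions --- that $\Sone$ is terminal, that $\Tens{C_0}{C_1}$ carries a comonoid structure, and that it is the cartesian product --- in turn, reducing everything to diagram chases with the coherence isomorphisms and the comonoid axioms. Throughout, the only genuinely used ingredients are naturality of $\Leftu,\Rightu,\Assoc,\Sym$, the pentagon and hexagon coherence equations, and the three comonoid diagrams of $C_0$ and $C_1$.

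First I would dispatch terminality. That $(\Sone,\Id_\Sone,\Invp{\Leftu_\Sone})$ satisfies the three comonoid diagrams is immediate from the unit coherence of $\cL$ (recall $\Leftu_\Sone=\Rightu_\Sone$). To see it is terminal, note that for any comonoid $C$ the map $\Comonw C$ is a comonoid morphism $C\to\Sone$: counit-compatibility is the triviality $\Id_\Sone\Comp\Comonw C=\Comonw C$, and comultiplication-compatibility, $\Invp{\Leftu_\Sone}\Comp\Comonw C=\Tens{\Comonw C}{\Comonw C}\Comp\Comonc C$, follows from the left-counit diagram of $C$ together with naturality of $\Leftu$. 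Uniqueness is the crucial payoff of the definition: if $f\in\Cm\cL(C,\Sone)$, then its counit-compatibility condition reads $\Comonw\Sone\Comp f=\Comonw C$, i.e.\ $\Id_\Sone\Comp f=\Comonw C$, forcing $f=\Comonw C$.

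Next I would show that $\Tens{C_0}{C_1}$, with counit $\Leftu_\Sone\Comp\Tens{\Comonw{C_0}}{\Comonw{C_1}}$ and comultiplication $\Sym_{2,3}\Comp\Tens{\Comonc{C_0}}{\Comonc{C_1}}$, is a comonoid: this is the standard fact that $\ITens$ lifts to the category of cocommutative comonoids of an SMC. The verification is entirely coherence-driven: the counit law follows from the counit diagrams of $C_0,C_1$ and naturality of the unitors; cocommutativity from cocommutativity of each factor and the coherence equations governing $\Sym_{2,3}$; coassociativity is the longest chase, combining coassociativity of the two factors with repeated use of the pentagon and of naturality of $\Assoc$ and $\Sym$. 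Along the way I would check that each $\Cproj i$ is a comonoid morphism, again a direct chase from the counit laws (noting that the right-counit law $\Rightu_{\Comonca C}\Comp\Tens{\Comonca C}{\Comonw C}\Comp\Comonc C=\Id$ follows from the left one via cocommutativity and the unit--symmetry coherence).

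For the universal property I would take, for $g_i\in\Cm\cL(D,C_i)$, the candidate pairing $\Tens{g_0}{g_1}\Comp\Comonc D:D\to\Tens{C_0}{C_1}$. The equation $\Cproj 0\Comp(\Tens{g_0}{g_1}\Comp\Comonc D)=g_0$ unfolds cleanly: since $g_1$ is a comonoid morphism, $\Tens{\Comonca{C_0}}{\Comonw{C_1}}\Comp\Tens{g_0}{g_1}=\Tens{g_0}{\Comonw D}$, then naturality of $\Rightu$ slides $g_0$ out, and the right-counit law of $D$ collapses the rest to $\Id$; the case of $\Cproj 1$ is symmetric. That the pairing is itself a comonoid morphism uses coassociativity and cocommutativity of $D$ together with the fact that each $g_i$ is a morphism. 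Uniqueness I would obtain from the single identity $\Tens{\Cproj 0}{\Cproj 1}\Comp\Comonc{\Tens{C_0}{C_1}}=\Id_{\Tens{C_0}{C_1}}$ (itself a counit-law chase): for any comonoid morphism $h$ with $\Cproj i\Comp h=g_i$ one computes $h=\Tens{\Cproj 0}{\Cproj 1}\Comp\Comonc{\Tens{C_0}{C_1}}\Comp h=\Tens{\Cproj 0}{\Cproj 1}\Comp\Tens hh\Comp\Comonc D=\Tens{g_0}{g_1}\Comp\Comonc D$, the middle step using that $h$ commutes with comultiplication. I expect the main obstacle to be purely bookkeeping: the coassociativity of $\Comonc{\Tens{C_0}{C_1}}$ and the identity $\Tens{\Cproj 0}{\Cproj 1}\Comp\Comonc{\Tens{C_0}{C_1}}=\Id$ each require threading several instances of $\Assoc$, $\Sym_{2,3}$ and the unitors through the pentagon and hexagon, and although coherence guarantees the pasted diagrams commute, care is needed to keep the many reassociations aligned.
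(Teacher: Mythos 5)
Your proof is correct, and it is essentially the argument the paper intends: the paper itself only remarks ``The proof is straightforward,'' and your filling-in is the standard one, with the key points (uniqueness to $\Sone$ forced by counit-compatibility since $\Comonw\Sone=\Id_\Sone$, the pairing $\Tensp{g_0}{g_1}\Compl\Comonc D$, and uniqueness via $\Tensp{\Cproj 0}{\Cproj 1}\Compl\Comonc{\Tens{C_0}{C_1}}=\Id$ together with $h$ commuting with comultiplication) all in place. Your pairing formula also matches the one the paper uses later for the Eilenberg--Moore category (there written $\Tensp{\Der{}}{\Der{}}\Compl\Contr{}\Compl\Coalgm{}$, which is the coalgebra avatar of $\Tensp{g_0}{g_1}\Compl\Comonc D$), confirming it is the intended construction.
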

The proof is straightforward.
In a commutative monoid $M$, multiplication is a monoid morphism
$M\times M\to M$. The following is in the vein of this simple observation.
\begin{lemma}\label{ref:weak-contr-comon-morph}
  If $C\in\Cm\cL$ then %
  $\Comonw C\in\Cm\cL(C,\Sone)$ and
  $\Comonc C\in\Cm\cL(C,\Tens CC)$.
\end{lemma}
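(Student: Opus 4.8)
The plan is to verify, for each of the two morphisms separately, the two commutations that characterise a morphism of $\Cm\cL$ in the definition, reading the structure maps of the terminal object $\Sone$ and of the product $\Tens CC$ directly off Theorem~\ref{th:comon-cat-cart}. Throughout I abbreviate $A=\Comonca C$, $\epsilon=\Comonw C$ and $\delta=\Comonc C$, and I freely use bifunctoriality of $\ITens$, naturality of the structural isos, and the three comonoid axioms of $C$.

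For $\Comonw C\in\Cm\cL(C,\Sone)$, recall $\Sone=(\Sone,\Id_\Sone,\Invp{\Leftu_\Sone})$. The counit square $\Comonw\Sone\circ\epsilon=\epsilon$ is immediate since $\Comonw\Sone=\Id_\Sone$. For the comultiplication square I must show $\Invp{\Leftu_\Sone}\circ\epsilon=(\Tens\epsilon\epsilon)\circ\delta$. I would factor $\Tens\epsilon\epsilon=(\Tens{\Id_\Sone}\epsilon)\circ(\Tens\epsilon A)$, use the counit axiom of $C$ in the form $(\Tens\epsilon A)\circ\delta=\Invp{\Leftu_A}$ (the first comonoid diagram), and close the computation with naturality of $\Leftu$ applied to $\epsilon$, which gives $(\Tens{\Id_\Sone}\epsilon)\circ\Invp{\Leftu_A}=\Invp{\Leftu_\Sone}\circ\epsilon$. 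This settles the first half.

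For $\Comonc C\in\Cm\cL(C,\Tens CC)$, the counit square $\Comonw{\Tens CC}\circ\delta=\epsilon$ unfolds, via $\Comonw{\Tens CC}=\Leftu_\Sone\circ(\Tens\epsilon\epsilon)$, to $\Leftu_\Sone\circ(\Tens\epsilon\epsilon)\circ\delta=\epsilon$, which is exactly the identity just established precomposed with $\Leftu_\Sone$, since $\Leftu_\Sone\circ\Invp{\Leftu_\Sone}=\Id$. The comultiplication square, after unfolding $\Comonc{\Tens CC}=\Sym_{2,3}\circ(\Tens\delta\delta)$, reduces to the single equation
\[
  \Sym_{2,3}\circ(\Tens\delta\delta)\circ\delta=(\Tens\delta\delta)\circ\delta,
\]
which is the one genuinely non-formal step.

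To dispatch it I would read $(\Tens\delta\delta)\circ\delta$ as the balanced fourfold comultiplication of $A$. Using coassociativity (and Mac Lane coherence to absorb the bookkeeping of associators) I would re-bracket it so that its two middle legs, the factors permuted by $\Sym_{2,3}$, appear as the two outputs of a single copy of $\delta$; applying cocommutativity $\Sym_{A,A}\circ\delta=\delta$ to that copy makes the central swap disappear, and re-bracketing back recovers the right-hand side. I expect this equation to be the main obstacle: it is precisely where cocommutativity is indispensable, being the comonoid dual of the fact, recalled just before the statement, that multiplication in a commutative monoid is a monoid morphism, and the only real work is organising the associativity coherences so that $\Sym_{2,3}$ is exposed as an inner symmetry.
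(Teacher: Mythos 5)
Your proof is correct and follows essentially the same route as the paper: both reduce the substantive content to the single commutation $\Sym_{2,3}\Compl(\Tens{\Comonc C}{\Comonc C})\Compl\Comonc C=(\Tens{\Comonc C}{\Comonc C})\Compl\Comonc C$, which the paper attributes to commutativity of $C$ and which you derive (more explicitly, and correctly so, since coassociativity is genuinely needed to expose the middle pair as the output of a single $\Comonc C$). The remaining squares, which the paper dismisses as "similarly trivial," are exactly the counit/naturality verifications you spell out.
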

\begin{proof}
  The second statement amounts to the following commutation
  \[
    \begin{tikzcd}
      \Comonca C
      \ar[rr,"\Comonc C"]
      \ar[d,swap,"\Comonc C"]
      &[0.4em]&[-0.8em]
      \Tens{\Comonca C}{\Comonca C}
      \ar[d,"\Tens{\Comonc C}{\Comonc C}"]
      \\
      \Tens{\Comonca C}{\Comonca C}
      \ar[r,"\Tens{\Comonc C}{\Comonc C}"]
      &
      \Tens{\Tens{\Comonca C}{\Comonca C}}{\Tens{\Comonca C}{\Comonca C}}
      \ar[r,"\Sym_{2,3}"]
      &
      \Tens{\Tens{\Comonca C}{\Comonca C}}{\Tens{\Comonca C}{\Comonca C}}
    \end{tikzcd}
  \]
  which results from the commutativity of $C$. The first statement is
  similarly trivial.
\end{proof}

\subsubsection{Resource categories}
The notion of resource category is more general than that of a Seely
category in the sense of~\cite{Mellies09}. We keep only the part of
the structure and axioms that we need to define our notion of
differential structure and keep our setting as general as possible.

An object $X$ of an SMC $\cL$ is exponentiable if the functor %
$\Tens\_ X$ has a right adjoint, denoted as %
$\Limpl X\_$. In that case, we use %
$\Evlin\in\cL(\Tens{\Limplp XY}{X},Y)$ for the counit of the
adjunction and, given %
$f\in\cL(\Tens ZX,Y)$ we use %
$\Curlin f$ for the associated morphism %
$\Curlin f\in\cL(Z,\Limpl XY)$.

We say that the SMC $\cL$ is closed (is an SMCC) if any object of
$\cL$ is exponentiable.

A category $\cL$ is a \emph{resource category} if
\begin{itemize}
\item $\cL$ is an SMC;
\item $\cL$ is cartesian with terminal object $\Top$ %
  (so that $0$ is the unique element of $\cL(X,\Top)$) %
  and cartesian product of $X_0$, $X_1$ denoted
  $(\With{X_0}{X_1},\Proj 0,\Proj 1)$ and pairing of morphisms
  $(f_i\in\cL(Y,X_i))_{i=0,1}$ denoted
  $\Tuple{f_0,f_1}\in\cL(Y,\With{X_0}{X_1})$;
\item and $\cL$ is equipped with a \emph{resource comonad}, that is a
  tuple $(\Excl\_,\Der{},\Digg{},\Seelyz,\Seelyt)$ where $\Excl\_$ is
  a functor $\cL\to\cL$ which is a comonad with counit $\Der{}$ and
  comultiplication $\Digg{}$, and $\Seelyz\in\cL(\Sone,\Excl\Top)$ and
  $\Seelyt\in\cL(\Tens{\Excl X}{\Excl Y},\Excl{(\With XY)})$ are the
  Seely isomorphisms subject to conditions that we do not recall here,
  see for instance~\cite{Mellies09} apart for the following which
  explains how $\Digg{}$ interacts with $\Seelyt$.
  \begin{equation}\label{eq:seely-digg-comm}
    \begin{tikzcd}
      \Tens{\Excl{X_0}}{\Excl{X_0}}
      \ar[rr,"\Tens{\Digg{X_0}}{\Digg{X_1}}"]
      \ar[d,swap,"\Seelyt_{X_0,X_1}"]
      &[1.8em]
      &[2em]
      \Tens{\Excll{X_0}}{\Excll{X_1}}
      \ar[d,"\Seelyt_{\Excl{X_0},\Excl{X_1}}"]
      \\
      \Excl{\Withp{X_0}{X_1}}
      \ar[r,"\Digg{\With{X_0}{X_1}}"]
      &
      \Excll{\Withp{X_0}{X_1}}
      \ar[r,"\Excl{\Tuple{\Excl{\Proj0},\Excl{\Proj1}}}"]
      &
      \Excl{\Withp{\Excl{X_0}}{\Excl{X_1}}}
    \end{tikzcd}
  \end{equation}
\end{itemize}

Then $\Excl\_$ inherits a \emph{lax} symmetric monoidality
$\Monz,\Mont$ on $\cL$ (considered as an SMC). This means that one can
define %
$\Monz\in\cL(1,\Excl\Sone)$ and %
$\Mont_{X_0,X_1}\in\cL(\Tens{\Excl{X_0}}{\Excl{X_1}},\Excl{\Tensp{X_0}{X_1}})$
satisfying suitable coherence commutations. Explicitly these morphisms
are given by
\[
  \begin{tikzcd}
    \Sone\ar[r,"\Seelyz"]
    &[-1em]
    \Excl\Top\ar[r,"\Digg\Top"]
    &
    \Excll\Top\ar[r,"\Excl{\Invp{\Seelyz}}"]
    &
    \Excl\Sone
  \end{tikzcd}
\]
\[
  \begin{tikzcd}
    \Tens{\Excl{X_0}}{\Excl{X_1}}\ar[r,"\Seelyt_{X_0,X_1}"]
    &[1em]
    \Excl{\Withp{X_0}{X_1}}\ar[r,"\Digg{\With{X_0}{X_1}}"]
    &[1.6em]
    \Excll{\Withp{X_0}{X_1}}\ar[r,"\Excl{\Invp{\Seelyt_{X_0,X_1}}}"]
    &[2em]
    \Excl{\Tensp{\Excl{X_0}}{\Excl{X_1}}}
    \ar[r,"\Excl{\Tensp{\Der{X_0}}{\Der{X_1}}}"]
    &[3em]
    \Excl{\Tensp{{X_0}}{{X_1}}}
  \end{tikzcd}
\]

\begin{lemma}\label{lemma:zero-mont}
  The following diagram commutes
  % \[
  %   \begin{tikzcd}
  %     \Tens{\Excl{X_0}}{\Excl{X_1}}\ar[rr,"\Mont_{X_0,X_1}"]
  %     \ar[d,swap,"\Tens{\Excl 0}{\Excl 0}"]
  %     &&
  %     \Excl{\Tensp{X_0}{X_1}}\ar[d,"\Excl 0"]\\
  %     \Tens{\Excl\Top}{\Excl\Top}\ar[r,"\Mont_{\Top,\Top}"]
  %     &
  %     \Exclp{\With\Top\Top}\ar[r,"\Excl 0"]
  %     &
  %     \Excl\Top
  %   \end{tikzcd}
  % \]
  \[
    \begin{tikzcd}
      \Tens{\Excl{X_0}}{\Excl{X_1}}\ar[r,"\Mont_{X_0,X_1}"]
      \ar[d,swap,"\Seelyt_{X_0,X_1}"]
      &
      \Excl{\Tensp{X_0}{X_1}}\ar[d,"\Excl 0"]\\
      \Excl{\Withp{X_0}{X_1}}\ar[r,"\Excl 0"]
      &
      \Excl\Top
    \end{tikzcd}
  \]
\end{lemma}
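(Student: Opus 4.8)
The plan is to unfold the definition of $\Mont_{X_0,X_1}$ and collapse the two paths by pure functoriality, together with the universal property of $\Top$ and one comonad triangle identity. Write $W=\With{X_0}{X_1}$, and subscript the unique map into $\Top$ by its domain. Recall that, by definition, $\Mont_{X_0,X_1}=\Excl{(\Tens{\Der{X_0}}{\Der{X_1}})}\circ\Excl{\Invp{\Seelyt_{X_0,X_1}}}\circ\Digg{W}\circ\Seelyt_{X_0,X_1}$, so the right-then-down path is $\Excl{0}\circ\Mont_{X_0,X_1}$ with $0\in\cL(\Tens{X_0}{X_1},\Top)$, while the down-then-right path is $\Excl{0}\circ\Seelyt_{X_0,X_1}$ with $0\in\cL(W,\Top)$.

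First I would precompose the right-then-down path with $\Excl 0$ and merge the three leftmost factors, each of which is an $\Excl$-image, using functoriality: $\Excl{0}\circ\Mont_{X_0,X_1}=\Excl{\bigl(0\circ(\Tens{\Der{X_0}}{\Der{X_1}})\circ\Invp{\Seelyt_{X_0,X_1}}\bigr)}\circ\Digg{W}\circ\Seelyt_{X_0,X_1}$. The morphism inside the outer $\Excl$ goes from $\Excl W$ to $\Top$, so by terminality it equals the zero morphism $0_{\Excl W}\in\cL(\Excl W,\Top)$; hence this path reduces to $\Excl{0_{\Excl W}}\circ\Digg{W}\circ\Seelyt_{X_0,X_1}$.

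The one genuine idea is the factorization $0_{\Excl W}=0_W\circ\Der{W}$, which again holds by terminality of $\Top$. Applying functoriality once more gives $\Excl{0_{\Excl W}}=\Excl{0_W}\circ\Excl{\Der W}$, and then the comonad counit (triangle) identity $\Excl{\Der W}\circ\Digg W=\Id_{\Excl W}$ cancels the two middle factors. This leaves exactly $\Excl{0_W}\circ\Seelyt_{X_0,X_1}$, which is precisely the down-then-right path, so the diagram commutes.

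I do not expect a real obstacle here: the argument is entirely formal, driven by functoriality of $\Excl$ and the universal property of the terminal object $\Top$. The only non-bookkeeping step is recognizing that the composite landing in $\Top$ after the digging should be rewritten through $\Der W$ so that the triangle identity $\Excl{\Der W}\circ\Digg W=\Id_{\Excl W}$ becomes available; once this factorization is spotted, everything collapses mechanically.
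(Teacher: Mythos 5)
Your proof is correct and takes essentially the same route as the paper's: unfold the definition of $\Mont_{X_0,X_1}$, collapse the $\Excl$-images of maps into $\Top$ by terminality, and use the factorization $0=0\Compl\Der{\With{X_0}{X_1}}$ together with the counit identity $\Excl{\Der{\With{X_0}{X_1}}}\Compl\Digg{\With{X_0}{X_1}}=\Id$ to absorb the digging. The paper merely packages these last two steps as the auxiliary commutation $\Excl 0\Compl\Digg X=\Excl 0$, justified by the same observation $\Excl 0=\Exclp{0\Compl\Der X}$.
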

\begin{proof}
  This results from the definition of $\Mont$ and from the following
  commutation
  \[
    \begin{tikzcd}
      \Excl X\ar[r,"\Digg X"]\ar[rd,swap,"\Excl 0"]
      &
      \Excll X\ar[d,"\Excl 0"]\\
      &
      \Excl\Top
    \end{tikzcd}
  \]
  which results from the observation that %
  $\Excl0\in\cL(\Excll X,\Excl 0)$ can be written %
  $\Excl 0=\Exclp{0\Compl\Der X}$.
\end{proof}

For any $X\in\cL$ it is possible to define a contraction morphism
$\Contr X\in\cL(\Excl X,\Tens{\Excl X}{\Excl X})$ and a weakening
morphism $\Weak X\in\cL(\Excl X,\Sone)$ turning $\Excl X$ into a
commutative comonoid. These morphisms are defined as follows:
\[
  \begin{tikzcd}
    \Excl X\ar[r,"\Excl 0"]
    &\Excl\Top\ar[r,"\Inv{(\Seelyz)}"]
    &\Sone
  \end{tikzcd}
  \quad
  \begin{tikzcd}
    \Excl X\ar[r,"\Excl{\Tuple{\Id,\Id}}"]
    &\Excl{(\With XX)}\ar[r,"\Inv{(\Seelyt)}"]
    &\Tens{\Excl X}{\Excl X}
  \end{tikzcd}\,.
\]
% Then the triple $(\Excl X,\Weak X,\Contr X)$ is a commutative
% comonoid.

\begin{lemma}\label{lemma:seelyt-mont-commut}
  The two following diagrams commute in any resource category $\cL$.
  \[
    \begin{tikzcd}
      \Tens{\Sone}{\Excl Y}\ar[r,"\Tens\Sone{\Weak Y}"]
      \ar[d,left,"\Tens{\Seelyz}{\Excl Y}"]
      &
      \Tens\Sone\Sone\ar[r,"\Leftu_\Sone"]
      &[-1em]
      \Sone\ar[d,"\Seelyz"]
      \\
      \Tens{\Excl\Top}{\Excl Y}
      \ar[r,"\Mont_{\Top,Y}"]
      &
      \Excl{\Tensp\Top Y}\ar[r,"\Excl0"]
      &
      \Excl\Top
    \end{tikzcd}
  \]
  \[
    \begin{tikzcd}
      \Excl{X_0}\ITens\Excl{X_1}\ITens\Excl Y
      \ar[r,"\Tens\Id{\Contr Y}"]
      \ar[d,swap,"\Tens{\Seelyt_{X_0,X_1}}{\Excl Y}"]
      &[1em]
      \Excl{X_0}\ITens\Excl{X_1}\ITens\Excl Y\ITens\Excl Y
      \ar[r,"\Sym_{2,3}"]
      &
      \Excl{X_0}\ITens\Excl Y\ITens\Excl{X_1}\ITens\Excl Y
      \ar[d,"\Tens{\Mont_{X_0,Y}}{\Mont_{X_1,Y}}"]
      \\
      \Excl{\Withp{X_0}{X_1}}\ITens\Excl Y
      \ar[d,swap,"\Mont_{\With{X_0}{X_1},Y}"]
      &
      &
      \Excl{\Tensp{X_0}{Y}}\ITens\Excl{\Tensp{X_1}{Y}}
      \ar[d,"\Seelyt_{\Tensp{X_0}{Y},\Tensp{X_1}{Y}}"]
      \\
      \Excl{\Tensp{\Withp{X_0}{X_1}}{Y}}
      \ar[rr,"\Excl{\Tuple{\Tens{\Proj0}{Y},\Tens{\Proj1}{Y}}}"]
      &&
      \Excl{\Withp{\Tensp{X_0}{Y}}{\Tensp{X_1}{Y}}}
    \end{tikzcd}
  \]
\end{lemma}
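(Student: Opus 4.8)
The plan is to read both squares as compatibility statements between the lax monoidal structure $\Mont$ and the commutative comonoid structure of $\Excl Y$: the first expresses compatibility with the weakening $\Weak Y$ and the second with the contraction $\Contr Y$. In each case I proceed the same way — expand $\Mont$ through its definition $\Mont_{A,B}=\Excl{\Tensp{\Der A}{\Der B}}\circ\Excl{\Invp{\Seelyt_{A,B}}}\circ\Digg{\With AB}\circ\Seelyt_{A,B}$, rewrite $\Weak Y$ and $\Contr Y$ through their definitions in terms of $\Seelyz$, $\Seelyt$ and the functorial maps $\Excl 0$ and $\Excl{\Tuple{\Id,\Id}}$, and then chase using naturality of the Seely isomorphisms, the comonad laws for $\Der{}$ and $\Digg{}$, and the two structural commutations already at our disposal, namely \Eqref{eq:seely-digg-comm} and Lemma~\ref{lemma:zero-mont}.

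For the first square the computation is short. The bottom-left composite is $\Excl 0\circ\Mont_{\Top,Y}\circ(\Tens{\Seelyz}{\Excl Y})$, and Lemma~\ref{lemma:zero-mont} (with $X_0=\Top$, $X_1=Y$) replaces $\Excl 0\circ\Mont_{\Top,Y}$ by $\Excl 0\circ\Seelyt_{\Top,Y}$, where now $0$ is the unique map $\Withp\Top Y\to\Top$, that is the projection $\Proj 0$. By a standard consequence of the Seely coherences a projection of the cartesian product corresponds after $\Seelyt$ to a weakening, $\Excl{\Proj 0}\circ\Seelyt_{\Top,Y}=\Rightu_{\Excl\Top}\circ\Tensp{\Excl\Top}{\Weak Y}$, so the composite becomes $\Rightu_{\Excl\Top}\circ\Tensp{\Seelyz}{\Weak Y}$. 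The top-right composite is $\Seelyz\circ\Leftu_\Sone\circ\Tensp\Sone{\Weak Y}$, which equals $\Seelyz\circ\Rightu_\Sone\circ\Tensp\Sone{\Weak Y}$ since $\Leftu_\Sone=\Rightu_\Sone$; naturality of $\Rightu$ together with bifunctoriality of $\ITens$ turns $\Rightu_{\Excl\Top}\circ\Tensp{\Seelyz}{\Weak Y}$ into exactly this, so the two agree.

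For the second square I expand $\Mont$ in both composites and rewrite $\Contr Y=\Invp{\Seelyt_{Y,Y}}\circ\Excl{\Wdiag}$ with $\Wdiag=\Tuple{\Id,\Id}$. The pivotal step is \Eqref{eq:seely-digg-comm}: the left composite feeds everything through a single comultiplication $\Digg{}$ on the combined object coming from $\Mont_{\With{X_0}{X_1},Y}$, whereas the right composite uses the two separate comultiplications appearing in $\Tens{\Mont_{X_0,Y}}{\Mont_{X_1,Y}}$, and \Eqref{eq:seely-digg-comm} is precisely the bridge between these two situations once the relevant $\Seelyt$'s have been slid into place by naturality. After applying it, both sides lie in the image of $\Excl{(-)}$; repeated use of naturality of $\Seelyt$ together with the comonad identity $\Der{}\circ\Digg{}=\Id$ collapses the dereliction maps, and what remains is $\Excl{(-)}$ applied to a single square in the underlying SMC. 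That residual square relates the distributivity map $\Tuple{\Tens{\Proj 0}Y,\Tens{\Proj 1}Y}$ to the diagonal coming from $\Wdiag$ and the projections $\Proj 0,\Proj 1$; it commutes by the universal property of $\IWith$ and bifunctoriality of $\ITens$, and $\Excl{(-)}$ preserves the commutation.

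The naturality rewrites and the final base-category square are routine. I expect the main obstacle to be purely organizational: massaging both composites so that \Eqref{eq:seely-digg-comm} applies in exactly the right configuration. This requires tracking the symmetry $\Sym_{2,3}$ and the associativity isomorphisms through the functor $\Excl{(-)}$ and past $\Digg{}$ (using naturality of $\Digg{}$ to commute it with the symmetry), and it is easy to misplace a coherence isomorphism. Drawing the single large intermediate diagram and applying \Eqref{eq:seely-digg-comm} to its central cell is the cleanest way to keep this bookkeeping under control.
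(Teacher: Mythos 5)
Your plan is correct and follows essentially the same route as the paper's own proof: for the first square, Lemma~\ref{lemma:zero-mont} followed by a Seely coherence (you invoke the projection--weakening correspondence where the paper uses the equivalent unit monoidality equation for $\Seelyz,\Seelyt$) and naturality of the unitor; for the second square, expanding $\Mont$ by its definition and using \Eqref{eq:seely-digg-comm} as the pivot between the single comultiplication on the left path and the two comultiplications inside $\Tens{\Mont_{X_0,Y}}{\Mont_{X_1,Y}}$, finishing with naturality of $\Der{}$, $\Digg{}$, $\Seelyt$ and the Seely monoidality coherences. The residual base-category square you describe is exactly the map $q=\Tuple{\Proj0,\Proj2,\Proj1,\Proj2}$ appearing in the paper's computation, so the remaining work is indeed only the bookkeeping you anticipate.
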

\begin{proof}
  For the first diagram we have
  \begin{align*}
    \Excl0
    \Compl\Mont_{\Top,Y}
    \Compl\Tensp{\Seelyz}{\Excl Y}
    &=\Excl0
      \Compl\Seelyt_{\Top,Y}
      \Compl\Tensp{\Seelyz}{\Excl Y}
      \text{\quad by Lemma~\ref{lemma:zero-mont}}\\
    &=\Excl0
      \Compl\Excl{\Tuple{\Top,Y}}
      \Compl\Leftu_{\Excl Y}
      \text{\quad by the monoidality equations of }\Seelyz,\Seelyt\\
    &=\Excl0
      \Compl\Leftu_{\Excl Y}
  \end{align*}
  and
  \begin{align*}
    \Seelyz
    \Compl\Leftu_\Sone
    \Compl\Tensp{\Sone}{\Weak Y}
    &=\Seelyz
      \Compl\Weak Y
      \Compl\Leftu_{\Excl Y}
      \text{\quad by naturality of }\Leftu\\
    &=\Excl0
      \Compl\Leftu_{\Excl Y}
      \text{\quad by definition of }\Weak{Y}\,.
  \end{align*}
  For the second diagram, we compute
  \begin{align*}
    f_1
    &=\Excl{\Tuple{\Tens{\Proj0}{Y},\Tens{\Proj1}{Y}}}
      \Compl\Mont_{\With{X_0}{X_1},Y}\\
    &=\Excl{\Tuple{\Tens{\Proj0}{Y},\Tens{\Proj1}{Y}}}
      \Compl\Excl{\Tensp{\Der{\With{X_0}{X_1}}}{\Der Y}}
      \Compl\Excl{\Invp{\Seelyt_{\With{X_0}{X_1},Y}}}
      \Compl\Digg{X_0\IWith X_1\IWith Y}
      \Compl\Seelyt_{\With{X_0}{X_1},Y}\\
    &\hspace{10em}\text{by definition of }\Mont\\
    &=\Exclp{\With{\Tensp{\Der{X_0}}{\Der Y}}{\Tensp{\Der{X_1}}{\Der Y}}}
      \Compl\Excl{\Tuple{\Tens{\Excl{\Proj0}}{\Excl Y},
      \Tens{\Excl{\Proj1}}{\Excl Y}}}
      \Compl\Excl{\Invp{\Seelyt_{\With{X_0}{X_1},Y}}}
      \Compl\Digg{X_0\IWith X_1\IWith Y}
      \Compl\Seelyt_{\With{X_0}{X_1},Y}\\
    &\hspace{10em}\text{by naturality of }\Der{}\\
    &=\Exclp{\With{\Tensp{\Der{X_0}}{\Der Y}}{\Tensp{\Der{X_1}}{\Der Y}}}
      \Compl f_2
  \end{align*}
  where
  \begin{align*}
    f_2
    &=\Excl{\Tuple{\Tens{\Excl{\Proj0}}{\Excl Y},
      \Tens{\Excl{\Proj1}}{\Excl Y}}}
      \Compl\Excl{\Invp{\Seelyt_{\With{X_0}{X_1},Y}}}
      \Compl\Digg{X_0\IWith X_1\IWith Y}
      \Compl\Seelyt_{\With{X_0}{X_1},Y}\\
    &=\Exclp{\With{\Invp{\Seelyt_{X_0,Y}}}{\Invp{\Seelyt_{X_1,Y}}}}
      \Compl\Excl{\Tuple{\Excl{\Proj0},\Excl{\Proj1}}}
      \Compl\Excll q
      \Compl\Digg{X_0\IWith X_1\IWith Y}
      \Compl\Seelyt_{\With{X_0}{X_1},Y}
  \end{align*}
  by naturality of $\Seelyt$. %
  In that expression %
  $\Proj i\in\cL(X_0\IWith Y\IWith X_1\IWith Y,\With{X_i}Y)$ and %
  $q=\Tuple{\Proj0,\Proj2,\Proj1,\Proj2} \in\cL(X_0\IWith X_1\IWith
  Y,X_0\IWith Y\IWith X_1\IWith Y)$. We have used the commutation of
  the following diagram
  \[
    \begin{tikzcd}
      \Exclp{X_0\IWith X_1\IWith Y}
      \ar[r,"\Excl q"]
      \ar[d,swap,"\Invp{\Seelyt_{\With{X_0}{X_1},Y}}"]
      &[-1em]
      \Exclp{X_0\IWith Y\IWith X_1\IWith Y}
      \ar[r,"\Tuple{\Excl{\Proj0},\Excl{\Proj1}}"]
      &[1em]
      \With{\Excl{\Withp{X_0}{Y}}}{\Excl{\Withp{X_1}{Y}}}
      \ar[d,"\With{\Invp{\Seelyt_{X_0,Y}}}{\Invp{\Seelyt_{X_1,Y}}}"]
      \\
      \Tens{\Excl{\Withp{X_0}{X_1}}}{\Excl Y}
      \ar[rr,"\Tuple{\Tens{\Excl{\Proj0}}{\Excl Y},
        \Tens{\Excl{\Proj1}}{\Excl Y}}"]
      &&
      \With{\Tensp{\Excl{X_0}}{\Excl Y}}{\Tensp{\Excl{X_1}}{\Excl Y}}
    \end{tikzcd}
  \]
  which is easily proved by post-composing the two equated morphisms
  with
  $\Proj i\in\cL(\With{\Tensp{\Excl{X_0}}{\Excl
      Y}}{\Tensp{\Excl{X_1}}{\Excl Y}},\Tensp{\Excl{X_i}}{\Excl Y})$
  for $i=0,1$.

  Observe that
  \begin{align*}
    &\Exclp{\With{\Tensp{\Der{X_0}}{\Der Y}}{\Tensp{\Der{X_1}}{\Der Y}}}\\
    &\hspace{8em}=\Seelyt_{\Tens{X_0}{Y},\Tens{X_1}{Y}}
      \Compl{\Tensp{\Exclp{\Tens{\Der{X_0}}{\Der Y}}}
      {\Exclp{\Tens{\Der{X_1}}{\Der Y}}}}
      \Compl
      \Invp{\Seelyt_{\Tens{\Excl{X_0}}{\Excl Y},\Tens{\Excl{X_1}}{\Excl Y}}}
  \end{align*}
  by naturality of $\Seelyt$. The following diagram commutes
  \[
    \begin{tikzcd}
      \Exclp{\With{\Exclp{\With{X_0}{Y}}}{\Exclp{\With{X_1}{Y}}}}
      \ar[r,"\Excl{\Withp{\Invp{\Seelyt_{X_0,Y}}}{\Invp{\Seelyt_{X_1,Y}}}}"]
      \ar[d,swap,"\Invp{\Seelyt_{\Excl{\Withp{X_0}{Y}},\Excl{\Withp{X_1}{Y}}}}"]
      &[8em]
      \Exclp{\With{\Tensp{\Excl{X_0}}{\Excl Y}}{\Tensp{\Excl{X_1}}{\Excl Y}}}
      \ar[d,"\Invp{\Seelyt_{\Tens{\Excl{X_0}}{\Excl Y},\Tens{\Excl{X_1}}{\Excl Y}}}"]
      \\
      \Tens{\Excll{\Withp{X_0}{Y}}}{\Excll{\Withp{X_1}{Y}}}
      \ar[r,"\Tensp{\Excl{\Invp{\Seelyt_{X_0,Y}}}}
      {\Excl{\Invp{\Seelyt_{X_1,Y}}}}"]
      &
      \Tens{\Excl{\Tensp{\Excl{X_0}}{\Excl Y}}}{\Excl{\Tensp{\Excl{X_1}}{\Excl Y}}}
    \end{tikzcd}
  \]
  % \begin{align*}
  %   \Invp{\Seelyt_{\Tens{\Excl{X_0}}{\Excl Y},\Tens{\Excl{X_1}}{\Excl Y}}}
  %   \Compl\Excl{\Withp{\Invp{\Seelyt_{X_0,Y}}}{\Invp{\Seelyt_{X_1,Y}}}}
  %   =\Tensp{\Invp{\Seelyt_{X_0,Y}}}{\Invp{\Seelyt_{X_1,Y}}}
  %   \Compl\Invp{\Seelyt_{\Excl{\Withp{X_0}{Y}},\Excl{\Withp{X_1}{Y}}}}
  % \end{align*}
  and hence
  \begin{align*}
    f_1
    &=\Seelyt_{\Tens{X_0}{Y},\Tens{X_1}{Y}}
      \Compl{\Tensp{\Exclp{\Tens{\Der{X_0}}{\Der Y}}}
      {\Exclp{\Tens{\Der{X_1}}{\Der Y}}}}
      \Compl
      \Invp{\Seelyt_{\Tens{\Excl{X_0}}{\Excl Y},\Tens{\Excl{X_1}}{\Excl Y}}}
      \Compl\Exclp{\With{\Invp{\Seelyt_{X_0,Y}}}{\Invp{\Seelyt_{X_1,Y}}}}\\
    &\Formsep
      \Compl\Excl{\Tuple{\Excl{\Proj0},\Excl{\Proj1}}}
      \Compl\Excll q
      \Compl\Digg{X_0\IWith X_1\IWith Y}
      \Compl\Seelyt_{\With{X_0}{X_1},Y}      
      \\
    &=\Seelyt_{\Tens{X_0}{Y},\Tens{X_1}{Y}}
      \Compl{\Tensp{\Exclp{\Tens{\Der{X_0}}{\Der Y}}}
      {\Exclp{\Tens{\Der{X_1}}{\Der Y}}}}
      \Compl\Tensp{\Excl{\Invp{\Seelyt_{X_0,Y}}}}{\Excl{\Invp{\Seelyt_{X_1,Y}}}}
      \Compl\Invp{\Seelyt_{\Excl{\Withp{X_0}{Y}},\Excl{\Withp{X_1}{Y}}}}\\
    &\Formsep
      \Compl\Excl{\Tuple{\Excl{\Proj0},\Excl{\Proj1}}}
      \Compl f_3
  \end{align*}
  where, by naturality of $\Digg{}$,
  \begin{align*}
    f_3
    &=\Excll q
      \Compl\Digg{X_0\IWith X_1\IWith Y}
      \Compl\Seelyt_{\With{X_0}{X_1},Y}\\
    &=\Digg{X_0\IWith Y\IWith X_1\IWith Y}
      \Compl\Excl q
      \Compl\Seelyt_{\With{X_0}{X_1},Y}
      \in\cL(\Tens{\Excl{\Withp{X_0}{X_1}}}{\Excl Y},
      \Excll{(X_0\IWith Y\IWith X_1\IWith Y)})
  \end{align*}
  and hence, by the diagram~\Eqref{eq:seely-digg-comm}
  \begin{align*}
    \Excl{\Tuple{\Excl{\Proj0},\Excl{\Proj1}}}\Compl f_3
    &=\Excl{\Tuple{\Excl{\Proj0},\Excl{\Proj1}}}
      \Compl\Digg{X_0\IWith Y\IWith X_1\IWith Y}
      \Compl\Excl q
      \Compl\Seelyt_{\With{X_0}{X_1},Y}\\
    &={\Seelyt_{\Exclp{\With{X_0}{Y}},\Exclp{\With{X_1}{Y}}}}
      \Compl\Tensp{\Digg{\With{X_0}Y}}{\Digg{\With{X_1}Y}}
      \Compl\Invp{\Seelyt_{\With{X_0}Y,\With{X_1}Y}}
      \Compl\Excl q
      \Compl\Seelyt_{\With{X_0}{X_1},Y}\,.
  \end{align*}
  It follows that
  \begin{align*}
    f_1
    &=\Seelyt_{\Tens{X_0}{Y},\Tens{X_1}{Y}}
      \Compl{\Tensp{\Exclp{\Tens{\Der{X_0}}{\Der Y}}}
      {\Exclp{\Tens{\Der{X_1}}{\Der Y}}}}
      \Compl\Tensp{\Excl{\Invp{\Seelyt_{X_0,Y}}}}
      {\Excl{\Invp{\Seelyt_{X_1,Y}}}}\\
    &\Textsep
      \Tensp{\Digg{\With{X_0}Y}}{\Digg{\With{X_1}Y}}
      \Compl\Invp{\Seelyt_{\With{X_0}Y,\With{X_1}Y}}
      \Compl\Excl q
      \Compl\Seelyt_{\With{X_0}{X_1},Y}\\
    &=\Seelyt_{\Tens{X_0}{Y},\Tens{X_1}{Y}}
      \Compl{\Tensp{\Exclp{\Tens{\Der{X_0}}{\Der Y}}}
      {\Exclp{\Tens{\Der{X_1}}{\Der Y}}}}
      \Compl\Tensp{\Excl{\Invp{\Seelyt_{X_0,Y}}}}
      {\Excl{\Invp{\Seelyt_{X_1,Y}}}}\\
    &\Textsep
      \Tensp{\Digg{\With{X_0}Y}}{\Digg{\With{X_1}Y}}
      \Compl\Tensp{\Seelyt_{X_0,Y}}{\Seelyt_{X_1,Y}}\\
    &\Textsep
      \Tensp{\Invp{\Seelyt_{X_0,Y}}}{\Invp{\Seelyt_{X_1,Y}}}
      \Compl\Invp{\Seelyt_{\With{X_0}Y,\With{X_1}Y}}
      \Compl\Excl q
      \Compl\Seelyt_{\With{X_0}{X_1},Y}\\
    &=\Seelyt_{\Tens{X_0}{Y},\Tens{X_1}{Y}}
      \Compl\Tensp{\Mont_{X_0,Y}}{\Mont_{X_1,Y}}
      \Compl\Tensp{\Invp{\Seelyt_{X_0,Y}}}{\Invp{\Seelyt_{X_1,Y}}}
      \Compl\Invp{\Seelyt_{\With{X_0}Y,\With{X_1}Y}}
      \Compl\Excl q
      \Compl\Seelyt_{\With{X_0}{X_1},Y}
  \end{align*}
  hence
  \begin{align*}
    f_1\Compl\Tensp{\Seelyt_{X_0,X_1}}{\Excl Y}
    &=\Seelyt_{\Tens{X_0}{Y},\Tens{X_1}{Y}}
      \Compl\Tensp{\Mont_{X_0,Y}}{\Mont_{X_1,Y}}
      \Compl\Tensp{\Invp{\Seelyt_{X_0,Y}}}{\Invp{\Seelyt_{X_1,Y}}}
      \Compl\Invp{\Seelyt_{\With{X_0}Y,\With{X_1}Y}}\\
    &\Textsep
      \Compl\Excl q
      \Compl\Seelyt_{\With{X_0}{X_1},Y}
      \Compl\Tensp{\Seelyt_{X_0,X_1}}{\Excl Y}\\
    &=\Seelyt_{\Tens{X_0}{Y},\Tens{X_1}{Y}}
      \Compl\Tensp{\Mont_{X_0,Y}}{\Mont_{X_1,Y}}
      \Compl\Invp{\Seely^4_{X_0,Y,X_1,Y}}
      \Compl\Excl q
      \Compl\Seely^3_{X_0,X_1,Y}\\
    &=\Seelyt_{\Tens{X_0}{Y},\Tens{X_1}{Y}}
      \Compl\Tensp{\Mont_{X_0,Y}}{\Mont_{X_1,Y}}
      \Compl\Sym_{2,3}
      \Compl(\Excl{X_0}\ITens\Excl{X_1}\ITens\Contr Y)
  \end{align*}
  by the monoidality properties of the Seely isomorphisms.
\end{proof}

\subsubsection{Coalgebras of the resource comonad} %
\label{sec:bang-coalgebras}
A $\oc$-coalgebra is a pair %
$P=(\Coalgca P,\Coalgm P)$ where $\Coalgca P$ is an object of $\cL$
and %
$\Coalgm P\in\cL(\Coalgca P,\Excl{\Coalgca P})$ satisfies
\[
  \begin{tikzcd}
    \Coalgca P\ar[r,"\Coalgm P"]
    \ar[rd,swap,"\Id"]
    &
    \Excl{\Coalgca P}\ar[d,"\Der{\Coalgca P}"]
    \\
    &
    \Coalgca P
  \end{tikzcd}
  \Treesep
  \begin{tikzcd}
    \Coalgca P\ar[r,"\Coalgm P"]\ar[d,swap,"\Coalgm P"]
    &
    \Excl{\Coalgca P}\ar[d,"\Digg{\Coalgca P}"]
    \\
    \Excl{\Coalgca P}\ar[r,"\Excl{\Coalgm P}"]
    &
    \Excll{\Coalgca P}
  \end{tikzcd}
\]
Given coalgebras $P$ and $Q$, a coalgebra morphism from $P$ to $Q$ is
an $f\in\cL(\Coalgca P,\Coalgca Q)$ such that the following square
commutes
\[
  \begin{tikzcd}
    \Coalgca P\ar[r,"f"]\ar[d,swap,"\Coalgm P"]
    &
    \Coalgca Q\ar[d,"\Coalgm Q"]\\
    \Excl{\Coalgca P}\ar[r,"\Excl f"]
    &
    \Excl{\Coalgca Q}
  \end{tikzcd}
\]
The category so defined is the Eilenberg-Moore category $\Em\cL$
associated with the comonad $\Excl\_$. We will use the following
standard result for which we refer to~\cite{Mellies09}.
\begin{theorem}
  The Eilenberg-Moore category $\Em\cL$ of a resource category $\cL$
  is cartesian with final object $(\Sone,\Monz)$ simply denoted as
  $\Sone$ and cartesian product of $P_0,P_1$ the coalgebra %
  $(\Tens{\Coalgca{P_0}}{\Coalgca{P_0}},\Mont_{\Coalgca{P_0},\Coalgca{P_1}}
  \Compl\Tensp{\Coalgm{P_0}}{\Coalgm{P_1}})$ denoted as
  $\Tens{P_0}{P_1}$ with projection %
  $\Projc0\in\Em\cL(\Tens{P_0}{P_1},P_0)$ defined as the following
  composition of morphisms
  \[
    \begin{tikzcd}
      \Tens{\Coalgca{P_0}}{\Coalgca{P_1}}
      \ar[r,"\Tens{\Coalgm{P_0}}{\Coalgca{P_1}}"]
      &[1.4em]
      \Tens{\Excl{\Coalgca{P_0}}}{\Coalgca{P_1}}
      \ar[r,"\Tens{\Weak{\Coalgca{P_0}}}{\Coalgca{P_1}}"]
      &[2em]
      \Tens{\Sone}{\Coalgca{P_1}}\ar[r,"\Leftu_{\Coalgca{P_1}}"]
      &[-1em]
      \Coalgca{P_1}
    \end{tikzcd}
  \]
  and similarly for $\Projc1\in\Em\cL(\Tens{P_0}{P_1},P_1)$. %
  And given $f_i\in\Em\cL(Q,P_i)$ for $i=0,1$, the unique morphism %
  $\Tuplec{f_0,f_1}\in\Em\cL(Q,\Tens{P_0}{P_1})$ such that %
  $\Projc i\Compl\Tuplec{f_0,f_1}=f_i$ is defined as the following
  composition of morphisms
  \[
    \begin{tikzcd}
      \Coalgca Q\ar[r,"\Coalgm Q"]
      &[-1em]
      \Excl{\Coalgca Q}\ar[r,"\Contr{\Coalgca Q}"]
      &
      \Tens{\Excl{\Coalgca Q}}{\Excl{\Coalgca Q}}
      \ar[r,"\Tens{\Der{\Coalgca Q}}{\Der{\Coalgca Q}}"]
      &[1.8em]
      \Tens{\Coalgca Q}{\Coalgca Q}\ar[r,"\Tens{f_0}{f_1}"]
      &
      \Tens{\Coalgca{P_0}}{\Coalgca{P_1}}
    \end{tikzcd}
  \]
  Last, the unique morphism $P\to\Sone$ in $\Em\cL$ is
  \(
    \begin{tikzcd}
      \Coalgca P\ar[r,"\Coalgm P"]
      &[-1em]
      \Excl{\Coalgca P}\ar[r,"\Weak{\Coalgca P}"]
      &
      \Sone
    \end{tikzcd}
  \).
\end{theorem}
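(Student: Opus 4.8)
The plan is to verify directly that the stated data exhibit a terminal object and binary products in $\Em\cL$, the single background fact used throughout being that $(\Excl\_,\Monz,\Mont)$ is a \emph{symmetric monoidal comonad}. Concretely I will use the monoidality of the counit, $\Der{\Tens XY}\Compl\Mont_{X,Y}=\Tens{\Der X}{\Der Y}$ and $\Der\Sone\Compl\Monz=\Id_\Sone$, and the monoidality of the comultiplication, whose nontrivial instance $\Digg{\Tens XY}\Compl\Mont_{X,Y}=\Excl{\Mont_{X,Y}}\Compl\Mont_{\Excl X,\Excl Y}\Compl\Tens{\Digg X}{\Digg Y}$ follows from the definition of $\Mont$ together with the compatibility~\Eqref{eq:seely-digg-comm} of $\Digg{}$ with $\Seelyt$. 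For the terminal object I would argue conceptually: the forgetful functor $\Em\cL\to\cL$ has the cofree functor $X\mapsto(\Excl X,\Digg X)$ as a right adjoint; being a right adjoint, this functor preserves the terminal object $\Top$ of $\cL$, so that $(\Excl\Top,\Digg\Top)$ is terminal in $\Em\cL$. Since $\Monz=\Excl{\Invp{\Seelyz}}\Compl\Digg\Top\Compl\Seelyz$, the morphism $\Seelyz$ is a coalgebra isomorphism from $(\Sone,\Monz)$ onto $(\Excl\Top,\Digg\Top)$, so $(\Sone,\Monz)$ is terminal too; and post-composing the adjoint transpose $\Excl0\Compl\Coalgm P$ of the unique map $\Coalgca P\to\Top$ with the isomorphism $\Invp{\Seelyz}$ yields exactly $\Weak{\Coalgca P}\Compl\Coalgm P$, because $\Weak{\Coalgca P}=\Invp{\Seelyz}\Compl\Excl0$.

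Next I would establish that $h=\Mont_{\Coalgca{P_0},\Coalgca{P_1}}\Compl\Tensp{\Coalgm{P_0}}{\Coalgm{P_1}}$ endows $\Tens{\Coalgca{P_0}}{\Coalgca{P_1}}$ with a $\oc$-coalgebra structure. The counit law is immediate: by monoidality of $\Der{}$ and bifunctoriality of $\ITens$, $\Der{}\Compl h=\Tens{(\Der{\Coalgca{P_0}}\Compl\Coalgm{P_0})}{(\Der{\Coalgca{P_1}}\Compl\Coalgm{P_1})}=\Tens\Id\Id$ by the counit laws of $P_0$ and $P_1$. The coassociativity law $\Digg{}\Compl h=\Excl h\Compl h$ is the first genuine computation: expanding its left-hand side with the monoidality of $\Digg{}$ and replacing each $\Digg{}\Compl\Coalgm{P_i}$ by $\Excl{\Coalgm{P_i}}\Compl\Coalgm{P_i}$ (coassociativity of $P_i$) produces an expression that coincides with $\Excl h\Compl h$ after a single use of the naturality of $\Mont$ in both arguments. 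This step is clean precisely because the monoidal-comonad coherences have been isolated beforehand.

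I would then treat the projections and the pairing. That each $\Projc i$ is a coalgebra morphism is a diagram chase feeding the structure map of the discarded tensor factor into $\Weak{}$ and using the counit law of the retained factor; the interaction of $\Weak{}$ (equivalently of $\Seelyz$) with $\Mont$ needed here is supplied by Lemma~\ref{lemma:zero-mont} and the first diagram of Lemma~\ref{lemma:seelyt-mont-commut}. Given $f_i\in\Em\cL(Q,P_i)$, I would check that $\Tuplec{f_0,f_1}=\Tens{f_0}{f_1}\Compl\Tens{\Der{\Coalgca Q}}{\Der{\Coalgca Q}}\Compl\Contr{\Coalgca Q}\Compl\Coalgm Q$ satisfies $\Projc i\Compl\Tuplec{f_0,f_1}=f_i$; this unwinds using the counit law $\Der{\Coalgca Q}\Compl\Coalgm Q=\Id$ of $Q$ together with the commutative-comonoid laws of $(\Excl{\Coalgca Q},\Weak{\Coalgca Q},\Contr{\Coalgca Q})$, in particular that $\Weak{}$ is a counit for $\Contr{}$. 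Verifying that $\Tuplec{f_0,f_1}$ is itself a coalgebra morphism is where $\Contr{}$, $\Digg{}$ and $\Mont$ must be reconciled, and the second (larger) diagram of Lemma~\ref{lemma:seelyt-mont-commut} is precisely the commutation making this go through.

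I expect this coalgebra-morphism property of $\Tuplec{f_0,f_1}$, together with uniqueness, to be the main obstacle. Uniqueness I would derive from the $\eta$-rule $\Tuplec{\Projc0\Compl g,\Projc1\Compl g}=g$ for an arbitrary coalgebra morphism $g\colon Q\to\Tens{P_0}{P_1}$, which is the delicate point and is proved by chasing the comonoid laws of $\Excl{\Coalgca Q}$ against the coalgebra-morphism property of $g$; combined with the product triangle $\Projc i\Compl\Tuplec{f_0,f_1}=f_i$ it forces $g=\Tuplec{\Projc0\Compl g,\Projc1\Compl g}=\Tuplec{f_0,f_1}$. Conceptually this is an instance of Fox's theorem: every coalgebra carries, naturally, the structure of a commutative comonoid with counit its terminal map and comultiplication the diagonal $\Tuplec{\Id,\Id}$, and this is what upgrades the symmetric monoidal structure of $\Em\cL$ to a cartesian one with $\ITens$ as the product. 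Everything outside this single chase is routine bookkeeping with the counit and the monoidality of $\Der{}$.
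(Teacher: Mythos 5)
You cannot really be matched against the paper's own argument here, because the paper gives none: this theorem is stated as a standard result with a pointer to~\cite{Mellies09} and no proof. Your blind reconstruction is essentially that standard proof, and it is correct in structure. The terminal-object argument (the cofree right adjoint $X\mapsto(\Excl X,\Digg X)$ of the forgetful functor preserves terminal objects, then transport of structure along the coalgebra isomorphism $\Seelyz$, identifying the unique map as $\Weak{\Coalgca P}\Compl\Coalgm P$ since $\Weak{}=\Invp{\Seelyz}\Compl\Excl 0$) is clean and exactly right. Your coassociativity computation for $\Mont_{\Coalgca{P_0},\Coalgca{P_1}}\Compl\Tensp{\Coalgm{P_0}}{\Coalgm{P_1}}$, using monoidality of $\Digg{}$ plus one application of naturality of $\Mont$, is the correct one; and your Fox-style treatment of the universal property dovetails with Theorem~\ref{th:comon-cat-cart} and with the functor $\Cmonofcalg$ that the paper only introduces later (indeed Proposition~\ref{prop:coalg-comon} and Theorem~\ref{th:comon-coalg-lafont} are the paper's own echoes of this viewpoint). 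Two caveats. First, the two chases you defer --- that $\Tuplec{f_0,f_1}$ is a coalgebra morphism, and the $\eta$-rule giving uniqueness --- are the bulk of the proof; they do go through, the key fact being that $\Contr{}$ and $\Weak{}$ are themselves coalgebra morphisms, which is precisely what diagram~\Eqref{eq:seely-digg-comm} and Lemma~\ref{lemma:seelyt-mont-commut} encode, so your lemma citations are apt even though you do not execute the computations. Second, to avoid a circularity in your Fox-theorem gloss, the induced comonoid comultiplication on a coalgebra must be given by the explicit formula $\Tensp{\Der{}}{\Der{}}\Compl\Contr{}\Compl\Coalgm Q$ (as you do when defining the pairing) rather than described as ``the diagonal $\Tuplec{\Id,\Id}$'', since the product being diagonalized is exactly what is under construction. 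As for what each treatment buys: the paper's citation keeps the development short and signals that nothing here depends on summability or differentiation, whereas your route makes the result self-contained and exhibits exactly which Seely-type coherences the cartesian structure of $\Em\cL$ rests on.
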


An immediate consequence of this theorem is the following observation.
\begin{proposition}\label{prop:coalg-comon}
  Let $P$ be an object of $\Em\cL$, %
  $u\in\Em\cL(P,\Sone)$ and %
  $d\in\Em\cL(P,\Tens PP)$ be such that %
  \[
    \begin{tikzcd}
      \Coalgca P\ar[r,"d"]\ar[dr,swap,"\Inv{\Leftu_{\Coalgca P}}"]
      &
      \Tens{\Coalgca P}{\Coalgca P}\ar[d,"\Tens u{\Coalgca P}"]\\
      &
      \Tens\Sone{\Coalgca P}
    \end{tikzcd}
    \Treesep
    \begin{tikzcd}
      \Coalgca P\ar[r,"d"]\ar[dr,swap,"\Inv{\Rightu_{\Coalgca P}}"]
      &
      \Tens{\Coalgca P}{\Coalgca P}\ar[d,"\Tens{\Coalgca P}u"]\\
      &
      \Tens\Sone{\Coalgca P}
    \end{tikzcd}
  \]
  commute.
  Then $u=\Weak P\Compl\Coalgm P$ and %
  $d=\Tuplec{\Coalgca P,\Coalgca P} =\Tensp{\Der{\Coalgca
      P}}{\Der{\Coalgca P}} \Compl\Contr{\Coalgca P} \Compl\Coalgm P$
  and the following diagram commutes in $\cL$.
  \[
    \begin{tikzcd}
      \Coalgca P\ar[r,"d"]\ar[d,swap,"\Coalgm P"]
      &
      \Tens{\Coalgca P}{\Coalgca P}\ar[d,"\Tens{\Coalgm P}{\Coalgm P}"]
      \\
      \Excl{\Coalgca P}\ar[r,"\Contr{\Coalgca P}"]
      &
      \Tens{\Excl{\Coalgca P}}{\Excl{\Coalgca P}}
    \end{tikzcd}
  \]
\end{proposition}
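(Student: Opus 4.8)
The plan is to read all three conclusions off the cartesian structure of $\Em\cL$ established in the preceding theorem: the two hypotheses on $u$ and $d$ say exactly that $(P,u,d)$ is the canonical comonoid induced by that structure, and such a comonoid is unique. I would begin with $u$. Since $u\in\Em\cL(P,\Sone)$ and $\Sone$ is the terminal object of $\Em\cL$, there is a single morphism $P\to\Sone$, which the theorem computes to be $\Weak{\Coalgca P}\Compl\Coalgm P$; hence $u=\Weak{\Coalgca P}\Compl\Coalgm P$.

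Next I would show that $d$ is the diagonal, the key being that each of the two commuting triangles in the hypothesis is, after composition with the relevant unitor, the equation $\Projc i\Compl d=\Id$ for $i=0,1$. Concretely, substituting the value of $u$ just found turns $\Leftu_{\Coalgca P}\Compl\Tens u{\Coalgca P}$ into the first projection given by the theorem, so that the first triangle yields $\Projc 0\Compl d=\Leftu_{\Coalgca P}\Compl\Invp{\Leftu_{\Coalgca P}}=\Id$; symmetrically, the second triangle (using $\Rightu$) yields $\Projc 1\Compl d=\Id$. By the universal property of the product $\Tens PP$ in $\Em\cL$ this forces $d=\Tuplec{\Id,\Id}$, and unfolding the pairing formula of the theorem with $f_0=f_1=\Id$ gives the announced expression $d=\Tensp{\Der{\Coalgca P}}{\Der{\Coalgca P}}\Compl\Contr{\Coalgca P}\Compl\Coalgm P$.

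For the final square I would argue that it merely records that $\Coalgm P$ is a comonoid morphism. Indeed, the second coalgebra law $\Excl{\Coalgm P}\Compl\Coalgm P=\Digg{\Coalgca P}\Compl\Coalgm P$ says precisely that $\Coalgm P$ is a morphism of $\Em\cL$ from $P$ to the free coalgebra $(\Excl{\Coalgca P},\Digg{\Coalgca P})$; being a morphism of the cartesian category $\Em\cL$ it commutes with the canonical diagonals, and passing to the underlying morphisms in $\cL$ yields the square, once the underlying diagonal of the free coalgebra $\Excl{\Coalgca P}$ is identified with the contraction $\Contr{\Coalgca P}$.

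I expect this last identification to be the only genuine computation, and hence the main obstacle. Unfolding the pairing formula on the free coalgebra presents its diagonal as $\Tensp{\Der{\Excl{\Coalgca P}}}{\Der{\Excl{\Coalgca P}}}\Compl\Contr{\Excl{\Coalgca P}}\Compl\Digg{\Coalgca P}$, and one must collapse this to $\Contr{\Coalgca P}$ using the Seely compatibility $\Contr{\Excl{\Coalgca P}}\Compl\Digg{\Coalgca P}=\Tens{\Digg{\Coalgca P}}{\Digg{\Coalgca P}}\Compl\Contr{\Coalgca P}$ together with the comonad law $\Der{\Excl{\Coalgca P}}\Compl\Digg{\Coalgca P}=\Id$. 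A self-contained alternative avoids the free-coalgebra diagonal entirely: substitute the value of $d$ directly into the left-hand side $\Tens{\Coalgm P}{\Coalgm P}\Compl d$, rewrite $\Coalgm P\Compl\Der{\Coalgca P}=\Der{\Excl{\Coalgca P}}\Compl\Excl{\Coalgm P}$ by naturality of the counit, and then use naturality of $\Contr{}$ and the second coalgebra law to reach $\Contr{\Coalgca P}\Compl\Coalgm P$; here the work is just the bookkeeping of the exponential structure.
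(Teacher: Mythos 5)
Your proof is correct, and for the first two claims it coincides with the paper's: $u$ is forced by the terminal object of $\Em\cL$, and the two hypothesis triangles, once $u$ is replaced by $\Weak{\Coalgca P}\Compl\Coalgm P$, become $\Projc i\Compl d=\Id$, so that $d=\Tuplec{\Id,\Id}$ by the universal property of $\Tens PP$ and the pairing formula yields the announced expression.

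For the final square your primary argument takes a genuinely different (more conceptual) route. The paper computes $\Tensp{\Coalgm P}{\Coalgm P}\Compl d$ directly: naturality of $\Der{}$, naturality of $\Contr{}$, the coalgebra law $\Excl{\Coalgm P}\Compl\Coalgm P=\Digg{\Coalgca P}\Compl\Coalgm P$, then the compatibility $\Contr{\Excl{\Coalgca P}}\Compl\Digg{\Coalgca P}=\Tensp{\Digg{\Coalgca P}}{\Digg{\Coalgca P}}\Compl\Contr{\Coalgca P}$ and the counit law. You instead read the coalgebra law as saying that $\Coalgm P\in\Em\cL(P,(\Excl{\Coalgca P},\Digg{\Coalgca P}))$ and invoke naturality of the diagonal in the cartesian category $\Em\cL$; this absorbs the two naturality steps into the already-established universal property, leaving exactly one computation — identifying the underlying morphism $\Tensp{\Der{\Excl{\Coalgca P}}}{\Der{\Excl{\Coalgca P}}}\Compl\Contr{\Excl{\Coalgca P}}\Compl\Digg{\Coalgca P}$ of the free coalgebra's diagonal with $\Contr{\Coalgca P}$ — which uses the same two identities as the tail of the paper's calculation. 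So the computational core is identical, your packaging is more structural, and your ``self-contained alternative'' is, step for step, the paper's own proof. One point worth making explicit in your primary route: the product of two $\Em\cL$-morphisms is computed by $\ITens$ on underlying $\cL$-morphisms (equivalently, note that $\Tuplec{\Coalgm P,\Coalgm P}=\Tensp{\Coalgm P}{\Coalgm P}\Compl d$ via the pairing formula), so that naturality of the diagonal in $\Em\cL$ really produces the stated square in $\cL$.
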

\begin{proof}
  The first equation results from the universal property of the
  terminal object. The second one results from the universal property
  of the cartesian product and from the commutation of
  \[
    \begin{tikzcd}
      \Coalgca P\ar[r,"d"]\ar[dr,swap,"\Coalgca P"]
      &
      \Tens{\Coalgca P}{\Coalgca P}\ar[d,"\Projc i"]
      \\
      &
      \Coalgca P
    \end{tikzcd}
  \]
  since
  \( \Projc0\Compl d =\Leftu_{\Coalgca P} \Compl\Tensp{\Weak{\Coalgca
      P}}{\Coalgca P} \Compl\Tensp{\Coalgm P}{\Coalgca P} \Compl d
  =\Leftu_{\Coalgca P} \Compl\Tensp{u}{\Coalgca P} \Compl
  d=\Id_{\Coalgca P} \) and similarly for $\Projc1$.

  For the last commutation, we have
  \begin{align*}
    \Tensp{\Coalgm P}{\Coalgm P}\Compl d
    &=\Tensp{\Coalgm P}{\Coalgm P}
      \Compl\Tensp{\Der{\Coalgca P}}{\Der{\Coalgca P}}
      \Compl\Contr{\Coalgca P}
      \Compl\Coalgm P\\
    &=\Tensp{\Der{\Excl{\Coalgca P}}}{\Der{\Excl{\Coalgca P}}}
      \Tensp{\Excl{\Coalgm P}}{\Excl{\Coalgm P}}
      \Compl\Contr{\Coalgca P}
      \Compl\Coalgm P
      \text{\quad by naturality of }\Der{}\\
    &=\Tensp{\Der{\Excl{\Coalgca P}}}{\Der{\Excl{\Coalgca P}}}
      \Compl\Contr{\Excl{\Coalgca P}}
      \Compl\Excl{\Coalgm P}
      \Compl\Coalgm P
      \text{\quad by naturality of }\Contr{}\\
    &=\Tensp{\Der{\Excl{\Coalgca P}}}{\Der{\Excl{\Coalgca P}}}
      \Compl\Contr{\Excl{\Coalgca P}}
      \Compl\Digg{\Coalgca P}
      \Compl\Coalgm P
      \text{\quad since }\Coalgm{P}\text{ is a coalgebra structure}\\
    &=\Tensp{\Der{\Excl{\Coalgca P}}}{\Der{\Excl{\Coalgca P}}}
      \Compl\Tensp{\Digg{\Coalgca P}}{\Digg{\Coalgca P}}
      \Compl\Contr{\Coalgca P}
      \Compl\Coalgm P
      \text{\quad by definition of }\Contr{}\\
    &=\Contr{\Coalgca P}
      \Compl\Coalgm P\,.
  \end{align*}
\end{proof}

\subsubsection{Lafont categories and the free exponential} %
\label{sec:Lafont-cat}
%
% A commutative comonoid in $\cL$ is a tuple %
% $C=(\Comonca C,\Comonw C,\Comonc C)$ such where %
% $\Comonca C$ is an object of $\cL$, %
% $\Comonw C\in\cL(\Comonca C,\Sone)$ and %
% $\Comonc C\in\cL(\Comonca C,\Tens{\Comonca C}{\Comonca C})$ such that
% the following diagrams commute
% \[
%   \begin{tikzcd}
%     \Tens\Sone{\Comonca C}
%     &[1em]
%     \Tens{\Comonca C}{\Comonca C}
%     \ar[l,swap,"\Tens{\Comonw C}{\Comonca C}"]
%     \ar[r,"\Tens{\Comonca C}{\Comonw C}"]
%     &[1em]
%     \Tens{\Comonca C}{\Sone}
%     \\
%     &
%     \Comonca C
%     \ar[u,"\Comonc C"]
%     \ar[ul,"\Invp{\Leftu_{\Comonca C}}"]
%     \ar[ur,swap,"\Invp{\Rightu_{\Comonca C}}"]
%   \end{tikzcd}
%   \Treesep
%   \begin{tikzcd}
%     \Comonca C
%     &
%     \Tens{\Comonca C}{\Comonca C}
%     &
%     \Tens{\Tensp{\Comonca C}{\Comonca C}}{\Comonca C}
%     \\
%     \Tens{\Comonca C}{\Comonca C}
%     &&
%     \Tens{\Comonca C}{\Tensp{\Comonca C}{\Comonca C}}    
%   \end{tikzcd}
% \]
% Let $\Cm\cL$ be the category of commutative comonoids over $\cL$. %

In many interesting models of LL, the exponential resource modality is
completely determined by the tensor product; in that case one says
that the exponential is free. We provide the precise definition of
such categories and give some of their properties that we shall use in
the paper.

Let $\cL$ be an SMC. %
Remember from~\cite{Mellies09} that $\cL$ is a \emph{Lafont category}
if the forgetful functor %
$U:\Cm\cL\to\cL$ has a right adjoint $E:\cL\to\Cm\cL$ which maps an
object $X$ to a commutative comonoid %
$(\Excl X,\Weak X,\Contr X)$.
In that case we use $(\Excl\_,\Der{},\Digg{})$ for the associated
comonad $UE$ called the \emph{free exponential} of the SMC $\cL$.

More explicitly this means that for any object $X$ of $\cL$, %
for any commutative comonoid %
$C=(\Comonca C,\Comonw C:\Comonca C\to\Sone, \Comonc C:\Comonca
C\to\Tens{\Comonca C}{\Comonca C})$ and any %
$f\in\cL(\Comonca C,X)$, there is exactly one morphism %
$f^\ITens\in\cL/X((\Comonca C,f),(\Excl X,\Der X))$ which is a
comonoid morphism.
In other words there is exactly one morphism %
$f^\ITens\in\cL(\Coalgca C,\Excl X)$ such that the three following
diagrams commute.
\[
  \begin{tikzcd}
    \Comonca C
    \ar[r,"f^\ITens"]
    \ar[rd,swap,"f"]
    &
    \Excl X
    \ar[d,"\Der X"]
    \\
    &
    X
  \end{tikzcd}
  \Treesep
  \begin{tikzcd}
    \Comonca C
    \ar[r,"f^\ITens"]
    \ar[rd,swap,"\Comonw C"]
    &
    \Excl X
    \ar[d,"\Weak X"]
    \\
    &
    \Sone
  \end{tikzcd}
  \Treesep
  \begin{tikzcd}
    \Comonca C
    \ar[r,"f^\ITens"]
    \ar[d,swap,"\Comonc C"]
    &[1em]
    \Excl X
    \ar[d,"\Contr X"]
    \\
    \Tens{\Comonca C}{\Comonca C}
    \ar[r,"\Tens{f^\ITens}{f^\ITens}"]
    &
    \Tens{\Excl X}{\Excl X}
  \end{tikzcd}
\]
\begin{lemma}\label{lemma:comon-coalg}
  Let $\cL$ be a Lafont category. For any commutative comonoid $C$
  there is exactly one morphism
  $\delta_C\in\cL(\Comonca C,\Excl{\Comonca C})$ such that the
  following diagrams commute.
  \[
    \begin{tikzcd}
      \Comonca C \ar[r,"\delta_C"] \ar[rd,swap,"\Id"]
      & \Excl{\Comonca C}
      \ar[d,"\Der{\Comonca C}"]
      \\
      & \Comonca C
    \end{tikzcd}
    \Treesep
    \begin{tikzcd}
      \Comonca C \ar[r,"\delta_C"] \ar[rd,swap,"\Comonw C"] &
      \Excl{\Comonca C} \ar[d,"\Weak{\Comonca C}"]
      \\
      & \Sone
    \end{tikzcd}
    \Treesep
    \begin{tikzcd}
      \Comonca C \ar[r,"\delta_C"]
      \ar[d,swap,"\Comonc C"]
      &[1em]
      \Excl{\Comonca C}
      \ar[d,"\Contr{\Comonca C}"]
      \\
      \Tens{\Comonca C}{\Comonca C}
      \ar[r,"\Tens{\delta_C}{\delta_C}"]
      & \Tens{\Excl{\Comonca C}}{\Excl{\Comonca C}}
    \end{tikzcd}
  \]
  Moreover $(\Comonca C,\delta_C)$ is a $\oc$-coalgebra.
\end{lemma}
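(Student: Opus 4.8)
The plan is to read off $\delta_C$ directly from the universal property of the free exponential, instantiated at $X=\Comonca C$ and $f=\Id_{\Comonca C}$. First I would set $\delta_C:=(\Id_{\Comonca C})^\ITens$. With this choice the three diagrams characterising $f^\ITens$ become, term by term, exactly the three diagrams in the statement: the triangle $\Der{\Comonca C}\Compl\delta_C=\Id$, the weakening triangle $\Weak{\Comonca C}\Compl\delta_C=\Comonw C$, and the contraction square $\Contr{\Comonca C}\Compl\delta_C=\Tens{\delta_C}{\delta_C}\Compl\Comonc C$. Hence existence and uniqueness of a $\delta_C$ satisfying these diagrams are nothing but the existence-and-uniqueness clause of the Lafont universal property. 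Note in particular that the second and third diagrams say precisely that $\delta_C$ is a comonoid morphism from $C$ into the cofree comonoid $(\Excl{\Comonca C},\Weak{\Comonca C},\Contr{\Comonca C})$.

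It then remains to check that $(\Comonca C,\delta_C)$ is a $\oc$-coalgebra. The counit law is the first diagram, already in hand. For the coassociativity law I must show $\Digg{\Comonca C}\Compl\delta_C=\Excl{\delta_C}\Compl\delta_C$, both sides being morphisms $\Comonca C\to\Excll{\Comonca C}$. The idea is to invoke uniqueness a second time, now for the universal property at $X=\Excl{\Comonca C}$ and $f=\delta_C$: any comonoid morphism $h\colon C\to(\Excll{\Comonca C},\Weak{\Excl{\Comonca C}},\Contr{\Excl{\Comonca C}})$ with $\Der{\Excl{\Comonca C}}\Compl h=\delta_C$ is forced to equal $(\delta_C)^\ITens$. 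So it suffices to verify, for each of the two candidate morphisms, that (i) it is a comonoid morphism into that cofree comonoid and (ii) its post-composition with $\Der{\Excl{\Comonca C}}$ equals $\delta_C$.

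Condition (ii) is a two-line computation. On the one hand $\Der{\Excl{\Comonca C}}\Compl\Digg{\Comonca C}\Compl\delta_C=\delta_C$ by the comonad counit law $\Der{\Excl{\Comonca C}}\Compl\Digg{\Comonca C}=\Id$. On the other hand, by naturality of $\Der{}$ one has $\Der{\Excl{\Comonca C}}\Compl\Excl{\delta_C}=\delta_C\Compl\Der{\Comonca C}$, whence $\Der{\Excl{\Comonca C}}\Compl\Excl{\delta_C}\Compl\delta_C=\delta_C\Compl\Der{\Comonca C}\Compl\delta_C=\delta_C$, using the counit triangle of the first paragraph. Condition (i) is where the Lafont structure does the real work, and is the step I expect to require the most care. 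Writing the free exponential as $UE$ for the Lafont adjunction $U\Adj E$, the functor $E$ sends objects to cofree comonoids and morphisms to comonoid morphisms, so $\Excl{\delta_C}=U(E\delta_C)$ underlies a comonoid morphism; and the comultiplication $\Digg{\Comonca C}=U(\eta_{E\Comonca C})$ is $U$ of a component of the adjunction unit, hence also a comonoid morphism $(\Excl{\Comonca C},\Weak{\Comonca C},\Contr{\Comonca C})\to(\Excll{\Comonca C},\Weak{\Excl{\Comonca C}},\Contr{\Excl{\Comonca C}})$.

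Since $\delta_C$ is itself a comonoid morphism and comonoid morphisms compose, both $\Digg{\Comonca C}\Compl\delta_C$ and $\Excl{\delta_C}\Compl\delta_C$ are comonoid morphisms into $(\Excll{\Comonca C},\Weak{\Excl{\Comonca C}},\Contr{\Excl{\Comonca C}})$, establishing (i). With (i) and (ii) in place, the uniqueness clause of the universal property forces $\Digg{\Comonca C}\Compl\delta_C=\Excl{\delta_C}\Compl\delta_C$, which is exactly the second $\oc$-coalgebra diagram. Together with the counit law this shows that $(\Comonca C,\delta_C)$ is a $\oc$-coalgebra, completing the proof. The only subtle point is the comonoid-morphism status of $\Digg{}$ and of $\Excl{(\cdot)}$, and it is precisely the Lafont presentation $\Excl{}=UE$ that supplies it for free; a purely Seely-categorical argument would instead have to derive these facts from the coherence of the Seely isomorphisms.
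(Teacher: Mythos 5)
Your proof is correct and follows essentially the same route as the paper: existence and uniqueness of $\delta_C$ come from the Lafont universal property at $X=\Comonca C$, $f=\Id$, and the coassociativity law $\Digg{\Comonca C}\Compl\delta_C=\Excl{\delta_C}\Compl\delta_C$ is obtained by showing both sides are comonoid morphisms into $\Excll{\Comonca C}$ whose post-composition with $\Der{\Excl{\Comonca C}}$ is $\delta_C$, then invoking uniqueness. Your explicit justification that $\Digg{}$ and $\Excl{\delta_C}$ are comonoid morphisms via the adjunction $U\Adj E$ is exactly the fact the paper relies on (and spells out later in the proof of Theorem~\ref{th:comon-coalg-lafont}).
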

\begin{proof}
  The first part of the statement is just a special case of the
  universal property with $X=\Comonca C$ and $f=\Id_X$. For the second
  part we only have to prove
  \[
    \begin{tikzcd}
      \Comonca C
      \ar[r,"\delta_C"]
      \ar[d,swap,"\delta_C"]
      &
      \Excl{\Comonca C}
      \ar[d,"\Excl{\delta_C}"]
      \\
      \Excl{\Comonca C}
      \ar[r,"\Digg{\Coalgca C}"]
      &
      \Excll{\Comonca C}
    \end{tikzcd}
  \]
  Setting %
  $f_1=\Excl{\delta_C}\Compl\delta_C$ and %
  $f_2=\Digg{\Comonca C}\Compl\delta_C$, observe first that %
  $f_1,f_2\in\Cm\cL(C,(\Excll{\Comonca C},\Comonc{\Excl{\Comonca
      C}},\Comonw{\Excl{\Comonca C}}))$ because both are defined by
  composing morphisms in that category. The equation $f_1=f_2$ follows
  by universality, observing that
\[
  \begin{tikzcd}
    \Comonca C
    \ar[r,"f_i"]
    \ar[rd,swap,"\delta_C"]
    &
    \Excl {\Excl{\Comonca C}}
    \ar[d,"\Der {\Excl{\Comonca C}}"]
    \\
    &
    {\Excl{\Comonca C}}
  \end{tikzcd}
\]
for $i=1,2$, which readily results from the naturality of $\Der{}$ and
from the definition of a comonad.
\end{proof}
Here are two important special cases of the above. First, there is
exactly one morphism $\Monz\in\cL(\Sone,\Excl\Sone)$ such that
\[
  \begin{tikzcd}
    \Sone \ar[r,"\Monz"] \ar[rd,swap,"\Id"] & \Excl{\Sone}
    \ar[d,"\Der{\Sone}"]
    \\
    & \Sone
  \end{tikzcd}
  \Treesep
  \begin{tikzcd}
    \Sone \ar[r,"\Monz"] \ar[rd,swap,"\Comonw C"] & \Excl{\Sone}
    \ar[d,"\Id"]
    \\
    & \Sone
  \end{tikzcd}
  \Treesep
  \begin{tikzcd}
    \Sone \ar[r,"\Monz"] \ar[d,swap,"\Invp{\Leftu_\Sone}"] &[1em]
    \Excl{\Sone} \ar[d,"\Contr{\Sone}"]
    \\
    \Tens{\Sone}{\Sone} \ar[r,"\Tens{\Monz}{\Monz}"] &
    \Tens{\Excl{\Sone}}{\Excl{\Sone}}
  \end{tikzcd}
\]
Next there is exactly one morphism %
$\Mont_{X,Y}\in\cL(\Tens{\Excl X}{\Excl Y},\Exclp{\Tens XY})$ such that
\[
  \begin{tikzcd}
    \Tens{\Excl X}{\Excl Y}
    \ar[r,"\Mont_{X,Y}"]
    \ar[rd,swap,"\Tens{\Der X}{\Der Y}"]
    &
    \Exclp{\Tens XY}
    \ar[d,"\Der{\Tens XY}"]
    \\
    &
    \Tens XY
  \end{tikzcd}
  \Treesep
  \begin{tikzcd}
    \Tens{\Excl X}{\Excl Y}
    \ar[r,"\Mont_{X,Y}"]
    \ar[d,swap,"\Tens{\Weak X}{\Weak Y}"]
    &
    \Exclp{\Tens XY}
    \ar[d,"\Weak{\Tens XY}"]
    \\
    \Tens\Sone\Sone
    \ar[r,"\Leftu_\Sone"]
    &
    \Sone
  \end{tikzcd}
\]
\[
  \begin{tikzcd}
    \Tens{\Excl X}{\Excl Y}
    \ar[rr,"\Mont_{X,Y}"]
    \ar[d,swap,"\Tens{\Contr X}{\Contr Y}"]
    &[-0.8em]&[2.6em]
    \Exclp{\Tens XY}
    \ar[d,"\Contr{\Tens XY}"]
    \\
    \Tens{\Tens{\Excl X}{\Excl X}}{\Tens{\Excl Y}{\Excl Y}}
    \ar[r,"\Sym_{2,3}"]
    &
    \Tens{\Tens{\Excl X}{\Excl Y}}{\Tens{\Excl X}{\Excl Y}}
    \ar[r,"\Tens{\Mont_{X,Y}}{\Mont_{X,Y}}"]
    &
    \Tens{\Exclp{\Tens XY}}{\Exclp{\Tens XY}}
  \end{tikzcd}
\]
These two morphisms turn $\Excl\_$ into a lax monoidal comonad on the
SMC $\cL$.

The correspondence %
$C\mapsto(\Comonca C,\delta_C)$ can be turned into a functor %
$\Calgofcmon:\Cm\cL\to\Em\cL$ acting as the identity on morphisms. Let
indeed $f\in\Cm\cL(C,D)$, it suffices to prove that %
$\delta_D\Compl f=\Excl f\Compl\delta_C\in\cL(\Comonca
C,\Excl{\Comonca D})$. Let $f_0=\delta_D\Compl f$ and %
$f_1=\Excl f\Compl\delta_C$. %
By the universal property, it suffices to prove that the three
following diagrams commute for $i=0,1$:
\[
  \begin{tikzcd}
    \Comonca C
    \ar[r,"f_i"]
    \ar[rd,swap,"f"]
    &
    \Excl {\Comonca D}
    \ar[d,"\Der {\Comonca D}"]
    \\
    &
    {\Comonca D}
  \end{tikzcd}
  \Treesep
  \begin{tikzcd}
    \Comonca C
    \ar[r,"f_i"]
    \ar[rd,swap,"\Comonw C"]
    &
    \Excl {\Comonca D}
    \ar[d,"\Weak {\Comonca D}"]
    \\
    &
    \Sone
  \end{tikzcd}
  \Treesep
  \begin{tikzcd}
    \Comonca C
    \ar[r,"f_i"]
    \ar[d,swap,"\Comonc C"]
    &[1em]
    \Excl {\Comonca D}
    \ar[d,"\Contr {\Comonca D}"]
    \\
    \Tens{\Comonca C}{\Comonca C}
    \ar[r,"\Tens{f_i}{f_i}"]
    &
    \Tens{\Excl {\Comonca D}}{\Excl {\Comonca D}}
  \end{tikzcd}
\]
These commutations follow from the commutations satisfied by %
$\delta_C$ and $\delta_D$ and from the fact that $f\in\Cm\cL(C,D)$. As
an example of these computations, we have
\begin{align*}
  \Contr{\Comonca D}\Compl f_0
  &=\Contr{\Comonca D}\Compl\delta_D\Compl f\\
  &=\Tensp{\delta_D}{\delta_D}\Compl\Comonc D\Compl f\\
  &=\Tensp{\delta_D}{\delta_D}\Compl\Tensp{f}{f}\Compl\Comonc C\\
  &=\Tensp{f_0}{f_0}\Compl\Comonc C
\end{align*}
and
\begin{align*}
  \Contr{\Comonca D}\Compl f_1
  &=\Contr{\Comonca D}\Compl\Excl f\Compl\delta_C\\
  &=\Tensp{\Excl f}{\Excl f}\Compl\Contr{\Comonca C}\Compl \delta_C\\
  &=\Tensp{\Excl f}{\Excl f}
    \Compl\Tensp{\delta_C}{\delta_C}\Compl\Comonc C\\
  &=\Tensp{f_1}{f_1}\Compl\Comonc C  \,.
\end{align*}

Conversely given a $\oc$-coalgebra %
$P=(\Coalgca P,\Coalgm P)$ one can define a commutative comonoid
structure on $\Coalgca P$ by the following two morphisms
\[
  \begin{tikzcd}
    \Coalgca P\ar[r,"\Coalgm P"]
    &
    \Excl{\Coalgca P}\ar[r,"\Weak{\Coalgca P}"]
    &[0.6em]
    \Sone
    &[2em]\\[-1em]
    \Coalgca P\ar[r,"\Coalgm P"]
    &
    \Excl{\Coalgca P}\ar[r,"\Contr{\Coalgca P}"]
    &
    \Tens{\Excl{\Coalgca P}}{\Excl{\Coalgca P}}
    \ar[r,"\Tens{\Der{\Coalgca P}}{\Der{\Coalgca P}}"]
    &
    \Tens{\Coalgca P}{\Coalgca P}
  \end{tikzcd}
\]
that we respectively denote as %
$\Coalgw P$ and $\Coalgc P$. This correspondence %
$P\mapsto\Cmonofcalg(P)=(\Coalgca P,\Coalgw P,\Coalgc P)$ can be
turned into a functor %
$\Cmonofcalg:\Em\cL\to\Cm\cL$ acting as the identity on morphisms.

\begin{theorem}\label{th:comon-coalg-lafont}
  For any Lafont SMC $\cL$, the functors $\Calgofcmon$ and
  $\Cmonofcalg$ define an isomorphism of categories between $\Cm\cL$
  and $\Em\cL$.
\end{theorem}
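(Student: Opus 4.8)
The plan is to use that both $\Calgofcmon$ and $\Cmonofcalg$ are functors (established above) that act as the identity on underlying $\cL$-objects and on morphisms. Consequently the two composite functors also act as the identity on morphisms, so to conclude that they are mutually inverse---hence that we have an isomorphism of categories---it suffices to check that each composite is the identity at the level of objects. That is, I must show that $\Cmonofcalg(\Calgofcmon(C))=C$ for every commutative comonoid $C$ and that $\Calgofcmon(\Cmonofcalg(P))=P$ for every $\oc$-coalgebra $P$; in both cases the underlying object of $\cL$ is preserved on the nose, so only the structure maps need to be compared.

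For the first composite, fix $C$, so that $\Calgofcmon(C)=(\Comonca C,\delta_C)$ with $\delta_C$ the morphism of Lemma~\ref{lemma:comon-coalg}. Applying $\Cmonofcalg$ produces the comonoid on $\Comonca C$ whose weakening is $\Weak{\Comonca C}\Compl\delta_C$ and whose contraction is $\Tens{\Der{\Comonca C}}{\Der{\Comonca C}}\Compl\Contr{\Comonca C}\Compl\delta_C$. The weakening equals $\Comonw C$ immediately by the second defining diagram of $\delta_C$. For the contraction I would rewrite $\Contr{\Comonca C}\Compl\delta_C$ as $\Tens{\delta_C}{\delta_C}\Compl\Comonc C$ using the third diagram, and then cancel $\Der{\Comonca C}\Compl\delta_C=\Id$ using the first diagram, recovering $\Comonc C$. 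Hence the comonoid obtained is exactly $C$.

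For the second composite, fix $P$, so that $\Cmonofcalg(P)=(\Coalgca P,\Coalgw P,\Coalgc P)$ and $\Calgofcmon(\Cmonofcalg(P))=(\Coalgca P,\delta_{\Cmonofcalg P})$. The key move is to invoke the uniqueness clause of Lemma~\ref{lemma:comon-coalg}: to prove $\delta_{\Cmonofcalg P}=\Coalgm P$ it is enough to verify that $\Coalgm P$ itself satisfies the three diagrams defining $\delta$ for the comonoid $\Cmonofcalg P$. The counit diagram $\Der{\Coalgca P}\Compl\Coalgm P=\Id$ is precisely the first coalgebra law, and the weakening diagram $\Weak{\Coalgca P}\Compl\Coalgm P=\Coalgw P$ is the very definition of $\Coalgw P$. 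The remaining contraction diagram, namely $\Contr{\Coalgca P}\Compl\Coalgm P=\Tens{\Coalgm P}{\Coalgm P}\Compl\Coalgc P$, is exactly the last commutation of Proposition~\ref{prop:coalg-comon}, applied with $d=\Coalgc P$.

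The only genuinely computational step is this last contraction diagram, and I expect it to be the main obstacle; however it has already been isolated and established in Proposition~\ref{prop:coalg-comon} (from naturality of $\Der{}$ and $\Contr{}$, the coalgebra identity $\Excl{\Coalgm P}\Compl\Coalgm P=\Digg{\Coalgca P}\Compl\Coalgm P$, and $\Der{\Excl{\Coalgca P}}\Compl\Digg{\Coalgca P}=\Id$), so here it reduces to a citation. Everything else is a direct reading-off of the defining diagrams together with the counit law, so once the uniqueness principle of Lemma~\ref{lemma:comon-coalg} is used to turn the equality $\delta_{\Cmonofcalg P}=\Coalgm P$ into diagram-checking, the argument is essentially bookkeeping.
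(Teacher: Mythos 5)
Your proof is correct and follows essentially the same route as the paper's: both directions reduce to comparing structure maps (since the functors are the identity on morphisms), the first via the defining diagrams of $\delta_C$ with the counit cancellation, the second via the uniqueness clause of Lemma~\ref{lemma:comon-coalg}. The only difference is cosmetic: where you discharge the contraction diagram by citing the last commutation of Proposition~\ref{prop:coalg-comon}, the paper repeats that identical computation inline.
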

\begin{proof}
  Let $C\in\Cm\cL$ and let $P=\Calgofcmon(C)$ so that %
  $\Coalgca P=\Comonca C$ and $\Coalgm P=\delta_C$. %
  Let $D=\Cmonofcalg(P)$ so that $\Comonca D=\Comonca C$,
  \begin{align*}
    \Comonw D&=\Weak{\Coalgca P}\Compl\Coalgm P
               =\Weak{\Coalgca C}\Compl\delta_C=\Comonw C\\
    \Comonc D&=\Tensp{\Der{\Coalgca P}}{\Der{\Coalgca P}}
               \Compl\Contr{\Coalgca P}
               \Compl\Coalgm P\\
             &=\Tensp{\Der{\Comonca C}}{\Der{\Comonca C}}
               \Compl\Contr{\Comonca C}
               \Compl\delta_C\\
             &=\Tensp{\Der{\Comonca C}}{\Der{\Comonca C}}
               \Tensp{\delta_C}{\delta_C}
               \Compl\Contr{\Comonca C}\\
             &=\Contr{\Comonca C}\,.
  \end{align*}
  Conversely let $P\in\Em\cL$. Let $C=\Cmonofcalg(P)$ so that %
  $\Comonca C=\Coalgca P$, %
  $\Comonw C=\Weak{\Coalgca P}\Compl\Coalgm P$ and
  $\Comonc C=\Tensp{\Der{\Coalgca P}}{\Der{\Coalgca P}}
  \Compl\Contr{\Coalgca P} \Compl\Coalgm P$. Let %
  $Q=\Calgofcmon(C)=(\Comonca P,\delta_C)$. To prove that %
  $\delta_C=\Coalgm P$ it suffices to show that the following diagrams
  commute
  \[
    \begin{tikzcd}
      \Comonca C \ar[r,"\Coalgm P"] \ar[rd,swap,"\Id"]
      & \Excl{\Comonca C}
      \ar[d,"\Der{\Comonca C}"]
      \\
      & \Comonca C
    \end{tikzcd}
    \Treesep
    \begin{tikzcd}
      \Comonca C \ar[r,"\Coalgm P"] \ar[rd,swap,"\Comonw C"] &
      \Excl{\Comonca C} \ar[d,"\Weak{\Comonca C}"]
      \\
      & \Sone
    \end{tikzcd}
    \Treesep
    \begin{tikzcd}
      \Comonca C \ar[r,"\Coalgm P"]
      \ar[d,swap,"\Comonc C"]
      &[1em]
      \Excl{\Comonca C}
      \ar[d,"\Contr{\Comonca C}"]
      \\
      \Tens{\Comonca C}{\Comonca C}
      \ar[r,"\Tens{\Coalgm P}{\Coalgm P}"]
      & \Tens{\Excl{\Comonca C}}{\Excl{\Comonca C}}
    \end{tikzcd}
  \]
  which results from the definition of $C$ and from the fact that $P$
  is a coalgebra. Let us check for instance the last one:
  \begin{align*}
    \Tensp{\Coalgm P}{\Coalgm P}
    \Compl\Comonc C
    &=\Tensp{\Coalgm P}{\Coalgm P}
      \Compl\Tensp{\Der{\Coalgca P}}{\Der{\Coalgca P}}
      \Compl\Contr{\Coalgca P}
      \Compl\Coalgm P\\
    &=\Tensp{\Der{\Excl{\Coalgca P}}}{\Der{\Excl{\Coalgca P}}}
      \Compl\Tensp{\Excl{\Coalgm P}}{\Excl{\Coalgm P}}
      \Compl\Contr{\Coalgca P}
      \Compl\Coalgm P\\
    &=\Tensp{\Der{\Excl{\Coalgca P}}}{\Der{\Excl{\Coalgca P}}}
      \Compl\Contr{\Excl{\Coalgca P}}
      \Compl\Excl{\Coalgm P}
      \Compl\Coalgm P\\
    &=\Tensp{\Der{\Excl{\Coalgca P}}}{\Der{\Excl{\Coalgca P}}}
      \Compl\Contr{\Excl{\Coalgca P}}
      \Compl\Digg{\Coalgca P}
      \Compl\Coalgm P\\
    % &=\Tensp{\Der{\Excl{\Coalgca P}}}{\Der{\Excl{\Coalgca P}}}
    %   \Compl\Contr{\Excl{\Coalgca P}}
    %   \Compl\Digg{\Coalgca P}
    %   \Compl\Coalgm P\\
    &=\Tensp{\Der{\Excl{\Coalgca P}}}{\Der{\Excl{\Coalgca P}}}
      \Tensp{\Digg{\Coalgca P}}{\Digg{\Coalgca P}}
      \Compl\Contr{\Coalgca P}
      \Compl\Coalgm P\\
    &=\Compl\Contr{\Comonca C}
      \Compl\Coalgm P
  \end{align*}
  where we have used in particular the fact that for any $X\in\cL$,
  one has %
  $\Digg X\in\Cm\cL(E(X),E(\Excl X))$ by the fact that the comonad
  $\Excl\_$ is induced by the adjunction $U\Adj E$.

  This shows that $\Cmonofcalg$ and $\Calgofcmon$ define a bijective
  correspondence on objects and since both functors act as the
  identity on morphisms, our contention is proven.
\end{proof}

In that way we retrieve the fact that $\Em\cL$ is cartesian since
$\Cm\cL$ is always cartesian by Theorem~\ref{th:comon-cat-cart} (even
if $\cL$ is not Lafont). Remember that in the general (not necessarily
Lafont) case the fact that $\Em\cL$ is cartesian could be proven under
the additional assumption that $\cL$ is a resource category. Remember
also that a cartesian Lafont SMC is automatically a resource category,
see~\cite{Mellies09}.

\begin{lemma}\label{lemma:lafont-term-prod-coalg}
  Let $C_0,C_1\in\Cm\cL$. Remember that we use $\Tens{C_0}{C_1}$ for
  the cartesian product of $C_0$ and $C_1$ in $\Cm\cL$ (see
  Theorem~\ref{th:comon-cat-cart}). Then we have
  \begin{align*}
    \delta_\Sone=\Monz
    \quad\quad
    \delta_{\Tens{C_0}{C_1}}
    =\Mont_{{\Comonca{C_0}},{\Comonca{C_1}}}
    \Compl\Tensp{\delta_{C_0}}{\delta_{C_1}}
    \in\cL(\Tens{\Comonca{C_0}}{\Comonca{C_1}},
    \Exclp{\Tens{\Comonca{C_0}}{\Comonca{C_1}}})\,.
  \end{align*}
\end{lemma}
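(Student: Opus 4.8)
The plan is to prove both identities through the uniqueness built into the universal property of the free exponential. By Lemma~\ref{lemma:comon-coalg}, for a commutative comonoid $C$ the morphism $\delta_C$ is the \emph{unique} arrow $\Comonca C\to\Excl{\Comonca C}$ making the counit, weakening and contraction diagrams of that lemma commute. Hence for each identity I exhibit the proposed right-hand side as a candidate arrow of the correct type and check that it satisfies those same three commutations for the relevant comonoid; uniqueness then forces it to coincide with the corresponding $\delta$.

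The identity $\delta_\Sone=\Monz$ is essentially immediate. The terminal comonoid $\Sone$ of $\Cm\cL$ carries the structure maps $\Comonw\Sone=\Id_\Sone$ and $\Comonc\Sone=\Invp{\Leftu_\Sone}$ (Theorem~\ref{th:comon-cat-cart}), so specializing Lemma~\ref{lemma:comon-coalg} to $C=\Sone$ yields precisely the three diagrams that were used to define $\Monz$ as a special case of the universal property. The two morphisms therefore solve the very same universal problem, and uniqueness gives $\delta_\Sone=\Monz$.

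For the second identity I set $g=\Mont_{\Comonca{C_0},\Comonca{C_1}}\Compl\Tensp{\delta_{C_0}}{\delta_{C_1}}$ and verify the three conditions of Lemma~\ref{lemma:comon-coalg} for the product comonoid $\Tens{C_0}{C_1}$, recalling from Theorem~\ref{th:comon-cat-cart} that its structure maps are $\Comonw{\Tens{C_0}{C_1}}=\Leftu_\Sone\Compl\Tensp{\Comonw{C_0}}{\Comonw{C_1}}$ and $\Comonc{\Tens{C_0}{C_1}}=\Sym_{2,3}\Compl\Tensp{\Comonc{C_0}}{\Comonc{C_1}}$. The counit condition comes out by pushing $\Der{}$ through $\Mont$ (using $\Der{}\Compl\Mont=\Tensp{\Der{}}{\Der{}}$) and then applying $\Der{}\Compl\delta_{C_i}=\Id$ to each factor, together with bifunctoriality of $\ITens$; the weakening condition is analogous, combining the weakening law of $\Mont$, the equalities $\Weak{}\Compl\delta_{C_i}=\Comonw{C_i}$, and the description of $\Comonw{\Tens{C_0}{C_1}}$ above. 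Both of these are routine.

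The contraction condition carries the real content. Starting from $\Contr{}\Compl g$, I first use the contraction law of the Seely map $\Mont$, which rewrites $\Contr{}\Compl\Mont$ as $\Tensp{\Mont}{\Mont}\Compl\Sym_{2,3}\Compl\Tensp{\Contr{}}{\Contr{}}$, and then replace each $\Contr{}\Compl\delta_{C_i}$ by $\Tensp{\delta_{C_i}}{\delta_{C_i}}\Compl\Comonc{C_i}$. The outcome contains the subterm $\Sym_{2,3}\Compl\Tensp{\Tensp{\delta_{C_0}}{\delta_{C_0}}}{\Tensp{\delta_{C_1}}{\delta_{C_1}}}$, and the crux is to commute the braiding past these four parallel copies of the $\delta$'s; this is exactly the naturality of $\Sym_{2,3}$, which turns that subterm into $\Tensp{\Tensp{\delta_{C_0}}{\delta_{C_1}}}{\Tensp{\delta_{C_0}}{\delta_{C_1}}}\Compl\Sym_{2,3}$. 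After this exchange the trailing $\Sym_{2,3}\Compl\Tensp{\Comonc{C_0}}{\Comonc{C_1}}$ reassembles into $\Comonc{\Tens{C_0}{C_1}}$ while the two outer $\Mont$'s merge with the $\delta$'s into $\Tensp gg$, giving $\Tensp gg\Compl\Comonc{\Tens{C_0}{C_1}}$ as required. The main obstacle is purely the bookkeeping of the four tensor positions across the two occurrences of $\Sym_{2,3}$ and setting up the naturality instance with the factors in the right order; no idea beyond the defining laws of $\delta_{C_i}$ and $\Mont$ and the symmetry coherence is needed.
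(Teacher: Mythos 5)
Your proof is correct and follows exactly the route the paper intends: the paper's own proof is the one-line remark that ``one just checks that the right hand morphisms satisfy the three diagrams of Lemma~\ref{lemma:comon-coalg}'', and your verification of the counit, weakening and contraction diagrams (using the defining laws of $\Monz$, $\Mont$ and the $\delta_{C_i}$, plus naturality of $\Sym_{2,3}$) together with the uniqueness clause is precisely that check, carried out in detail.
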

\begin{proof}
  One just checks that the right hand morphisms satisfy the three
  diagrams of Lemma~\ref{lemma:comon-coalg}.
\end{proof}

\begin{theorem}\label{th:lafont-weak-contr-coalg-morph}
  Let $\cL$ be a Lafont category and let $C\in\Cm\cL$. Then the
  following diagrams commute
  \[
    \begin{tikzcd}
      \Coalgca C
      \ar[r,"\delta_C"]
      \ar[d,swap,"\Comonw C"]
      &
      \Excl{\Coalgca C}
      \ar[d,"\Excl{\Comonw C}"]
      \\
      \Sone
      \ar[r,"\Monz"]
      &
      \Excl\Sone
    \end{tikzcd}
    \Treesep
    \begin{tikzcd}
      \Comonca C
      \ar[rr,"\delta_C"]
      \ar[d,swap,"\Comonc C"]
      &[0.4em]
      &
      \Excl{\Comonca C}
      \ar[d,"\Excl{\Comonc C}"]
      \\
      \Tens{\Comonca C}{\Comonca C}
      \ar[r,"\Tens{\delta_C}{\delta_C}"]
      &
      \Tens{\Excl{\Comonca C}}{\Excl{\Comonca C}}
      \ar[r,"\Mont_{\Comonca C,\Comonca C}"]
      &
      \Exclp{\Tens{\Comonca C}{\Comonca C}}
    \end{tikzcd}
  \]
\end{theorem}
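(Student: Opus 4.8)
The plan is to obtain both commutations at once, as instances of the functoriality of the comonoid-to-coalgebra correspondence. First I would record that, by Lemma~\ref{ref:weak-contr-comon-morph}, the structure maps of $C$ are themselves comonoid morphisms: $\Comonw C\in\Cm\cL(C,\Sone)$ and $\Comonc C\in\Cm\cL(C,\Tens CC)$, where $\Tens CC$ denotes the cartesian product of $C$ with itself in $\Cm\cL$ (Theorem~\ref{th:comon-cat-cart}).

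Next I would apply the functor $\Calgofcmon:\Cm\cL\to\Em\cL$ of Theorem~\ref{th:comon-coalg-lafont}, which sends a comonoid $C$ to the $\oc$-coalgebra $(\Comonca C,\delta_C)$ and acts as the identity on underlying morphisms. Functoriality then yields immediately that $\Comonw C$ and $\Comonc C$ are also \emph{coalgebra} morphisms, i.e.\ morphisms of $\Em\cL$. Unfolding the defining square of a coalgebra morphism for $\Comonw C\colon(\Comonca C,\delta_C)\to\Calgofcmon(\Sone)$ produces exactly $\Excl{\Comonw C}\Compl\delta_C=\delta_\Sone\Compl\Comonw C$, and for $\Comonc C\colon(\Comonca C,\delta_C)\to\Calgofcmon(\Tens CC)$ it produces $\Excl{\Comonc C}\Compl\delta_C=\delta_{\Tens CC}\Compl\Comonc C$.

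It then remains to rewrite the two right-hand sides using the explicit coalgebra structures computed in Lemma~\ref{lemma:lafont-term-prod-coalg}. Since $\delta_\Sone=\Monz$, the first equation is precisely the commutation asserted by the first diagram. Since $\delta_{\Tens CC}=\Mont_{\Comonca C,\Comonca C}\Compl\Tensp{\delta_C}{\delta_C}$, the second equation becomes $\Excl{\Comonc C}\Compl\delta_C=\Mont_{\Comonca C,\Comonca C}\Compl\Tensp{\delta_C}{\delta_C}\Compl\Comonc C$, which is exactly the second diagram (both composites having codomain $\Exclp{\Tens{\Comonca C}{\Comonca C}}$).

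There is no genuine obstacle here: the argument rests entirely on the two facts already established above, namely that $\Calgofcmon$ is the identity on morphisms and that it carries the comonoid product $\Tens CC$ to the coalgebra described in Lemma~\ref{lemma:lafont-term-prod-coalg}. If one preferred an argument not routed through the isomorphism of Theorem~\ref{th:comon-coalg-lafont}, the same two equations could instead be checked directly by verifying that each side satisfies the three universal diagrams of Lemma~\ref{lemma:comon-coalg} characterising the canonical map into $\Excl{(-)}$; that route is longer but wholly mechanical.
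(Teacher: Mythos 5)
Your proposal is correct and takes essentially the same route as the paper's own proof: both invoke Lemma~\ref{ref:weak-contr-comon-morph} to see that $\Comonw C$ and $\Comonc C$ are comonoid morphisms, transport them along the identity-on-morphisms functor $\Calgofcmon$ of Theorem~\ref{th:comon-coalg-lafont} to obtain coalgebra morphisms, and then identify the target coalgebra structures via Lemma~\ref{lemma:lafont-term-prod-coalg}. The only difference is presentational: the paper writes out just the second diagram and declares the first ``completely similar,'' whereas you treat both explicitly.
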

\begin{proof}
  We deal with the second diagram, the argument for the first one
  being completely similar.
  By Lemma~\ref{ref:weak-contr-comon-morph} we have %
  $\Comonc C\in\Cm\cL(C,\Tens CC)$ and hence %
  (since $\Calgofcmon$ is the identity on morphisms) we have %
  $\Comonc C\in\Em\cL(\Calgofcmon(C),\Calgofcmon(\Tens CC))$ which is
  exactly the diagram under consideration by
  Lemma~\ref{lemma:lafont-term-prod-coalg}.
\end{proof}

\subsubsection{Resource Lafont categories} %
\label{sec:lafont-resource-cat}
A resource Lafont category is a resource category $\cL$ where the
exponential arises in the way explained above; in that case one says
that $\Excl\_$ is the free exponential (it is unique up to unique iso
since it is defined by a universal property). This is equivalent to
requiring that
\begin{itemize}
\item $\cL$ is a Lafont SMC
\item and $\cL$ is cartesian.
\end{itemize}
Indeed when these conditions hold the Seely isomorphisms are uniquely
defined by the universal property of the Lafont SMC $\cL$. The lax
monoidality $(\Monz,\Mont)$ induced by these Seely isomorphisms
coincide with the one which is directly induced by the Lafont property
(again by universality). This is why we used the same notations for
both.

\section{Summable categories}\label{sec:sum-cat}%
Let $\cL$ be a category; composition in $\cL$ is denoted by simple
juxtaposition. We develop a categorical axiomatization of a concept of
finite summability in $\cL$ which will then be a \emph{partially
  additive category}~\cite{ArbibManes80}.
% However
% contrarily to most axiomatization of such categories available in the
% litterature we do not require the category to have products or
% coproducts.
The main idea is to equip $\cL$ with a functor $\Sfun$ which has the
flavor of a monad\footnote{And will actually be shown to have a
  canonical monad structure} and intuitively maps an object $X$ to the
object $\Sfun X$ of all pairs $(x_0,x_1)$ of elements of $X$ whose
sum $x_0+x_1$ is well defined. This is another feature of our approach
which is to give a crucial role to such pairs, which are the values on
which derivatives are computed, very much in the spirit of Clifford's
\emph{dual numbers}. However, contrarily to dual numbers our
structures also axiomatize the actual summation of such pairs.

\begin{Example}
  In order to illustrate the definitions and constructions of the
  paper we will use the category $\COH$ of coherence
  spaces~\cite{Girard87} as a running example. An object of this
  category is a pair $E=(\Web E,\Coh E{}{})$ where $\Web E$ is a set
  (the web of $E$) and $\Coh E{}{}$ is a symmetric and reflexive
  relation on $\Web E$.  The set of cliques of a coherence space $E$
  is
\begin{align*}
  \Cl E=\{x\subseteq\Web E\St\forall a,a'\in x\ \Coh E a{a'}\}\,.
\end{align*}
Equipped with $\subseteq$ as order relation, $\Cl E$ is a cpo. Given
coherence spaces $E$ and $F$, we define the coherence space
$\Limpl EF$ by $\Web{\Limpl EF}=\Web E\times\Web F$ and
\begin{align*}
  \Coh{\Limpl EF}{(a,b)}{(a',b')}
  \text{ if }
  \Coh Ea{a'}\Implies(\Coh Fb{b'}\text{ and }b=b'\Implies a=a')\,.
\end{align*}
\begin{lemma}
  If $s\in\Cl{\Limpl EF}$ and $t\in\Cl{\Limpl FG}$ then $t\Compl s$
% TYPO
% (the compositional relation of $t$ and $s$) belongs to
  (the relational composition of $t$ and $s$) belongs to
  $\Cl{\Limpl EG}$ and the diagonal relation $\Id_E$ belongs to
  $\Cl{\Limpl EE}$.
\end{lemma}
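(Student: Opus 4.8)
The plan is to check the two assertions independently, in each case by unfolding the definition of a clique in a linear implication and threading the two conjuncts hidden in its coherence relation. Recall that $\Coh{\Limpl EF}{(a,b)}{(a',b')}$ abbreviates the implication: if $\Coh Ea{a'}$ then \emph{both} $\Coh Fb{b'}$ \emph{and} ($b=b'\Implies a=a'$). Thus membership of a relation $s\subseteq\Web E\times\Web F$ in $\Cl{\Limpl EF}$ says that for any two of its pairs $(a,b),(a',b')$, coherent inputs are sent to coherent outputs (a ``propagation'' half) and equal outputs may only arise from equal inputs (a ``separation'' half).

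For the identity the argument is immediate: I would take two elements $(a,a),(a',a')$ of $\Id_E$ and observe that under the hypothesis $\Coh Ea{a'}$ the required first conjunct $\Coh Ea{a'}$ holds by assumption, while the second conjunct $a=a'\Implies a=a'$ is a tautology. Hence $\Id_E\in\Cl{\Limpl EE}$.

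For composition I would start from two elements $(a,c),(a',c')$ of the relational composite $t\Compl s$, unpack witnesses $b,b'\in\Web F$ with $(a,b),(a',b')\in s$ and $(b,c),(b',c')\in t$, and assume $\Coh Ea{a'}$ in order to establish the two conjuncts for $\Limpl EG$. Applying the clique condition for $s$ to $(a,b),(a',b')$ under $\Coh Ea{a'}$ yields $\Coh Fb{b'}$ together with the separation fact $b=b'\Implies a=a'$. Feeding $\Coh Fb{b'}$ into the clique condition for $t$ applied to $(b,c),(b',c')$ then yields $\Coh Gc{c'}$, which is exactly the propagation conjunct we want, together with the separation fact $c=c'\Implies b=b'$.

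The one step that must be handled with care — and the only place where the argument is more than pure bookkeeping — is the second conjunct $c=c'\Implies a=a'$: it is obtained by \emph{composing} the two separation facts, namely $c=c'\Implies b=b'$ (from $t$) and $b=b'\Implies a=a'$ (from $s$). Once these are chained, both conjuncts of $\Coh{\Limpl EG}{(a,c)}{(a',c')}$ hold, so $t\Compl s\in\Cl{\Limpl EG}$ and the proof is complete. I do not expect a genuine obstacle; the whole content is the correct two-directional use of the coherence of a linear implication, the point to keep in mind being that the separation halves compose from $G$ back to $E$, i.e.\ in the direction opposite to the propagation halves.
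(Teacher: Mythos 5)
Your proof is correct, and it is exactly the standard verification that the paper leaves implicit (the lemma is stated there without proof, being Girard's classical fact about coherence spaces). Both halves — the propagation of coherence forwards and the chaining of the separation implications backwards from $G$ to $E$ — are handled precisely as intended, so there is nothing to add.
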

In that way we have turned the class of coherence spaces into a
category $\COH$ with $\COH(E,F)=\Cl{\Limpl EF}$ and $\COH$ is enriched
over pointed sets, with $0=\emptyset$.
This category is cartesian with
$\With{E_0}{E_1}$ given by
$\Web{\With{E_0}{E_1}}=\{0\}\times\Web{E_0}\cup\{1\}\times\Web{E_1}$
% TYPO
% and \(\Proj i=\{((i,a),a)\St a\in\Web E\}\) for $i=0,1$ and, given
and \(\Proj i=\{((i,a),a)\St a\in\Web{E_i}\}\) for $i=0,1$ and, given
$s_i\in\COH(F,E_i)$ (for $i=0,1$),
\begin{align*}
  \Tuple{s_0,s_1}=\{(b,(i,a))\St i\in\{0,1\}\text{ and }(b,a)\in s_i\}\,.
\end{align*}
Given $s\in\COH(E,F)$ and $x\in\Cl E$ one defines
$\Matappa sx\in\Cl F$ by
$\Matappa sx=\{b\in\Web F\St a\in x\text{ and }(a,b)\in s\}$. Given
$x_0,x_1\in\Cl E$ we use $x_0+x_1$ to denote $x_0\cup x_1$ if
$x_0\cup x_1\in\Cl E$ and $x_0\cap x_1=\emptyset$, so that this sum is
not always defined. With these notations observe that
\begin{align*}
  \Matappa s\emptyset=\emptyset
  \text{ and }\Matappa s{(x_0+x_1)}=\Matappa s{x_0}+\Matappa s{x_1}
\end{align*}
(where the right hand side is defined as soon as the left hand side
is) explaining somehow the terminology ``linear maps'' for these
morphisms.
\end{Example}

\begin{definition}
  A \emph{pre-summability structure} on $\cL$ is a tuple
  $(\Sfun,\Sproj0,\Sproj1,\Ssum)$ where $\Sfun:\cL\to\cL$ is a functor
  which preserves the enrichment of $\cL$ (that is $\Sfun 0=0$) and
  $\Sproj0,\Sproj1$ and $\Ssum$ are natural transformation from
  $\Sfun$ to the identity functor such that for any two morphisms
  $f,g\in\cL(Y,\Sfun X)$, if $\Sproj i\Compl f=\Sproj i\Compl g$ for
  $i=0,1$, then $f=g$. In other words, $\Sproj0$ and $\Sproj1$ are
  jointly monic.
\end{definition}

\begin{Example}\label{ex:Sfun-coh-def}
  We give a pre-summability structure on coherence
  spaces. Given a coherence space $E$, the coherence space $\Sfun(E)$
  is defined by $\Web{\Sfun(E)}=\{0,1\}\times\Web E$ and
  $\Coh{\Sfun(E)}{(i,a)}{(i',a')}$ if $i=i'$ and $\Coh Ea{a'}$, or
  % TYPO
  % $i\not=i'$ and $\Scoh Ea{a'}$. Remember that $\Coh Ea{a'}$ means
  $i\not=i'$ and $\Scoh Ea{a'}$. Remember that $\Scoh Ea{a'}$ means
  that $\Coh Ea{a'}$ and $a\not=a'$ (strict coherence relation). %
  Notice that $\Sfun E=\Limplp{\With\Sone\Sone}{E}$ where %
  $\Sone$ is the coherence space whose web is a chosen singleton %
  $\Sonelem$. We shall see in Section~\ref{sec:canonical-sum} that it
  is often possible to define $\Sfun$ in that particular way.
\begin{lemma}
  $\Cl{\Sfun E}$ is isomorphic to the poset of all pairs
  $(x_0,x_1)\in\Cl E^2$ such that $x_0+ x_1$ is defined and belongs to
  $\Cl E$, equipped with the product order.
\end{lemma}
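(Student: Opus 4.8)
The plan is to exhibit an explicit order isomorphism. Given a clique $u\in\Cl{\Sfun E}$, define its two \emph{fibres} $x_0,x_1\subseteq\Web E$ by $x_i=\Eset{a\in\Web E\St(i,a)\in u}$; conversely, from a pair $(x_0,x_1)$ of subsets of $\Web E$ form $\Phi(x_0,x_1)=\Eset{0}\times x_0\cup\Eset{1}\times x_1\subseteq\Web{\Sfun E}$. These two operations are plainly mutually inverse as maps between subsets of $\Web{\Sfun E}$ and pairs of subsets of $\Web E$, independently of the coherence relations, so it remains only to identify the image and to check monotonicity in both directions.

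First I would unfold what it means for $u$ to be a clique, in terms of its fibres. By the definition of $\Coh{\Sfun E}{}{}$, two points of $u$ lying in the same fibre, say $(i,a)$ and $(i,a')$, are coherent exactly when $\Coh E a{a'}$, so the clique condition restricted to each fibre says precisely that $x_0,x_1\in\Cl E$. Two points in distinct fibres, $(0,a)$ and $(1,a')$, are coherent exactly when $\Scoh E a{a'}$, that is, $\Coh E a{a'}$ \emph{and} $a\neq a'$. Hence $u\in\Cl{\Sfun E}$ iff $x_0,x_1\in\Cl E$ and $\Scoh E a{a'}$ holds for every $a\in x_0$ and $a'\in x_1$.

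The key step is to match this against the condition that $x_0+x_1$ be defined and belong to $\Cl E$, i.e.\ that $x_0\cap x_1=\emptyset$ and $x_0\cup x_1\in\Cl E$. If $u$ is a clique then the requirement $a\neq a'$ in the cross-fibre condition, applied with $a=a'$, forbids any common element, giving $x_0\cap x_1=\emptyset$; and any two elements of $x_0\cup x_1$ are coherent (using clique-ness of each fibre when they lie in the same fibre, and the cross-fibre coherence otherwise), so $x_0\cup x_1\in\Cl E$. Conversely, if $x_0\cap x_1=\emptyset$ and $x_0\cup x_1\in\Cl E$, then each $x_i$ is a sub-clique, and for $a\in x_0$, $a'\in x_1$ we obtain $\Coh E a{a'}$ from $x_0\cup x_1\in\Cl E$ and $a\neq a'$ from disjointness, i.e.\ $\Scoh E a{a'}$. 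Thus $\Phi$ restricts to a bijection between the summable pairs and $\Cl{\Sfun E}$. I expect this matching---and in particular the observation that the \emph{strictness} of the cross-fibre coherence is exactly what encodes the disjointness side-condition in the partial sum $x_0+x_1$---to be the only point that needs genuine care; everything else is bookkeeping.

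Finally I would check that the bijection is an order isomorphism. The order on $\Cl{\Sfun E}$ is inclusion, and since $u\subseteq u'$ holds iff both fibres satisfy $x_i\subseteq x_i'$, the map $u\mapsto(x_0,x_1)$ is monotone with monotone inverse $\Phi$, hence an isomorphism onto the poset of summable pairs equipped with the product order. This completes the argument.
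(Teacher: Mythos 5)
Your proof is correct, and it is exactly the straightforward verification that the paper has in mind: the paper states this lemma inside its running example without proof, treating it as an easy unfolding of the definitions of $\Coh{\Sfun E}{}{}$ and of the partial sum $x_0+x_1$. Your fibre decomposition, the matching of strict cross-fibre coherence with the disjointness condition $x_0\cap x_1=\emptyset$, and the check that inclusion of cliques corresponds to the product order are precisely the intended argument.
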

Given $s\in\COH(E,F)$, we define
$\Sfun s\subseteq\Web{\Limpl{\Sfun E}{\Sfun F}}$ by
\begin{align*}
  \Sfun s=\{((i,a),(i,b))\St i\in\{0,1\}\text{ and }(a,b)\in s\}\,.
\end{align*}
Then it is easy to check that $\Sfun s\in\COH(\Sfun E,\Sfun F)$ and
that $\Sfun$ is a functor. This is due to the definition of $s$ which
entails $\Matappa s{(x_0+x_1)}=\Matappa s{x_0}+\Matappa s{x_1}$.

The additional structure is defined as follows:
\begin{align*}
  \Sproj i=\{((i,a),a)\St a\in\Web E\}\text{ and }
  \Ssum=\{((i,a),a)\St i\in\{0,1\}\text{ and }a\in\Web E\}
\end{align*}
which are easily seen to belong to $\COH(\Sfun E,E)$. Notice that
$\Ssum=\Sproj0+\Sproj 1$. Of course
$\Matappa{\Sproj i}{(x_0,x_1)}=x_i$ and
$\Matappa\Ssum{(x_0,x_1)}=x_0+x_1$.
\end{Example}

From now on we assume that we are given such a structure.  We say that
$f_i\in\cL(X,Y)$ (for $i=0,1$) are \emph{summable} if there is a morphism
$g\in\cL(X,\Sfun Y)$ such that
\[
  \begin{tikzcd}
    X\ar[r,"g"]\ar[rd,swap,"f_i"] & \Sfun Y\ar[d,"\Sproj i"]\\
    &Y
  \end{tikzcd}
\]
for $i=0,1$. By definition of a pre-summability structure there is only
one such $g$ if it exists, we denote it as $\Stuple{f_0,f_1}$. When
this is the case we set
$f_0+f_1=\Ssum\Compl\Stuple{f_0,f_1}\in\cL(X,Y)$. We sometimes call
$\Stuple{f_0,f_1}$ the \emph{witness of the summability} of $f_0$ and
$f_1$ and $f_0+f_1$ their \emph{sum}.

\begin{Example}\label{ex:coh-summability-char}
  In the case of coherence spaces, saying that $s_0,s_1\in\COH(E,F)$
  are summable simply means that $s_0\cap s_1=\emptyset$ and
  $s_0\cup s_1\in\COH(E,F)$. This property is equivalent to
  \begin{align*}
    \forall x\in\Cl X\quad (\Matappa{s_0}x,\Matappa{s_1}x)\in\Cl{\Sfun E}
  \end{align*}
  and in that case the witness is defined exactly in the same way as
  $\Tuple{s_0,s_1}\in\COH(E,\With FF)$.
\end{Example}

\begin{lemma}\label{lemma:compl-summable}
  Assume that $f_0,f_1\in\cL(X,Y)$ are summable and that
  $g\in\cL(U,X)$ and $h\in\cL(Y,Z)$. Then $h\Compl f_0\Compl g$ and
  $h\Compl f_1\Compl g$ are summable with witness
  $(\Sfun h)\Compl\Stuple{f_0,f_1}\Compl g\in\cL(U,\Sfun Z)$ and sum
  % TYPO
  % $(\Sfun h)\Compl(f_0+f_1)\Compl g\in\cL(U,Z)$.
  $h\Compl(f_0+f_1)\Compl g\in\cL(U,Z)$.
\end{lemma}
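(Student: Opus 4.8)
The plan is to show that the proposed morphism $w=(\Sfun h)\Compl\Stuple{f_0,f_1}\Compl g$ is a witness of the summability of $h\Compl f_0\Compl g$ and $h\Compl f_1\Compl g$; since $\Sproj0,\Sproj1$ are jointly monic this will automatically identify $w$ with \emph{the} witness $\Stuple{h\Compl f_0\Compl g,\,h\Compl f_1\Compl g}$, and then the value of the sum drops out by a second computation. The entire argument rests on the naturality of the three transformations $\Sproj0$, $\Sproj1$ and $\Ssum$ from $\Sfun$ to the identity functor; there is no genuine obstacle here, it is a pure naturality chase.

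First I would verify the two witness equations. For $i=0,1$, naturality of $\Sproj i$ applied to $h\in\cL(Y,Z)$ gives $\Sproj i\Compl\Sfun h=h\Compl\Sproj i$, so that
\[
  \Sproj i\Compl w
  =\Sproj i\Compl(\Sfun h)\Compl\Stuple{f_0,f_1}\Compl g
  =h\Compl\Sproj i\Compl\Stuple{f_0,f_1}\Compl g
  =h\Compl f_i\Compl g,
\]
the last step using the defining property $\Sproj i\Compl\Stuple{f_0,f_1}=f_i$ of the witness of $f_0,f_1$. This exhibits $w$ as a morphism of $\cL(U,\Sfun Z)$ satisfying the two triangles defining summability, and uniqueness of witnesses then yields $w=\Stuple{h\Compl f_0\Compl g,\,h\Compl f_1\Compl g}$.

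It then remains to compute the sum. By definition it equals $\Ssum\Compl w$, and naturality of $\Ssum$ gives $\Ssum\Compl\Sfun h=h\Compl\Ssum$, whence
\[
  \Ssum\Compl w
  =\Ssum\Compl(\Sfun h)\Compl\Stuple{f_0,f_1}\Compl g
  =h\Compl\Ssum\Compl\Stuple{f_0,f_1}\Compl g
  =h\Compl(f_0+f_1)\Compl g,
\]
the final equality being the definition $f_0+f_1=\Ssum\Compl\Stuple{f_0,f_1}$. The only facts consumed are the functoriality of $\Sfun$ (guaranteeing that $\Sfun h$ is a morphism $\Sfun Y\to\Sfun Z$ composing as written) together with the three naturality squares; in particular summability is transported along $h\Compl(-)\Compl g$ for free, with no hypothesis relating $g$ or $h$ to the summability structure.
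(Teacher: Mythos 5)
Your proof is correct and is exactly the argument the paper intends: the paper's proof consists of the single remark that the lemma ``boils down to the naturality of $\Sproj i$ and $\Ssum$'', which is precisely the naturality chase you carried out explicitly. Nothing further is needed.
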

The proof boils down to the naturality of $\Sproj i$ and $\Ssum$. An
easy consequence is that the application of $\Sfun$ to a morphism can
be written as a witness.
\begin{lemma}
  If $f\in\cL(X,Y)$ then
  $f\Compl\Sproj0,f\Compl\Sproj1\in\cL(\Sfun X,Y)$ are summable with
  witness $\Sfun f$ and sum $f$. That is
  $\Sfun f=\Stuple{f\Compl\Sproj0,f\Compl\Sproj1}$.
\end{lemma}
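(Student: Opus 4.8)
The plan is to establish directly the equality $\Sfun f=\Stuple{f\Compl\Sproj0,f\Compl\Sproj1}$ stated at the end, since summability of $f\Compl\Sproj0$ and $f\Compl\Sproj1$ and the computation of their sum both fall out of it. Recall that in a pre-summability structure the transformations $\Sproj0$ and $\Sproj1$ are jointly monic, so a witness, if it exists, is unique; it therefore suffices to produce a single morphism $g\in\cL(\Sfun X,\Sfun Y)$ with $\Sproj i\Compl g=f\Compl\Sproj i$ for $i=0,1$, and the natural candidate is $g=\Sfun f$.

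The key observation is that these two required equations, $\Sproj i\Compl\Sfun f=f\Compl\Sproj i$, are exactly the naturality squares of $\Sproj i$ --- viewed as a natural transformation from $\Sfun$ to the identity functor --- instantiated at the morphism $f\colon X\to Y$. Hence $\Sfun f$ is a morphism whose post-composition with $\Sproj0$ and $\Sproj1$ yields $f\Compl\Sproj0$ and $f\Compl\Sproj1$ respectively, which is precisely the definition of $f\Compl\Sproj0$ and $f\Compl\Sproj1$ being summable; uniqueness of the witness (joint monicity) then forces $\Stuple{f\Compl\Sproj0,f\Compl\Sproj1}=\Sfun f$.

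For the sum, I would unfold the definition: $\Ssum\Compl\Stuple{f\Compl\Sproj0,f\Compl\Sproj1}=\Ssum\Compl\Sfun f$, and one more application of naturality, now of $\Ssum$ at $f$, rewrites this as $f\Compl\Ssum$. The same conclusion can be reached more economically as an instance of Lemma~\ref{lemma:compl-summable}: the pair $\Sproj0,\Sproj1\in\cL(\Sfun X,X)$ is summable with witness $\Id_{\Sfun X}$ (since $\Sproj i\Compl\Id_{\Sfun X}=\Sproj i$), and applying that lemma with $h=f$ and $g=\Id_{\Sfun X}$ gives witness $\Sfun f\Compl\Id_{\Sfun X}=\Sfun f$ and sum $f\Compl\Ssum$ at once.

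I expect no genuine obstacle: the whole argument is the recognition that the defining condition for a witness coincides with the naturality of the projections and that uniqueness is supplied by joint monicity. The only thing to keep straight is which natural transformation is being invoked at each step --- the two projections to pin down the witness $\Sfun f$, and $\Ssum$ to evaluate the sum as $f\Compl\Ssum$.
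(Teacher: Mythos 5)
Your proof is correct and takes essentially the same route as the paper, which presents this lemma as an immediate consequence of Lemma~\ref{lemma:compl-summable} (itself proved by exactly the naturality of $\Sproj i$ and $\Ssum$ that you invoke, with joint monicity pinning down the witness). Note also that your computation of the sum as $f\Compl\Ssum_X$ is the type-correct reading of the paper's loosely stated ``sum $f$'', since the sum must live in $\cL(\Sfun X,Y)$ rather than $\cL(X,Y)$.
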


Now using this notion of pre-summability structure we start
introducing additional conditions to define a summability
structure.

Notice that by definition $\Sproj 0$ and $\Sproj 1$ are summable with
$\Id$ as witness and $\Ssum$ as sum. Here is our first condition:

\begin{Axicond}{\Saxcom}
  $\Sproj 1$ and $\Sproj 0$ are summable and the
  witness $\Stuple{\Sproj 1,\Sproj 0}\in\cL(\Sfun X,\Sfun X)$
  satisfies $\Ssum\Compl\Stuple{\Sproj 1,\Sproj 0}=\Ssum$.
\end{Axicond}

Notice that this witness is an involutive iso since
$\Sproj i\Compl\Stuple{\Sproj 1,\Sproj 0}\Compl\Stuple{\Sproj 1,\Sproj
  0}=\Sproj i$ for $i=0,1$.

\begin{lemma}\label{lemma:ssum-com}
  If $f_0,f_1\in\cL(X,Y)$ are summable then $f_1,f_0$ are summable
  with witness $\Stuple{\Sproj 1,\Sproj 0}\Compl\Stuple{f_0,f_1}$ and
  we have $f_0+f_1=f_1+f_0$.
\end{lemma}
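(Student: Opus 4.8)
The plan is a short diagram chase exploiting uniqueness of witnesses, capped off by a single appeal to \Saxcom for the equality of the two sums. Before anything else I would record the data I am allowed to use: the two equations characterising the swap morphism supplied by \Saxcom, namely
\begin{align*}
  \Sproj 0\Compl\Stuple{\Sproj 1,\Sproj 0}=\Sproj 1
  \qquad\text{and}\qquad
  \Sproj 1\Compl\Stuple{\Sproj 1,\Sproj 0}=\Sproj 0\,,
\end{align*}
together with the defining equations $\Sproj i\Compl\Stuple{f_0,f_1}=f_i$ of the witness of the summability of $f_0,f_1$.

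Next I would verify that the proposed morphism $h=\Stuple{\Sproj 1,\Sproj 0}\Compl\Stuple{f_0,f_1}\in\cL(X,\Sfun Y)$ really is a witness for the pair $(f_1,f_0)$, simply by composing the two swap equations with $\Stuple{f_0,f_1}$ on the right:
\begin{align*}
  \Sproj 0\Compl h=\Sproj 1\Compl\Stuple{f_0,f_1}=f_1
  \qquad\text{and}\qquad
  \Sproj 1\Compl h=\Sproj 0\Compl\Stuple{f_0,f_1}=f_0\,.
\end{align*}
This exhibits a morphism into $\Sfun Y$ whose composites with $\Sproj 0$ and $\Sproj 1$ are $f_1$ and $f_0$, so by definition $f_1,f_0$ are summable; and since $\Sproj 0,\Sproj 1$ are jointly monic, $h$ is \emph{the} witness, i.e.\ $\Stuple{f_1,f_0}=\Stuple{\Sproj 1,\Sproj 0}\Compl\Stuple{f_0,f_1}$, which is the first assertion of the lemma.

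Finally I would obtain the equality of the sums by computing
\begin{align*}
  f_1+f_0
  =\Ssum\Compl\Stuple{f_1,f_0}
  =\Ssum\Compl\Stuple{\Sproj 1,\Sproj 0}\Compl\Stuple{f_0,f_1}
  =\Ssum\Compl\Stuple{f_0,f_1}
  =f_0+f_1\,,
\end{align*}
the third equality being exactly the clause $\Ssum\Compl\Stuple{\Sproj 1,\Sproj 0}=\Ssum$ of \Saxcom. I expect no genuine obstacle here; the only point worth emphasising is why the last step really needs the axiom. Commutativity of the sum cannot be read off from the projections, because $\Ssum$ and $\Ssum\Compl\Stuple{\Sproj 1,\Sproj 0}$ are morphisms into $Y$ rather than into $\Sfun Y$, so the joint monicity of $\Sproj 0,\Sproj 1$ does not apply to them. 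This is precisely the reason \Saxcom has to postulate $\Ssum\Compl\Stuple{\Sproj 1,\Sproj 0}=\Ssum$ as a primitive condition rather than letting it be derived from the pre-summability structure alone.
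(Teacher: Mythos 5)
Your proof is correct and is exactly the argument the paper intends (the lemma is stated there without proof as an immediate consequence of \Saxcom{} and joint monicity): you verify that $\Stuple{\Sproj 1,\Sproj 0}\Compl\Stuple{f_0,f_1}$ projects to $f_1$ and $f_0$, invoke uniqueness of witnesses, and conclude commutativity of the sum from $\Ssum\Compl\Stuple{\Sproj 1,\Sproj 0}=\Ssum$. Your closing observation that this last equation cannot be derived from joint monicity (since the morphisms land in $Y$, not $\Sfun Y$) is also a correct and worthwhile remark on why \Saxcom{} must postulate it.
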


\begin{Axicond}{\Saxzero}
  For any $f\in\cL(X,Y)$, the morphisms $f$ and $0\in\cL(X,Y)$ are
  summable and their sum is $f$, that is $\Ssum\Compl\Tuple{f,0}=f$.
\end{Axicond}

By \Saxcom{} this implies that $0$ and $f$ are summable
with $0+f=f$.

Notice that we have four morphisms
$\Sproj 0\Sproj 0,\Sproj 1\Sproj 1,\Sproj 0\Sproj 1,\Sproj 1\Sproj
0\in\cL(\Sfun^2X,X)$.
\begin{lemma}\label{lemma:stuple-sproj-mono2}
  If $f,f'\in\cL(X,\Sfun^2Y)$ satisfy
  $\Sproj i\Compl\Sproj j\Compl f=\Sproj i\Compl\Sproj j\Compl f'$ for
  all $i,j\in\{0,1\}$ then $f=f'$, that is, the
  $\Sproj i\Compl\Sproj j$ are jointly monic.
\end{lemma}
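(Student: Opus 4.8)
The plan is to apply the joint monicity of $\Sproj 0$ and $\Sproj 1$ — the defining property of a pre-summability structure — twice, peeling off one application of $\Sfun$ at a time. First I would rewrite the hypothesis, for all $i,j\in\{0,1\}$, as $\Sproj i\Compl(\Sproj j\Compl f)=\Sproj i\Compl(\Sproj j\Compl f')$, where the outer $\Sproj i$ denotes the component of the natural transformation at $Y$ and the inner $\Sproj j$ denotes its component at $\Sfun Y$; this is just associativity of composition.

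Next I would fix $j\in\{0,1\}$ and regard $\Sproj j\Compl f$ and $\Sproj j\Compl f'$ as two morphisms in $\cL(X,\Sfun Y)$. For each $i$ the reformulated hypothesis gives $\Sproj i\Compl(\Sproj j\Compl f)=\Sproj i\Compl(\Sproj j\Compl f')$, so since the components of $\Sproj 0$ and $\Sproj 1$ at $Y$ are jointly monic I conclude $\Sproj j\Compl f=\Sproj j\Compl f'$. This equality holds for both $j=0$ and $j=1$.

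Finally I would observe that $f$ and $f'$ lie in $\cL(X,\Sfun(\Sfun Y))=\cL(X,\Sfun^2 Y)$, and I have just shown that they agree after post-composition with each of $\Sproj 0$ and $\Sproj 1$ taken now at $\Sfun Y$. Applying joint monicity a second time, at the object $\Sfun Y$, yields $f=f'$, which is the desired conclusion.

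I do not expect any real obstacle: the argument is a pure double use of the jointly-monic hypothesis together with associativity of composition, and invokes neither the naturality of the $\Sproj i$ nor the axioms \Saxcom{} or \Saxzero{}. The one point requiring care is the bookkeeping of which component of each natural transformation is used at each stage — the inner projection is applied at $\Sfun Y$ and the outer one at $Y$ — so that the two reductions line up exactly with the indices $i,j$ appearing in the hypothesis.
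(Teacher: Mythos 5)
Your proof is correct and is exactly the argument the paper intends: the paper dismisses this lemma with the single remark that it is ``an easy consequence of the fact that $\Sproj0,\Sproj1$ are jointly monic,'' and your double application of joint monicity (first at $Y$ to peel off the outer projection, then at $\Sfun Y$) is the standard way to carry that out. Your bookkeeping of which component of the natural transformation is used at each stage is also accurate.
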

This is an easy consequence of the fact that $\Sproj0,\Sproj1$ are
jointly monic.

\begin{Axicond}{\Saxwit}
  Let $f_{00},f_{01},f_{10},f_{11}\in\cL(X,Y)$ be morphisms such that
  $(f_{00},f_{01})$ and $(f_{10},f_{11})$ are summable, and moreover
  $(f_{00}+f_{01},f_{10}+f_{11})$ is summable. Then the witnesses
  $\Stuple{f_{00},f_{01}},\Stuple{f_{10},f_{11}}\in\cL(X,\Sfun
  X)$ are summable.
\end{Axicond}

The last axiom requires a little preparation.  By
Lemma~\ref{lemma:compl-summable} the pairs of morphisms
$(\Sproj 0\Sproj 0,\Sproj 0\Sproj 1)$ and
$(\Sproj 1\Sproj 0,\Sproj 1\Sproj 1)$ are summable with sums
$\Sproj0\Compl\Ssum$ and $\Sproj1\Compl\Ssum$ respectively. By the
same lemma these two morphisms are summable (with sum
$\Ssum\Compl\Ssum\in\cL(\Sfun^2X,X)$). By Axiom~\Saxwit{} it follows
that the witnesses
$\Stuple{\Sproj 0\Sproj 0,\Sproj 0\Sproj 1},\Stuple{\Sproj 1\Sproj
  0,\Sproj 1\Sproj 1}\in\cL(\Sfun^2X,\Sfun X)$ are summable, let
$\Sflip=\Stuple{\Stuple{\Sproj 0\Sproj 0,\Sproj 0\Sproj
    1},\Stuple{\Sproj 1\Sproj 0,\Sproj 1\Sproj
    1}}\in\cL(\Sfun^2X,\Sfun^2X)$ be the corresponding witness which
is easily seen to be an involutive natural iso using
Lemma~\ref{lemma:stuple-sproj-mono2}. Notice that $\Sflip$ (which is
similar to the flip of a tangent bundle functor) is characterized by
\begin{align*}
  \forall i,j\in\Eset{0,1}\quad \Sproj i\Compl\Sproj j\Compl\Sflip
  =\Sproj j\Compl\Sproj i\,.
\end{align*}
We can now state our last axiom.

\begin{Axicond}{\Saxass}
  The following diagram commutes.
  \[
    \begin{tikzcd}
      \Sfun^2X\ar[r,"\Sflip"]\ar[dr,swap,"\Ssum_{\Sfun X}"]
      &\Sfun^2X\ar[d,"\Sfun\Ssum_X"]\\
      &\Sfun X
    \end{tikzcd}
  \]
\end{Axicond}

  Let us see what this condition has to do with associativity of summation.
\begin{lemma}\label{lemma:Ssum-assoc4}
  Let $f_{00},f_{01},f_{10},f_{11}\in\cL(X,Y)$ be morphisms such that
  $(f_{00},f_{01})$ and $(f_{10},f_{11})$ are summable, and moreover
  $(f_{00}+f_{01},f_{10}+f_{11})$ is summable. Then $(f_{00},f_{10})$
  and $(f_{01},f_{11})$ are summable, 
  $(f_{00}+f_{10},f_{01}+f_{11})$ is summable and moreover
  \begin{align*}
    (f_{00}+f_{01})+(f_{10}+f_{11})=(f_{00}+f_{10})+(f_{01}+f_{11})\,.
  \end{align*}
\end{lemma}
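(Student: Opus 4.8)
The plan is to package the four morphisms into a single map into $\Sfun^2Y$, viewed as a $2\times 2$ array of entries, and then use the flip $\Sflip$ to transpose it: summing the rows first corresponds to the grouping $(f_{00}+f_{01})+(f_{10}+f_{11})$, summing the columns first to $(f_{00}+f_{10})+(f_{01}+f_{11})$. First I would build the witness. Since $(f_{00},f_{01})$ and $(f_{10},f_{11})$ are summable I have witnesses $\Stuple{f_{00},f_{01}},\Stuple{f_{10},f_{11}}\in\cL(X,\Sfun Y)$, and the hypothesis that $(f_{00}+f_{01},f_{10}+f_{11})$ is summable is exactly the premise of \Saxwit, which therefore asserts that these two witnesses are themselves summable. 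This yields
\[
  h=\Stuple{\Stuple{f_{00},f_{01}},\Stuple{f_{10},f_{11}}}\in\cL(X,\Sfun^2Y),
\]
and unfolding the defining equations of witnesses gives $\Sproj a\Compl\Sproj b\Compl h=f_{ba}$ for all $a,b\in\Eset{0,1}$.

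Next I would transpose via the flip. Set $k=\Sflip\Compl h$. The characterization $\Sproj i\Compl\Sproj j\Compl\Sflip=\Sproj j\Compl\Sproj i$ gives $\Sproj i\Compl\Sproj j\Compl k=f_{ij}$, hence $\Sproj0\Compl k=\Stuple{f_{00},f_{10}}$ and $\Sproj1\Compl k=\Stuple{f_{01},f_{11}}$. Reading off the inner projections of these two witnesses shows that $(f_{00},f_{10})$ and $(f_{01},f_{11})$ are summable, which is the first assertion. Moreover $\Sproj0\Compl k$ and $\Sproj1\Compl k$ are themselves summable with witness $k$, so by Lemma~\ref{lemma:compl-summable} (post-composing with $\Ssum_Y$) the morphisms $f_{00}+f_{10}$ and $f_{01}+f_{11}$ are summable, which is the second assertion.

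It remains to equate the two double sums. Applying Lemma~\ref{lemma:compl-summable} to the summable pairs $(\Sproj0\Compl h,\Sproj1\Compl h)$ and $(\Sproj0\Compl k,\Sproj1\Compl k)$, with respective witnesses $h$ and $k$, and post-composing with $\Ssum_Y$, I obtain
\[
  (f_{00}+f_{01})+(f_{10}+f_{11})=\Ssum_Y\Compl\Ssum_{\Sfun Y}\Compl h,\qquad (f_{00}+f_{10})+(f_{01}+f_{11})=\Ssum_Y\Compl\Ssum_{\Sfun Y}\Compl k.
\]
Since $k=\Sflip\Compl h$, it suffices to prove $\Ssum_Y\Compl\Ssum_{\Sfun Y}=\Ssum_Y\Compl\Ssum_{\Sfun Y}\Compl\Sflip$. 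By \Saxass we have $\Ssum_{\Sfun Y}=\Sfun\Ssum_Y\Compl\Sflip$, so post-composing with $\Sflip$ and using that $\Sflip$ is involutive gives $\Ssum_{\Sfun Y}\Compl\Sflip=\Sfun\Ssum_Y$; combined with the naturality square $\Ssum_Y\Compl\Sfun\Ssum_Y=\Ssum_Y\Compl\Ssum_{\Sfun Y}$ for the natural transformation $\Ssum$, this gives exactly the required identity.

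The main obstacle is conceptual rather than computational: \Saxass on its own only relates $\Sfun\Ssum_Y$ and $\Ssum_{\Sfun Y}$ through $\Sflip$, and neither of these nested summations is individually invariant under the flip, so one cannot simply argue $\Ssum_{\Sfun Y}\Compl\Sflip=\Ssum_{\Sfun Y}$. Transposing the array genuinely changes the intermediate row and column sums; the point is that after the \emph{outer} summation $\Ssum_Y$ is applied, the two levels of summation collapse to a single morphism that no longer sees the flip. Recognizing that one must compose the two layers of $\Ssum$ before invoking naturality is the crux, while everything else is bookkeeping of the four indices, kept under control by the identity $\Sproj a\Compl\Sproj b\Compl h=f_{ba}$.
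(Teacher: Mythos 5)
Your proof is correct and follows essentially the same route as the paper's: both build the global witness via \Saxwit{}, transpose it with $\Sflip$ to read off the column witnesses $\Stuple{f_{00},f_{10}}$ and $\Stuple{f_{01},f_{11}}$, and then invoke \Saxass{} to identify the two groupings. The only cosmetic difference is the final bookkeeping: the paper compares $\Sfun\Ssum_Y\Compl\Sflip\Compl h$ with $\Ssum_{\Sfun Y}\Compl h$ to obtain the componentwise-sum identity $\Stuple{f_{00},f_{01}}+\Stuple{f_{10},f_{11}}=\Stuple{f_{00}+f_{10},f_{01}+f_{11}}$ before applying $\Ssum_Y$, whereas you collapse both double sums to $\Ssum_Y\Compl\Ssum_{\Sfun Y}$ pre-composed with the two witnesses and prove flip-invariance of that composite using \Saxass{} together with naturality of $\Ssum$ at $\Ssum_Y$ --- the same ingredients in a slightly different order.
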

\begin{proof}
  By Axiom~\Saxwit{} we have a ``global witness''
  $g=\Stuple{\Stuple{f_{00},f_{01}},
    \Stuple{f_{10},f_{11}}}\in\cL(X,\Sfun^2Y)$. Let
  $g'=\Sflip\Compl g\in\cL(X,\Sfun^2Y)$. We have
  $\Sproj0\Sproj0\Compl g'=f_{00}$ and
  $\Sproj1\Sproj0\Compl g'=f_{10}$ which shows that $f_{00}$ and
  $f_{10}$ are summable with witness
  $\Stuple{f_{00},f_{10}}=\Sproj0\Compl g'\in\cL(X,\Sfun Y)$.
  Similarly $f_{01}$ and $f_{11}\in\cL(X,\Sfun Y)$ are summable with
  witness $\Stuple{f_{01},f_{11}}=\Sproj1\Compl g'$. Since $\Sproj0$
  and $\Sproj1$ are summable, it results from
  Lemma~\ref{lemma:compl-summable} that $\Stuple{f_{00},f_{10}}$ and
  $\Stuple{f_{01},f_{11}}$ are summable with witness
  $\Stuple{\Stuple{f_{00},f_{10}},\Stuple{f_{01},f_{11}}}=g'$. We have
  \begin{align*}
    \Sfun\Ssum_X\Compl\Stuple{\Stuple{f_{00},f_{10}},\Stuple{f_{01},f_{11}}}
    &=\Stuple{\Ssum_X\Compl\Stuple{f_{00},f_{10}},\Ssum\Compl\Stuple{f_{01},f_{11}}}\text{ by Lemma~\ref{lemma:compl-summable}}\\
    &=\Stuple{f_{00}+f_{10},f_{01}+f_{11}}
  \end{align*}
  On the other hand, by Axiom~\Saxass{} and by definition of $g'$ we have
  \begin{align*}
    \Sfun\Ssum_X\Compl\Stuple{\Stuple{f_{00},f_{10}},\Stuple{f_{01},f_{11}}}
    &=\Ssum_{\Sfun X}\Compl\Stuple{\Stuple{f_{00},f_{01}},
      \Stuple{f_{10},f_{11}}}\\
    &=\Stuple{f_{00},f_{01}}+\Stuple{f_{10},f_{11}}
  \end{align*}
  so we have shown that
  \begin{align*}
    \Stuple{f_{00},f_{01}}+\Stuple{f_{10},f_{11}}=\Stuple{f_{00}+f_{10},f_{01}+f_{11}}
  \end{align*}
  that is, the summation of summable pairs is performed componentwise.

  % ligne blanche à enlever
  % !!!!!!
  Next we have that
  $\Ssum_X\Compl\Stuple{f_{00}+f_{10},f_{01}+f_{11}}
  =(f_{00}+f_{10})+(f_{01}+f_{11})$ and, by
  Lemma~\ref{lemma:compl-summable} we know that
  $\Ssum_X\Compl\Stuple{f_{00},f_{01}}=f_{00}+f_{01}$ and
  $\Ssum_X\Compl\Stuple{f_{10},f_{11}}=f_{10}+f_{11}$ are summable
  with sum equal to
  $\Ssum_X\Compl(\Stuple{f_{00},f_{01}}+\Stuple{f_{10},f_{11}})$. This
  shows that
  $(f_{00}+f_{10})+(f_{01}+f_{11})=(f_{00}+f_{01})+(f_{10}+f_{11})$ as
  contended.
\end{proof}

\begin{lemma}\label{lemma:Ssum-assoc3}
  Let $f_0,f_1,f_2\in\cL(X,Y)$ be such that $(f_0,f_1)$ is summable
  and $(f_0+f_1,f_2)$ is summable. Then $(f_1,f_2)$ is summable and
  $(f_0,f_1+f_2)$ is summable and we have
  $(f_0+f_1)+f_2=f_0+(f_1+f_2)$.
\end{lemma}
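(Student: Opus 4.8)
The plan is to derive this three-term associativity as a special case of the four-term exchange law already established in Lemma~\ref{lemma:Ssum-assoc4}, by padding the given data with a zero morphism. Concretely, I would instantiate that lemma with $f_{00}=f_0$, $f_{01}=f_1$, $f_{10}=0$ and $f_{11}=f_2$, all viewed as morphisms in $\cL(X,Y)$.

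First I would verify the three summability hypotheses of Lemma~\ref{lemma:Ssum-assoc4} for this choice. The pair $(f_{00},f_{01})=(f_0,f_1)$ is summable by assumption. The pair $(f_{10},f_{11})=(0,f_2)$ is summable by Axiom~\Saxzero{} (using \Saxcom{} to pass from the summability of $(f_2,0)$ to that of $(0,f_2)$), and its sum is $0+f_2=f_2$. Finally $(f_{00}+f_{01},f_{10}+f_{11})=(f_0+f_1,f_2)$ is summable, again by assumption. Thus all the premises of the four-term lemma are met.

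Then I would simply read off the conclusions. Lemma~\ref{lemma:Ssum-assoc4} yields that $(f_{01},f_{11})=(f_1,f_2)$ is summable and that $(f_{00}+f_{10},f_{01}+f_{11})$ is summable; identifying $f_{00}+f_{10}=f_0+0=f_0$ by Axiom~\Saxzero{}, the latter pair is $(f_0,f_1+f_2)$, so these are exactly the two summability claims of the statement. For the equation, the exchange identity $(f_{00}+f_{01})+(f_{10}+f_{11})=(f_{00}+f_{10})+(f_{01}+f_{11})$ specializes, after the simplifications $0+f_2=f_2$ and $f_0+0=f_0$, to $(f_0+f_1)+f_2=f_0+(f_1+f_2)$, which is the desired associativity.

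There is no genuine obstacle here: the entire mathematical content has been absorbed into Lemma~\ref{lemma:Ssum-assoc4}, whose four-variable formulation was evidently designed to make exactly this kind of padding argument work. The only point requiring care is bookkeeping — keeping straight which neutrality/commutativity axiom licenses each identification of a padded sum (\Saxzero{} for $f_0+0=f_0$, and \Saxcom{} together with \Saxzero{} for $0+f_2=f_2$) — so that the instance of the four-term lemma lines up syntactically with the three-term statement.
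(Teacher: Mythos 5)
Your proof is correct and is essentially the paper's own argument: the paper also applies Lemma~\ref{lemma:Ssum-assoc4} to the quadruple $f_0,f_1,0,f_2$, invoking \Saxzero{} (with \Saxcom{}) to get summability of $(0,f_2)$ with sum $f_2$. Your write-up just spells out the bookkeeping that the paper leaves implicit.
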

\begin{proof}
  It suffices to apply Lemma~\ref{lemma:Ssum-assoc4} to
  $f_0,f_1,0,f_2$, using \Saxzero{} for making sure that $(0,f_2)$ is
  summable, with sum $=f_2$.
\end{proof}

\begin{Example}
  All these properties are easy to check in coherence spaces and boil
  down to the standard algebraic properties of set unions.
\end{Example}

\begin{definition}
  A summability structure on $\cL$ is a pre-summability structure
  which satisfies axioms \Saxcom{}, \Saxzero{}, \Saxwit{} and
  \Saxass. We call \emph{summable category} a tuple
  $(\cL,\Sfun,\Sproj0,\Sproj1,\Ssum)$ consisting of a
  category $\cL$ equipped with a summability structure.
\end{definition}

% We define a general notion of summable family of morphisms
% $(f_i)_{i=1}^n$ in $\cL(X,Y)$ by induction on $n$:
% \begin{itemize}
% \item if $n=0$ then $(f_i)_{i=1}^n$ if summable with sum $0$
% \item if $n=1$ then $(f_i)_{i=1}^n$ if summable with sum $f_1$
% \item and $(f_i)_{i=1}^{n+2}$ is summable if $f_{n+1},f_{n+2}$ are
%   summable and $(f_1,\dots,f_{n},f_{n+1}+f_{n+2})$ is summable, and
%   then $f_1+\cdots+f_{n+2}=(f_1+\cdots+f_n)+(f_{n+1}+f_{n+2})$.
% \end{itemize}

We define a general notion of summable family of morphisms
$(f_i)_{i=1}^n$ in $\cL(X,Y)$ together with its sum $f_1+\cdots+f_n$
by induction on $n$:
\begin{itemize}
\item if $n=0$ then $(f_i)_{i=1}^n$ if summable with sum $0$;
\item if $n>0$ then $(f_i)_{i=1}^n$ is summable if %
  $(f_i)_{i=1}^{n-1}$ is summable and %
  $f_1+\cdots+f_{n-1},f_n$ is summable, and then %
  $f_1+\cdots+f_n=(f_1+\cdots+f_{n-1})+f_n$.
\end{itemize}
Of course we use the standard notation %
$\sum_{i=1}^nf_i$ for $f_1+\cdots+f_n$.

\begin{lemma}\label{lemma:left-summable-family}
  If $(f_i)_{i=1}^n$ is summable with $n>0$ then %
  $(f_i)_{i=2}^n$ is summable and %
  $f_1,\sum_{i=2}^nf_i$ are summable and %
  $f_1+\sum_{i=2}^nf_i=\sum_{i=1}^nf_i$.
\end{lemma}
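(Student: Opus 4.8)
The plan is to argue by induction on $n$, the one substantial ingredient being the ternary associativity already established in Lemma~\ref{lemma:Ssum-assoc3}; beyond that only \Saxzero{} is needed, and only in the base case.

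For $n=1$ the claim is immediate. The family $(f_i)_{i=2}^1$ is empty, hence summable with sum $0$ by the definition of summable family, and by \Saxzero{} the morphisms $f_1$ and $0=\sum_{i=2}^1f_i$ are summable with $f_1+0=f_1=\sum_{i=1}^1f_i$, which is exactly the assertion.

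For the inductive step I assume $n\geq 2$ and the result for families of length $n-1$. First I would unfold the hypothesis: summability of $(f_i)_{i=1}^n$ means, by definition, that $(f_i)_{i=1}^{n-1}$ is summable, that $(\sum_{i=1}^{n-1}f_i,f_n)$ is summable, and that $\sum_{i=1}^nf_i=(\sum_{i=1}^{n-1}f_i)+f_n$. Applying the induction hypothesis to $(f_i)_{i=1}^{n-1}$ then yields that $(f_i)_{i=2}^{n-1}$ is summable and that $f_1$ and $g:=\sum_{i=2}^{n-1}f_i$ are summable with $f_1+g=\sum_{i=1}^{n-1}f_i$. Next I would feed the triple $(f_1,g,f_n)$ into Lemma~\ref{lemma:Ssum-assoc3}: its two hypotheses, summability of $(f_1,g)$ and of $(f_1+g,f_n)=(\sum_{i=1}^{n-1}f_i,f_n)$, both hold, so the lemma returns the summability of $(g,f_n)$, the summability of $(f_1,g+f_n)$, and the identity $(f_1+g)+f_n=f_1+(g+f_n)$. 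The summability of $(g,f_n)=(\sum_{i=2}^{n-1}f_i,f_n)$, combined with the already-known summability of $(f_i)_{i=2}^{n-1}$, is precisely the inductive clause defining summability of the length-$(n-1)$ family $(f_i)_{i=2}^n$ (reindexed from $1$), whose sum is therefore $\sum_{i=2}^nf_i=g+f_n$. Substituting this identity, the two remaining outputs of Lemma~\ref{lemma:Ssum-assoc3} become exactly the claim: $f_1$ and $\sum_{i=2}^nf_i$ are summable, and $f_1+\sum_{i=2}^nf_i=(f_1+g)+f_n=(\sum_{i=1}^{n-1}f_i)+f_n=\sum_{i=1}^nf_i$, completing the induction.

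I do not anticipate a genuine obstacle, since all the real associativity content has already been absorbed into Lemma~\ref{lemma:Ssum-assoc3}. The only point demanding care is purely notational: both $\sum_{i=1}^nf_i$ and $\sum_{i=2}^nf_i$ are defined by left-grouping, so I must check that $g+f_n$ is literally the defining value of $\sum_{i=2}^nf_i$ and that $(\sum_{i=1}^{n-1}f_i)+f_n$ is literally that of $\sum_{i=1}^nf_i$, ensuring the equation produced by associativity matches the statement verbatim rather than requiring any hidden re-association.
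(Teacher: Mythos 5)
Your proof is correct and follows essentially the same route as the paper's: induction on $n$, with \Saxzero{} handling the base case $n=1$, and the inductive step obtained by applying the induction hypothesis to $(f_i)_{i=1}^{n-1}$ and then Lemma~\ref{lemma:Ssum-assoc3} to the triple $f_1,\sum_{i=2}^{n-1}f_i,f_n$. Your extra care about the left-grouping convention for $\sum$ is a fair (if implicit in the paper) observation, but the argument is the same.
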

\begin{proof}
  By induction on $n$. If $n=0$ there is nothing to prove so assume
  $n>0$. If $n=1$ the statement results from \Saxzero{} so we assume
  that $n\geq 2$.
  By definition we know that $\List f1{n-1}$ is summable and %
  $\sum_{i=1}^{n-1}f_i+f_n=\sum_{i=1}^nf_i$.
  So by inductive hypothesis $\List f2{n-1}$ is summable, %
  $f_1,\sum_{i=2}^{n-1}f_i$ are summable and
  $f_1+\sum_{i=2}^{n-1}f_i=\sum_{i=1}^{n-1}f_i$. So we can apply %
  Lemma~\ref{lemma:Ssum-assoc3} to %
  $f_1,\sum_{i=2}^{n-1}f_i,f_n$ and hence %
  $\sum_{i=2}^{n-1}f_i,f_n$ are summable which by definition means that %
  $\List f2n$ is summable and %
  $\sum_{i=2}^nf_i=\sum_{i=2}^{n-1}f_i+f_n$, and moreover %
  $f_1,\sum_{i=2}^nf_i$ are summable and %
  $f_1+\sum_{i=2}^nf_i=\sum_{i=1}^{n-1}f_i+f_n=\sum_{i=1}^nf_i$ %
  as contended.
\end{proof}

Now we prove that summability is invariant by permutations. For this
we consider first a circular permutation and then a transposition.
\begin{lemma}\label{lemma:summability-circular}
  If $\List f1n$ are summable then $\List f2{n},f_1$ is summable and %
  $\sum_{i=1}^nf_i=f_2+\cdots+f_{n}+f_1$.
\end{lemma}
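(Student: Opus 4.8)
The plan is to reduce the circular permutation to the single operation of moving $f_1$ past the block $\List f2n$, which is precisely what Lemma~\ref{lemma:left-summable-family} combined with binary commutativity already supplies; almost no new work is needed. Throughout I assume $n\geq 1$, since otherwise the statement has no $f_1$ to speak of.

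First I would invoke Lemma~\ref{lemma:left-summable-family}: from the summability of $\List f1n$ it yields that $\List f2n$ is summable, that $f_1$ and $\sum_{i=2}^n f_i$ are summable, and that $f_1+\sum_{i=2}^n f_i=\sum_{i=1}^n f_i$. This isolates the partial sum, so I set $s=\sum_{i=2}^n f_i$ and record that $(f_1,s)$ is a summable pair with $f_1+s=\sum_{i=1}^n f_i$. Next I would apply binary commutativity (Lemma~\ref{lemma:ssum-com}) to this pair, obtaining that $(s,f_1)$ is summable with $s+f_1=f_1+s=\sum_{i=1}^n f_i$.

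At this point I have exactly the two ingredients required by the inductive definition of a summable family applied to the reordered list $f_2,\dots,f_n,f_1$: its length-$(n-1)$ prefix $\List f2n$ is summable, and the prefix-sum $s$ is summable with the final term $f_1$. By that definition the reordered family is therefore summable, with sum $s+f_1$. Combining the two displayed equalities gives $f_2+\cdots+f_n+f_1=s+f_1=\sum_{i=1}^n f_i$, which is the claim. There is no genuine obstacle here: the real associativity content is already packaged into Lemma~\ref{lemma:left-summable-family} and the commutation into Lemma~\ref{lemma:ssum-com}, so the only point deserving a moment's care is matching the syntactic shape of the family-summability definition and checking the degenerate case $n=1$, where the prefix is empty and both equalities collapse to the tautology $f_1=f_1$.
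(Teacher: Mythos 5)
Your proof is correct and follows exactly the paper's own argument: apply Lemma~\ref{lemma:left-summable-family} to isolate $f_1$ and the partial sum $\sum_{i=2}^n f_i$, swap the pair with Lemma~\ref{lemma:ssum-com}, and conclude by the inductive definition of family summability. The handling of the degenerate case $n\leq 1$ also matches the paper's dismissal of it as obvious.
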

\begin{proof}
  This is obvious if $n\leq 1$ so we can assume $n\geq 2$.
  By Lemma~\ref{lemma:left-summable-family} %
  $\List f2n$ are summable and %
  $f_1,\sum_{i=2}^nf_i$ are summable with %
  $f_1+\sum_{i=2}^nf_i=\sum_{i=1}^nf_i$. %
  So $\sum_{i=2}^nf_i,f_1$ are summable by Lemma~\ref{lemma:ssum-com} and
  hence %
  $f_2,\dots,f_n,f_1$ is summable (by definition) with sum equal to %
  $\sum_{i=1}^nf_i$.
\end{proof}

\begin{lemma}\label{lemma:summability-transp}
  If the family $\List f1n$ is summable, with $n\geq 2$, then %
  $\List f1{n-2},f_n,f_{n-1}$ is summable with the same sum.
\end{lemma}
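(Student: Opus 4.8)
The plan is to reduce the transposition of the last two summands to a single application of the $2\times2$ reorganisation Lemma~\ref{lemma:Ssum-assoc4}, using $0$ to pad the unused entries of the matrix so that no separate appeal to commutativity or to the ternary associativity Lemma~\ref{lemma:Ssum-assoc3} is needed.

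First I would unfold the definition of a summable family. Writing $S=\sum_{i=1}^{n-2}f_i$, the summability of $\List f1n$ gives, by definition, that $\List f1{n-1}$ is summable; unfolding once more, this means that $\List f1{n-2}$ is summable and that $(S,f_{n-1})$ is summable with $S+f_{n-1}=\sum_{i=1}^{n-1}f_i$. Unfolding the top level, it also means that $(\sum_{i=1}^{n-1}f_i,f_n)=(S+f_{n-1},f_n)$ is summable with sum $\sum_{i=1}^nf_i$. When $n=2$ this simply reads $S=0$, and the argument below goes through unchanged.

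The key step is to apply Lemma~\ref{lemma:Ssum-assoc4} to the four morphisms $f_{00}=S$, $f_{01}=f_{n-1}$, $f_{10}=f_n$ and $f_{11}=0$. Its three hypotheses all hold: $(S,f_{n-1})$ is summable as just noted; $(f_n,0)$ is summable with $f_n+0=f_n$ by \Saxzero; and therefore $(f_{00}+f_{01},f_{10}+f_{11})=(S+f_{n-1},f_n)$ is summable, again using $f_n+0=f_n$. The lemma then yields, using $f_{n-1}+0=f_{n-1}$ from \Saxzero, that $(S,f_n)$ is summable, that $(S+f_n,f_{n-1})$ is summable, and that $(S+f_n)+f_{n-1}=(S+f_{n-1})+f_n=\sum_{i=1}^nf_i$.

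These three facts are exactly what the conclusion requires. Since $\List f1{n-2}$ is summable and $(S,f_n)$ is summable with sum $S+f_n$, the family $\List f1{n-2},f_n$ is summable with sum $S+f_n$ by definition; and since $(S+f_n,f_{n-1})$ is summable, the family $\List f1{n-2},f_n,f_{n-1}$ is summable with sum $(S+f_n)+f_{n-1}=\sum_{i=1}^nf_i$, as claimed. I do not expect a genuine obstacle here. The only real choice is the arrangement of the $2\times2$ matrix, and the point to get right is the $0$-padding: placing $0$ in the slot $f_{11}$ makes the row sums recover the original pair $(S+f_{n-1},f_n)$ while the column sums recover the transposed pair $(S+f_n,f_{n-1})$, so that the commutation equation built into Lemma~\ref{lemma:Ssum-assoc4} delivers both the summability of the transposed partial sums and the equality of the total sums at once.
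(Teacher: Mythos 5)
Your proof is correct. Every step checks out: the unfolding of the definition of a summable family is accurate (including the degenerate case $n=2$, where $S=0$ is the sum of the empty family and $(0,f_1)$ is summable by \Saxzero{} and \Saxcom); the three hypotheses of Lemma~\ref{lemma:Ssum-assoc4} hold for the matrix $f_{00}=S$, $f_{01}=f_{n-1}$, $f_{10}=f_n$, $f_{11}=0$; and its conclusions, after simplifying $f_n+0=f_n$ and $f_{n-1}+0=f_{n-1}$ via \Saxzero, are exactly the summability of $(S,f_n)$, of $(S+f_n,f_{n-1})$, and the equality of total sums, which re-assemble into the summability of $\List f1{n-2},f_n,f_{n-1}$ by definition.

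Your route differs from the paper's. The paper works with the ternary associativity Lemma~\ref{lemma:Ssum-assoc3}, applying it to $g,f_{n-1},f_n$ to extract summability of $(f_{n-1},f_n)$, then invoking commutativity (Lemma~\ref{lemma:ssum-com}) to flip that pair, and then applying Lemma~\ref{lemma:Ssum-assoc3} twice more to re-associate $g,f_n,f_{n-1}$. You instead bypass both of those lemmas with a single application of the $2\times2$ interchange Lemma~\ref{lemma:Ssum-assoc4}, placing the $0$ in the corner $f_{11}$ so that the row sums read off the original order and the column sums the transposed one. This is in the same spirit as the paper's own proof of Lemma~\ref{lemma:Ssum-assoc3} (which is Lemma~\ref{lemma:Ssum-assoc4} padded with a $0$), but your padding position does more work: it makes the interchange equation itself perform the transposition, so commutativity is never cited (indeed for $n=2$ your argument literally re-derives Lemma~\ref{lemma:ssum-com} for the pair $f_1,f_2$). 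The paper's version is more modular, leaning on already-packaged associativity and commutativity; yours is shorter and exposes that the whole permutation-invariance machinery really rests on one application of the matrix lemma plus \Saxzero.
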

\begin{proof}
  By our assumption, $\List f1{n-2}$ is summable (let us call $g$ its
  sum), $g,f_{n-1}$ are summable and %
  $g+f_{n-1},f_n$ are summable. Moreover %
  $(g+f_{n-1})+f_n=\sum_{i=1}^nf_i$. %
  It follows by Lemma~\ref{lemma:Ssum-assoc3} that %
  $f_{n-1},f_n$ are summable and hence $f_n,f_{n-1}$ are summable with
  $f_n+f_{n-1}=f_{n-1}+f_n$ by Lemma~\ref{lemma:ssum-com}. %
  So we know by Lemma~\ref{lemma:Ssum-assoc3} that %
  $g,f_n+f_{n-1}$ are summable and hence by the same lemma that %
  $g,f_n$ are summable and that %
  $g+f_n,f_{n-1}$ are summable with %
  $(g+f_n)+f_{n-1}=g+(f_n+f_{n-1})=\sum_{i=1}^nf_i$. %
  By definition it follows that
  $\List f1{n-2},f_n$ is a summable family whose sum is %
  $g+f_n$, and then that %
  $\List f1{n-2},f_n,f_{n-1}$ is a summable family whose sum is %
  $\sum_{i=1}^nf_i$, as announced.
\end{proof}

\begin{proposition}\label{prop:gen-summab}
  For any $p\in\Symgrp n$ (the symmetric group) and any family of
  morphisms $(f_i)_{i=1}^n$, the family $(f_i)_{i=1}^n$ is summable
  iff the family $(f_{p(i)})_{i=1}^n$ is summable and then
  $\sum_{i\in I}f_i=\sum_{i\in I}f_{p(i)}$.
\end{proposition}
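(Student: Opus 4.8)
The plan is to leverage the two preceding lemmas---Lemma~\ref{lemma:summability-circular} handling a circular shift and Lemma~\ref{lemma:summability-transp} handling a transposition of the last two entries---together with the elementary group-theoretic fact that these two permutations generate all of $\Symgrp n$. Let me call a permutation $p\in\Symgrp n$ \emph{good} if for every summable family $(f_i)_{i=1}^n$ in $\cL(X,Y)$ the family $(f_{p(i)})_{i=1}^n$ is again summable with $\sum_{i=1}^n f_{p(i)}=\sum_{i=1}^n f_i$. Rephrasing the lemmas, the $n$-cycle $c=(1\,2\,\cdots\,n)$, for which $(f_{c(i)})_i=(f_2,\dots,f_n,f_1)$, is good by Lemma~\ref{lemma:summability-circular}, and the transposition $\tau=(n-1\,n)$ is good by Lemma~\ref{lemma:summability-transp}. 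The whole statement reduces to showing that every $p\in\Symgrp n$ is good.

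First I would check that the good permutations are closed under composition. If $p$ and $q$ are good and $(f_i)_i$ is summable, then goodness of $p$ gives that $(f_{p(i)})_i$ is summable with the same sum; writing $h_i=f_{p(i)}$ and applying goodness of $q$ to $(h_i)_i$ shows that $(h_{q(i)})_i=(f_{p(q(i))})_i=(f_{(p\circ q)(i)})_i$ is summable with the same sum, so $p\circ q$ is good, and the sum is preserved at each stage. Since the identity is visibly good, the good permutations form a submonoid of $\Symgrp n$ containing $c$ and $\tau$. Because $\Symgrp n$ is finite, a submonoid containing a group-generating set is the whole group (each generator's inverse is one of its powers), so it suffices to prove that $c$ and $\tau$ generate $\Symgrp n$.

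This generation fact is the only step with genuine content, and it is the one I expect to require the most care in bookkeeping. I would recover all adjacent transpositions from $c$ and $\tau$ by conjugation: conjugating a transposition $(a\,b)$ by $\sigma$ yields $(\sigma(a)\,\sigma(b))$, and since $c^{-1}(j)=j-1$ for $j>1$, taking $\sigma=c^{-k}$ gives $c^{-k}\tau\,c^{k}=(n-1-k\,\ n-k)$ for $0\le k\le n-2$. As $k$ ranges over these values this produces every adjacent transposition $(1\,2),\dots,(n-1\,n)$, and these are well known to generate $\Symgrp n$. Hence $c$ and $\tau$ generate $\Symgrp n$, so every permutation is good, which yields the forward implication together with the equality of sums.

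Finally, the stated equivalence follows by symmetry: since every permutation is good, so is $p^{-1}$, and applying its goodness to the family $(f_{p(i)})_i$ recovers $(f_i)_i=(f_{p(p^{-1}(i))})_i$, so summability of $(f_{p(i)})_i$ entails summability of $(f_i)_i$ with the same sum. The main obstacle is not any deep computation but keeping the composition order consistent in the closure argument and confirming that the two earlier lemmas correspond precisely to the intended generators $c$ and $\tau$; once that is pinned down, the generation of $\Symgrp n$ is standard.
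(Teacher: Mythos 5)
Your proposal is correct and follows essentially the same route as the paper's proof, which simply invokes the generation of $\Symgrp n$ by the circular permutation and the transposition of the last two elements and applies Lemmas~\ref{lemma:summability-transp} and~\ref{lemma:summability-circular}. You merely make explicit the details the paper leaves implicit: closure of sum-preserving permutations under composition, the fact that a submonoid of a finite group is a subgroup, the generation argument via conjugation to adjacent transpositions, and the use of $p^{-1}$ for the converse implication.
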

% From appendix
%
\begin{proof}
  % It suffices to prove that if $(f_1,\dots,f_n)$ is summable then the
  % cyclic permutation $(f_n,f_1,\dots,f_{n-1})$ is summable with the
  % same sum. For $n=0$ and $n=1$, this is trivial. For $n=2$ the
  % property results from Lemma~\ref{lemma:ssum-com}. For $n=3$ it
  % results from Lemma~\ref{lemma:Ssum-assoc3}
  % and~\ref{lemma:ssum-com}. Let us deal with the case $n=4$, the
  % general case being similar. We know that $(f_1,f_2)$ is summable,
  % $(f_3,f_4)$ is summable and $(f_1+f_2,f_3+f_4)$ is summable. It
  % follows that $(f_4,f_3)$ is summable by Lemma~\ref{lemma:ssum-com} and
  % hence, by Lemma~\ref{lemma:Ssum-assoc4} we have that $(f_1,f_4)$ and
  % $(f_2,f_3)$ is summable, so that $(f_4,f_1)$ is summable so that
  % finally $(f_4,f_1,f_2,f_3)$ is summable. The identity of summations
  % results from the corresponding identities in the lemmas we have
  % used.
  Remember that %
  $\Symgrp n$ is generated by the permutations %
  $(1,\dots,n-2,n,n-1)$ (transposition) and %
  $(2,\dots,n,1)$ (circular permutation) and apply %
  Lemmas~\ref{lemma:summability-transp} %
  and~\ref{lemma:summability-circular}.
\end{proof}

So we define an unordered finite family $(f_i)_{i\in I}$ to be
summable if any of its enumerations $(f_{i_1},\dots,f_{i_n})$ is
summable and then we set $\sum_{i\in I}f_i=\sum_{k=1}^nf_{i_k}$.

\begin{theorem}\label{th:summability-subsets}
  A family of morphisms $(f_i)_{i\in I}$ in $\cL(X,Y)$ is summable iff
  for any family of pairwise disjoint sets $(I_j)_{j\in J}$ such that
  $\cup_{j\in J}I_j=I$:
  \begin{itemize}
  \item for each $j\in J$ the restricted family $(f_i)_{i\in I_j}$ is
    summable with sum $\sum_{i\in I_j}f_i\in\cL(X,Y)$
  \item the family $(\sum_{i\in I_j}f_i)_{j\in J}$ is summable
  \end{itemize}
  and then we have $\sum_{i\in I}f_i=\sum_{j\in J}\sum_{i\in I_j}f_i$.
\end{theorem}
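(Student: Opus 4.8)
The plan is to fix an arbitrary partition $(I_j)_{j\in J}$ of the (finite) index set $I$ and prove the biconditional together with the sum formula by induction on the number of blocks $\Card J$. Since Proposition~\ref{prop:gen-summab} guarantees that summability and sums are invariant under reindexing, I am free throughout to enumerate any unordered family in whatever order is convenient; in particular I may list the elements of one chosen block last. The case $\Card J\le 1$ will be immediate: for $\Card J=0$ both sides reduce to the empty sum $0$, and for $\Card J=1$ the single block equals $I$ while the one-term family of block-sums is summable by \Saxzero.

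The heart of the argument is a \emph{two-block} (concatenation) lemma: for a partition $I=A\sqcup B$, the family $(f_i)_{i\in A\sqcup B}$ is summable iff $(f_i)_{i\in A}$ and $(f_i)_{i\in B}$ are each summable and, writing $a=\sum_{i\in A}f_i$ and $b=\sum_{i\in B}f_i$, the pair $(a,b)$ is summable, in which case $\sum_{i\in I}f_i=a+b$. I would prove this by induction on $\Card B$, enumerating the elements of $A$ first and those of $B$ last and peeling off the final element $g$ of $B$. The definition of family summability says precisely that a left-nested sequence is summable iff its initial segment is summable and that partial sum is in turn summable with the last term; applying this both to the whole concatenation and to $B$ alone, the inductive step reduces to the ternary rebracketing $(a+b')+g=a+(b'+g)$, where $b'$ is the sum of $B$ with its last element removed. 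The base case $\Card B=0$ is \Saxzero.

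With the two-block lemma in hand the induction on $\Card J$ is routine. For the step, split off the last block: write $B=I_{j_0}$ and $A=\bigcup_{j\ne j_0}I_j$, apply the two-block lemma to $I=A\sqcup B$, and apply the inductive hypothesis to the $(\Card J-1)$-block partition of $A$. Combining the two, the conditions on $A$ unfold into summability of each block $(f_i)_{i\in I_j}$ for $j\ne j_0$ together with summability of $(\sum_{i\in I_j}f_i)_{j\ne j_0}$ with sum $\sum_{i\in A}f_i$; together with summability of $(f_i)_{i\in I_{j_0}}$ and of the pair $(\sum_{i\in A}f_i,\sum_{i\in I_{j_0}}f_i)$, this is exactly what the definition of family summability requires for $(\sum_{i\in I_j}f_i)_{j\in J}$ to be summable. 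The sum formula then propagates as $\sum_{i\in I}f_i=\sum_{i\in A}f_i+\sum_{i\in I_{j_0}}f_i=\bigl(\sum_{j\ne j_0}\sum_{i\in I_j}f_i\bigr)+\sum_{i\in I_{j_0}}f_i=\sum_{j\in J}\sum_{i\in I_j}f_i$.

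The main obstacle is the two-block lemma, and within it the \emph{backward} implication of ternary associativity. Lemma~\ref{lemma:Ssum-assoc3} is stated in only one direction, which suffices for the forward implication of the concatenation lemma; but recovering ``$(a,b')$ and $(a+b',g)$ summable'' from ``$(b',g)$ and $(a,b'+g)$ summable'' needs an auxiliary appeal to commutativity (Lemma~\ref{lemma:ssum-com}), reordering the triple and invoking Lemma~\ref{lemma:Ssum-assoc3} again. The remaining difficulty is purely bookkeeping: matching the left-nested bracketing forced by the definition of family summability against the partition brackets. I expect Lemma~\ref{lemma:Ssum-assoc4} could absorb part of this symmetry, but the inductive route through the ternary Lemma~\ref{lemma:Ssum-assoc3} is the most transparent.
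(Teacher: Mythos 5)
Your proposal is correct and follows essentially the same route as the paper: induction on the number of blocks, using Proposition~\ref{prop:gen-summab} to enumerate the family so that one block comes last, splitting that block off, and applying the inductive hypothesis to the rest. Your explicit two-block concatenation lemma—including the commutativity detour via Lemma~\ref{lemma:ssum-com} needed for the backward direction of the iff—is precisely what the paper compresses into the phrase ``iterated application of the definition of summability and of Lemma~\ref{lemma:Ssum-assoc3}''.
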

\begin{proof}
  By induction on $k=\Card J\geq 1$. If $k=1$ the property trivially
  holds so assume $k>1$. Upon choosing enumerations we can assume
  that %
  $I=\Eset{1,\dots,n}$ and $J=\Eset{1,\dots,k}$, %
  with $n,k\in\Nat$. %
  Thanks to Proposition~\ref{prop:gen-summab} we can choose these
  enumerations in such a way that %
  $I_k=\{l+1,\dots,n\}$ for some $l\in\Eset{1,\dots,n}$. Then by an
  iterated application of the definition of summability and of
  Lemma~\ref{lemma:Ssum-assoc3} we know that the families %
  $\List f1l$ and $\List{f}{l+1}k$ are summable and that %
  $(\sum_{i=1}^lf_i)+(\sum_{j=l+1}^kf_i)=\sum_{i=1}^nl_i$.
  We conclude the proof by applying the inductive hypothesis to %
  $(I_j)_{j=1}^{k-1}$ which satisfies %
  $\Union_{j=1}^{k-1}I_j=\Eset{1,\dots,l}$.
\end{proof}

\begin{remark}
  These properties strongly suggest to consider summability as an
  $n$-ary notion, axiomatized in an operadic way. However in the
  sequel we shall see that the differential operations use $\Sfun X$
  as a space of pairs, and there it is not clear that such an
  operadic approach would be so convenient. This is why we stick (at
  least for the time being) to this ``binary'' axiomatization.
\end{remark}

Another interesting consequence of \Saxass{} is that $\Sfun$ preserves
summability.
\begin{theorem}\label{th:Sfun-preserves-summ}
  Let $f_0,f_1\in\cL(X,Y)$ be summable. Then
  $\Sfun f_0,\Sfun f_1\in\cL(\Sfun X,\Sfun Y)$ are summable, with
  witness $\Stuple{\Sfun f_0,\Sfun f_1}\in\cL(\Sfun X,\Sfun^2Y)$ given
  by $\Stuple{\Sfun f_0,\Sfun f_1}=\Sflip\Compl\Sfun\Stuple{f_0,f_1}$.
  % TYPO: ceci est un oubli idiot!
  And one has $\Sfun f_0+\Sfun f_1=\Sfun(f_0+f_1)$.
\end{theorem}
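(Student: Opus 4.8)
The plan is to guess the witness explicitly and then verify it by the joint monicity built into a pre-summability structure. Since $f_0,f_1$ are summable we are handed the witness $\Stuple{f_0,f_1}\in\cL(X,\Sfun Y)$, and applying the functor gives $\Sfun\Stuple{f_0,f_1}\in\cL(\Sfun X,\Sfun^2Y)$. I would set $h=\Sflip\Compl\Sfun\Stuple{f_0,f_1}\in\cL(\Sfun X,\Sfun^2Y)$ and show that $h$ is the witness of summability of $\Sfun f_0$ and $\Sfun f_1$; that is, I would prove $\Sproj i\Compl h=\Sfun f_i$ for $i=0,1$, where here $\Sproj i\in\cL(\Sfun^2Y,\Sfun Y)$ is the outer projection. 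This is exactly the content of the first assertion, the formula for the witness being read off from the definition of $h$.

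To establish $\Sproj i\Compl h=\Sfun f_i$ I would use that $\Sproj 0,\Sproj 1$ are jointly monic (part of the definition of a pre-summability structure), applied at the object $Y$: it then suffices to check $\Sproj j\Compl\Sproj i\Compl h=\Sproj j\Compl\Sfun f_i$ for every $j\in\Eset{0,1}$, where now $\Sproj j\in\cL(\Sfun Y,Y)$. The right-hand side equals $f_i\Compl\Sproj j$ by naturality of $\Sproj j$. For the left-hand side I would invoke the characterization of the flip, $\Sproj j\Compl\Sproj i\Compl\Sflip=\Sproj i\Compl\Sproj j$, to rewrite $\Sproj j\Compl\Sproj i\Compl h$ as $\Sproj i\Compl\Sproj j\Compl\Sfun\Stuple{f_0,f_1}$; then naturality of $\Sproj j$ turns $\Sproj j\Compl\Sfun\Stuple{f_0,f_1}$ into $\Stuple{f_0,f_1}\Compl\Sproj j$, and finally the defining property $\Sproj i\Compl\Stuple{f_0,f_1}=f_i$ of the witness yields $f_i\Compl\Sproj j$. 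Both sides coincide, so $h$ is indeed the witness $\Stuple{\Sfun f_0,\Sfun f_1}$.

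For the statement about the sum I would compute $\Sfun f_0+\Sfun f_1=\Ssum_{\Sfun Y}\Compl h=\Ssum_{\Sfun Y}\Compl\Sflip\Compl\Sfun\Stuple{f_0,f_1}$. Axiom~\Saxass{} instantiated at $Y$ reads $\Ssum_{\Sfun Y}=\Sfun\Ssum_Y\Compl\Sflip$, and since $\Sflip$ is involutive this gives $\Ssum_{\Sfun Y}\Compl\Sflip=\Sfun\Ssum_Y$. Hence $\Ssum_{\Sfun Y}\Compl h=\Sfun\Ssum_Y\Compl\Sfun\Stuple{f_0,f_1}=\Sfun(\Ssum_Y\Compl\Stuple{f_0,f_1})=\Sfun(f_0+f_1)$, using functoriality of $\Sfun$ and the definition $f_0+f_1=\Ssum\Compl\Stuple{f_0,f_1}$.

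The one genuinely delicate point — and the only place I expect to have to be careful — is the index bookkeeping on $\Sfun^2Y$: one must track which of the two $\Sfun$-layers each projection acts on and apply the flip characterization with the correct inner/outer indices, since it is precisely the exchange of these two layers that converts $\Sfun\Stuple{f_0,f_1}$ (which pairs the components of each $f_i$) into the desired witness (which pairs $\Sfun f_0$ with $\Sfun f_1$). Everything else is a mechanical consequence of naturality, the witness equations, and \Saxass.
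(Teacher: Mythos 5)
Your proof is correct and follows essentially the same route as the paper's: you verify the guessed witness $\Sflip\Compl\Sfun\Stuple{f_0,f_1}$ via joint monicity of $\Sproj0,\Sproj1$, the characterization $\Sproj i\Compl\Sproj j\Compl\Sflip=\Sproj j\Compl\Sproj i$ and naturality, and you derive the sum formula from \Saxass{} together with the involutivity of $\Sflip$ and functoriality of $\Sfun$. The index bookkeeping you flag as delicate is handled exactly as in the paper, so there is nothing to add.
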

\begin{proof}
  We must prove that
  $\Sproj i\Compl\Sflip\Compl\Sfun\Stuple{f_0,f_1}=\Sfun f_i$. For
  this we use the fact that $\Sproj0,\Sproj1\in\cL(\Sfun Y,Y)$ are
  jointly monic. We have
  \begin{align*}
    \Sproj j\Compl\Sproj i\Compl\Sflip\Compl\Compl\Sfun\Stuple{f_0,f_1}
    &=\Sproj i\Compl\Sproj j\Compl\Sfun\Stuple{f_0,f_1}\\
    &=\Sproj i\Compl\Stuple{f_0,f_1}\Compl\Sproj j\text{\quad by naturality}\\
    &=f_i\Compl\Sproj j=\Sproj j\Compl\Sfun f_i\text{\quad by naturality.}
  \end{align*}
  % TYPO: preuve du bout qui manquait
  And we have
  \begin{align*}
    \Sfun f_0+\Sfun f_1
    &=\Ssum_{\Sfun Y}\Compl\Stuple{\Sfun f_0,\Sfun f_1}
      \text{\quad by definition}\\
    &=\Ssum_{\Sfun Y}\Compl\Sflip\Compl\Sfun\Stuple{f_0,f_1}\\
    &=\Sfun\Ssum_Y
      \Compl\Sflip^2
      \Compl\Sfun\Stuple{f_0,f_1}
      \text{\quad by \Saxass{}}\\
    &=\Sfun\Ssum_Y
      \Compl\Sfun\Stuple{f_0,f_1}
      \text{\quad since }\Sflip\text{ is involutive}\\
    &=\Sfun(\Ssum_Y\Compl\Stuple{f_0,f_1})
      \text{\quad by functoriality}\\
    &=\Sfun(f_0+f_1)\,.
  \end{align*}
\end{proof}

We will use the notations $\Sin 0=\Stuple{\Id,0}\in\cL(X,\Sfun X)$ and
$\Sin 1=\Stuple{0,\Id}\in\cL(X,\Sfun X)$.
\begin{lemma}
  The morphisms $\Sin0,\Sin1\in\cL(X,\Sfun X)$ are natural in $X$.
\end{lemma}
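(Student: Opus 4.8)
The plan is to verify naturality directly, exploiting the joint monicity of the projections that comes built into a pre-summability structure. Fix a morphism $f\in\cL(X,Y)$; I must show that $\Sfun f\Compl{\Sin0}_X={\Sin0}_Y\Compl f$ and $\Sfun f\Compl{\Sin1}_X={\Sin1}_Y\Compl f$ in $\cL(X,\Sfun Y)$. First I would note that ${\Sin0}$ and ${\Sin1}$ are well defined: by \Saxzero{} the pairs $(\Id,0)$ and $(0,\Id)$ are summable, so the witnesses $\Stuple{\Id,0}$ and $\Stuple{0,\Id}$ exist. Since $\Sproj0$ and $\Sproj1$ are jointly monic, to prove each equality it suffices to postcompose both sides with $\Sproj0$ and with $\Sproj1$ and check that the results agree.

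For ${\Sin0}$ I would use naturality of $\Sproj i\colon\Sfun\to\Id$, which gives $\Sproj i\Compl\Sfun f=f\Compl\Sproj i$, together with the defining equations of the witness, namely $\Sproj0\Compl{\Sin0}_X=\Id_X$ and $\Sproj1\Compl{\Sin0}_X=0$. This yields
\begin{align*}
  \Sproj0\Compl\Sfun f\Compl{\Sin0}_X&=f\Compl\Sproj0\Compl{\Sin0}_X=f,\\
  \Sproj1\Compl\Sfun f\Compl{\Sin0}_X&=f\Compl\Sproj1\Compl{\Sin0}_X=f\Compl0=0,
\end{align*}
where the last step uses that $\cL$ is enriched over pointed sets. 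On the other side, $\Sproj0\Compl{\Sin0}_Y\Compl f=\Id_Y\Compl f=f$ and $\Sproj1\Compl{\Sin0}_Y\Compl f=0\Compl f=0$. The two $\Sproj i$-composites thus coincide for $i=0,1$, so joint monicity gives $\Sfun f\Compl{\Sin0}_X={\Sin0}_Y\Compl f$.

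The case of ${\Sin1}$ is entirely symmetric: here $\Sproj0\Compl{\Sin1}=0$ and $\Sproj1\Compl{\Sin1}=\Id$, so repeating the same computation with the roles of the two projections exchanged shows that both $\Sproj0$-composites equal $0$ and both $\Sproj1$-composites equal $f$, whence $\Sfun f\Compl{\Sin1}_X={\Sin1}_Y\Compl f$ again by joint monicity. I do not expect any genuine obstacle: the only points needing care are invoking naturality of the projections in the correct orientation and using the pointed enrichment so that $f\Compl0=0$ and $0\Compl f=0$. Once these are in place the argument is a routine diagram chase, and handling both ${\Sin0}$ and ${\Sin1}$ amounts to running the same computation twice.
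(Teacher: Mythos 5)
Your proof is correct and follows essentially the same route as the paper: postcompose with the jointly monic projections $\Sproj0,\Sproj1$, use their naturality together with the defining equations of the witnesses and the pointed enrichment ($f\Compl 0=0$, $0\Compl f=0$), and conclude by joint monicity. The only cosmetic difference is that you spell out the $\Sin1$ case and the well-definedness of the witnesses, which the paper leaves implicit (noting that summability of $(0,\Id)$ strictly speaking uses \Saxcom{} on top of \Saxzero{}).
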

\begin{proof}
  Let $f\in\cL(X,Y)$. For $i=0,1$ we have %
  $\Sproj i\Compl\Sfun f\Compl\Stuple{\Id,0} =f\Compl\Sproj
  i\Compl\Stuple{\Id,0}$ which is equal to %
  $f$ if $i=0$ and to $0$ if $i=1$ since $f\Compl 0=0$. %
  On the other hand $\Sproj i\Compl\Stuple{\Id,0}\Compl f$ is equal
  to %
  $f$ is $i=0$ and to $0$ if $i=1$ since $0\Compl f=0$. %
  The naturality follows by the fact that $\Sproj0,\Sproj1$ are jointly
  monic.
\end{proof}

Notice that if $\cL$ has products $\With XY$ and coproducts $\Plus XY$
then we have
\[
  \begin{tikzcd}
    \Plus XX\ar[r,"\Cotuple{\Sin0,\Sin1}"] &\Sfun
    X\ar[r,"\Tuple{\Sproj0,\Sproj 1}"] &\With XX
  \end{tikzcd}
\]
where $\Cotuple{\Sin0,\Sin1}$ is the co-pairing of %
$\Sin0$ and $\Sin1$, locating $\Sfun X$ somewhere in between the
coproduct and the product of $X$ with itself. Notice that, in the case
of coherence spaces, $\Sfun X$ is neither the product $\With XX$ nor
the coproduct $\Plus XX$ in general.

In contrast, if $\cL$ has biproducts, then we necessarily have
$\Sfun X=\With XX=\Plus XX$ with obvious structural morphisms, and
$\cL$ is additive. Of course this is not the situation we are
primarily interested in!

% \begin{remark}
%   We have seen that any summable category
%   $(\cL,\Sfun,\Sproj0,\Sproj1,\Ssum)$ is actually enriched over
%   partial commutative monoids which are structures $(M,0,\mathord+)$
%   where $\mathord+:M\times M\to M$ is a partial function such that
%   $\{0\}\times M\subseteq\Fdom(\mathord +)$ and $0+m=m$, if
%   $(m_0,m_1)\in\Fdom(\mathord+)$ then $(m_1,m_0)\in\Fdom(\mathord+)$
%   and $m_0+m_1=m_1+m_0$ and last, for any $m_0,m_1,m_2\in M$,
%   $(m_0,m_1)\in\Fdom(\mathord+)$ and
%   $(m_0+m_1,m_2)\in\Fdom(\mathord +)$ iff
%   $(m_1,m_2)\in\Fdom(\mathord+)$ and
%   $(m_0,m_1+m_2)\in\Fdom(\mathord +)$ and, when these equivalent
%   conditions hold one has $(m_0+m_1)+m_2=m_0+(m_1+m_2)$. That is,
%   roughly speaking, $\cL$ is a ``partially additive category'' in the
%   sense of various authors~\cite{AbibManes,MascariPedicini,Haghverdi}.
%   Indeed $\cL(X,Y)$ is a partial commutative monoid with $f_0+f_1$
%   defined if $f_0,f_1$ is summable. This is however only one part of
%   the story because the functor $\Sfun$ has the main feature of
%   \emph{internalizing} pairs of summable morphisms as morphisms of the
%   category $\cL$ itself, our witnesses of summability, and summability
%   is meaningful also on these witnesses. This ``functorialization'' of
%   summability is essential in our approach to differentiability as we
%   shall see.
% \end{remark}

\subsection{A monad structure on $\Sfun$}\label{sec:Sfun-monad}
We already noticed that there is a natural transformation
$\Sin 0\in\cL(X,\Sfun X)$. As also mentioned the morphisms
$\Sproj i\Compl\Sproj j\in\cL(\Sfun^2 X,X)$ (for all
$i,j\in\Eset{0,1}$) are summable, so that the morphisms
% TYPO
% $\Sproj0\Compl\Sproj0,\Sproj1\Compl\Sproj0+\Sproj0\Compl\Sproj0\in\cL(\Sfun^2
% X,\Sfun X)$ are summable, let
$\Sproj0\Compl\Sproj0,\Sproj1\Compl\Sproj0+\Sproj0\Compl\Sproj1\in\cL(\Sfun^2
X,\Sfun X)$ are summable by Theorem~\ref{th:summability-subsets}, let
$\Sfunadd=
\Stuple{\Sproj0\Compl\Sproj0,\Sproj1
  \Compl\Sproj0+\Sproj0
  \Compl\Sproj1}\in\cL(\Sfun^2X,\Sfun X)$ %
be the witness of this summability.

\begin{theorem}\label{th:sfun-monade}
  The tuple $(\Sfun,\Sin0,\Sfunadd)$ is a monad on $\cL$ and we have
  $\Sfunadd\Compl\Sflip=\Sfunadd$.
\end{theorem}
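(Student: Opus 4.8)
The plan is to verify the three monad equations together with the identity $\Sfunadd\Compl\Sflip=\Sfunadd$ by systematically exploiting that $\Sproj0,\Sproj1$ are jointly monic. All morphisms in question have codomain $\Sfun X$ (the two composites in associativity both landing in $\Sfun X$), so in each case it suffices to check equality after post-composition with $\Sproj0$ and with $\Sproj1$. These post-compositions are then unfolded using the defining cones
\[
  \Sproj0\Compl\Sin0=\Id,\quad \Sproj1\Compl\Sin0=0,\quad
  \Sproj0\Compl\Sfunadd=\Sproj0\Compl\Sproj0,\quad
  \Sproj1\Compl\Sfunadd=\Sproj1\Compl\Sproj0+\Sproj0\Compl\Sproj1,
\]
the characterization $\Sproj i\Compl\Sproj j\Compl\Sflip=\Sproj j\Compl\Sproj i$, the naturality of $\Sproj0,\Sproj1$, and the fact (Lemma~\ref{lemma:compl-summable}) that pre- and post-composition distribute over sums of summable morphisms.

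First I would dispatch the two unit laws and the flip identity, which are quick. For the left unit $\Sfunadd\Compl(\Sin0)_{\Sfun X}=\Id$, post-composing with $\Sproj0$ gives $\Sproj0\Compl\Sproj0\Compl(\Sin0)_{\Sfun X}=\Sproj0$, and with $\Sproj1$ gives $\Sproj1\Compl\Sproj0\Compl(\Sin0)_{\Sfun X}+\Sproj0\Compl\Sproj1\Compl(\Sin0)_{\Sfun X}=\Sproj1+0=\Sproj1$ by~\Saxzero; joint monicity yields the identity. The right unit $\Sfunadd\Compl\Sfun\Sin0=\Id$ is handled the same way, now rewriting $\Sproj i\Compl\Sproj j\Compl\Sfun\Sin0=\Sproj i\Compl\Sin0\Compl\Sproj j$ by naturality, which equals $\Sproj j$ for $i=0$ and $0$ for $i=1$. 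For $\Sfunadd\Compl\Sflip=\Sfunadd$ one computes $\Sproj0\Compl\Sfunadd\Compl\Sflip=\Sproj0\Compl\Sproj0\Compl\Sflip=\Sproj0\Compl\Sproj0$, while $\Sproj1\Compl\Sfunadd\Compl\Sflip=(\Sproj1\Compl\Sproj0)\Compl\Sflip+(\Sproj0\Compl\Sproj1)\Compl\Sflip=\Sproj0\Compl\Sproj1+\Sproj1\Compl\Sproj0$, equal to $\Sproj1\Compl\Sfunadd$ by commutativity (Lemma~\ref{lemma:ssum-com}).

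The substance is associativity, $\Sfunadd\Compl\Sfunadd_{\Sfun X}=\Sfunadd\Compl\Sfun\Sfunadd_X$, both sides being morphisms $\Sfun^3X\to\Sfun X$. The $\Sproj0$ test is routine: on the left $\Sproj0\Compl\Sfunadd=\Sproj0\Compl\Sproj0$ and the defining cone of $\Sfunadd_{\Sfun X}$ collapse the composite to $\Sproj0\Compl\Sproj0\Compl\Sproj0$; on the right the same, after using naturality $\Sproj0\Compl\Sfun\Sfunadd_X=\Sfunadd_X\Compl\Sproj0$. The $\Sproj1$ test is where the work lies. Writing $m_1=\Sproj1\Compl\Sproj0\Compl\Sproj0$, $m_2=\Sproj0\Compl\Sproj1\Compl\Sproj0$ and $m_3=\Sproj0\Compl\Sproj0\Compl\Sproj1$, and unfolding $\Sproj1\Compl\Sfunadd$ together with the defining cones of $\Sfunadd$ at both $X$ and $\Sfun X$, naturality, and the distributivity of Lemma~\ref{lemma:compl-summable}, the left-hand side reduces to $m_1+(m_2+m_3)$ while the right-hand side reduces to $(m_1+m_2)+m_3$.

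The main obstacle is precisely this last reduction and its summability bookkeeping. One must first check that the partial sums on each side genuinely exist as witnessed summations: they do, since each arises by composing the summable pair $(\Sproj1\Compl\Sproj0,\Sproj0\Compl\Sproj1)$ with the relevant morphism and invoking Lemma~\ref{lemma:compl-summable}. In particular the right-hand computation exhibits $(m_1,m_2)$ as summable and $(m_1+m_2,m_3)$ as summable, so Lemma~\ref{lemma:Ssum-assoc3} applies to $(m_1,m_2,m_3)$ and yields both that $(m_1,m_2+m_3)$ is summable and that $(m_1+m_2)+m_3=m_1+(m_2+m_3)$, matching the left-hand value. Joint monicity of $\Sproj0,\Sproj1$ then forces the two composites to coincide. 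Thus the only conceptual ingredient beyond mechanical unfolding of the cones of $\Sin0$, $\Sfunadd$ and $\Sflip$ is the associativity of the binary sum, which flows from~\Saxass{} through Lemma~\ref{lemma:Ssum-assoc3}.
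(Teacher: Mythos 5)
Your overall strategy is the same as the paper's: check every identity after post-composition with the jointly monic pair $\Sproj0,\Sproj1$, unfolding the defining cones of $\Sin0$, $\Sfunadd$, $\Sflip$, and using naturality of the projections together with Lemma~\ref{lemma:compl-summable}. What you do treat is correct, and in places more careful than the paper: the unit laws and the identity $\Sfunadd\Compl\Sflip=\Sfunadd$ (which the paper dispatches as ``proved in the same way'' / in one line) are written out, and in the associativity check you make the summability bookkeeping explicit --- the left side reduces to your $m_1+(m_2+m_3)$, the right to $(m_1+m_2)+m_3$, and Lemma~\ref{lemma:Ssum-assoc3} (hence ultimately \Saxass{}) bridges the two --- where the paper silently writes both sides as an unbracketed three-term sum.

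There is, however, a genuine gap: you never verify that $\Sfunadd$ is a \emph{natural transformation}. This is part of the claim that $(\Sfun,\Sin0,\Sfunadd)$ is a monad, and it is not automatic: $\Sfunadd_X=\Stuple{\Sproj0\Compl\Sproj0,\Sproj1\Compl\Sproj0+\Sproj0\Compl\Sproj1}$ is defined object-by-object through the universal property of witnesses, so the squares $(\Sfun f)\Compl\Sfunadd_X=\Sfunadd_Y\Compl\Sfun^2f$ require proof. (Naturality of $\Sin0$ is covered by the lemma immediately preceding the theorem, but no such statement exists for $\Sfunadd$.) This is in fact the one computation the paper's own proof writes out in full. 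The repair uses exactly the technique you apply everywhere else: post-compose both sides with $\Sproj0$ and with $\Sproj1$, then use naturality of the projections and distributivity of composition over sums (Lemma~\ref{lemma:compl-summable}) to reduce both sides to $f\Compl\Sproj0\Compl\Sproj0$, respectively to $f\Compl\Sproj0\Compl\Sproj1+f\Compl\Sproj1\Compl\Sproj0$. So the omission is easy to fill, but as written your proposal does not establish the full statement.
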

\begin{proof}
  The proof is easy and uses the fact that $\Sproj0,\Sproj1$ are
  jointly monic. Let us prove that $\Sfunadd$ is natural so let
  $f\in\cL(X,Y)$, we have
  $\Sproj0\Compl(\Sfun f)\Compl\Sfunadd_{X} =
  f\Compl\Sproj0\Sfunadd_X$ by naturality of $\Sproj0$ and hence
  $\Sproj0\Compl(\Sfun
  f)\Compl\Sfunadd_{X}=f\Compl\Sproj0\Compl\Sproj0$, and
  $\Sproj0\Compl\Sfunadd_Y\Compl(\Sfun^2f)=
  \Sproj0\Compl\Sproj0\Compl(\Sfun^2f)=f\Compl\Sproj0\Compl\Sproj0$ by
  naturality of $\Sproj0$.

  Similarly, using the naturality of $\Sproj1$, we have
  $\Sproj1\Compl(\Sfun f)\Compl\Sfunadd_{X}
  =f\Compl\Sproj1\Sfunadd_X
  =f\Compl(\Sproj0\Compl\Sproj1+\Sproj1\Compl\Sproj0)
  =f\Compl\Sproj0\Compl\Sproj1+f\Compl\Sproj1\Compl\Sproj0$
  and
  $\Sproj1\Compl\Sfunadd_Y\Compl(\Sfun^2f)
    =(\Sproj0\Compl\Sproj1+\Sproj1\Compl\Sproj0)\Compl(\Sfun^2f)
    =\Sproj0\Compl\Sproj1\Compl\Sfun^2f+\Sproj1\Compl\Sproj0\Compl(\Sfun^2f)
    =f\Compl\Sproj0\Compl\Sproj1+f\Compl\Sproj1\Compl\Sproj0$.

    One proves
    $\Sfunadd_X\Compl\Sfunadd_{\Sfun
      X}=\Sfunadd_X\Compl\Sfun\Sfunadd_X$ by showing in the same
    manner that
    $\Sproj0\Compl\Sfunadd_X\Compl\Sfunadd_{\Sfun
      X}=\Sproj0\Compl\Sproj0\Compl\Sproj0
    =\Sproj0\Compl\Sfunadd_X\Compl\Sfun\Sfunadd_X $ and that
    $\Sproj1\Compl\Sfunadd_X\Compl\Sfunadd_{\Sfun
      X}=\Sproj0\Compl\Sproj0\Compl\Sproj1
    +\Sproj0\Compl\Sproj1\Compl\Sproj0
    +\Sproj1\Compl\Sproj0\Compl\Sproj0
    =\Sproj1\Compl\Sfunadd_X\Compl\Sfun\Sfunadd_X$.  The
    commutations involving $\Sfunadd$ and $\Sin0$ are proved in the
    same way. The last equation results from
    % TYPO
    % $\Sflip\Compl\Sproj i\Compl\Sproj j=\Sproj j\Compl\Sproj i$
    $\Sproj i\Compl\Sproj j\Compl\Sflip=\Sproj j\Compl\Sproj i$
\end{proof}
\begin{Example}
  In our coherence space running example, we have
  % TYPO
  % $\Matappa{\Sin0}{(x,u)}=x$ and
  $\Matappa{\Sin0}x=(x,\emptyset)$ and
  $\Matappa{\Sfunadd}{((x,u),(y,v))}=(x,u+y)$; notice indeed that
  since $((x,u),(y,v))\in\Cl{\Sfun^2 E}$ we have $x+u+y+v\in\Cl E$.
\end{Example}

Just as in tangent categories, this monad structure will be crucial
for expressing that the differential (Jacobian) is a linear morphism.

\section{Differentiation in a summable symmetric monoidal category}
\label{sec:sum-moncat} %
Let $\cL$ be a symmetric monoidal
category (SMC), with monoidal product $\ITens$, unit $\Sone$ and
isomorphisms $\Rightu_X\in\cL(\Tens X\Sone,X)$,
$\Leftu_X\in\cL(\Tens\Sone ,X)$,
$\Assoc_{X_0,X_1,X_2}\in\cL(\Tens{\Tensp{X_0}{X_1}}{X_2},
\Tens{X_0}{\Tensp{X_1}{X_2}})$
and $\Sym_{X_0,X_1}\in\cL(\Tens{X_0}{X_1},\Tens{X_1}{X_0})$. Most
often these isos will be kept implicit to simplify the presentation.
Concerning the compatibility of the summability structure with the
monoidal structure our axiom stipulates distributivity.

Assume that $\cL$ is also equipped with a summability structure. We say
that $\cL$ is a summable SMC if the following property holds, which
expresses that the tensor distributes over the (partially defined)
sum.

\begin{Axicond}{\Saxdist}
  If $(f_{00},f_{01})$ is a summable pair of morphisms in
  $\cL(X_0,Y_0)$ and $f_1\in\cL(X_1,Y_1)$ then
  $(\Tens{f_{00}}{f_1},\Tens{f_{01}}{f_1})$ is a summable pair of
  morphisms in $\cL(\Tens{X_0}{X_1},\Tens{Y_0}{Y_1})$, and moreover
\begin{align*}
  \Tens{f_{00}}{f_1}+\Tens{f_{01}}{f_1}=\Tens{(f_{00}+f_{01})}{f_1}
\end{align*}  
\end{Axicond}

As a consequence, using the symmetry of $\ITens$, if $(f_{00},f_{01})$
is summable in $\cL(X_0,Y_0)$ and $(f_{10},f_{11})$ is summable in
$\cL(X_1,Y_1)$, the family
$(\Tens{f_{00}}{f_{10}},\Tens{f_{00}}{f_{11}},\Tens{f_{01}}{f_{10}},\Tens{f_{01}}{f_{11}})$
is summable in $\cL(\Tens{X_0}{X_1},\Tens{Y_0}{Y_1})$ and we have
\begin{align*}
  \Tens{(f_{00}+f_{01})}{(f_{10}+f_{11})}
  =\Tens{f_{00}}{f_{10}}+\Tens{f_{00}}{f_{11}}+\Tens{f_{01}}{f_{10}}+\Tens{f_{01}}{f_{11}}\,.
\end{align*}

We can define a natural transformation
$\Sstr_{X_0,X_1}\in\cL(\Tens{X_0}{\Sfun X_1},\Sfun\Tensp{X_0}{X_1})$
by setting $\Sstr_{X_0,X_1}=\Stuple{\Tens{X_0}{\Sproj0},\Tens{X_0}{\Sproj1}}$
which is well defined by \Saxdist. We use
$\Sstrs_{X_0,X_1}\in\cL(\Tens{\Sfun X_0}{X_1},\Sfun\Tensp{X_0}{X_1})$
for the natural transformation defined from $\Sstr$ using the symmetry
isomorphism of the SMC, that is %
$\Sstrs_{X_0,X_1}=\Sstr_{X_1,X_0}\Compl\Sym
=\Stuple{\Tens{\Sproj0}{X_1},\Tens{\Sproj1}{X_1}}
\in\cL(\Tens{\Sfun X_0}{X_1},\Sfun\Tensp{X_0}{X_1})$.
% as the following composition of natural transformation
% \[
%   \begin{tikzcd}
%     \Tens{\Sfun X_0}{X_1}\ar[r,"\Tens{\Sfun X_0}{\Sin0}"]
%     &[1em]\Tens{\Sfun X_0}{\Sfun X_1}\ar[r,"\Smont"]
%     &\Sfun\Tensp{X_0}{X_1}
%   \end{tikzcd}
% \]
% Notice that this morphism is completely characterized by the two
% following commutations
% \begin{align*}
%   \Sproj0\Compl\Sstr=\Tens{\Sproj0}{X_0}
%   \text{ and }
%   \Sproj1\Compl\Sstr=\Tens{\Sproj1}{X_0}  
% \end{align*}
% the second equation being due to the fact that $\Sproj1\Compl\Sin0=0$.

\begin{lemma}\label{lemma:Sstr-sum}
  % TYPO
  % $\Ssum\Compl\Sstr_{\Tens{X_0}{X_1}}=\Tens{X_0}{\Ssum_{X_1}}$.
  $\Ssum\Compl\Sstr_{{X_0},{X_1}}=\Tens{X_0}{\Ssum_{X_1}}$.
\end{lemma}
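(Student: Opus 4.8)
The plan is to unfold the definition of $\Sstr$ and read off the claim directly from axiom \Saxdist. Recall that $\Sstr_{X_0,X_1}$ is by construction the witness $\Stuple{\Tens{X_0}{\Sproj0},\Tens{X_0}{\Sproj1}}$ of the summability of the pair $(\Tens{X_0}{\Sproj0},\Tens{X_0}{\Sproj1})$ of morphisms in $\cL(\Tens{X_0}{\Sfun X_1},\Tens{X_0}{X_1})$; this pair is summable precisely because $(\Sproj0,\Sproj1)$ is summable and \Saxdist (read through the symmetry $\Sym$ of $\ITens$, exactly as when $\Sstr$ was defined) transports that summability across $\Tens{X_0}{\_}$.

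By the very definition of the sum of a summable pair, namely $g_0+g_1=\Ssum\Compl\Stuple{g_0,g_1}$, the left-hand side $\Ssum\Compl\Sstr_{X_0,X_1}$ is nothing but the sum $\Tens{X_0}{\Sproj0}+\Tens{X_0}{\Sproj1}$. It then remains to evaluate this sum. Applying \Saxdist once more (again with the two tensor factors interchanged via $\Sym$) gives
\[
  \Tens{X_0}{\Sproj0}+\Tens{X_0}{\Sproj1}=\Tens{X_0}{(\Sproj0+\Sproj1)}\,.
\]
Finally I would invoke the earlier observation that $\Sproj0$ and $\Sproj1$ are themselves summable, with witness $\Id$ and sum $\Ssum$, i.e.\ $\Sproj0+\Sproj1=\Ssum_{X_1}$, which yields $\Ssum\Compl\Sstr_{X_0,X_1}=\Tens{X_0}{\Ssum_{X_1}}$ as required.

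There is no real obstacle here: the statement is essentially the defining property of $\Sstr$ combined with \Saxdist. The only point demanding a little care is keeping track of which side of the tensor the summable pair sits on—\Saxdist is stated with the summable pair in the left factor and a fixed morphism on the right, whereas here the fixed object $X_0$ is on the left and the summable pair is on the right—but this is resolved exactly as in the definition of $\Sstr$, by reading the axiom through the symmetry isomorphism of the SMC.
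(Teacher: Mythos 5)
Your proof is correct and follows essentially the same route as the paper: unfold $\Sstr_{X_0,X_1}=\Stuple{\Tens{X_0}{\Sproj0},\Tens{X_0}{\Sproj1}}$, identify $\Ssum\Compl\Sstr$ with the sum $\Tens{X_0}{\Sproj0}+\Tens{X_0}{\Sproj1}$, apply \Saxdist{} to factor out $X_0$, and conclude with $\Sproj0+\Sproj1=\Ssum_{X_1}$. Your extra care about reading \Saxdist{} through the symmetry isomorphism (since the axiom places the summable pair in the left tensor factor) is a detail the paper leaves implicit but is exactly right.
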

\begin{proof}
  We have
  % TYPO
  % $\Ssum\Compl\Sstr_{\Tens{X_0}{X_1}}=\Tens{X_0}{\Sproj
  $\Ssum\Compl\Sstr_{{X_0},{X_1}}=\Tens{X_0}{\Sproj
    0}+\Tens{X_0}{\Sproj 1}=\Tens{X_0}{(\Sproj0+\Sproj1)}$ by
  \Saxdist{} and we have $\Sproj0+\Sproj1=\Ssum_{X_1}$.
\end{proof}

\begin{theorem}\label{th:Sstr-com-strength}
  The natural transformation $\Sstr$ is a strength for the monad
  $(\Sfun,\Sin0,\Sfunadd)$ and equipped with $\Sstr$ this monad is
  commutative.
\end{theorem}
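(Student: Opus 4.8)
The plan is to derive every required identity from the fact that $\Sproj0,\Sproj1$ are jointly monic: to prove that two morphisms with codomain $\Sfun Z$ are equal it suffices to check that they agree after postcomposition with $\Sproj0$ and with $\Sproj1$. Two elementary moves make this effective. First, naturality of $\Sproj i$ lets me push projections through applications of the functor, $\Sproj i\Compl\Sfun h=h\Compl\Sproj i$. Second, the very definition of $\Sstr$ as a witness gives $\Sproj i\Compl\Sstr_{X_0,X_1}=\Tens{X_0}{\Sproj i}$ (and symmetrically $\Sproj i\Compl\Sstrs_{X_0,X_1}=\Tens{\Sproj i}{X_1}$). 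With these, each strength diagram reduces to an identity between tensors of projections, settled by functoriality of $\ITens$ (the interchange law).

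I would first dispatch the three coherence axioms of a strength. For the unitor, postcomposing $\Sfun(\Leftu)\Compl\Sstr_{\Sone,B}$ and $\Leftu_{\Sfun B}$ with $\Sproj i$ both yield $\Leftu_B\Compl(\Tens\Sone{\Sproj i})$, using naturality of $\Sproj i$ and of $\Leftu$; the associativity coherence is identical with $\Assoc$ in place of $\Leftu$. Compatibility with the monad unit is immediate: postcomposing $\Sstr_{A,B}\Compl(\Tens A{\Sin0})$ with $\Sproj i$ gives $\Tens A{(\Sproj i\Compl\Sin0)}$, equal to $\Id_{\Tens AB}$ for $i=0$ and to $0$ for $i=1$ since $\Sproj0\Compl\Sin0=\Id$ and $\Sproj1\Compl\Sin0=0$, which is exactly the $\Sproj i$-component of $\Sin0$.

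The one genuinely delicate axiom is compatibility with the multiplication $\Sfunadd$, equating $\Sstr_{A,B}\Compl(\Tens A{\Sfunadd})$ with $\Sfunadd\Compl\Sfun\Sstr_{A,B}\Compl\Sstr_{A,\Sfun B}$. The subtlety is that $\Sfunadd$ is built from the sum $\Sproj1\Compl\Sproj0+\Sproj0\Compl\Sproj1$, so the $\Sproj1$-component is a sum of two composites and I must be sure the manipulations are legitimate: that precomposing a summable pair with $\Sfun\Sstr_{A,B}\Compl\Sstr_{A,\Sfun B}$ keeps it summable with the expected sum is Lemma~\ref{lemma:compl-summable}, and that $\Tens A{(f+g)}=\Tens Af+\Tens Ag$ is \Saxdist. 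Granting this, the $\Sproj0$-components of both sides equal $\Tens A{(\Sproj0\Compl\Sproj0)}$, and the $\Sproj1$-components both equal $\Tens A{(\Sproj1\Compl\Sproj0)}+\Tens A{(\Sproj0\Compl\Sproj1)}$, each term obtained by one application of naturality followed by the defining equation for $\Sstr$. Joint monicity then closes the axiom.

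For commutativity I would form the two double strengths $\Sfunadd\Compl\Sfun\Sstr_{A,B}\Compl\Sstrs_{A,\Sfun B}$ and $\Sfunadd\Compl\Sfun\Sstrs_{A,B}\Compl\Sstr_{\Sfun A,B}$ from $\Tens{\Sfun A}{\Sfun B}$ to $\Sfun(\Tens AB)$, and compare them by the same recipe, now using both defining equations $\Sproj i\Compl\Sstr_{X_0,X_1}=\Tens{X_0}{\Sproj i}$ and $\Sproj i\Compl\Sstrs_{X_0,X_1}=\Tens{\Sproj i}{X_1}$. The computation gives, by the interchange law, $\Tens{\Sproj0}{\Sproj0}$ as the $\Sproj0$-component of each double strength, while the $\Sproj1$-component is $\Tens{\Sproj0}{\Sproj1}+\Tens{\Sproj1}{\Sproj0}$ for the first and $\Tens{\Sproj1}{\Sproj0}+\Tens{\Sproj0}{\Sproj1}$ for the second; these coincide by commutativity of the sum (Lemma~\ref{lemma:ssum-com}), so the two double strengths are equal and the monad is commutative. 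The main obstacle throughout is not the algebra but the bookkeeping: ensuring at each step, via Lemma~\ref{lemma:compl-summable} and \Saxdist, that the sums being moved across compositions and tensors are genuinely summable.
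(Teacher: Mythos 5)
Your proof is correct and follows essentially the same route as the paper: everything is reduced, via joint monicity of $\Sproj0,\Sproj1$, naturality, and the defining equations $\Sproj i\Compl\Sstr_{X_0,X_1}=\Tens{X_0}{\Sproj i}$, to identities between tensors of projections, with Lemma~\ref{lemma:compl-summable} and \Saxdist{} supplying exactly the summability bookkeeping you identify (you also verify the unitor/associator coherences, which the paper leaves implicit). The only inessential divergence is the commutativity axiom, where the paper first proves the stronger identity $\Sfun\Sstrs_{X_0,X_1}\Compl\Sstr_{\Sfun X_0,X_1}=\Sflip\Compl\Sfun\Sstr_{X_0,X_1}\Compl\Sstrs_{X_0,\Sfun X_1}$ by checking the four $\Sproj i\Compl\Sproj j$ components and then applies $\Sfunadd\Compl\Sflip=\Sfunadd$ from Theorem~\ref{th:sfun-monade}, whereas you compare the two double strengths directly and close with commutativity of the sum (Lemma~\ref{lemma:ssum-com}); the two arguments are interchangeable since that theorem's last equation itself rests on the same commutativity.
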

% From the appendix
%
\begin{proof}
  The fact that %
  $\Sstr$ is a strength means that the following two diagrams commute:
  \[
    \begin{tikzcd}
      \Tens{X_0}{X_1}\ar[d,swap,"\Tens{X_0}{\Sin0}"]
      \ar[dr,"\Sin0"]&\\
      \Tens{X_0}{\Sfun X_1}\ar[r,"\Sstr"]
      &\Sfun\Tensp{X_0}{X_1}
    \end{tikzcd}
    \quad\quad
    \begin{tikzcd}
      \Tens{X_0}{\Sfun^2X_1}
      \ar[r,"\Sstr"]\ar[d,"\Tens{X_0}{\Sfunadd}"]
      &\Sfun\Tensp{X_0}{\Sfun X_1}\ar[r,"\Sfun\Sstr"]
      &\Sfun^2\Tensp{X_0}{X_1}\ar[d,"\Sfunadd"]\\
      \Tensp{X_0}{\Sfun X_1}\ar[rr,"\Sstr"]
      &&\Sfun\Tensp{X_0}{X_1}
    \end{tikzcd}
  \]
  Let us prove for instance the second one. We have
  \begin{align*}
    \Sfunadd\Compl(\Sfun\Sstr)\Compl\Sstr
    &=\Stuple{\Sproj0\Compl\Sproj0,\Sproj1\Compl\Sproj0+\Sproj0\Compl\Sproj1}
    \Compl\Stuple{\Sstr\Compl\Sproj0,\Sstr\Compl\Sproj1}\Compl\Sstr\\
    &=\Stuple{\Sproj0\Compl\Sstr\Compl\Sproj0,\Sproj1\Compl\Sstr\Sproj0
      +\Sproj0\Sstr\Sproj1}\Compl\Sstr\\
    &=\Stuple{\Tensp{X_0}{\Sproj0}\Compl\Sproj0\Compl\Sstr,
      \Tensp{X_0}{\Sproj1}\Compl\Sproj0\Compl\Sstr
      +\Tensp{X_0}{\Sproj0}\Compl\Sproj1\Compl\Sstr}\\
    &=\Stuple{\Tensp{X_0}{\Sproj0}\Compl\Tensp{X_0}{\Sproj0},
      \Tensp{X_0}{\Sproj1}\Compl\Tensp{X_0}{\Sproj0}
      +\Tensp{X_0}{\Sproj0}\Compl\Tensp{X_0}{\Sproj1}}
      =\Tens{X_0}{\Sfunadd}\,.
  \end{align*}
  The fact that $(\Sfun,\Sin0,\Sfunadd,\Sstr)$ is a commutative monad
  means that, moreover, the following diagram commutes:
  \[
    \begin{tikzcd}
      \Tens{\Sfun X_0}{\Sfun X_1}\ar[r,"\Sstr_{\Sfun X_0,X_1}"]
      \ar[d,swap,"\Sstrs_{X_0,\Sfun X_1}"]
      &[1em] \Sfun\Tensp{\Sfun X_0}{X_1}\ar[r,"\Sfun\Sstrs_{X_0,X_1}"]
      &[1em] \Sfun^2\Tensp{X_0}{X_1}\ar[d,"\Sfunadd"]\\
      \Sfun\Tensp{X_0}{\Sfun X_1}\ar[r,"\Sfun\Sstr_{X_0,X_1}"]
      & \Sfun^2\Tensp{X_0}{X_1}\ar[r,"\Sfunadd"]
      & \Sfun\Tensp{X_0}{X_1}
    \end{tikzcd}
  \]
  which results from a stronger property, namely that the following
  diagram commutes
  \[
    \begin{tikzcd}
      \Tens{\Sfun X_0}{\Sfun X_1}\ar[rr,"\Sstr_{\Sfun X_0,X_1}"]
      \ar[d,swap,"\Sstrs_{X_0,\Sfun X_1}"]
      &[1em]&[1em] \Sfun\Tensp{\Sfun X_0}{X_1}\ar[d,"\Sfun\Sstrs_{X_0,X_1}"]\\
      \Sfun\Tensp{X_0}{\Sfun X_1}\ar[r,"\Sfun\Sstr_{X_0,X_1}"]
      & \Sfun^2\Tensp{X_0}{X_1}\ar[r,"\Sflip_{\Tens{X_0}{X_1}}"]
      & \Sfun^2\Tensp{X_0}{X_1}
    \end{tikzcd}
  \]
  and from Theorem~\ref{th:sfun-monade}. The last commutation is
  proved as follows:
  \begin{align*}
    \Sproj i\Compl\Sproj j\Compl(\Sfun\Sstrs_{X_0,X_1})
    \Compl\Sstr_{\Sfun X_0,X_1}
    &= \Sproj i
      \Compl\Sstrs_{X_0,X_1}
      \Compl\Sproj j
      \Compl\Sstr_{\Sfun X_0,X_1}\\
    &=\Tensp{\Sproj i}{X_1}
      \Compl\Tensp{\Sfun X_0}{\Sproj j}\\
    &=\Tens{\Sproj i}{\Sproj j}\\
    \Sproj i
    \Compl\Sproj j
    \Compl\Sflip
    \Compl(\Sfun\Sstr_{X_0,X_1})
    \Compl\Sstrs_{X_0,\Sfun X_1}
    &=\Sproj j\Compl\Sproj i
      \Compl(\Sfun\Sstr_{X_0,X_1})
      \Compl\Sstrs_{X_0,\Sfun X_1}\\
    &=\Sproj j
      \Compl\Sstr_{X_0,X_1}
      \Compl\Sproj i
      \Compl\Sstrs_{X_0,\Sfun X_1}\\
    &=\Tensp{X_0}{\Sproj j}\Compl\Tensp{\Sproj i}{\Sfun X_1}\\
    &=\Tens{\Sproj i}{\Sproj j}\,.
  \end{align*}
\end{proof}

We set
$\Smont_{X_0,X_1}=\Sfunadd\Compl(\Sfun\Sstrs_{X_0,X_1})\Compl\Sstr_{\Sfun
  X_0,X_1}=\Sfunadd\Compl(\Sfun\Sstr_{X_0,X_1})\Compl\Sstrs_{X_0,\Sfun
  X_1}=\Stuple{\Tens{\Sproj0}{\Sproj0},
  \Tens{\Sproj1}{\Sproj0}+\Tens{\Sproj0}{\Sproj1}}$; it is well known
that in such a commutative monad situation, the associated tuple
$(\Sfun,\Sin0,\Sfunadd,\Smont)$ is a symmetric monoidal monad on the
SMC $\cL$.

\begin{definition}
  When the summability structure of the SMC $\cL$ satisfies \Saxdist{}
  we say that $\cL$ is a \emph{summable SMC}.
\end{definition}

\subsection{Differential structure}\label{sec:diff-struct-res-cat}

We say that a resource category $\cL$ (see
Section~\ref{sec:resource-cat}) is summable if it is summable as an
SMC and satisfies the following additional condition of compatibility
with the cartesian product.

\begin{Axicond}{\Saxprod}
  The functor $\Sfun$ preserves all finite cartesian products.  In
  other words $0\in\cL(\Sfun\Top,\Top)$ and
  $\Tuple{\Sfun{\Proj0},\Sfun{\Proj1}}\in\cL(\Sfun(\With{X_0}{X_1}),
  \With{\Sfun{X_0}}{\Sfun{X_1}})$ are isos.
\end{Axicond}

A \emph{differential structure} on a summable resource category $\cL$
consists of a natural transformation
$\Sdiff_X\in\cL(\Excl{\Sfun X},\Sfun\Excl X)$ which satisfies the
following conditions.

\begin{Axicond}{\Daxlocal}
  \(
    \begin{tikzcd}
      \Excl{\Sfun X}\ar[r,"\Sdiff_X"]\ar[dr,swap,"\Excl{\Sproj0}"]
      &\Sfun\Excl X\ar[d,"\Sproj0"]\\
      &\Excl X
    \end{tikzcd}
  \)
\end{Axicond}

\begin{Axicond}{\Daxlin}
  \(
    \begin{tikzcd}
      \Excl X\ar[d,swap,"\Excl{\Sin 0}"]\ar[rd,"\Sin0"]
      &\\
      \Excl{\Sfun X}\ar[r,"\Sdiff_X"]
      &\Sfun\Excl X
    \end{tikzcd}
    \quad\quad
    \begin{tikzcd}
      \Excl{\Sfun^2X}\ar[r,"\Sdiff_{\Sfun X}"]\ar[d,swap,"\Excl{\Sfunadd}"]
      &\Sfun{\Excl{\Sfun X}}\ar[r,"\Sfun\Sdiff_X"]
      &\Sfun^2\Excl X\ar[d,"\Sfunadd"]\\
      \Excl{\Sfun X}\ar[rr,"\Sdiff_X"]
      &&
      \Sfun\Excl X
    \end{tikzcd}
  \)
\end{Axicond}
This first condition allows to extend the functor $\Excl\_$ to the
Kleisli category $\cL_{\Sfun}$ of the monad $\Sfun$. In this Kleisli
category, a morphism $X\to Y$ can be seen as a pair $(f_0,f_1)$ of two
summable morphisms in $\cL(X,Y)$, and composition is defined by
$g\Comp f=(g_0\Compl f_0,g_1\Compl f_0+g_0\Compl f_1)$, a definition
which is very reminiscent of the multiplication of dual numbers.

\begin{Axicond}{\Daxchain}
  \(
    \begin{tikzcd}
      \Excl{\Sfun X}\ar[r,"\Sdiff_X"]\ar[rd,swap,"\Der{\Sfun X}"]
      &\Sfun\Excl X\ar[d,"\Sfun\Der X"]\\
      &\Sfun X
    \end{tikzcd}
    \quad
    \begin{tikzcd}
      \Excl{\Sfun X}\ar[rr,"\Sdiff_X"]\ar[d,swap,"\Digg{\Sfun X}"] &
      &\Sfun\Excl X\ar[d,"\Sfun\Digg X"]\\
      \Excll{\Sfun X}\ar[r,"\Excl{\Sdiff_X}"] &\Excl{\Sfun\Excl
        X}\ar[r,"\Sdiff_{\Excl X}"] &\Sfun\Excll X
    \end{tikzcd}
  \)
\end{Axicond}
This second condition allows to extend the functor $\Sfun$ to the
Kleisli category $\Kl\cL$. We obtain in that way the functor
$\Sdfun:\Kl\cL\to\Kl\cL$ defined as follows: on objects, we set
$\Sdfun X=\Sfun X$. Next, given $f\in\Kl\cL(X,Y)=\cL(\Excl X,Y)$, the
morphism
$\Sdfun f\in\Kl\cL(\Sfun X,\Sfun Y)=\cL(\Excl{\Sfun X},\Sfun Y)$ is
defined by $\Sdfun f=(\Sfun f)\Compl\Sdiff_X$. The purpose of the two
commutations is precisely to make this operation functorial and this
functoriality is a categorical version of the chain rule of calculus,
exactly as in tangent categories since, as we shall see, this functor
$\Sdfun$ essentially computes the derivative of $f$.

\begin{remark}
  It is very likely that the natural transformation $\Sdiff_X$ can be
  seen as one of the six kinds distributive law between the monad
  $\Sfun$ and the comonad $\Excl\_$ described
  in~\cite{PowerWatanabe02}, Section~8. % If this is actually the case
  % there is no doubt that the two liftings above to Kleisli categories
  % can be adapted from the results presented in that paper.
\end{remark}

\begin{Axicond}{\Daxwith}
\(
  \begin{tikzcd}
    \Excl{\Sfun\Top}\ar[rr,"\Sdiff_\Top"]\ar[d,swap,"\Excl 0"]
    &&\Sfun{\Excl\Top}\ar[d,"\Sfun\Inv{(\Seelyz)}"]\\
    \Excl\Top\ar[r,"\Inv{(\Seelyz)}"]&\Sone\ar[r,"\Sin0"]&\Sfun\Sone
  \end{tikzcd}
  \quad
  \begin{tikzcd}
    \Excl{\Sfun{\Withp{X_0}{X_1}}}\ar[r,"\Sdiff_{\With{X_0}{X_1}}"]
    \ar[d,swap,"\Excl{\Tuple{\Sfun{\Proj0},\Sfun{\Proj1}}}"]
    &\Sfun\Excl{\Withp{X_0}{X_1}}\ar[r,"\Sfun\Inv{(\Seelyt)}"]
    &[1.2em]\Sfun\Tensp{\Excl{X_0}}{\Excl{X_1}}\\
    \Excl{\Withp{\Sfun X_0}{\Sfun X_1}}\ar[r,"\Inv{(\Seelyt)}"]
    &\Tens{\Excl{\Sfun X_0}}{\Excl{\Sfun X_1}}
    \ar[r,"\Tens{\Sdiff_{X_0}}{\Sdiff_{X_1}}"]
    &\Tens{\Sfun{\Excl{X_0}}}{\Sfun{\Excl{X_1}}}\ar[u,swap,"\Smont_{\Excl{X_0},\Excl{X_1}}"]
  \end{tikzcd}
\)  
\end{Axicond}

% The first of these two last diagrams expresses that the derivative of
% a constant function is $0$ and the second one is a version of the
% Leibniz Law of calculus.
\begin{theorem}[Leibniz rule]\label{th:leibniz-cat}
  If \Daxwith{} holds then the following diagrams commute.
  \[
    \begin{tikzcd}
      \Excl{\Sfun X}\ar[r,"\Sdiff_X"]\ar[d,swap,"\Weak{\Sfun X}"]
      &\Sfun\Excl X\ar[d,"\Sfun\Weak X"]\\
      \Sone\ar[r,"\Sin0"] & \Sfun\Sone
    \end{tikzcd}
    \quad\quad
    \begin{tikzcd}
      \Excl{\Sfun X}\ar[rr,"\Sdiff_X"]\ar[d,swap,"\Contr{\Excl{\Sfun X}}"]
      &&\Sfun\Excl X\ar[d,"\Sfun\Contr X"]\\
      \Tens{\Excl{\Sfun X}}{\Excl{\Sfun X}}\ar[r,"\Tens{\Sdiff_X}{\Sdiff_X}"]
      &\Tens{\Sfun\Excl X}{\Sfun\Excl X}\ar[r,"\Smont_{\Excl X,\Excl X}"]
      &\Sfun\Tensp{\Excl X}{\Excl X}
    \end{tikzcd}
  \]
\end{theorem}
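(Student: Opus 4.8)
The plan is to treat the two squares separately, matching each to one half of \Daxwith{}: the weakening square to the $\Top$-part and the contraction square to the $\IWith$-part. In both cases the method is the same in three moves: unfold $\Weak{}$ (resp.\ $\Contr{}$) through its definition via the Seely iso, use naturality of $\Sdiff$ to commute it past $\Excl{}$ of the relevant structural map, and then read off the result from \Daxwith{}.

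For the weakening square, recall $\Weak Y=\Invp{\Seelyz}\Compl\Excl 0$ where $0\in\cL(Y,\Top)$ is the terminal morphism. First I would expand $\Sfun\Weak X\Compl\Sdiff_X=\Sfun\Invp{\Seelyz}\Compl\Sfun\Excl 0\Compl\Sdiff_X$ and apply naturality of $\Sdiff$ at the terminal map $0\in\cL(X,\Top)$ to rewrite this as $\Sfun\Invp{\Seelyz}\Compl\Sdiff_\Top\Compl\Excl{\Sfun 0}$. The factor $\Sfun\Invp{\Seelyz}\Compl\Sdiff_\Top$ is exactly the composite equated by the $\Top$-part of \Daxwith{}, so it equals $\Sin0\Compl\Invp{\Seelyz}\Compl\Excl 0$ with $0\in\cL(\Sfun\Top,\Top)$. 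Substituting and collapsing $\Excl 0\Compl\Excl{\Sfun 0}=\Excl 0$ — because the two terminal maps compose to the terminal map $\Sfun X\to\Top$ by uniqueness — yields $\Sin0\Compl\Invp{\Seelyz}\Compl\Excl 0=\Sin0\Compl\Weak{\Sfun X}$, which is the other edge of the square.

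For the contraction square, recall $\Contr Y=\Invp{\Seelyt}\Compl\Excl{\Tuple{\Id,\Id}}$ with $\Tuple{\Id,\Id}\in\cL(Y,\With YY)$ the diagonal. I would expand $\Sfun\Contr X\Compl\Sdiff_X=\Sfun\Invp{\Seelyt}\Compl\Sfun\Excl{\Tuple{\Id,\Id}}\Compl\Sdiff_X$ and push $\Sdiff$ through by naturality at the diagonal, obtaining $\Sfun\Invp{\Seelyt}\Compl\Sdiff_{\With XX}\Compl\Excl{\Sfun{\Tuple{\Id,\Id}}}$. Applying the $\IWith$-part of \Daxwith{} at $X_0=X_1=X$ replaces $\Sfun\Invp{\Seelyt}\Compl\Sdiff_{\With XX}$ by $\Smont_{\Excl X,\Excl X}\Compl\Tens{\Sdiff_X}{\Sdiff_X}\Compl\Invp{\Seelyt}\Compl\Excl{\Tuple{\Sfun{\Proj0},\Sfun{\Proj1}}}$. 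The remaining ingredient is the identity $\Tuple{\Sfun{\Proj0},\Sfun{\Proj1}}\Compl\Sfun{\Tuple{\Id,\Id}}=\Tuple{\Id,\Id}\in\cL(\Sfun X,\With{\Sfun X}{\Sfun X})$, proved by post-composing with each $\Proj i$ and using functoriality of $\Sfun$; hence $\Excl{\Tuple{\Sfun{\Proj0},\Sfun{\Proj1}}}\Compl\Excl{\Sfun{\Tuple{\Id,\Id}}}=\Excl{\Tuple{\Id,\Id}}$ and $\Invp{\Seelyt}\Compl\Excl{\Tuple{\Id,\Id}}=\Contr{\Sfun X}$, so the whole composite collapses to $\Smont_{\Excl X,\Excl X}\Compl\Tens{\Sdiff_X}{\Sdiff_X}\Compl\Contr{\Sfun X}$, which is the lower path of the square.

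The proof is essentially a diagram chase, so the only real obstacle is bookkeeping: keeping track of which instance of each Seely iso and each naturality square is in play, and orienting \Daxwith{} the right way. The one genuinely non-formal ingredient is the compatibility of $\Sfun$ with the product structure on $\Sfun$-objects, namely $\Tuple{\Sfun{\Proj0},\Sfun{\Proj1}}\Compl\Sfun{\Tuple{\Id,\Id}}=\Tuple{\Id,\Id}$ together with its terminal analogue $0\Compl\Sfun 0=0$; this is where \Saxprod{} enters morally, although here only the universal property of $\With{\Sfun X}{\Sfun X}$ and of $\Top$ is strictly needed. Once these identities are in hand, both squares drop out by substitution.
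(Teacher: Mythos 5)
Your proof is correct and follows exactly the route the paper sketches in its one-line argument: unfold $\Weak{}$ and $\Contr{}$ via the Seely isomorphisms and the terminal/diagonal maps, push $\Sdiff$ across them by naturality, and finish with the two halves of \Daxwith{} together with the universal properties of $\Top$ and $\With{\Sfun X}{\Sfun X}$. Your version merely spells out the bookkeeping the paper leaves implicit (and your reading $\Contr{\Sfun X}$ of the left vertical arrow is the type-correct one).
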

\begin{proof}
  This is an easy consequence of the naturality of $\Sdiff$ and of the
  definition of $\Weak X$ and $\Contr X$ which is based on the
  cartesian products and on the Seely isomorphisms.
\end{proof}

\begin{Axicond}{\Daxschwarz}
  \(
    \begin{tikzcd}
      \Excl{\Sfun^2X}\ar[r,"\Sdiff_{\Sfun X}"]\ar[d,swap,"\Excl\Sflip"]
      &\Sfun\Excl{\Sfun X}\ar[r,"\Sfun\Sdiff_X"]
      &\Sfun^2\Excl X\ar[d,"\Sflip"]\\
      \Excl{\Sfun^2X}\ar[r,"\Sdiff_{\Sfun X}"]
      &\Sfun\Excl{\Sfun X}\ar[r,"\Sfun\Sdiff_X"]
      &\Sfun^2\Excl X
    \end{tikzcd}
    \)
\end{Axicond}
This diagram, involves the canonical flip $\Sflip$ introduced before
the statement of \Saxass{} and expresses a kind of commutativity of
the second derivative.

\begin{definition}\label{def:Sfun-diff}
  A \emph{differentiation} in a summable resource category $\cL$ is a
  natural transformation $\Sdiff_X\in\cL(\Excl{\Sfun X},\Sfun\Excl X)$
  which satisfies \Daxlocal, \Daxlin, \Daxchain, \Daxwith{} and
  \Daxschwarz. A summable resource category given together with a
  differentiation is a \emph{differential summable resource category}.
\end{definition}

\subsection{Derivatives and partial derivatives in the Kleisli category}%
\label{sec:Kleisli-derivatives}
The Kleisli category $\Kl\cL$ of the comonad
$(\Excl{},\Der{},\Digg{})$ is well known to be cartesian. In general
it is not a differential cartesian category in the sense
of~\cite{Alvarez-PicalloLemay20} because it is not required to be
additive\footnote{We postpone the precise axiomatization of this kind
  of partially additive differential category to further work. Of
  course it will be based on the concept of summability
  structure.}. Our running example of coherence spaces is an example
of such a category which is not a differential category.
% En fait c'est plutôt proche d'une catégorie cartésienne tangente.

There is an inclusion functor $\Kllin:\cL\to\Kl\cL$ which maps $X$ to
$X$ and $f\in\cL(X,Y)$ to $f\Compl\Der X\in\Kl\cL(X,Y)$, it is
faithful but not full in general and allows to see any morphism of
$\cL$ as a ``linear morphism'' of $\Kl\cL$.

We have already mentioned the functor $\Sdfun:\Kl\cL\to\Kl\cL$,
remember that $\Sdfun X=\Sfun X$ and
$\Sdfun f=(\Sfun f)\Compl\Sdiff_X$ when $f\in\cL(X,Y)$. Then we have
$\Sdfun\Comp\Kllin=\Kllin\Comp\Sfun$ which allows to extend simply the
monad structure of $\Sfun$ to $\Sdfun$ by setting
$\Sdfunit_X=\Kllin{\Sin0}\in\Kl\cL(X,\Sdfun X)$ and
$\Sdfmult_X=\Kllin{\Sfunadd}\in\Kl\cL(\Sdfun^2 X,\Sdfun X)$.

\begin{theorem}\label{th:Sdfunst-monad}
  The morphisms $\Sdfunit_X\in\Kl\cL(X,\Sdfun X)$ and %
  $\Sdfmult_X\in\Kl\cL(\Sdfun^2X,\Sdfun X)$ are natural and turn the
  functor $\Sdfun$ into a monad on $\Kl\cL$.
\end{theorem}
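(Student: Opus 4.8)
The plan is to transport the monad structure of $\Sfun$ from Theorem~\ref{th:sfun-monade} along the functor $\Kllin\colon\cL\to\Kl\cL$, exploiting that $\Sdfunit_X=\Kllin\Sin0$ and $\Sdfmult_X=\Kllin\Sfunadd$ are the images of the unit and multiplication of $\Sfun$, together with the identity $\Sdfun\Comp\Kllin=\Kllin\Comp\Sfun$. This identity holds on morphisms precisely because of the first diagram of \Daxchain, which gives $\Sfun\Der X\Compl\Sdiff_X=\Der{\Sfun X}$, so that $\Sdfun(\Kllin\ell)=\Kllin(\Sfun\ell)$ for every $\ell$ of $\cL$. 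Granting it, the three monad equations for $(\Sdfun,\Sdfunit,\Sdfmult)$ reduce to the corresponding equations for $(\Sfun,\Sin0,\Sfunadd)$ by functoriality of $\Kllin$. For example the right unit law is
\[
  \Sdfmult_X\Comp\Sdfunit_{\Sdfun X}
  =\Kllin\Sfunadd\Comp\Kllin\Sin0
  =\Kllin(\Sfunadd\Compl\Sin0)
  =\Kllin\Id=\Id\,,
\]
while the left unit law uses $\Sdfun\Sdfunit_X=\Kllin(\Sfun\Sin0)$ and associativity uses $\Sdfun\Sdfmult_X=\Kllin(\Sfun\Sfunadd)$; in each case one is left with a monad equation of $\Sfun$ to which $\Kllin$ is applied.

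The substantive part is the naturality of $\Sdfunit$ and $\Sdfmult$ with respect to an \emph{arbitrary} Kleisli morphism $f\in\Kl\cL(X,Y)=\cL(\Excl X,Y)$, which need not come from $\cL$ and so is not covered by the functoriality of $\Kllin$. Here I would unfold the Kleisli composition $g\Comp h=g\Compl\Excl h\Compl\Digg{}$ (with $\Digg{}$ taken at the source) and simplify using the distributive-law axioms. For $\Sdfunit$, naturality of $\Der{}$ and the comonad law $\Der{\Excl X}\Compl\Digg X=\Id$ give $\Sdfunit_Y\Comp f=\Sin0\Compl f$, whereas
\[
  \Sdfun f\Comp\Sdfunit_X
  =(\Sfun f)\Compl\Sdiff_X\Compl\Excl\Sin0\Compl\Excl\Der X\Compl\Digg X
  =(\Sfun f)\Compl\Sin0
  =\Sin0\Compl f\,,
\]
where the first diagram of \Daxlin collapses $\Sdiff_X\Compl\Excl\Sin0$ to $\Sin0$, the comonad law collapses $\Excl\Der X\Compl\Digg X$ to the identity, and the last equality is naturality of $\Sin0$ in $\cL$. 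The square for $\Sdfmult$ is handled in exactly the same way: after the same collapse one meets $\Sdiff_X\Compl\Excl\Sfunadd$, which the second diagram of \Daxlin rewrites as $\Sfunadd\Compl\Sfun\Sdiff_X\Compl\Sdiff_{\Sfun X}$; then naturality of $\Sfunadd$ and the definition of $\Sdfun$ exhibit both $\Sdfun f\Comp\Sdfmult_X$ and $\Sdfmult_Y\Comp\Sdfun^2 f$ as $\Sfunadd_Y\Compl\Sfun(\Sdfun f)\Compl\Sdiff_{\Sfun X}$.

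I do not expect a genuine obstacle: conceptually $\Sdiff$ is exactly a mixed distributive law of the monad $\Sfun$ over the comonad $\Excl{}$, with \Daxlin encoding compatibility with the monad data $(\Sin0,\Sfunadd)$ and \Daxchain compatibility with the comonad data $(\Der{},\Digg{})$, and the statement is the standard fact that such a law lifts $\Sfun$ to a monad on the Kleisli category of $\Excl{}$. The only real work is the bookkeeping above — tracking the instances of $\Sin0$, $\Sfunadd$, $\Der{}$, $\Digg{}$ and $\Sdiff$ through the unfolded composites and identifying which diagram of \Daxlin applies at each step.
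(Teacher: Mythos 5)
Your proposal is correct and follows essentially the same route as the paper: the monad equations for $(\Sdfun,\Sdfunit,\Sdfmult)$ are transported from Theorem~\ref{th:sfun-monade} via the functoriality of $\Kllin$ and the identity $\Sdfun\Comp\Kllin=\Kllin\Comp\Sfun$, and naturality with respect to a general Kleisli morphism is proved by unfolding the Kleisli composition, collapsing $\Excl{\Der X}\Compl\Digg X$ by the comonad laws, and applying the two diagrams of \Daxlin{} together with the naturality of $\Sin0$ and $\Sfunadd$ — exactly the computations in the paper's proof. Your explicit justification of $\Sdfun\Comp\Kllin=\Kllin\Comp\Sfun$ via the first diagram of \Daxchain{} is a point the paper states without proof, so that detail is a welcome addition rather than a deviation.
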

\begin{proof}
  The only non obvious property is naturality, the monadic diagram
  commutations resulting from those of $(\Sfun,\Sin,\Ssum)$ on $\cL$
  and of the functoriality of $\Kllin$. The proof can certainly be
  adapted from~\cite{PowerWatanabe02}, we provide it for convenience.
  Let $f\in\Kl\cL(X,Y)$, that is $f\in\cL(\Excl X,Y)$. We must first
  prove that $\Sdfun f\Comp\Sdfunit_X=\Sdfunit_Y\Comp f$. We have
  \begin{align*}
    \Sdfun f\Comp\Sdfunit_X
    &=(\Sfun f)\Compl\Sdiff_X\Compl\Excl{\Sdfunit_X}\Compl\Digg X\\
    &=(\Sfun f)
      \Compl\Sdiff_X
      \Compl\Excl{\Sin0}
      \Compl\Excl{\Der X}
      \Compl\Digg X\\
    &=(\Sfun f)
      \Compl\Sdiff_X
      \Compl\Excl{\Sin0}\\
    &=(\Sfun f)\Compl\Sin0\text{\quad by \Daxlin}\\
    &=\Sin0\Compl f\text{\quad by naturality}\\
    &=\Sdfunit_Y\Comp f\,.
  \end{align*}
  Similarly
  \begin{align*}
    \Sdfun f\Comp\Sdfmult_X
    &=(\Sfun f)
      \Compl\Sdiff_X
      \Compl\Excl{\Sdfmult_X}
      \Compl\Digg X\\
    &=(\Sfun f)
      \Compl\Sdiff_X
      \Compl\Excl{\Sfunadd_X}
      \Compl\Excl{\Der X}
      \Compl\Digg X\\
    &=(\Sfun f)
      \Compl\Sfunadd_{\Excl X}
      \Compl(\Sfun\Sdiff_X)
      \Compl\Sdiff_{\Sfun X}\text{\quad by \Daxlin}\\
    &=\Sfunadd_Y
      \Compl(\Sfun\Sfun f)
      \Compl(\Sfun\Sdiff_X)
      \Compl\Sdiff_{\Sfun X}\text{\quad by naturality}\\
    &=\Sdfmult_Y\Comp\Sdfun^2f
  \end{align*}
\end{proof}

Since $\Sfun$ preserves cartesian products, we can equip easily this
monad %
$(\Sdfun,\Sdfunit,\Sdfmult)$ on $\Kl\cL$ with a commutative strength %
$\Sdfstr_{X_0,X_1}\in\Kl\cL(\With{X_0}{\Sdfun{X_1}},\Sdfun{\Withp{X_0}{X_1}})$
which is the following composition in $\cL$
\[
  \begin{tikzcd}
    \Excl{\Withp{X_0}{\Sfun X_1}}\ar[r,"\Der{}"]
    &\With{X_0}{\Sfun X_1}\ar[r,"\With{\Sin0}{\Sfun X_1}"]
    &[1em]\With{\Sfun X_0}{\Sfun X_1}\ar[r,"\eta"]
    &[-1em]\Sfun \Withp{X_0}{X_1}
  \end{tikzcd}
\]
where $\eta=\Inv{\Tuple{\Sfun{\Proj0},\Sfun{\Proj1}}}$ is the
canonical iso of \Saxprod. It is possible to prove the following
commutation in $\cL$, relating the strength of $\Sfun$ (wrt.~$\ITens$)
with the strength of $\Sdfun$ (wrt.~$\IWith$) through the Seely
isomorphisms
\[
  \begin{tikzcd}
    \Excl{\Withp{X_0}{\Sfun X_1}}\ar[r,"\Der{}"]
    &\With{X_0}{\Sfun X_1}\ar[r,"\With{\Sin0}{\Sfun X_1}"]
    &\With{\Sfun X_0}{\Sfun X_1}\ar[r,"\eta"]
    &\Sfun{\Withp{X_0}{X_1}}\\
    \Tens{\Excl{X_0}}{\Excl{\Sfun X_1}}\ar[u,"\Seelyt"]
    \ar[r,"\Tens{\Excl{X_0}}{\Sdiff_{X_1}}"]
    &\Tens{\Excl{X_0}}{\Sfun \Excl{X_1}}\ar[r,"\Sstr_{\Excl{X_0},\Excl{X_1}}"]
    &\Sfun{\Tensp{\Excl{X_0}}{\Excl{X_1}}}\ar[r,"\Sfun\Seelyt"]
    &\Sfun{\Excl{\Withp{X_0}{X_1}}}\ar[u,swap,"\Sfun\Der{}"]
  \end{tikzcd}
\]

Given $f\in\Kl\cL(\With{X_0}{X_1},Y)$, we can define the partial
derivatives $\Sdfun_0f\in\Kl\cL(\With{\Sdfun X_0}{X_1},\Sdfun Y)$ and
$\Sdfun_1f\in\Kl\cL(\With{X_0}{\Sdfun X_1},\Sdfun Y)$ as
$\Sdfun f\Comp\Sdfstrs$ and $\Sdfun f\Comp\Sdfstr$ where we use
$\Sdfstrs$ for the strength
$\With{\Sdfun X_0}{X_1}\to\Sdfun\Withp{X_0}{X_1}$ defined from
$\Sdfstr$ using the symmetry of $\IWith$.

\subsection{Deciphering the diagrams}
After this rather terse list list of categorical axioms, it is fair to
provide the reader with intuitions about their intuitive meaning; this
is the purpose of this section.

One should think of the objects of $\cL$ as partial commutative
monoids (with additional structures depending on the considered
category), and $\Sfun X$ as the object of pairs $(x,u)$ of elements
$x,u\in X$ such that $x+u\in X$ is defined. The morphisms in $\cL$ are
linear in the sense that they preserve $0$ and this partially defined
sums whereas the morphisms of $\Kl\cL$ should be thought of as
functions which are not linear but admit a ``derivative''. More
precisely $f\in\Kl\cL(X,Y)$ can be seen as a function $X\to Y$ and,
given $(x,u)\in\Sfun X$ we have
\begin{align*}
  \Sdfun f(x,u)=(f(x),\Derd{f(x)}xu)\in\Sfun Y\,,
\end{align*}
where $\Derd{f(x)}xu$ is just a notation for the second component of
the pair $\Sdfun f(x,u)$ which, by construction, is such that the sum
$f(x)+\Derd{f(x)}xu$ is a well defined element of $Y$. Now we assume
that this derivative $\Derd{f(x)}xu$ obeys the standard rules of
differential calculus and we shall see that the above axioms about
$\Sdiff$ correspond to these rules.

\begin{remark}
  The equations we are using in this section as intuitive
  justifications for the diagrams of
  Section~\ref{sec:diff-struct-res-cat} refer to the standard laws and
  properties of the differential calculus that we assume the reader to
  be acquainted with. They do hold exactly as written here in the
  model $\PCOH$ where derivatives are computed exactly as in Calculus
  as we will show in a forthcoming paper.
\end{remark}

\begin{remark}
  We use the well established notation $\Derd{f(x)}xu$ which must be
  understood properly: in particular the expression $\Derd{f(x)}xu$ is
  a function of $x$ (the point where the derivative is computed) and
  of $u$ (the linear parameter of the derivative). When required we
  use $\Derdev{f(x)}x{x_0}u$ for the evaluation of this derivative at
  point $x_0\in X$.
\end{remark}

% With this intuition in mind we can interpret the axioms above, keeping
% in mind that $\Sdfun f$ is defined using $\Sdiff$ by
% $\Sdfun f=(\Sfun f)\Compl\Sdiff_X$ when $f\in\cL(\Excl X,Y)$.

\begin{itemize}
\item \Daxlocal{} %
  means that the first component of $\Sdfun f(x,u)$ is $f(x)$,
  justifying our intuitive notation
  \begin{align*}
    \Sdfun f(x,u)=(f(x),\Derd{f(x)}xu)\in\Sfun Y\,.
  \end{align*}
\item The first diagram of \Daxchain{} means that if $f\in\Kl\cL(X,Y)$
  is linear%
  \footnote{This notion of linearity implies commutation with the
    partial algebraic structure introduced by $\Sfun$ as shown by
    Lemma~\ref{lemma:compl-summable}.} %
  in the sense that there is $g\in\cL(X,Y)$ such that
  $f=g\Compl\Der X$, then $\Derd{f(x)}xu=f(u)$. Notice that it
  prevents differentiation from being trivial by setting
  $\Derd{f(x)}xu=0$ for all $f$ and all $x,u$. Consider now
  $f\in\Kl\cL(X,Y)$ and $g\in\Kl\cL(\Excl Y,Z)$; the second diagram
  means that $\Sdfun(g\Comp f)=\Sdfun g\Comp\Sdfun f$, which amounts
  to
  \begin{align*}
    \Derd{g(f(x))}xu=\Derdev{g(y)}{y}{f(x)}{(\Derd{f(x)}{x}{u})}
  \end{align*}
  which is exactly the chain rule.
\item The ``second derivative''
  $\Sdfun^2f\in\Kl\cL(\Sfun^2X,\Sfun^2Y)$ of $f\in\Kl\cL(X,Y)$ is
  $(\Sfun^2f)\Compl(\Sfun\Sdiff_X)\Compl\Sdiff_{\Sfun X}$. Remember
  that $\Sdfun f(x,u)=(f(x),\Derd{f(x)}xu)$, therefore applying the
  standard rules of differential calculus we have
  \begin{align*}
    \Sdfun^2f((x,u),(x',u'))
    &=(\Sdfun f(x,u),\Derd{\Sdfun f(x,u)}{(x,u)}{(x',u')})\\
    &=((f(x),\Derd{f(x)}{x}{u}),\Derp{(f(x),
      \Derd{f(x)}{x}{u})}{x}{x'}+\Derp{(f(x),
      \Derd{f(x)}{x}{u})}{u}{u'})\\
    &=((f(x),\Derd{f(x)}{x}{u}),(\Derd{f(x)}{x}{x'},
      \Derdn2{f(x)}{x}{(u,x')}+\Derd{f(x)}{x}{u'}))
  \end{align*}
  where we have used the fact that $f(x)$ does not depend on $u$ and
  that $\Derd{f(x)}{x}{u}$ is linear in $u$). We have used \Daxlin{}
  to prove Theorem~\ref{th:Sdfunst-monad} whose main content is the
  naturality of $\Sdfunit$ and $\Sdfmult$. This second naturality
  means that %
  $\Sdfun f\Comp\Sdfmult_X=\Sdfmult_Y\Comp\Sdfun^2f$, that is, by the
  computation above %
  $\Derd{f(x)}{x}{(u+x')}=\Derd{f(x)}{x}{u}+\Derd{f(x)}{x}{x'}$ %
  since, intuitively, %
  $\Sdfmult_X((x,u),(x',u'))=(x,u+x')$. %
  Similarly the naturality of $\Sdfunit$ means that %
  $\Derd{f(x)}{x}{0}=0$. So the condition \Daxlin{} means that the
  derivative is a function which is linear with respect to its second
  parameter.
\item We have assumed that $\cL$ is cartesian and hence $\Kl\cL$ is
  also cartesian. Intuitively $\With{X_0}{X_1}$ is the space of pairs
  $(x_0,x_1)$ with $x_i\in X_i$, and our assumption \Saxprod{} means
  that $\Sfun\Withp{X_0}{X_1}$ is the space of pairs
  $((x_0,x_1),(u_0,u_1))$ such that $(x_i,u_i)\in\Sfun X_i$, and the
  sum of such a pair is $(x_0+u_0,x_1+u_1)\in \With{X_0}{X_1}$. Then,
  given $f\in\Kl\cL(\With{X_0}{X_1},Y)$ the second diagram of
  \Daxwith{} means that
  \begin{align*}
    \Derd{f(x_0,x_1)}{(x_0,x_1)}{(u_0,u_1)}
    =\Derp{f(x_0,x_1)}{x_0}{u_0}+\Derp{f(x_0,x_1)}{x_1}{u_1}
  \end{align*}
  which can be seen by the following computation of
  $\Sproj1\Compl\Sdfun f$ using that diagram
  \begin{align*}
    \Sproj 1\Compl\Sdfun f
    &=\Sproj 1\Compl(\Sfun f)\Compl\Sdiff_{\With{X_0}{X_1}}\\
    &=f\Compl\Seelyt\Compl\Sproj1
      \Compl\Smont_{\Excl{X_0},\Excl{X_1}}
      \Compl\Tensp{\Sdiff_{X_0}}{\Sdiff_{X_1}}
      \Compl\Inv{(\Seelyt)}\Compl\Excl{\Tuple{\Sfun\Proj0,\Sfun\Proj1}}\\
    &=f\Compl\Seelyt\Compl(\Tens{\Sproj1}{\Sproj0}+\Tens{\Sproj0}{\Sproj1})
      \Compl\Tensp{\Sdiff_{X_0}}{\Sdiff_{X_1}}
      \Compl\Inv{(\Seelyt)}\Compl\Excl{\Tuple{\Sfun\Proj0,\Sfun\Proj1}}\\
    &=f\Compl\Seelyt\Compl(\Tens{\Sproj1\Compl\Sdiff_{X_0}}{\Excl{\Sproj0}})
      \Compl\Inv{(\Seelyt)}\Compl\Excl{\Tuple{\Sfun\Proj0,\Sfun\Proj1}}
      +f\Compl\Seelyt\Compl(\Tens{\Excl{\Sproj0}}{\Sproj1\Compl\Sdiff_{X_0}})
      \Compl\Inv{(\Seelyt)}\Compl\Excl{\Tuple{\Sfun\Proj0,\Sfun\Proj1}}\\
    &=\Sproj1\Compl(\Sdfun_0f)
      \Compl\Excl{\Tuple{\Proj0\Compl\Sproj0,\Sfun\Proj1}}
      +\Sproj1\Compl(\Sdfun_1f)
      \Compl\Excl{\Tuple{\Sfun\Proj0,\Proj1\Compl\Sproj0}}
  \end{align*}
  the two components of these sums corresponding to the two partial
  derivatives, see Section~\ref{sec:Kleisli-derivatives}.

  Then Theorem~\ref{th:leibniz-cat} means that
  $\Derd{f(x,x)}xu
  =\Diffpev{f(x_0,x_1)}{x_0}{x,x}u+\Diffpev{f(x_0,x_1)}{x_1}{x,x}u$
  which is the essence of the Leibniz rule of Calculus.
\item The object $\Sfun^2 X$ consists of pairs $((x,u),(x',u'))$ such
  that $x$, $u$, $x'$ and $u'$ are globally summable. Then
  $\Sflip\in\cL(\Sfun^2X,\Sfun^2X)$ maps $((x,u),(x',u'))$ to
  $((x,x'),(u,u'))$.  Therefore, using the same computation of
  $\Sdfun^2f((x,u),(x',u'))$ as in the case of \Daxlin{}, we see that
  \Daxschwarz{} expresses that
  $\Derdn2{f(x)}{x}{(u,x')}=\Derdn2{f(x)}{x}{(x',u)}$ (upon taking
  $u'=0$). So this diagram means that the second derivative
  (aka.~Hessian) is a symmetric bilinear function, a property of
  sufficiently regular differentiable functions often refereed to as
  Schwarz Theorem.
\end{itemize}

\subsection{A differentiation in coherence spaces} %
\label{sec:coh-diff}
Now we exhibit such a differentiation in $\COH$.
% Again, we consider this possibility as one of
% the main contributions of this paper.
We define $\Excl E$ as follows: $\Web{\Excl E}$ is the set
of finite multisets\footnote{There is also a definition using finite
  sets instead of finite multisets, and this is the one considered by
  Girard in~\cite{Girard87}, but it does not seem to be compatible with
  differentiation, see Remark~\ref{rk:diff-coh-multi}.} $m$ of
elements of $\Web E$ such that $\Supp m\in\Cl E$ (such an $m$ is
called a finite multiclique). Given $m_0,m_1\in\Web{\Excl E}$, we have
$\Coh{\Excl E}{m_0}{m_1}$ if $m_0+ m_1\in\Web{\Excl E}$. This
operation is a functor $\COH\to\COH$: given $s\in\COH(E,F)$ one sets
\begin{align*}
  \Excl s=
  \{(\Mset{\List a1n},\Mset{\List b1n})\St
  n\in\Nat,\ (a_i,b_i)\in s
  \text{ for }i=1,\dots,n\text{ and }\Mset{\List a1n}\in\Web{\Excl E}\}
\end{align*}
which actually belongs to $\Cl{\Limpl{\Excl E}{\Excl F}}$ because
$s\in\Cl{\Limpl EF}$.  The comonad structure of this functor and the
associated commutative comonoid structure are given by
\begin{itemize}
\item $\Der E=\{(\Mset a,a)\St a\in\Web E\}$
\item
  $\Digg E=\{(m,\Mset{m_1,\dots,m_n})\in\Web{\Limpl{\Excl E}{\Excll
      E}}\St m=m_1+\cdots+ m_n\}$
\item $\Weak E=\{(\Emptymset,\Sonelem)\}$
\item and
  $\Contr E=\{(m,(m_1,m_2))\in\Web{\Limpl{\Excl E}{\Tensp{\Excl
        E}{\Excl E}}}\St m=m_1+ m_2\}$.
\end{itemize}
Composition in $\Kl\COH$ can be described directly as follows: let %
$s\in\Cl{\Limpl{\Excl E}{F}}$ and %
$t\in\Cl{\Limpl{\Excl F}{G}}$, then
$t\Comp s\in\cL{\Limplp{\Excl E}{G}}$ is %
$\{(m_1+\cdots+m_n,c)\St\exists \List b1n\in\Web F\ (\Mset{\List
  b1n},c)\in t\text{, }(m_i,b_i)\in s\text{ for }i=1,\dots n \text{
  and }m_1+\cdots+m_n\in\Web{\Excl E}\}$. %
A morphism %
$s\in\Kl\COH(E,F)$ induces a function
$\Fun s:\Cl E\to\Cl F$ by
$\Fun s(x)=\{b\St\exists m\in\Mfin x\ (m,b)\in s\}$. The functions
$f:\Cl E\to\Cl F$ definable in that way are exactly the \emph{stable
  functions}: $f$ is stable if for any $x\in\Cl E$ and any $b\in f(x)$
there is exactly one minimal subset $x_0$ of $x$ such that
$b\in f(x_0)$, and moreover this $x_0$ is finite. When moreover this
$x_0$ is always a singleton $f$ is said \emph{linear} and such linear
functions are in bijection with $\COH(E,F)$ (given $t\in\COH(E,F)$,
the associated linear function $\Cl E\to\Cl F$ is the map
$x\mapsto\Matappa tx$).

Notice that for a given stable function $f:\Cl E\to\Cl F$ there can be
infinitely many $s\in\Kl\COH(E,F)$ such that $f=\Fun s$ since
the definition of $\Fun s$ does not take into account the
multiplicities in the multisets $m$ such that $(m,b)\in s$. For
instance, if $a\in\Web E$ and $b\in\Web F$ then $\{(\Mset{a},b)\}$ and
$\{(\Mset{a,a},b)\}$ define exactly the same stable (actually linear)
function.

Up to trivial iso we have
$\Web{\Excl{\Sfun E}}=\{(m_0,m_1)\in\Web{\Excl E}\St \Supp{m_0}\cap
\Supp{m_1}=\emptyset\text{ and }m_0+ m_1\in\Web{\Excl E}\}$ and
$\Coh{\Excl{\Sfun E}}{(m_{00},m_{01})}{(m_{10},m_{11})}$ if
$m_{00}+ m_{01}+ m_{10}+ m_{11}\in\Web{\Excl X}$ and
$\Supp{m_{00}+ m_{10}}\cap\Supp{m_{01}+ m_{11}}=\emptyset$. With this
identification we define
$\Sdiff_E\subseteq\Web{\Limpl{\Excl{\Sfun E}}{\Sfun{\Excl E}}}$ as
follows:
\begin{multline}\label{eq:def-sdiff-coh}
  \Sdiff_E=\{((m_0,\Emptymset),(0,m_0))\St m_0\in\Web{\Excl E}\}\\
  \cup\{((m_0,\Mset a),(1,m_0+\Mset a))
  \St m_0+\Mset a\in\Web{\Excl E}\text{ and }a\notin \Supp{m_0}\}\,.
\end{multline}
We think useful to check directly that %
$\Sdiff_E\in\COH(\Excl{\Sfun E},\Sfun{\Excl E})$ although this
checking is not necessary since we shall see in
Section~\ref{sec:coalg-to-diff} that this property results from a much
simpler one.
Let $((m_{j0},m_{j1}),(i_j,m_j))\in\Sdiff_E$ for $j=0,1$ and assume
that
%
% \inlineeq[eq:coh-sdiff-hyp]{\Coh{\Excl{\Sfun
%       E}}{(m_{00},m_{01})}{(m_{10},m_{11})}}.
%
%
\begin{align}\label{eq:coh-sdiff-hyp}
  \Coh{\Excl{\Sfun E}}{(m_{00},m_{01})}{(m_{10},m_{11})}\,.
\end{align}
By symmetry, there are 3 cases to consider.
\begin{itemize}
\item If $i_0=i_1=0$ then we have $m_{j1}=\Emptymset$ and $m_{j0}=m_j$
  for $j=0,1$. Then we have $\Coh{\Sfun{\Excl E}}{(0,m_0)}{(0,m_1)}$ by
  our assumption~\eqref{eq:coh-sdiff-hyp}, and if $(0,m_0)=(0,m_1)$ then
  $(m_{00},m_{01})=(m_{10},m_{11})$.
\item Assume now that $i_0=i_1=1$. We have $m_{j1}=\Mset{a_j}$ for
  $a_j\in\Web E$, with $a_j\notin \Supp{m_{j0}}$ and
  $m_{j}=m_{j0}+\Mset{a_j}$. Our assumption~\eqref{eq:coh-sdiff-hyp}
  means that $m_{00}+ m_{10}+\Mset{a_0,a_1}\in\Web{\Excl E}$ and
  $\Supp{m_{00}+ m_{10}}\cap\{a_0,a_1\}=\emptyset$. Therefore
  $m_0+ m_1\in\Web{\Excl E}$ and hence
  $\Coh{\Sfun{\Excl E}}{(1,m_0)}{(1,m_1)}$. Assume moreover that
  $m_0=m_1$, that is $m_{00}+\Mset{a_0}=m_{10}+\Mset{a_1}$. This
  implies $m_{00}=m_{10}$ and $a_0=a_1$ since we know that
  $a_1\notin \Supp{m_{00}}$ and $a_0\notin \Supp{m_{10}}$.
\item Last assume that $i_0=1$ and $i_1=0$. So we have
  $m_{01}=\Mset{a}$ with $a\notin \Supp{m_{00}}$ and
  $m_0=m_{00}+\Mset{a}$; $m_{11}=\Emptymset$ and
  $m_1=m_{10}$. By~\eqref{eq:coh-sdiff-hyp} we know that
  $\Supp{m_0+ m_1}\in\Cl{\Excl E}$. Coming back to the definition of
  the coherence in $\Sfun F$ (for a coherence space $F$), we must also
  prove that $m_0\not=m_1$: this results from~\eqref{eq:coh-sdiff-hyp}
  which entails that $a\notin \Supp{m_1}=m_{10}$ whereas we know that
  $a\in \Supp{m_0}$.
\end{itemize}
We postpone the proofs of the other commutations as they will be
reduced in Section~\ref{sec:coalg-to-diff} to much simpler properties because $\COH$ id .
Given $x\in\Cl E$, we can define a coherence space $E_x$ (the local
sub-coherence space at $x$) as follows:
$\Web{E_x}=\{a\in\Web E\setminus x\St x\cup\{a\}\in\Cl X\}$ and
$\Coh{E_x}{a_0}{a_1}$ if $\Coh E{a_0}{a_1}$. Then, given
$s\in\Kl\COH(E,F)$, we can define the \emph{differential} of $s$ at
$x$ as
\begin{align*}
  \Diffrac{s(x)}{x}
  =\{(a,b)\in\Web{E_x}\times\Web F\St\exists m\in\Web{\Excl E}\
  (m+\Mset a,b)\in s\text{ and } \Supp m\subseteq x\}
  \subseteq\Web{\Limpl{E_x}{Y}}\,.
\end{align*}
% that is, for each $u\in\Cl{E_x}$,
% $\Matappa{\Diffrac{s(x)}{x}}u=\cup_{a\in u}(\Fun s(x\cup\Eset
% a)\setminus\Fun s(x))$ as easily checked.

  % \begin{lemma}
  %   $\Diffst_x f$ is a linear map $\Cl{E_x}\to\Cl{F_{f(x)}}$. Its
  %   trace is given by
  %   \begin{align*}
  %     \Ltrace(\Diffst_xf)=\{(a,b)\in\Web{E_x}\times\Web F\St
  %     \exists x_0\in\Web{\Excl E}\ x_0\subseteq x\text{ and }(x_0\cup\{a\},b)\in\Trace f\}
  %   \end{align*}
  % \end{lemma}
  % \begin{proof}
  %   The fact that $\Diffst_xf(u)\in\Cl{F_{f(x)}}$ results from
  %   $\Diffst f(u)\subseteq f(x\cup u)\setminus f(x)$ (notice that if
  %   $u\in\Cl{E_x}$ then $x\cup u\in\Cl E$). The fact that $\Diffst_xf$
  %   commutes with arbitrary unions results from its definition.
  % \end{proof}

\begin{theorem}\label{th:coh-diff-clique}
  Let $s:\COH(E,F)$. Then $\Sdfun s\in\Kl\COH(\Sfun E,\Sfun F)$ satisfies
  \begin{align*}
    \forall (x,u)\in\Cl{\Sfun E}\quad
    \Fun{\Sdfun s}(x,u)=(\Fun s(x),\Matappa{\Diffrac{s(x)}x}u)
  \end{align*}
\end{theorem}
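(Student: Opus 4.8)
The plan is to compute the relation $\Sdfun s$ explicitly and then read off the stable function it induces. Recall that $s\in\Kl\COH(E,F)=\COH(\Excl E,F)$, that $\Sdfun s=(\Sfun s)\Compl\Sdiff_E$ with $\Compl$ the relational (linear) composition of $\COH$, that $\Sfun s=\{((i,m),(i,c))\St i\in\{0,1\},\ (m,c)\in s\}$ by Example~\ref{ex:Sfun-coh-def}, and that $\Sdiff_E$ is given by~\eqref{eq:def-sdiff-coh}. First I would substitute these descriptions into the composition: since $((j,m),(i,c))\in\Sfun s$ forces $j=i$, the composite inherits the two-clause shape of~\eqref{eq:def-sdiff-coh}, giving
\begin{align*}
  \Sdfun s={}&\{((m,\Emptymset),(0,c))\St (m,c)\in s\}\\
  &{}\cup\{((m_0,\Mset a),(1,c))\St a\notin\Supp{m_0},\ m_0+\Mset a\in\Web{\Excl E},\ (m_0+\Mset a,c)\in s\}\,.
\end{align*}
This is the heart of the argument, and it is pure unfolding of definitions.

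Next I would evaluate $\Fun{\Sdfun s}$ at a point $(x,u)\in\Cl{\Sfun E}$, which I identify with the clique $\{0\}\times x\cup\{1\}\times u$ of $\Sfun E$ (so $x\cap u=\emptyset$ and $x\cup u\in\Cl E$). By the definition of the function induced by a Kleisli morphism, $\Fun{\Sdfun s}(x,u)=\{(i,c)\St\exists M\in\Mfin{\{0\}\times x\cup\{1\}\times u},\ (M,(i,c))\in\Sdfun s\}$; writing $M=(m_0,m_1)$ under the identification of $\Web{\Excl{\Sfun E}}$ with disjoint-support pairs recalled just before~\eqref{eq:def-sdiff-coh}, the constraint $M\in\Mfin{\cdots}$ becomes $\Supp{m_0}\subseteq x$ and $\Supp{m_1}\subseteq u$. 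The $i=0$ clause then contributes exactly the $c$ with $(m,c)\in s$ and $\Supp m\subseteq x$, that is, $\Fun s(x)$; the $i=1$ clause contributes the $c$ for which there exist $m_0$ and $a$ with $\Supp{m_0}\subseteq x$, $a\in u$ and $(m_0+\Mset a,c)\in s$, which is precisely $\Matappa{\Diffrac{s(x)}x}u$ by the definition of the differential of $s$ at $x$.

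The only delicate point—and the main thing to verify—is that the side conditions in~\eqref{eq:def-sdiff-coh} and in the definition of $\Diffrac{s(x)}x$ are automatically satisfied, so that the two clauses really coincide with $\Fun s(x)$ and $\Matappa{\Diffrac{s(x)}x}u$. Since $\Supp{m_0}\subseteq x$ and $a\in u$ with $x\cap u=\emptyset$, the condition $a\notin\Supp{m_0}$ is free; and since $\Supp{m_0}\cup\{a\}\subseteq x\cup u\in\Cl E$, the condition $m_0+\Mset a\in\Web{\Excl E}$ holds automatically. The same inclusion shows that each $a\in u$ satisfies $x\cup\{a\}\in\Cl E$, so $u\in\Cl{E_x}$ and $\Matappa{\Diffrac{s(x)}x}u$ is well defined. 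Assembling the two clauses gives $\Fun{\Sdfun s}(x,u)=\{0\}\times\Fun s(x)\cup\{1\}\times\Matappa{\Diffrac{s(x)}x}u$, which under the identification of $\Cl{\Sfun F}$ with summable pairs (Example~\ref{ex:Sfun-coh-def}) is the pair $(\Fun s(x),\Matappa{\Diffrac{s(x)}x}u)$, as claimed.
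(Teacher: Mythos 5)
Your proof is correct and follows essentially the same route as the paper's: both arguments are a direct unfolding of $\Sdfun s=(\Sfun s)\Compl\Sdiff_E$ via Equation~\Eqref{eq:def-sdiff-coh}, using the disjointness $x\cap u=\emptyset$ and $x\cup u\in\Cl E$ to see that the side conditions $a\notin\Supp{m_0}$ and $m_0+\Mset a\in\Web{\Excl E}$ come for free. The only (cosmetic) difference is organizational: you compute the composite relation once and read off both components, whereas the paper characterizes membership $(i,b)\in\Fun{\Sdfun s}(x,u)$ and verifies the two inclusions separately.
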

\begin{remark}\label{rk:diff-coh-multi}
  The definition of $\Sdfun s$ depends on $s$ and not only on
  $\Fun s$: for instance if $s=\Eset{(\Mset a,b)}$ then
  $\Sdfun s=\{((\Mset a,\Emptymset),(0,b)),((\Emptymset,\Mset
  a),(1,b))\}$ and if $s'=\Eset{(\Mset{a,a},b)}$ then
  $\Sdfun s'=\{((\Mset{a,a},\Emptymset),(0,b))\}$; in that case the
  derivative vanishes whereas $\Fun s=\Fun{s'}$ are the same function. 
\end{remark}

\begin{proof}
  Let $(x,u)\in\Cl{\Sfun E}$ and $(i,b)\in\Web{\Sfun F}$ with
  $i\in\{0,1\}$ and $b\in\Web F$. We have
  $(i,b)\in\Fun{\Sdfun s}(x,u)$ iff there is
  $(m_0,m_1)\in\Web{\Excl{\Sfun E}}$ such that $\Supp{m_0}\subseteq x$,
  $\Supp{m_1}\subseteq u$ and
  $((m_0,m_1),(i,b))\in\Sdfun s=\Sdiff_E\Compl\Sfun s$.
  This latter condition holds iff either $i=0$, $m_1=\Emptymset$, and
  $(m_0,b)\in s$, or $i=1$, $m_1=\Mset a$ for some
  $a\in\Web E\setminus\Supp{m_0}$ such that $m_0+\Mset a\in\Cl E$, and
  $(m_0+\Mset{a},b)\in s$.

  Assume first that $(i,b)\in \Fun{\Sdfun s}(x,u)$ and let $(m_0,m_1)$
  be as above. If $i=0$ we have $(m_0,b)\in s$ and $\Supp m_0\subseteq x$
  and hence $b\in\Fun s(x)$, that is
  $(i,b)\in(\Fun s(x),\Matappa{\Diffrac{s(x)}x}u)$. If $i=1$ let
  $a\in\Web E\setminus\Supp{m_0}$ be such that $m_1=\Mset{a}$,
  $m_0+\Mset{a}\in\Web{\Excl E}$, $(m_0+\Mset{a},b)\in s$ and
  $\Supp{m_0,\Mset a}\subseteq(x,u)$ (remember that we consider the
  elements of $\Cl{\Sfun E}$ as pairs of cliques), that is
  $\Supp{m_0}\subseteq x$ and $a\in u$. Then we know that
  $a\in\Web{E_x}$ since $x\cup u\in\Cl E$ and $x\cap
  u=\emptyset$. Therefore
  $(i,b)\in(\Fun s(x),\Matappa{\Diffrac{s(x)}x}u)$.

  We have proven
  $\Fun{\Sdfun s}(x,u)\subseteq(\Fun
  s(x),\Matappa{\Diffrac{s(x)}x}u)$, we prove the converse inclusion.
  Let $(i,b)\in (\Fun s(x),\Matappa{\Diffrac{s(x)}x}u)$. If $i=0$, we
  have $b\in \Fun s(x)$ and hence there is a uniquely defined
  $m_0\in\Web{\Excl E}$ such that $\Supp{m_0}\subseteq x$ and
  $(m_0,b)\in s$. It follows that
  $((m_0,\Emptymset),(0,b))\in\Sdiff_E\Compl\Sfun s$ and hence
  $(i,b)\in \Fun{\Sdfun s}(x,u)$. Assume now that $i=1$ so that
  $b\in\Matappa{\Diffrac{f(x)}x}u$ and hence there is $a\in u$ (which
  implies $a\notin x$) such that $(a,b)\in\Diffrac{s(x)}x$. So there
  is $m_0\in\Web{\Excl E}$ such that $\Supp{m_0}\subseteq x$ and
  $(m_0+\Mset a,b)\in s$ (notice that $a\notin\Supp{m_0}$ since
  $\Supp{m_0}\subseteq x$ and $a\notin x$). It follows that
  $((m_0,\Mset a),(1,m_0+\Mset a))\in\Sdiff_E$ and hence
  $((m_0,\Mset a),(1,b))\in (\Sfun s)\Compl\Sdiff_E$ so that
  $(1,b)\in\Fun{\Sdfun s}(x,u)$.
\end{proof}

\begin{remark}\label{rk:diff-coh-uniform}
  This shows in particular that
  $\Diffrac{f(x)}x\in\COH(E_x,F_{\Fun s(x)})$ since
  $\Diffrac{f(x)}x=\Sproj 1\Comp g\Comp\Sin 1$ and also that this
  derivative is stable with respect to the point $x$ where it is
  computed and thus differentiation of stable functions can be
  iterated. However Remark~\ref{rk:diff-coh-multi} indicates a
  peculiarity of this derivative which has as consequence that the
  morphisms in $\Kl\COH$ do not coincide with their Taylor expansion
  that one can define by iterating this derivative (the expansion of
  $s$ is $s$ whereas the expansion of $s'$ is $\emptyset$). This is an
  effect of the \emph{uniformity} of the construction $\Excl E$, that
  is, of the fact that for $m\in\Mfin{\Web E}$ to be in
  $\Web{\Excl E}$, it is required that $\Supp m$ be a clique. This can
  be remedied, without breaking the main feature of our construction,
  namely that it is compatible with the determinism%
  \footnote{Remember that by this we mean that, in the type of
    booleans $\Plus\Sone\Sone$ for instance, the only cliques are
    $\emptyset$, $\Eset\True$ and $\Eset\False$.} %
  of the model, by using non-uniform coherence spaces instead, where
  $\Web{\Excl E}=\Mfin E$~\cite{BucciarelliEhrhard99,Boudes11}, see
  Section~\ref{sec:nucs-diff}. In some sense, stable functions on
  Girard's coherence spaces are smooth but not analytic.
\end{remark}

\section{Canonically summable categories}\label{sec:canonical-sum}
The concept of summable category applies typically to models of Linear
Logic in the sense of Seely (see~\cite{Mellies09}): such a model is
based on an SMC $\cL$ whose morphisms are intuitively considered as
linear, and the summability structure makes this linearity more
explicit. In the models we want to apply primarily our theory to
---~typically (probabilistic) coherence spaces~--- the summability
structure boils down to a more basic structure which is always present
in such a model: the functor $\Sfun X$ is defined on objects by
$\Sfun X=\Limplp{\With\Sone\Sone}X$, and similarly for
morphisms. \emph{A priori}, given a categorical model of LL $\cL$,
this functor does not necessarily define a summability structure. The
purpose of this section is to examine under which conditions this is
the case, and to express the differential structure introduced above
in this particular and important setting.

Let $\cL$ be a cartesian\footnote{Actually we don't need all cartesian
  products, only all $n$-ary products of $1$.} SMC where the object
$\Into=\With\Sone\Sone$ is exponentiable, that is, the functor
$\Tinto:X\mapsto\Tens{X}{\Into}$ has a right
adjoint
% \footnote{Interestingly this adjunction induces on $\cL$ the
%   standard linear state monad associated with $\Into$: the functor
%   $X\mapsto(\Limpl\Into{\Tens X\Into})$, a fact which has certainly a
%   computational interpretation related with differentiation.}
$\Scfun:X\mapsto\Limplp{\Into}{X}$.
We use
$\Evlin\in\cL(\Tens{(\Limpl{\Into}{X})}{\Into},X)$ for the
% TYPO
% corresponding evaluation morphism.  We denote this functor as
% $\Scfun$, notice that, being a right adjoint, it preserves all limits
corresponding evaluation morphism and, given %
$f\in\cL(\Tens Y\Into, X)$ we use $\Curlin f$ for the associated %
\emph{Curry transpose} of $f$ which satisfies %
$\Curlin f\in\cL(Y,\Limpl\Into X)$.  Being a right adjoint, $\Scfun$
preserves all limits existing in $\cL$ (and in particular the
cartesian product).

We shall use the construction provided by the following lemma.

\begin{lemma}\label{lemma:into-point-tnat}
  Let $\phi\in\cL(\Sone,\Into)$. %
  For any object $X$ of $\cL$ let %
  $\Scfunnt \phi_X\in\cL(\Limpl\Into X,X)$ be the following composition of
  morphisms
  \[
    \begin{tikzcd}
    \Limplp\Into X\ar[r,"\Inv\Rightu_{\Limpl\Into X}"]
    &
    \Tens{\Limplp\Into X}{\Sone}\ar[r,"\Tens{\Limplp\Into X}{\phi}"]
    &[2em]
    \Tens{\Limplp\Into X}{\Into}\ar[r,"\Evlin"]
    &[-1em]
    X
    \end{tikzcd}
  \]
  Then $(\Scfunnt \phi_X)_{X\in\cL}$ is a natural transformation.

  Let %
  $f\in\cL(\Tens Y\Into,X)$, so that %
  $\Curlin f\in\cL(Y,\Limpl\Into X)$. Then one has
  \begin{align*}
    \Scfunnt\phi_X\Compl(\Curlin f)=f\Compl\Tensp Y\phi\Compl\Inv{\Rightu_Y}
    \in\cL(Y,X)\,.
  \end{align*}
\end{lemma}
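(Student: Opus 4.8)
The plan is to prove both assertions by direct diagram chasing, using only the coherence data of the SMC together with the adjunction $\Tinto\dashv\Scfun$. The four ingredients I would invoke are: naturality of the right unitor $\Rightu$, the interchange law for $\ITens$ (bifunctoriality), naturality of the counit $\Evlin$, and the defining identity of the Curry transpose, namely $\Evlin\Compl\Tens{(\Curlin f)}{\Into}=f$.

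First I would establish naturality. Fix $g\in\cL(X,X')$; the goal is $g\Compl\Scfunnt\phi_X=\Scfunnt\phi_{X'}\Compl\Scfun g$. Expanding $\Scfunnt\phi_{X'}$ by its definition and sliding $\Scfun g$ past the unitor by naturality gives $\Inv{\Rightu_{\Limpl\Into X'}}\Compl\Scfun g=\Tens{(\Scfun g)}{\Sone}\Compl\Inv{\Rightu_{\Limpl\Into X}}$. The interchange law then lets me regroup $\Tens{(\Limpl\Into X')}{\phi}\Compl\Tens{(\Scfun g)}{\Sone}=\Tens{(\Scfun g)}{\phi}=\Tens{(\Scfun g)}{\Into}\Compl\Tens{(\Limpl\Into X)}{\phi}$, bringing $\Tens{(\Scfun g)}{\Into}$ immediately to the right of $\Evlin_{X'}$. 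Naturality of $\Evlin$ converts $\Evlin_{X'}\Compl\Tens{(\Scfun g)}{\Into}$ into $g\Compl\Evlin_X$, and the surviving factors $\Tens{(\Limpl\Into X)}{\phi}\Compl\Inv{\Rightu_{\Limpl\Into X}}$ reassemble precisely into $\Scfunnt\phi_X$, yielding the required equality.

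For the second assertion I would start from $\Scfunnt\phi_X\Compl\Curlin f$ and expand $\Scfunnt\phi_X$. Pushing $\Curlin f$ through $\Inv{\Rightu}$ by naturality gives $\Inv{\Rightu_{\Limpl\Into X}}\Compl\Curlin f=\Tens{(\Curlin f)}{\Sone}\Compl\Inv{\Rightu_Y}$, and the interchange law factors the middle as $\Tens{(\Limpl\Into X)}{\phi}\Compl\Tens{(\Curlin f)}{\Sone}=\Tens{(\Curlin f)}{\phi}=\Tens{(\Curlin f)}{\Into}\Compl\Tens Y\phi$. The defining property of the Curry transpose, $\Evlin_X\Compl\Tens{(\Curlin f)}{\Into}=f$, collapses the leading part to $f$, leaving exactly $f\Compl\Tens Y\phi\Compl\Inv{\Rightu_Y}$ as claimed.

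I do not expect a genuine obstacle here: the argument is purely formal and every step is an instance of a standard categorical law. The only point demanding care is the bookkeeping in the interchange-law rewrites---keeping track at each stage of which factor carries the identity and which carries the nontrivial morphism, so that the regroupings place $\Scfun g$ (resp.~$\Curlin f$) adjacent to the $\Into$-component on which $\Evlin$ acts.
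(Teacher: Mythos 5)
Your proposal is correct and follows essentially the same route as the paper: the second identity is proved by exactly the paper's chain of rewrites (naturality of $\Rightu$, interchange for $\ITens$, then the defining equation $\Evlin\Compl\Tens{(\Curlin f)}{\Into}=f$), and your naturality argument is just a fleshed-out version of the one-line justification the paper gives (naturality of $\Rightu$ and of the counit $\Evlin$, i.e.~functoriality of $\Limpl\Into\_$ through the adjunction).
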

\begin{proof}
  Naturality results from the naturality of $\Rightu$ and
  functoriality of $\Limpl\Into\_$. Let us prove the second part of
  the lemma, we have:
  \begin{align*}
    \Scfunnt\phi_X\Compl(\Curlin f)
    &=\Evlin
      \Compl\Tensp{\Limplp\Into X}\phi
      \Compl\Invp{\Rightu_{\Limpl\Into X}}
      \Compl(\Curlin f)\\
    &=\Evlin
      \Compl\Tensp{\Limplp\Into X}\phi
      \Compl\Tensp{(\Curlin f)}\Sone
      \Compl\Inv{\Rightu_Y}\\
    &=\Evlin
      \Compl\Tensp{(\Curlin f)}\Into
      \Compl\Tensp Y\phi
      \Compl\Inv{\Rightu_Y}\\
    &=f\Compl\Tensp Y\phi\Compl\Inv{\Rightu_Y}\,.
  \end{align*}
\end{proof}

For $i=0,1$ we have a morphism $\Win i\in\cL(\Sone,\Into)$ given by
$\Win 0=\Tuple{\Id_\Sone,0}$ and $\Win 1=\Tuple{0,\Id_\Sone}$. %
We also have a diagonal morphism
$\Wdiag=\Tuple{\Id_\Sone,\Id_\Sone}\in\cL(\Sone,\Into)$. %
Using these we define the following natural transformations
$\Scfun X\to X$:
\begin{align*}
  \Sproj i&=\Scfunnt{\Win i}\text{\quad for }i=0,1\\
  \Ssum&=\Scfunnt{\Wdiag}\,.
\end{align*}

\begin{definition}
  The category $\cL$ is \emph{canonically summable} if
  $(\Scfun,\Sproj0,\Sproj1,\Ssum)$ is a summability structure.
\end{definition}

\begin{remark}
  Canonical summability is a \emph{property} of $\cL$ and not an
  additional structure, which is however defined in a rather implicit
  manner. We exhibit three elementary conditions that are necessary
  and sufficient for guaranteeing canonical summability.
\end{remark}

% Remember that $\Inv{\Rightu_X}$ is the iso $X\to\Tens X\Sone$ provided by
% the monoidal structure of $\cL$.

\begin{lemma}\label{lemma:can-epic-monic}
  The following conditions are equivalent
  \begin{itemize}
  \item for any $X\in\cL$, the morphisms %
    $\Tens X{\Win0},\Tens X{\Win1}$ are jointly epic
  \item $(\Scfun,\Sproj0,\Sproj1,\Ssum)$ is a pre-summability
    structure on $\cL$.
  \end{itemize}
\end{lemma}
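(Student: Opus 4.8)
The plan is to prove the equivalence stated in Lemma~\ref{lemma:can-epic-monic} by unwinding the definition of a pre-summability structure. The functor $\Scfun$ preserves $0$ (it is a right adjoint between categories enriched over pointed sets), and $\Sproj0$, $\Sproj1$, $\Ssum$ are natural transformations from $\Scfun$ to the identity by Lemma~\ref{lemma:into-point-tnat}. Hence the only clause in the definition of a pre-summability structure that is not automatic is the requirement that $\Sproj0$ and $\Sproj1$ be jointly monic: for all $f,g\in\cL(Y,\Scfun X)$, if $\Sproj i\Compl f=\Sproj i\Compl g$ for $i=0,1$ then $f=g$. So the whole content of the lemma is to show that this joint-monicity of the $\Sproj i$ is equivalent to the joint-epicity of the $\Tens X{\Win i}$.

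The key is to transport everything across the adjunction $\Tinto\Adj\Scfun$. First I would recall that, by the definition of $\Sproj i=\Scfunnt{\Win i}$ together with the second part of Lemma~\ref{lemma:into-point-tnat}, for any $h\in\cL(Y,\Limpl\Into X)$ we have $\Sproj i\Compl h=\widetilde h\Compl\Tensp Y{\Win i}\Compl\Inv{\Rightu_Y}$, where $\widetilde h\in\cL(\Tens Y\Into,X)$ is the image of $h$ under the adjunction bijection $\cL(Y,\Scfun X)\cong\cL(\Tens Y\Into,X)$ (that is, $h=\Curlin{\widetilde h}$). Thus, writing $f=\Curlin{\widetilde f}$ and $g=\Curlin{\widetilde g}$, the hypothesis $\Sproj i\Compl f=\Sproj i\Compl g$ for $i=0,1$ translates into
\[
  \widetilde f\Compl\Tensp Y{\Win i}\Compl\Inv{\Rightu_Y}
  =\widetilde g\Compl\Tensp Y{\Win i}\Compl\Inv{\Rightu_Y}
  \qquad(i=0,1),
\]
and since $\Inv{\Rightu_Y}$ is an iso this is equivalent to $\widetilde f\Compl\Tensp Y{\Win i}=\widetilde g\Compl\Tensp Y{\Win i}$ for $i=0,1$. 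On the other side, $f=g$ is equivalent to $\widetilde f=\widetilde g$ because the Curry transpose is a bijection.

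With this dictionary the equivalence is immediate in both directions. If the $\Tens Y{\Win i}$ are jointly epic (for $Y$ in place of $X$), then $\widetilde f\Compl\Tensp Y{\Win i}=\widetilde g\Compl\Tensp Y{\Win i}$ for $i=0,1$ forces $\widetilde f=\widetilde g$, hence $f=g$: this gives joint-monicity of the projections, so $(\Scfun,\Sproj0,\Sproj1,\Ssum)$ is a pre-summability structure. Conversely, suppose the projections are jointly monic. To see that $\Tens Y{\Win0},\Tens Y{\Win1}$ are jointly epic for an arbitrary object $Y$, take any $u,v\in\cL(\Tens Y\Into,X)$ with $u\Compl\Tensp Y{\Win i}=v\Compl\Tensp Y{\Win i}$ for $i=0,1$; setting $\widetilde f=u$, $\widetilde g=v$ and $f=\Curlin u$, $g=\Curlin v$, the dictionary above yields $\Sproj i\Compl f=\Sproj i\Compl g$ for $i=0,1$, whence $f=g$ by joint-monicity, and therefore $u=v$ by injectivity of Currying. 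This is exactly joint-epicity. The only mild subtlety, and the step I would state carefully, is the bookkeeping between the object quantified in the epic condition and the source object $Y$ of the morphisms tested for monicity — they must be matched by naming the tensor factor consistently, using $Y$ throughout rather than $X$; once this is aligned the argument is purely a restatement of the adjunction bijection and the invertibility of $\Rightu$.
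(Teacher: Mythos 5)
Your proof is correct and follows essentially the same route as the paper's: both reduce the statement to the equivalence of joint monicity of $\Sproj0,\Sproj1$ with joint epicity of $\Tens X{\Win0},\Tens X{\Win1}$, and both prove it by transporting along the adjunction $\Tinto\Adj\Scfun$ using the identity $\Sproj i\Compl\Curlin{f}=f\Compl\Tensp X{\Win i}\Compl\Inv{\Rightu_X}$ from Lemma~\ref{lemma:into-point-tnat}, together with bijectivity of Currying and invertibility of $\Rightu$. The only cosmetic difference is that you phrase it as a two-way dictionary while the paper writes out the two implications separately with explicit uncurrying $f'_j=\Evlin\Compl\Tensp{f_j}{\Into}$.
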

\begin{proof}
  Assume that $\Tens X{\Win0},\Tens X{\Win1}$ are jointly epic and
  let %
  $f_j\in\cL(X,\Scfun Y)$ for $j=0,1$ be such that %
  $\Sproj i\Compl f_0=\Sproj i\Compl f_1$ for $i=0,1$.
  Let %
  $f'_j=\Evlin\Compl\Tensp{f_j}{\Into}\in\cL(\Tens X\Into,Y)$ %
  so that $f_j=\Curlin{f'_j}$, for $j=0,1$. We have
  \begin{align*}
    \Sproj i\Compl f_j
    &=\Scfunnt{\Win i}\Compl(\Curlin{f'_j})\\
    &=f'_j
      \Compl\Tensp{X}{\Win j}
      \Compl\Inv{\Rightu_X}\text{\quad by Lemma~\ref{lemma:into-point-tnat}.}
  \end{align*}
  So we have $f'_0=f'_1$ by our assumption on the $\Win j$'s and hence
  $f_0=f_1$.

  Assume conversely that $\Sproj0,\Sproj1$ are jointly monic and let %
  $f_0,f_1\in\cL(\Tens X\Into,Y)$ be such that %
  $f_0\Compl\Tensp X{\Win i}=f_1\Compl\Tensp X{\Win i}$ for
  $i=0,1$. By Lemma~\ref{lemma:into-point-tnat} again we have %
  $f_j\Compl\Tensp X{\Win i}=\Sproj
  i\Compl(\Curlin{f_j})\Compl\Rightu_X$ and hence %
  $\Curlin{f_0}=\Curlin{f_1}$ and hence $f_0=f_1$ which proves that %
  $\Tens X{\Win0},\Tens X{\Win1}$ are jointly epic.
\end{proof}

\begin{theorem}\label{th:can-summable}
  Let $\cL$ be a cartesian SMC where the object %
  $\Into=\With\Sone\Sone$ is exponentiable.
  Setting %
  $\Sproj i=\Scfunnt{\Win i}$ for $i=0,1$ and %
  $\Ssum=\Scfunnt\Wdiag$, the two following statements are equivalent.
  \begin{enumerate}
  \item\label{it:cond1-th-can-summable} For any $X\in\cL$, the
    morphisms %
    $\Tens X{\Win0},\Tens X{\Win1}$ are jointly epic %
    (we call \Csaxepi{} this condition) and %
    $(\Scfun,\Sproj0,\Sproj1,\Ssum)$ satisfies \Saxwit{}, see
    Section~\ref{sec:sum-cat}.
  \item\label{it:cond2-th-can-summable}
    $(\Scfun,\Sproj0,\Sproj1,\Ssum)$ is a summable category that is,
    $\cL$ is canonically summable.
  \end{enumerate}
\end{theorem}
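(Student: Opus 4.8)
The plan is to prove the two implications separately; all the content lies in $(\ref{it:cond1-th-can-summable})\Rightarrow(\ref{it:cond2-th-can-summable})$. The converse is immediate: if $(\Scfun,\Sproj0,\Sproj1,\Ssum)$ is a summable category it is in particular a pre-summability structure, so \Csaxepi{} holds by Lemma~\ref{lemma:can-epic-monic}, and \Saxwit{} is by definition one of its axioms.

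So assume \Csaxepi{} and \Saxwit. By Lemma~\ref{lemma:can-epic-monic} we already have a pre-summability structure, and it remains to establish \Saxcom, \Saxzero{} and \Saxass{} (recall that here the summability functor is $\Sfun=\Scfun$). The uniform tool will be Lemma~\ref{lemma:into-point-tnat}: it converts any equation between the natural transformations $\Sproj i=\Scfunnt{\Win i}$ and $\Ssum=\Scfunnt\Wdiag$ into an elementary equation between the points $\Win0,\Win1,\Wdiag\in\cL(\Sone,\Into)$ and the cartesian projections $\Proj0,\Proj1\in\cL(\Into,\Sone)$ of $\Into=\With\Sone\Sone$, which satisfy $\Proj i\Compl\Win i=\Id_\Sone$, $\Proj i\Compl\Win{1-i}=0$ and $\Proj i\Compl\Wdiag=\Id_\Sone$.

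For \Saxcom{} I would use the swap $\sigma=\Tuple{\Proj1,\Proj0}\in\cL(\Into,\Into)$, for which $\sigma\Compl\Win0=\Win1$, $\sigma\Compl\Win1=\Win0$ and $\sigma\Compl\Wdiag=\Wdiag$. Precomposing with $\sigma$ under the exponential gives $\tau=\Curlinp{\Evlin\Compl(\Tens{\Scfun X}\sigma)}\in\cL(\Scfun X,\Scfun X)$, and Lemma~\ref{lemma:into-point-tnat} yields $\Sproj i\Compl\tau=\Scfunnt{\sigma\Compl\Win i}=\Sproj{1-i}$; hence $\tau=\Stuple{\Sproj1,\Sproj0}$ and $\Ssum\Compl\tau=\Scfunnt{\sigma\Compl\Wdiag}=\Ssum$. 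For \Saxzero{} I would set $\Sin0=\Curlinp{\Rightu_X\Compl(\Tens X{\Proj0})}\in\cL(X,\Scfun X)$; Lemma~\ref{lemma:into-point-tnat} together with the identities above gives $\Sproj0\Compl\Sin0=\Id_X$, $\Sproj1\Compl\Sin0=0$ (since $\Tens X0=0$) and $\Ssum\Compl\Sin0=\Id_X$, so that $\Sin0=\Stuple{\Id_X,0}$ and, for any $f\in\cL(X,Y)$, the pair $(f,0)$ is summable with witness $\Stuple{f,0}=\Sin0\Compl f$ and sum $f$ by Lemma~\ref{lemma:compl-summable}.

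The core is \Saxass{}, and this is where \Saxwit{} is really used: it is exactly the hypothesis that makes the flip $\Sflip\in\cL(\Scfun^2X,\Scfun^2X)$ exist, with its defining property $\Sproj i\Compl\Sproj j\Compl\Sflip=\Sproj j\Compl\Sproj i$. The key lemma I would isolate is $\Sproj k\Compl\Sflip=\Scfun\Sproj k$: post-composing either side with $\Sproj i$ gives $\Sproj i\Compl\Sproj k\Compl\Sflip=\Sproj k\Compl\Sproj i=\Sproj i\Compl\Scfun\Sproj k$, the last equality by naturality of $\Sproj i$, so the two sides coincide because $\Sproj0,\Sproj1$ are jointly monic. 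Granting this, \Saxass{} follows by joint monicity once more, since for $k=0,1$ one has $\Sproj k\Compl\Scfun\Ssum_X\Compl\Sflip=\Ssum_X\Compl\Sproj k\Compl\Sflip=\Ssum_X\Compl\Scfun\Sproj k=\Sproj k\Compl\Ssum_{\Scfun X}$, using naturality of $\Sproj k$, the identity just proved, and naturality of $\Ssum$. I expect \Saxass{} to be the main obstacle, not through computation but conceptually: one must notice that, over any pre-summability structure, \Saxwit{} already produces both $\Sflip$ and the identity $\Sproj k\Compl\Sflip=\Scfun\Sproj k$, after which everything reduces to naturality; by contrast \Saxcom{} and \Saxzero{} are special to the canonical setting and genuinely exploit the concrete points $\Win i,\Wdiag$ and projections of $\Into$.
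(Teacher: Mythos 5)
Your proof is correct, and on the only hard point it takes a genuinely different route from the paper. The converse direction and the treatment of \Saxcom{} coincide with the paper's (the same swap $\Tuple{\Proj1,\Proj0}$ precomposed under the exponential), and your \Saxzero{} is an inessential repackaging of the paper's: the paper transposes $f\Compl\Rightu_X\Compl\Tensp{X}{\Proj0}$ for each $f$ separately, while you build $\Sin0=\Stuple{\Id,0}$ once and invoke Lemma~\ref{lemma:compl-summable}. The divergence is \Saxass. The paper constructs the flip concretely, as a double Curry transpose of $\Evlin\Compl\Tensp{\Evlin}{\Into}\Compl(\Scfun^2X\ITens\Sym_{\Into,\Into})$, identifies it with the canonical $\Sflip$ through the characterization $\Sproj i\Compl\Sproj j\Compl\Sflip=\Sproj j\Compl\Sproj i$, and then verifies the required identity $\Ssum_{\Scfun X}\Compl\Sflip_X=\Scfun\Ssum_X$ by a long explicit computation with $\Curlin$, $\Evlin$ and the monoidal isomorphisms. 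You instead isolate ${\Sproj k}\Compl\Sflip=\Scfun{\Sproj k}$ — which is in fact immediate from the very construction of $\Sflip$, since $\Stuple{{\Sproj k}\Compl{\Sproj 0},{\Sproj k}\Compl{\Sproj 1}}=\Scfun{\Sproj k}$ — and then deduce \Saxass{} from two naturality squares ($\Sproj k\Compl\Scfun{\Ssum_X}=\Ssum_X\Compl\Sproj k$ and $\Ssum_X\Compl\Scfun{\Sproj k}=\Sproj k\Compl\Ssum_{\Scfun X}$) plus joint monicity. I checked both naturality instances; the argument is sound, and, as you point out, it uses nothing specific to the canonical setting: it shows that \Saxass{} already follows from a pre-summability structure satisfying \Saxwit{}, i.e.\ that \Saxass{} is redundant as an axiom of summability structures. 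So where the paper buys the result with a page of monoidal bookkeeping, your route buys a strictly more general fact with three lines of diagram chasing; this is a real improvement, not just a shortcut.

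One caveat, which is not a gap relative to the statement but matters for how the theorem is used. The paper's proof of this theorem additionally establishes \Saxdist{} (the tensor distributes over summable pairs), which is not part of the definition of a summable category — so your proof does prove exactly what the statement claims — but which the paper silently relies on later, when a canonically summable cartesian SMC is treated as a summable SMC (e.g.\ to construct the comonoid structure $\Scmont$ on $\Into$). If you want your proof to support those later developments, you should append the paper's short \Saxdist{} argument, which constructs the witness $\Stuple{\Tens{f_{00}}{f_1},\Tens{f_{01}}{f_1}}$ by Curry transposition in the same style as your \Saxzero.
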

\begin{proof}
  The fact that
  (\ref{it:cond2-th-can-summable})~$\Implies$~(\ref{it:cond1-th-can-summable})
  results immediately from Lemma~\ref{lemma:can-epic-monic} so let us
  prove the converse. %
  We assume that (\ref{it:cond1-th-can-summable}) holds.
  By Lemma~\ref{lemma:can-epic-monic} we know that %
  $\Sproj0,\Sproj1$ are jointly monic, so we are left with proving %
  \Saxcom, \Saxzero, \Saxass{} and \Saxdist.

  \Proofcase
  \Saxcom{}. Let %
  $f=\Curlin g\in\cL(\Scfun X,\Scfun X)$ where $g$ is the following
  composition of morphisms
  \[
    \begin{tikzcd}
      \Tens{\Limplp\Into X}{\Into}\ar[r,"\Tens{\Id}{\Tuple{\Proj1,\Proj0}}"]
      &[2.6em]\Tens{\Limplp\Into X}{\Into}\ar[r,"\Evlin"]
      &[-1em]X
    \end{tikzcd}
  \]
  We have 
  \begin{align*}
    \Sproj i\Compl f
    &=g\Compl\Tensp{\Limplp\Into X}{\Win i}\Compl\Inv\Rightu
    \text{\quad by Lemma~\ref{lemma:into-point-tnat}}\\
    &=\Evlin\Compl\Tensp{\Limplp\Into X}{\Win{1-i}}\Compl\Inv\Rightu
    \text{\quad by definition of $g$}\\
    &=\Sproj{1-i}
  \end{align*}
  and similarly
  \begin{align*}
    \Ssum\Compl f
    =g\Compl\Tensp{\Limplp\Into X}{\Wdiag}\Compl\Inv\Rightu
    =\Evlin\Compl\Tensp{\Limplp\Into X}{\Wdiag}\Compl\Inv\Rightu=\Ssum\,.
  \end{align*}

  \Proofcase
  \Saxzero. Let $f\in\cL(X,Y)$. %
  Let %
  $h=\Curlinp{f\Compl\Rightu_X\Compl\Tensp X{\Proj 0}}\in\cL(X,\Scfun Y)$. %
  We have
  \begin{align*}
    \Sproj i\Compl h
    &=f\Compl\Rightu_X\Compl\Tensp X{\Proj 0}
      \Compl\Tensp{X}{\Win i}
      \Compl\Inv{\Rightu_X}
    \text{\quad by Lemma~\ref{lemma:into-point-tnat}}\\
    &=
      \begin{cases}
        f & \text{if }i=0\\
        0 & \text{otherwise}
      \end{cases}
  \end{align*}
  which shows that $f,0$ are summable with %
  $\Stuple{f,0}=h$.
  Moreover %
  \begin{align*}
    \Ssum\Compl h=f\Compl\Rightu_X\Compl\Tensp X{\Proj 0}
      \Compl\Tensp{X}{\Wdiag}
      \Compl\Inv{\Rightu_X}=f\,.
  \end{align*}

  \Proofcase
  \Saxass{}. We define %
  $\Sflip_X\in\cL(\Scfun^2 X,\Scfun^2X)$ by %
  $\Sflip_X=\Curlin(\Curlin(\Evlin\Compl\Tensp\Evlin\Into
  \Compl(\Scfun^2X\ITens\Scflip)))$ where the transposed morphism is
  typed as follows.
  \[
    \begin{tikzcd}
      \Scfun^2X\ITens\Into\ITens\Into
      \ar[r,"\Id\ITens\Scflip"]
      &\Scfun^2X\ITens\Into\ITens\Into
      \ar[r,"\Evlin\ITens\Into"]
      &\Scfun X\ITens\Into\ar[r,"\Evlin"]
      &X
    \end{tikzcd}
  \]
  A computation similar to the previous ones shows that
  $\Sproj i\Compl\Sproj j\Compl\Sflip=\Sproj j\Compl\Sproj i$ as
  required. We have moreover
  \begin{align*}
    \Ssum_{\Scfun X}\Compl\Sflip_X
    &=\Curlin(\Evlin\Compl\Tensp\Evlin\Into\Compl(\Scfun^2X\ITens\Scflip))
      \Compl((\Scfun^2X)\ITens\Wdiag)\Compl\Inv{\Rightu}
      \text{\quad by Lemma~\ref{lemma:into-point-tnat}
      and definition of }\Ssum\\
    &=\Curlin(\Evlin\Compl\Tensp\Evlin\Into
      \Compl(\Scfun^2X\ITens\Scflip)\Compl(\Scfun^2X\ITens\Wdiag\ITens\Into))
      \Compl\Inv{\Rightu}\\
    &=\Curlin(\Evlin\Compl\Tensp\Evlin\Into
      \Compl(\Scfun^2X\ITens(\Scflip\Compl\Tensp{\Wdiag}{\Into})))
      \Compl\Inv{\Rightu}\\
    &=\Curlin(\Evlin\Compl\Tensp\Evlin\Into
      \Compl(\Scfun^2X\ITens(\Tensp{\Into}{\Wdiag}\Compl\Sym_{\Sone,\Into})))
      \Compl\Inv{\Rightu}\text{\quad by definition of }\Scflip\\
    &=\Curlin(\Evlin\Compl\Tensp\Evlin\Into
      \Compl(\Scfun^2X\ITens\Into\ITens\Wdiag)
      \Compl(\Tens{\Scfun^2X}\Sym_{\Sone,\Into}))
      \Compl\Inv{\Rightu}\\
    &=\Curlin(\Evlin\Compl(\Tens{\Scfun X}{\Wdiag})\Tensp\Evlin\Sone
      \Compl(\Scfun^2X\ITens\Sym_{\Sone,\Into}))
      \Compl\Inv{\Rightu}
    \text{\quad by functoriality of }\ITens\\
    &=\Curlin(\Ssum_X\Compl\Rightu\Compl(\Evlin\ITens\Sone)
      \Compl(\Scfun^2X\ITens\Sym_{\Sone,\Into}))\Compl\Inv{\Rightu}
    \text{\quad by definition of }\Ssum_X\\
    &=\Curlin(\Ssum_X\Compl\Evlin\Tensp\Rightu\Into)
      \Compl\Inv{\Rightu}\text{\quad by nat.~of }\Rightu
    \text{ and standard SMC commutations}\\
    &=\Curlin(\Ssum_X\Compl\Evlin)=\Scfun\Ssum\,.
  \end{align*}

  \Proofcase
  \Saxdist{}. Let $(f_{00},f_{01})$ be a summable pair
  of morphisms in $\cL(X_0,Y_0)$ so that we have the witness %
  $\Stuple{f_{00},f_{01}}\in\cL(X_0,\Scfun{Y_0})$, and let %
  $f_1\in\cL(X_1,Y_1)$. %
  Let %
  $h=\Curlin{h'}\in\cL(\Tens{X_0}{X_1},\Scfun\Tensp{Y_0}{Y_1})$ where
  $h'$ is the following composition of morphisms:
  \[
    \begin{tikzcd}
      X_0\ITens X_1\ITens\Into\ar[r,"\Tens{\Stuple{f_{00},f_{01}}}{\Sym}"]
      &[2.8em]\Limplp{\Into}{Y_0}\ITens\Into\ITens X_1\ar[r,"\Tens\Evlin{f_1}"]
      &Y_0\ITens Y_1\,.
    \end{tikzcd}
  \]
  We have
  \begin{align*}
    \Sproj i\Compl h
    &=\Tensp{\Evlin}{f_1}
      \Compl\Tensp{\Stuple{f_{00},f_{01}}}{\Sym_{X_1,\Into}}
      \Compl\Tensp{\Tens{X_0}{X_1}}{\Win i}
      \Compl\Inv{\Rightu_{\Tens{X_0}{X_1}}}
    \text{\quad by Lemma~\ref{lemma:into-point-tnat}}\\
    &=\Tensp{\Evlin}{f_1}
      \Compl\Tensp{\Stuple{f_{00},f_{01}}}{\Tens{\Win i}{X_1}}
      \Compl\Tensp{X_0}{\Sym_{X_1,\Sone}}
      \Compl\Inv{\Rightu_{\Tens{X_0}{X_1}}}\\
    &=\Tensp{(\Evlin\Compl\Tensp{\Stuple{f_{00},f_{01}}}{\Win i})}{f_1}
      \Compl\Tensp{X_0}{\Sym_{X_1,\Sone}}
      \Compl\Inv{\Rightu_{\Tens{X_0}{X_1}}}\\
    &=\Tensp{f_{0i}}{f_1}
      \Compl\Tensp{\Rightu_{X_0}}{X_1}
      \Compl\Tensp{X_0}{\Sym}
      \Compl\Inv{\Rightu_{\Tens{X_0}{X_1}}}\\
    &=\Tens{f_{0i}}{f_1}\,.
  \end{align*}
  which shows that %
  $\Tens{f_{00}}{f_1},\Tens{f_{01}}{f_1}$ are summable with
  \begin{align*}
    \Stuple{\Tens{f_{00}}{f_1},\Tens{f_{01}}{f_1}}=h\,.
  \end{align*}
  We have by a similar computation
  \begin{align*}
    \Ssum\Compl h
    &=\Tensp{(\Evlin\Compl\Tensp{\Stuple{f_{00},f_{01}}}{\Wdiag})}{f_1}
      \Compl\Tensp{X_0}{\Sym_{X_1,\Sone}}
      \Compl\Inv{\Rightu_{\Tens{X_0}{X_1}}}\\
    &=\Tensp{(f_{00}+f_{01})}{f_1}
      \Compl\Tensp{\Rightu_{X_0}}{X_1}
      \Compl\Tensp{X_0}{\Sym}
      \Compl\Inv{\Rightu_{\Tens{X_0}{X_1}}}\\
    &=\Tens{(f_{00}+f_{01})}{f_1}\,.    
  \end{align*}

\end{proof}

There are cartesian SMC where $\Into$ is exponentiable and which are
not canonically summable. The category $\PSET$ provides probably the
simplest example of that situation.
\begin{Example}
  % Let $\PSET$ be the category of pointed sets. We use $0_X$ for the
  % singled out point of the object $X$. A morphism $f\in\PSET(X,Y)$ is
  % a function $f:X\to Y$ such that $f(0_X)=0_Y$.  The terminal object
  % is the singleton $\{0\}$. The cartesian product $\With XY$ is the
  % ordinary cartesian product, with $0_{\With XY}=(0_X,0_Y)$. The
  % tensor product $\Tens XY$ is defined as
  % \begin{align*}
  %   \Tens XY=\{(x,y)\in X\times Y\St x=0\Equiv y=0\}
  % \end{align*}
  % with $0_{\Tens XY}=(0_X,0_Y)$. The unit of the tensor product is the
  % object $\Sone=\{0,\Sonelem\}$ of $\PSET$. This category is enriched
  % over itself, the singled out point of $\PSET(X,Y)$ being the
  % constantly $0_Y$ function. Actually, it is monoidal closed with
  % $\Limpl XY=\PSET(X,Y)$ and $0_{\Limpl XY}$ defined by
  % $0_{\Limpl XY}(x)=0_Y$ for all $x\in X$. A mono in $\PSET$ is a
  % morphism of $\PSET$ which is injective as a function.
  %
  We refer to Section~\ref{sec:pointed-sets}.  We have the functor
  $\Scfun:\PSET\to\PSET$ defined by $\Scfun X=(\Limpl{\Into}{X})$. An
  element of $\Scfun X$ is a function $z:\{0,\Sonelem\}^2\to X$ such
  that $z(0,0)=0$. %
  The projections $\Sproj i:\Scfun X\to X$ are characterized by
  $\Sproj 0(z)=z(\Sonelem,0)$ and $\Sproj 1(z)=z(0,\Sonelem)$, %
  so $\Tuple{\Sproj 0,\Sproj 1}$ is not injective since
  $\Tuple{\Sproj 0,\Sproj 1}(z)=(z(\Sonelem,0),z(0,\Sonelem))$ does
  not depend on $z(\Sonelem,\Sonelem)$ which can take any value. %
  So $(\Scfun,\Sproj0,\Sproj1,\Ssum)$ is not even a pre-summability
  structure in $\PSET$. This failure of injectivity is due to the fact
  that $\Into$ lacks an addition which would satisfy
  $(\Sonelem,0)+(0,\Sonelem)=(\Sonelem,\Sonelem)$ and, preserved by
  $z$, would enforce injectivity.
\end{Example}

There are also cartesian SMC where $\Into$ is exponentiable, where
\Csaxepi{} holds but where $(\Scfun,\Sproj0,\Sproj1,\Ssum)$ does not
satisfy \Saxwit{}.
\begin{Example}
  Let $\cB$ be the category whose objects are the finite dimensional
  real Banach space. By this we mean pairs $(V,\Norm\__V)$ where $V$
  is a finite dimensional real vector space and $\Norm\__V$ is a norm
  on $V$. In $\cB$, a morphism $V\to W$ is a linear map such that %
  $\forall v\in V\ \Norm{f(v)}_W\leq\Norm v_V$. This category is a
  cartesian symmetric monoidal closed category with $\Limpl UV$
  defined as the space of \emph{all} linear maps $f:U\to V$ and
  \[
    \Norm f_{\Limpl UV}=\sup\{\Norm{f(u)}_V\St u\in U\text{ and }\Norm
    u_U\leq 1\}\,.
  \]
  Indeed since we consider only finite dimensional spaces, all linear
  maps are continuous (for the product topology induced by any choice
  of basis, which is the same as the one induce by the norm) and hence
  bounded.  The tensor product classifies bilinear maps (with norm
  defined by sups as for linear maps) and satisfies
  $\Norm{\Tens uv}_{\Tens UV}=\Norm u_U\Norm v_V$ for all $u\in U$ and
  $v\in V$. %
  The unit of this tensor product is $\Sone=\Real$ with
  $\Norm u_\Sone=\Absval u$. %
  The cartesian product is the standard direct product of vector
  spaces with $\Norm{(u,v)}_{\With UV}=\max(\Norm u_V,\Norm
  v_V)$. Notice that there is also a coproduct $\Plus UV$, with the
  same underlying vector space and
  $\Norm{(u,v)}_{\Plus UV}=\Norm u_U+\Norm v_V$. So $\With UV$ and
  $\Plus UV$ \emph{are not isomorphic} in $\cB$ which is not an
  additive category.
  
  The functor $\Scfun:\cB\to\cB$ maps $U$ to %
  $V=\Scfun U$ where %
  $V=\{(u_0,u_1)\in U\St \Norm{u_0+u_1}_U\leq U\}$ and %
  \[
    \Norm{(u_0,u_1)}_V=\sup\{\Norm{a_0u_0+a_1u_1}_U\St
    (a_0,a_1)\in\Intercc{-1}1\times\Intercc{-1}1\}
  \]
  The natural transformations $\Sproj i$ are the obvious projections
  and $\Ssum(u_0,u_1)=u_0+u_1$.

  Then, in $\Sone=\Real$:
  \begin{itemize}
  \item $-1/2$ and $1/2$ are summable because %
    $\Absval{-\frac a2+\frac b2}\leq 1$ for all $a,b\in\Intercc{-1}{1}$
  \item $-1/2+1/2=0$ and $1$ are summable in $\Sone$
  \item but $1/2$ and $1$ are not summable in $\Sone$.
  \end{itemize}
  So $\cB$ is not canonically summable.
\end{Example}
This example shows that the condition \Saxwit{} cannot be disposed of
and speaks not only of associativity of partial sum, but also of some
kind of ``positivity'' of morphisms in $\cL$.

\subsection{The comonoid structure of $\Into$}
We assume that $\cL$ is a canonically summable cartesian SMC.  The
morphisms $\Win0,\Win1\in\cL(\Sone,\Into)$ are summable with
$\Win0+\Win1=\Wdiag$, with witness $\Id\in\cL(\Into,\Into)$. As a
consequence of \Saxdist{} the morphisms
$\Tensp{\Win0}{\Win0}\Compl\Inv\Rightu$,
$\Tensp{\Win0}{\Win1}\Compl\Inv\Rightu$ and
$\Tensp{\Win1}{\Win0}\Compl\Inv\Rightu$ are summable in
$\cL(\Sone,\Tens\Into\Into)$. Therefore
$\Tensp{\Win0}{\Win0}\Compl\Inv\Rightu$ and
$\Tensp{\Win0}{\Win1}\Compl\Inv\Rightu+\Tensp{\Win1}{\Win0}\Compl\Inv\Rightu$
are summable in $\cL(\Sone,\Tens\Into\Into)$ so there is a uniquely
defined $\Scmont\in\cL(\Into,\Tens\Into\Into)$ such that
%\begin{align*}
  \(
  \Scmont\Compl\Win0=\Tensp{\Win0}{\Win0}\Compl\Inv\Rightu
  \text{ and }
  \Scmont\Compl\Win1
  =\Tensp{\Win0}{\Win1}\Compl\Inv\Rightu
  +\Tensp{\Win1}{\Win0}\Compl\Inv\Rightu\,.
  \)
%\end{align*}
% And the
% morphisms $\Win0\Compl\Proj0,\Win1\Compl\Proj1\in\cL(\Into,\Into)$ are
% summable with witness
% $\Tuple{\Tens{\Proj0}{\Proj0},\Tens{\Proj1}{\Proj1}}
% \in\cL(\Tens\Into\Into,\Into)$: indeed we have
% $\Tuple{\Tens{\Proj0}{\Proj0},
%   \Tens{\Proj1}{\Proj1}}\Compl\Tensp\Into{\Win0}\Compl\Inv\Rightu=
% \Tuple{\Proj0,0}=\Tuple{\Id_\Sone,0}\Compl\Proj0 =\Win0\Compl\Proj0$
% and similarly for $\Win1\Compl\Proj1$. Moreover one checks easily that
% $\Win0\Compl\Proj0+\Win1\Compl\Proj1=\Id_\Into$ using \Csaxepi{} and
% the fact that $\Proj i\Compl\Win i$ is equal to $\Id_\Into$ if $i=j$
% and to $0$ otherwise. By \Saxdist{} the morphisms
% $\Tens{\Win0\Compl\Proj0}{\Win0},\Tens{\Win1\Compl\Proj1}{\Win0},
% \Tens{\Win0\Compl\Proj0}{\Win1}\in\cL(\Tens\Into\Sone,\Tens\Into\Into)$
% are summable and we set
% \begin{align*}
%   \Scmont=(\Tens{\Win0\Compl\Proj0}{\Win0}+\Tens{\Win1\Compl\Proj1}{\Win0}+
%   \Tens{\Win0\Compl\Proj0}{\Win1})
%   \Compl\Inv\Rightu\in\cL(\Into,\Tens\Into\Into)\,.
% \end{align*}
\begin{theorem}\label{th:can-comonoid-Into}
  Equipped with $\Proj0\in\cL(\Into,\Sone)$ as counit and
  $\Scmont\in\cL(\Into,\Tens\Into\Into)$ as comultiplication,
  $\Into$ is a cocommutative comonoid in the SMC $\cL$.
\end{theorem}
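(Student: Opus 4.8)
The plan is to reduce every comonoid identity to a pair of equations obtained by precomposing with $\Win0$ and $\Win1$, exploiting that $\Win0,\Win1\in\cL(\Sone,\Into)$ are jointly epic. This joint epicness holds under our standing hypothesis: by Lemma~\ref{lemma:can-epic-monic} canonical summability entails \Csaxepi{}, i.e.\ $\Tens X{\Win0},\Tens X{\Win1}$ are jointly epic for every $X$, and taking $X=\Sone$ and transporting along the natural iso $\Leftu_\Sone$ yields joint epicness of $\Win0,\Win1$ themselves. Since $\Scmont$ is characterized by $\Scmont\Compl\Win0=\Tensp{\Win0}{\Win0}\Compl\Inv\Rightu$ and $\Scmont\Compl\Win1=\Tensp{\Win0}{\Win1}\Compl\Inv\Rightu+\Tensp{\Win1}{\Win0}\Compl\Inv\Rightu$, each verification becomes a finite computation with these two data, the identities $\Proj0\Compl\Win0=\Id_\Sone$ and $\Proj0\Compl\Win1=0$, the bilinearity $\Tens 0f=\Tens f0=0$, and MacLane coherence for the structural isomorphisms.

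For the two counit laws I would compute $\Leftu\Compl\Tensp{\Proj0}{\Into}\Compl\Scmont$ and symmetrically $\Rightu\Compl\Tensp{\Into}{\Proj0}\Compl\Scmont$ after precomposing with each $\Win i$. Using Lemma~\ref{lemma:compl-summable} to distribute the composite over the partial sum defining $\Scmont\Compl\Win1$, together with \Saxdist{} to push the remaining tensor factor through, the $\Win0$-branch collapses to $\Win0$ and the $\Win1$-branch to $\Win1+0=\Win1$ (the vanishing summand coming from $\Proj0\Compl\Win1=0$ and $\Tens 0{\Win0}=0$). Both agree with $\Id_\Into\Compl\Win i$, so joint epicness gives the counit laws. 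Cocommutativity $\Sym_{\Into,\Into}\Compl\Scmont=\Scmont$ is even shorter: the $\Win0$-branch is fixed because $\Sym_{\Sone,\Sone}\Compl\Inv\Rightu_\Sone=\Inv\Rightu_\Sone$ (as $\Leftu_\Sone=\Rightu_\Sone$), while on the $\Win1$-branch naturality of $\Sym$ merely swaps the two summands $\Tensp{\Win0}{\Win1}\Compl\Inv\Rightu$ and $\Tensp{\Win1}{\Win0}\Compl\Inv\Rightu$, whose sum is unchanged by \Saxcom{} and Lemma~\ref{lemma:ssum-com}.

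The main work — and the step I expect to be the real obstacle — is coassociativity, $\Assoc\Compl\Tensp{\Scmont}{\Into}\Compl\Scmont=\Tensp{\Into}{\Scmont}\Compl\Scmont$. Here I would again split on $\Win0$ and $\Win1$. The $\Win0$-branch reduces on both sides, once the coherence isos are absorbed, to the single triple $\Win0\otimes\Win0\otimes\Win0$, which MacLane coherence identifies. The $\Win1$-branch is the delicate one: expanding $\Scmont\Compl\Win1$ and then applying $\Scmont$ to each factor produces, via two uses of \Saxdist{} and Lemma~\ref{lemma:compl-summable}, a sum of three summable triples on each side; on the left these are the images of $\Win0\otimes\Win0\otimes\Win1$, $\Win0\otimes\Win1\otimes\Win0$, $\Win1\otimes\Win0\otimes\Win0$, and the right produces exactly the same three triples, namely the tensors of weight one in $\Win1$.

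The obstacle is bookkeeping rather than conceptual: one must check that these mixed sums are genuinely summable, so that the manipulations are legitimate, and that they may be reindexed and recombined, which I would justify by Theorem~\ref{th:summability-subsets} together with Proposition~\ref{prop:gen-summab}, while the identification of the two right-bracketings of each triple is pure coherence. Once both branches match, joint epicness of $\Win0,\Win1$ yields coassociativity and, with the counit and cocommutativity laws, completes the proof that $(\Into,\Proj0,\Scmont)$ is a cocommutative comonoid.
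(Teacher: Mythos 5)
Your proposal is correct and takes essentially the same route as the paper: the paper likewise verifies each comonoid law by precomposing with $\Win0$ and $\Win1$, expanding via the defining equations of $\Scmont$, the values $\Proj0\Compl\Win i$, and distributivity of composition and tensor over partial sums, then concludes by \Csaxepi{}. The only differences are cosmetic — you spell out the reduction of joint epicness of $\Win0,\Win1$ to \Csaxepi{} at $X=\Sone$ and the summability bookkeeping (via Lemma~\ref{lemma:compl-summable}, \Saxdist{} and Proposition~\ref{prop:gen-summab}) that the paper leaves implicit, and you treat cocommutativity explicitly where the paper says ``proven similarly.''
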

\begin{proof}
  To prove the required commutations, we use \Csaxepi. Here are two
  examples of these computations.
  \begin{align*}
    \Rightu\Compl\Tensp\Into{\Proj0}\Compl\Scmont\Compl\Win0
    &=\Rightu\Compl\Tensp\Into{\Proj0}
      \Compl\Tensp{\Win0}{\Win0}
      \Compl{\Inv\Rightu}
      =\Rightu\Compl\Tensp{\Win0}{\Sone}\Compl\Inv\Rightu=\Win0
  \end{align*}
and
  \begin{align*}
    \Rightu\Compl\Tensp\Into{\Proj0}\Compl\Scmont\Compl\Win1
    &=\Rightu\Compl\Tensp\Into{\Proj0}
      \Compl(\Tens{\Win0}{\Win1}+\Tens{\Win1}{\Win0})
      \Compl{\Inv\Rightu}
      =\Rightu\Compl\Tensp{\Win1}{\Sone}\Compl\Inv\Rightu=\Win1
  \end{align*}
  since $\Proj0\Compl\Win i$ is equal to $\Id_\Sone$ if $i=0$ and
  to $0$ otherwise.  Hence
  $\Rightu\Compl\Tensp\Into{\Proj0}\Compl\Scmont=\Into$. Next
  \begin{align*}
    \Tensp\Into{\Scmont}\Compl\Scmont\Compl\Win0
    &=\Tensp{\Into}{\Smont}\Compl\Tensp{\Win0}{\Win0}\Compl\Inv\Rightu
      =\Tensp{\Win0}{\Tensp{\Win0}{\Win0}}\Compl\Tensp{\Into}{\Inv\Rightu}\Compl\Inv\Rightu
  \end{align*}
  and
  \begin{align*}
    \Tensp\Into{\Scmont}\Compl\Scmont\Compl\Win1
    &=\Tensp{\Into}{\Smont}
      \Compl(\Tens{\Win0}{\Win1}+\Tens{\Win1}{\Win0})\Compl\Inv\Rightu\\
    &=(\Tens{\Win0}{\Tensp{\Win0}{\Win1}}
      +\Tens{\Win0}{\Tensp{\Win1}{\Win0}}
      +\Tens{\Win1}{\Tensp{\Win0}{\Win0}})
      \Compl\Tensp{\Into}{\Inv\Rightu}\Compl\Inv\Rightu\,.
  \end{align*}
  Similar computations show that 
  % \begin{align*}
  \(
    \Tensp{\Scmont}\Into\Compl\Scmont\Compl\Win0
    =\Tensp{\Tensp{\Win0}{\Win0}}{\Win0}
    \Compl\Tensp{\Inv\Rightu}{\Into}\Compl\Inv\Rightu
    \)
  % \end{align*}
  and
  % \begin{align*}
  \(
    \Tensp{\Scmont}\Into\Compl\Scmont\Compl\Win1
    =(\Tens{\Tensp{\Win0}{\Win0}}{\Win1}
      +\Tens{\Tensp{\Win0}{\Win1}}{\Win0}
      +\Tens{\Tensp{\Win1}{\Win0}}{\Win0})
      \Compl\Tensp{\Inv\Rightu}{\Into}\Compl\Inv\Rightu\,.
      \)
  %\end{align*}
  Therefore
  $\Assoc\Compl\Tensp{\Scmont}\Into\Compl\Scmont\Compl\Win i
  =\Tensp\Into{\Scmont}\Compl\Scmont\Compl\Win i$ for $i=0,1$ and
  $\Scmont$ is coassociative. Cocommutativity is proven similarly.
\end{proof}

\subsection{Strong monad structure of $\Scfun$}\label{sec:strength-can}
Therefore $\Tinto$ has a canonical comonad structure given by
$\Rightu\Compl\Tensp X{\Proj 0}\in\cL(\Tinto X,X)$ and
$\Assoc\Compl\Tensp{X}{\Scmont}\in\cL(\Tinto X,\Tinto^2X)$. Through
the adjunction $\Tinto\Adj\Scfun$ the functor $\Scfun$ inherits a
monad structure which is exactly the same as the monad structure of
Section~\ref{sec:Sfun-monad}. This monad structure
$(\Sin0,\Sfunadd)$ can be described as the Curry transpose of the
following morphisms (the monoidality isos are implicit)
\[
  \begin{tikzcd}
    \Tens X\Into\ar[r,"\Tens X{\Proj 0}"]
    &X
  \end{tikzcd}
  \quad
  \begin{tikzcd}
    \Limplp\Into{\Limplp\Into X}\ITens\Into\ar[r,"\Tens\Id\Scmont"]
    &[-0.6em]
    \Limplp\Into{\Limplp\Into X}\ITens\Into\ITens\Into\ar[r,"\Tens\Evlin\Into"]
    &[-0.6em]
    \Limplp\Into X\ITens\Into\ar[r,"\Evlin"]
    &[-1em]
    X
  \end{tikzcd}\,.
\]
Similarly the trivial costrength
$\Assoc\in\cL(\Tinto\Tensp XY,\Tens X{\Tinto Y})$ induces the strength
$\Sstr\in\cL(\Tens X{\Scfun Y},\Scfun\Tensp XY)$ of $\Scfun$ (the same
as the one defined in the general setting of
Section~\ref{sec:sum-moncat}). We have seen in
Section~\ref{sec:sum-moncat} that equipped with this strength $\Scfun$
is a commutative monad and recalled that there is therefore an
associated lax monoidality
$\Smont_{X_0,X_1}\in\cL(\Tens{\Scfun X_0}{\Scfun
  X_1},\Scfun\Tensp{X_0}{X_1})$ which can be seen as arising from
$\Scmont$ by transposing the following morphism (again we keep the
monoidal isos implicit)
\[
  \begin{tikzcd}
    \Limplp\Into{X_0}\ITens\Limplp\Into{X_1}\ITens\Into
    \ar[r,"\Tens\Id\Scmont"]
    &
    \Limplp\Into{X_0}\ITens\Limplp\Into{X_1}\ITens\Into\ITens\Into
    \ar[r,"\Tens\Evlin\Evlin"]
    &
    \Tens{X_0}{X_1}
  \end{tikzcd}\,.
\]

\subsection{Canonically summable SMCC}\label{sec:can-sum-SMCC} %
In a SMCC, the conditions of Theorem~\ref{th:can-summable} admit a slightly
simpler formulation.

\begin{theorem}\label{th:can-sum-smcc}
  A cartesian SMCC is canonically summable if and only if the condition
  \begin{Axicond}{\Ccsaxepi}
    $\Win0$ and $\Win1$ are jointly epic
  \end{Axicond}
  \noindent
  holds and $(\Scfun,\Sproj0,\Sproj1,\Ssum)$ satisfies \Saxwit.
\end{theorem}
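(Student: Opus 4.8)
The plan is to reduce everything to Theorem~\ref{th:can-summable}. A cartesian SMCC is in particular a cartesian SMC in which every object — and hence $\Into=\With\Sone\Sone$ — is exponentiable, so that theorem applies and tells us that $\cL$ is canonically summable if and only if it satisfies \Csaxepi{} together with \Saxwit. Since \Saxwit{} also appears verbatim in the statement to be proved, the whole claim comes down to showing that, \emph{in the presence of the closed structure}, the condition \Csaxepi{} (the morphisms $\Tens X{\Win0},\Tens X{\Win1}$ are jointly epic for every $X$) is equivalent to the condition \Ccsaxepi{} ($\Win0,\Win1$ alone are jointly epic).

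One implication is immediate. First I would derive \Ccsaxepi{} from \Csaxepi{} by instantiating the latter at $X=\Sone$: naturality of the left unit isomorphism gives $\Win i\Compl\Leftu_\Sone=\Leftu_\Into\Compl\Tensp\Sone{\Win i}$, and since $\Leftu$ is an iso, the pair $\Tens\Sone{\Win0},\Tens\Sone{\Win1}$ is jointly epic exactly when $\Win0,\Win1$ is.

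The substantial direction is \Ccsaxepi{}$\Rightarrow$\Csaxepi, and this is where closedness is genuinely used. Fixing an object $X$, I would transport the problem along the adjunction $\Tens\_ X\Adj\Limpl X\_$, composing with the symmetry $\Sym_{\Into,X}$ in order to put the tensored factor on the side handled by the adjunction. For a target $Z$, the natural hom-set bijection $\cL(\Tens\Into X,Z)\cong\cL(\Into,\Limpl XZ)$, by its naturality in the first argument applied to $\Win i\in\cL(\Sone,\Into)$, turns the post-composition operation $h\mapsto h\Compl\Tensp{\Win i}X$ into the operation $\Curlin h\mapsto\Curlin h\Compl\Win i$. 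Hence $\Tens{\Win0}X,\Tens{\Win1}X$ are jointly epic (tested against $Z$) precisely when $\Win0,\Win1$ are jointly epic tested against $\Limpl XZ$. As \Ccsaxepi{} asserts joint epicness of $\Win0,\Win1$ against \emph{every} object, it holds in particular against each $\Limpl XZ$, so $\Tens{\Win0}X,\Tens{\Win1}X$ are jointly epic for all $Z$, i.e.\ \Csaxepi{} holds.

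I do not expect a genuine obstacle here: the argument is a standard adjunction transposition. The only point requiring care is the bookkeeping with the symmetry and unit isomorphisms, so that the naturality square of the adjunction is applied in the correct variable; once post-composition with $\Tens{\Win i}X$ is recognized as the transpose of post-composition with $\Win i$, the equivalence is purely formal, and combining it with Theorem~\ref{th:can-summable} yields the stated characterization.
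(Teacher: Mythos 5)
Your proof is correct and follows exactly the route the paper intends: the paper states Theorem~\ref{th:can-sum-smcc} without proof as a reformulation of Theorem~\ref{th:can-summable}, and the missing content is precisely your observation that, under closedness, \Ccsaxepi{} is equivalent to \Csaxepi{} --- the easy direction by instantiating at $X=\Sone$, the substantial one by transposing along the adjunction $\Tens\_X\Adj\Limpl X\_$ (after a symmetry) so that joint epicness of $\Win0,\Win1$ tested against the objects $\Limpl XZ$ gives joint epicness of $\Tens{\Win0}X,\Tens{\Win1}X$. The only slip is terminological: the map $h\mapsto h\Compl\Tensp{\Win i}X$ is \emph{pre}-composition, not post-composition, but the argument itself is sound.
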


\begin{Example}
  The SMCC $\COH$ is canonically summable, actually the summability
  structure we have considered on this category is exactly its
  canonical summability structure. Let us check the three conditions.
  
  The coherence space $\Into=\With\Sone\Sone$ is given by
  $\Web\Into=\Eset{0,1}$ with $\Scoh\Into 01$. Then
  $\Win i=\{(\Sonelem,i)\}$ and
  $\Wdiag=\Eset{(\Sonelem,0),(\Sonelem,1)}$. If
  $s\in\COH(\Into,F)$ then
  $(i,b)\in s\Equiv(\Sonelem,b)\in s\Compl\Win i$ for
  $i=0,1$ and hence $\Win 0,\Win 1$ are jointly epic so
  $\COH$ satisfies \Ccsaxepi.

  The functor $\Scfun$ defined by $\Scfun E=\Limplp\Into E$ (and
  similarly for morphisms) coincides exactly with the functor $\Sfun$
  described in Example~\ref{ex:Sfun-coh-def}.  Therefore the
  associated summability is the one described in
  Example~\ref{ex:coh-summability-char}.

  Let $s_i\in\COH(\Into,E)$ for $i=0,1$. Let
  $t_i=s_i\Compl\Wdiag=\{(\Sonelem,a)\in\Web{\Limpl \Sone
    E}\St((0,a)\in s_i\text{ or }(1,a)\in s_i\}$. Assume that $t_0$
  and $t_1$ are summable, that is $t_0\cap t_1=\emptyset$ and
  $t_0\cup t_1\in\COH(\Sone,E)$, we must prove that
  $s_0\cap s_1=\emptyset$ and $s_0\cup s_1\in\COH(\Into, E)$. Let
  $(j_i,a_i)\in s_i$ for $i=0,1$. We have $(\Sonelem,a_i)\in t_i$ and
  hence $a_0\not=a_1$ from which it follows that
  $(j_0,a_0)\not=(j_1,a_1)$. Since $\Coh\Into{j_0}{j_1}$ and
  $(j_0,a_0),(j_1,a_1)\in s_0\cup s_1\in\COH(\Into,E)$, we have
  $\Coh{E}{a_0}{a_1}$. Hence $s_0$ and $s_1$ are summable.
  %
  % We define
  % $\Scflip=\Eset{((i,j),(j,i))\St
  %   i,j\in\Eset{0,1}}\in\COH(\Tens\Into\Into,\Tens\Into\Into)$. Let us
  % check the diagram concerning $\Wdiag$: %
  % for $i,j,k\in\Eset{0,1}$ we have
  % $((\Sonelem,i),(j,k))\in\Scflip\Compl\Tensp{\Wdiag}{\Into}\Equiv
  % i=j$ and %
  % % $((\Sonelem,i),(j,k))\in\Scflip\Compl\Tensp{\Into}{\Wdiag}\Compl\Sym
  % % \Equiv((i,\Sonelem),(j,k))
  % % \in\Scflip\Compl\Tensp{\Into}{\Wdiag}\Compl\Sym\Equiv i=k$.
  % $((\Sonelem,i),(j,k))\in\Tensp{\Into}{\Wdiag}\Compl\Sym
  % \Equiv((i,\Sonelem),(j,k)) \in\Compl\Tensp{\Into}{\Wdiag}\Equiv
  % i=j$.
\end{Example}

\subsection{Differentiation in a canonically summable category}
Let $\cL$ be a resource category (see the beginning of
Section~\ref{sec:diff-struct-res-cat}) which is canonically summable.
Doubtlessly the following lemma is a piece of categorical folklore, it
relies only on the adjunction $\Tinto\Adj\Scfun$ and on the
functoriality of $\Excl\_$.
Let $\Ftunit_X\in\cL(X,\Scfun\Tinto X)$ and
$\Ftcounit_X\in\cL(\Tinto\Scfun X,X)$ be the unit and counit of this
adjunction. Let $\phi_X:\cL(\Excl{\Scfun X},\Scfun{\Excl X})$ be a natural
transformation, then we define a natural transformation
$\Stofdiff\phi_X\in\Tinto\Excl X\to\Excl{\Tinto X}$ as the following
composition of morphisms
\[
  \begin{tikzcd}
    \Tinto{\Excl X}\ar[r,"\Tinto\Excl{\Ftunit_X}"]
    &\Tinto\Excl{\Scfun\Tinto X}\ar[r,"\Tinto\phi_{\Tinto X}"]
    &\Tinto\Scfun\Excl{\Tinto X}\ar[r,"\Ftcounit_{\Excl{\Tinto X}}"]
    &\Excl{\Tinto X}\,.
  \end{tikzcd}
\]
Conversely given a natural transformation
$\psi_X\in\cL(\Tinto\Excl X,\Excl{\Tinto X})$ we define a natural
transformation $\Diffofst\psi_X\in\cL(\Excl{\Scfun X},\Scfun{\Excl X})$
as the following composition of morphisms
\[
  \begin{tikzcd}
    \Excl{\Scfun X}\ar[r,"\Ftunit_{\Excl{\Scfun X}}"]
    &\Scfun\Tinto\Excl{\Scfun X}\ar[r,"\Scfun\psi_{\Scfun X}"]
    &\Scfun\Excl{\Tinto\Scfun X}\ar[r,"\Scfun\Excl{\Ftcounit_X}"]
    &\Scfun\Excl X\,.
  \end{tikzcd}
\]

\begin{lemma}
  With the notations above, $\Diffofst{\Stofdiff\phi}=\phi$ and
  $\Stofdiff{\Diffofst\psi}=\psi$.
\end{lemma}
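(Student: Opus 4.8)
The plan is to recognize that $\Stofdiff{(\cdot)}$ and $\Diffofst{(\cdot)}$ are the two halves of the \emph{mate correspondence} attached to the adjunction $\Tinto\Adj\Scfun$, with the endofunctor $\Excl\_$ whiskered on both sides. Indeed a natural transformation $\phi_X\in\cL(\Excl{\Scfun X},\Scfun{\Excl X})$ (that is, $\Excl{\_}\Compl\Scfun\Rightarrow\Scfun\Compl\Excl{\_}$) and one $\psi_X\in\cL(\Tinto{\Excl X},\Excl{\Tinto X})$ (that is, $\Tinto\Compl\Excl{\_}\Rightarrow\Excl{\_}\Compl\Tinto$) are conjugate, and the two formulas in the statement are precisely the standard transposition formulas. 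The only ingredients I expect to need are the naturality of $\Ftunit$ and of $\Ftcounit$, the naturality of $\phi$ (resp.\ $\psi$), and the two triangle identities $\Scfun{\Ftcounit_X}\Compl\Ftunit_{\Scfun X}=\Id_{\Scfun X}$ and $\Ftcounit_{\Tinto X}\Compl\Tinto{\Ftunit_X}=\Id_{\Tinto X}$ of the adjunction; no property of $\Excl\_$ as a comonad is involved, only its functoriality.

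I would establish $\Diffofst{\Stofdiff\phi}=\phi$ by expanding both definitions and letting the composite telescope. Substituting $(\Stofdiff\phi)_{\Scfun X}$ into the definition of $\Diffofst{(\cdot)}$ and distributing $\Scfun$ over the composite, the computation I have in mind is
\begin{align*}
  \Diffofst{\Stofdiff\phi}_X
  &=\Scfun\Excl{\Ftcounit_X}\Compl\Scfun{\Ftcounit_{\Excl{\Tinto\Scfun X}}}\Compl\Scfun\Tinto\phi_{\Tinto\Scfun X}\Compl\Scfun\Tinto\Excl{\Ftunit_{\Scfun X}}\Compl\Ftunit_{\Excl{\Scfun X}}\\
  &=\Scfun\Excl{\Ftcounit_X}\Compl\Scfun{\Ftcounit_{\Excl{\Tinto\Scfun X}}}\Compl\Ftunit_{\Scfun\Excl{\Tinto\Scfun X}}\Compl\phi_{\Tinto\Scfun X}\Compl\Excl{\Ftunit_{\Scfun X}}\\
  &=\Scfun\Excl{\Ftcounit_X}\Compl\phi_{\Tinto\Scfun X}\Compl\Excl{\Ftunit_{\Scfun X}}\\
  &=\phi_X\Compl\Excl{\Scfun{\Ftcounit_X}}\Compl\Excl{\Ftunit_{\Scfun X}}=\phi_X\,.
\end{align*}
The first equality is the expansion; the second applies naturality of $\Ftunit$ twice to slide the two unit components and the inner $\phi$ past one another; the third uses the triangle identity $\Scfun{\Ftcounit_{Y}}\Compl\Ftunit_{\Scfun Y}=\Id_{\Scfun Y}$ at $Y=\Excl{\Tinto\Scfun X}$ to cancel the middle pair; the fourth uses naturality of $\phi$ at $\Ftcounit_X\in\cL(\Tinto\Scfun X,X)$, and the final cancellation is the same triangle identity, now applied inside $\Excl\_$.

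The identity $\Stofdiff{\Diffofst\psi}=\psi$ I would obtain by the symmetric argument: the roles of $\Ftunit$ and $\Ftcounit$ are exchanged, one uses naturality of $\Ftcounit$ and of $\psi$, and the cancellations are governed by the other triangle identity $\Ftcounit_{\Tinto Y}\Compl\Tinto{\Ftunit_Y}=\Id_{\Tinto Y}$. I do not anticipate any genuine difficulty here; the whole content is the triangle identities of $\Tinto\Adj\Scfun$. The only point that requires care—and hence the step I expect to be most error‑prone—is the bookkeeping of the whiskering indices in the second line, i.e.\ choosing at each stage the object at which to invoke naturality of $\Ftunit$ so that a triangle‑identity cancellation is actually exposed rather than merely shuffling units around.
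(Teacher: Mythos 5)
Your proposal is correct and follows exactly the route the paper intends: its proof is stated only as ``simple computation using the basic properties of adjunctions and the naturality of the various morphisms involved,'' and your computation (two naturality slides for $\Ftunit$, the triangle identity $\Scfun{\Ftcounit}\Compl\Ftunit_{\Scfun}=\Id$, naturality of $\phi$ at $\Ftcounit_X$, then the same triangle identity under $\Excl\_$, with the symmetric argument for $\psi$) is precisely that computation carried out in full. The mate-correspondence framing is apt, and each step checks out.
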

\begin{proof}
  Simple computation using the basic properties of adjunctions and the
  naturality of the various morphisms involved.
\end{proof}

\begin{lemma}\label{lemma:nat-Into-strength}
  Let $\Sdiffst_X\in\cL(\Tinto\Excl X,\Excl{\Tinto X})$
  be a natural transformation. The associated natural transformation
  $\Diffofst{\Sdiffst}_X\in\cL(\Excl{\Scfun X},\Scfun{\Excl X})$ satisfies
  \Daxchain{} iff the two following diagrams commute
  \[
    \begin{tikzcd}
      \Tinto{\Excl X}\ar[r,"\Sdiffst_X"]
      \ar[dr,swap,"\Tinto{\Der X}"]
      &\Excl{\Tinto X}\ar[d,"\Der{\Tinto X}"]\\
      &\Tinto X
    \end{tikzcd}
    \quad\quad
    \begin{tikzcd}
      \Tinto{\Excl X}\ar[rr,"\Sdiffst_X"]
      \ar[d,swap,"\Tinto{\Digg X}"]
      &&\Excl{\Tinto X}\ar[d,"\Digg{\Tinto X}"]\\
      \Tinto{\Excll X}\ar[r,"\Sdiffst_{\Excl X}"]
      &\Excl{\Tinto{\Excl X}}\ar[r,"\Excl{\Sdiffst_X}"]
      &\Excll{\Tinto X}
    \end{tikzcd}
  \]
  in other words $\Sdiffst_X$ is a co-distributive law
  $\Tinto\Excl X\to\Excl{\Tinto X}$. These conditions will be
  called~\Daxcchain.
\end{lemma}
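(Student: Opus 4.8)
The plan is to argue entirely by mate calculus for the adjunction $\Tinto\Adj\Scfun$. The preceding lemma already provides that $\Sdiffst\mapsto\Diffofst{\Sdiffst}$ and its inverse $\psi\mapsto\Stofdiff{\psi}$ are mutually inverse bijections between natural transformations $\Tinto\Excl\_\to\Excl{\Tinto\_}$ and natural transformations $\Excl{\Scfun\_}\to\Scfun{\Excl\_}$. Since both \Daxchain{} and \Daxcchain{} are equations between natural transformations, it therefore suffices to show that transposing each of the two \Daxchain{} squares for $\Diffofst{\Sdiffst}$ across the adjunction yields exactly the corresponding \Daxcchain{} square for $\Sdiffst$, and conversely. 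Every such transposition will use only the naturality of the comonad structure maps $\Der{}$ and $\Digg{}$, the naturality of the unit $\Ftunit$ and the counit $\Ftcounit$, and the two triangle identities of the adjunction.

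For the counit square, assuming the first \Daxcchain{} diagram $\Der{\Tinto X}\Compl\Sdiffst_X=\Tinto\Der X$, I would unfold the definition $\Diffofst{\Sdiffst}_X=\Scfun\Excl{\Ftcounit_X}\Compl\Scfun\Sdiffst_{\Scfun X}\Compl\Ftunit_{\Excl{\Scfun X}}$ and compute
\begin{align*}
  \Scfun\Der X\Compl\Diffofst{\Sdiffst}_X
  &=\Scfun\Ftcounit_X\Compl\Scfun\Der{\Tinto\Scfun X}\Compl\Scfun\Sdiffst_{\Scfun X}\Compl\Ftunit_{\Excl{\Scfun X}}
  \quad\text{(naturality of $\Der{}$)}\\
  &=\Scfun\Ftcounit_X\Compl\Scfun\Tinto\Der{\Scfun X}\Compl\Ftunit_{\Excl{\Scfun X}}
  \quad\text{(first \Daxcchain{} at $\Scfun X$)}\\
  &=\Scfun\Ftcounit_X\Compl\Ftunit_{\Scfun X}\Compl\Der{\Scfun X}
  \quad\text{(naturality of $\Ftunit$)}\\
  &=\Der{\Scfun X}
  \quad\text{(triangle identity),}
\end{align*}
which is precisely the first \Daxchain{} square $\Scfun\Der X\Compl\Diffofst{\Sdiffst}_X=\Der{\Scfun X}$. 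The reverse implication is obtained symmetrically from the inverse construction $\Stofdiff{\psi}$, this time sliding $\Der{}$ past $\Ftcounit$ (naturality of $\Ftcounit$) and past the unit (naturality of $\Der{}$), and closing with the other triangle identity $\Ftcounit_{\Tinto X}\Compl\Tinto\Ftunit_X=\Id$.

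The comultiplication square is handled by the very same recipe, with $\Digg{}$ in place of $\Der{}$. Starting again from the unfolding of $\Diffofst{\Sdiffst}_X$, I would slide $\Digg{}$ across $\Ftcounit$ using naturality of $\Digg{}$, apply the second \Daxcchain{} diagram $\Digg{\Tinto X}\Compl\Sdiffst_X=\Excl{\Sdiffst_X}\Compl\Sdiffst_{\Excl X}\Compl\Tinto\Digg X$ (which is what introduces the two nested copies $\Sdiffst_{\Excl X}$ and $\Excl{\Sdiffst_X}$, together with the $\Excll{}$ produced by $\Digg{}$), and then reassemble the target morphism $\Diffofst{\Sdiffst}_{\Excl X}\Compl\Excl{\Diffofst{\Sdiffst}_X}\Compl\Digg{\Scfun X}$. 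This is the step I expect to be the main obstacle: the inner occurrence $\Sdiffst_{\Excl X}$ is separated from its partner by the unit and counit inserted by $\Diffofst{(\,\cdot\,)}$ at the intermediate stage, so collapsing these requires, beyond the triangle identities, the naturality of $\Sdiffst$ itself and the comonad coherence for $\Excl\_$; none of this introduces a new idea, but the bookkeeping of units and counits through the doubled exponential is delicate.

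Conceptually the statement is nothing but the fact that the mate of a distributive law is a distributive law: \Daxcchain{} says exactly that $\Sdiffst$ is a co-distributive law of the comonad $\Excl\_$ along $\Tinto$ --- equivalently, that $\Tinto$ lifts to the Eilenberg-Moore category $\Em\cL$ of $\Excl\_$ via $(A,a)\mapsto(\Tinto A,\Sdiffst_A\Compl\Tinto a)$ --- and transporting this law through $\Tinto\Adj\Scfun$ produces the matching law \Daxchain{} for $\Scfun$; this is an instance of the distributive-law framework of \cite{PowerWatanabe02} already alluded to in the remark above. I would use this reformulation mainly as a guide to the correct variances and as a sanity check on the direct computation, rather than as a substitute for it.
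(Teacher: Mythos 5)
Your proposal is correct and takes essentially the same route as the paper: the paper's proof is precisely this transposition computation across the adjunction $\Tinto\Adj\Scfun$, unfolding $\Diffofst{\Sdiffst}$, sliding the comonad maps past the unit and counit by naturality, invoking \Daxcchain{}, and closing with the triangle identities. The only difference is which square is worked in detail: you execute the $\Der{}$ square in full (correctly) and sketch the $\Digg{}$ square, whereas the paper carries out the $\Digg{}$ square in full---using exactly the ingredients you list (naturality of $\Digg{}$ and of $\Sdiffst$, unit/counit naturality and the triangle identities)---and declares the remaining computations similar.
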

% from appendix
%
\begin{proof}
  Consists of computations using naturality and adjunction
  properties. As an example, assume the second commutation and let us
  prove the second diagram of \Daxchain:
  \[
    \begin{tikzcd}
      \Excl{\Scfun X}\ar[rr,"\Diffofst{\Sdiffst_X}"]
      \ar[d,swap,"\Digg{\Scfun X}"]
      &&
      \Scfun{\Excl X}\ar[d,"\Scfun{\Digg X}"]\\
      \Excll{\Scfun X}\ar[r,"\Excl{\Diffofst{\Sdiffst_X}}"]
      &\Excl{\Scfun{\Excl X}}\ar[r,"\Diffofst{\Sdiffst_{\Excl X}}"]
      &\Scfun{\Excll X}
    \end{tikzcd}
  \]
  % \[
  %   \begin{tikzcd}
  %     \Excl{\Limplp\Into X}\ar[rr,"\Diffofst{\Sdiffst_X}"]
  %     \ar[d,swap,"\Digg{\Limpl\Into X}"]
  %     &&
  %     \Limpl\Into{\Excl X}\ar[d,"\Limpl\Into{\Digg X}"]\\
  %     \Excll{\Limplp\Into X}\ar[r,"\Excl{\Diffofst{\Sdiffst_X}}"]
  %     &\Excl{\Limplp\Into{\Excl X}}\ar[r,"\Diffofst{\Sdiffst_{\Excl X}}"]
  %     &\Limpl\Into{\Excll X}
  %   \end{tikzcd}
  % \]
  We have
  \begin{align*}
    (\Scfun\Digg X)\Compl\Diffofst{\Sdiffst_X}
    &=(\Scfun\Digg X)\Compl(\Scfun\Excl{\Ftcounit_X)}
      \Compl(\Scfun\Sdiffst_{\Scfun X})\Compl\Ftunit_{\Excl{\Scfun X}}\\
    &=(\Scfun\Excll{\Ftcounit_X})\Compl(\Scfun\Digg{\Tinto\Scfun X})
      \Compl(\Scfun\Sdiffst_{\Scfun X})\Compl\Ftunit_{\Excl{\Scfun X}}\\
    &=(\Scfun\Excll{\Ftcounit_X})\Compl(\Scfun\Excl{\Sdiffst_{\Scfun X}})
      \Compl(\Scfun\Sdiffst_{\Excl{\Scfun X}})
      \Compl(\Scfun\Tinto\Digg{\Scfun X})
      \Compl\Ftunit_{\Excl{\Scfun X}}\\
    &=(\Scfun\Excll{\Ftcounit_X})\Compl(\Scfun\Excl{\Sdiffst_{\Scfun X}})
      \Compl(\Scfun\Sdiffst_{\Excl{\Scfun X}})
      \Compl\Ftunit_{\Excll{\Scfun X}}
      \Compl\Digg{\Scfun X}\\
    &=(\Scfun\Excll{\Ftcounit_X})
      \Compl(\Scfun\Excl{\Sdiffst_{\Scfun X}})
      \Compl(\Scfun\Excl{\Ftcounit_{\Tinto\Excl{\Scfun X}}})
      \Compl(\Scfun\Excl{\Tinto\Ftunit_{\Excl{\Scfun X}}})
      \Compl(\Scfun\Sdiffst_{\Excl{\Scfun X}})
      \Compl\Ftunit_{\Excll{\Scfun X}}
      \Compl\Digg{\Scfun X}\\
    &=(\Scfun\Excll{\Ftcounit_X})
      \Compl(\Scfun\Excl{\Ftcounit_{\Excl{\Tinto\Scfun X}}})
      \Compl(\Scfun\Excl{\Tinto\Scfun\Sdiffst_{\Scfun X}})
      \Compl(\Scfun\Sdiffst_{\Scfun\Tinto\Excl{\Scfun X}})
      \Compl(\Scfun\Tinto\Excl{\Ftunit_{\Excl{\Scfun X}}})
      \Compl\Ftunit_{\Excll{\Scfun X}}
      \Compl\Digg{\Scfun X}\\
    &=(\Scfun\Excl{\Ftcounit_{\Excl{\Scfun X}}})
      \Compl(\Scfun\Excl{\Tinto\Scfun\Excl{\Ftcounit_X}})
      \Compl(\Scfun\Excl{\Tinto\Scfun\Sdiffst_{\Scfun X}})
      \Compl(\Scfun\Sdiffst_{\Scfun\Tinto\Excl{\Scfun X}})
      \Compl\Ftunit_{\Excl{\Scfun\Tinto\Excl{\Scfun X}}}
      \Compl\Excl{\Ftunit_{\Excl X}}\Compl\Digg{\Scfun X}\\
    &=(\Scfun\Excl{\Ftcounit_{\Excl{\Scfun X}}})
      \Compl(\Scfun\Excl{\Tinto\Scfun\Excl{\Ftcounit_X}})
      \Compl(\Scfun\Sdiffst_{\Scfun\Excl{\Tinto\Scfun X}})
      \Compl(\Scfun\Tinto\Excl{\Scfun\Sdiffst_{\Scfun X}})
      \Compl\Ftunit_{\Excl{\Scfun\Tinto\Excl{\Scfun X}}}
      \Compl\Excl{\Ftunit_{\Excl{\Scfun X}}}
      \Compl\Digg{\Scfun X}\\
    &=(\Scfun\Excl{\Ftcounit_{\Excl{\Scfun X}}})
      \Compl(\Scfun\Excl{\Tinto\Scfun\Excl{\Ftcounit_X}})
      \Compl(\Scfun\Sdiffst_{\Scfun\Excl{\Tinto\Scfun X}})
      \Compl(\Scfun\Tinto\Excl{\Scfun\Sdiffst_{\Scfun X}})
      \Compl\Ftunit_{\Excl{\Scfun\Tinto\Excl{\Scfun X}}}
      \Compl\Excl{\Ftunit_{\Excl{\Scfun X}}}
      \Compl\Digg{\Scfun X}\\
    &=(\Scfun\Excl{\Ftcounit_{\Excl{\Scfun X}}})
      \Compl(\Scfun\Sdiffst_{\Scfun\Excl X})
      \Compl(\Scfun\Tinto\Excl{\Scfun\Excl{\Ftcounit_X}})
      \Compl\Ftunit_{\Excl{\Scfun\Excl{\Tinto\Scfun X}}}
      \Compl(\Excl{}\Scfun\Sdiffst_{\Scfun X})
      \Compl\Excl{\Ftunit_{\Excl{\Scfun X}}}
      \Compl\Digg{\Scfun X}\\
    &=(\Scfun\Excl{\Ftcounit_{\Excl{\Scfun X}}})
      \Compl(\Scfun\Sdiffst_{\Scfun\Excl X})
      \Compl\Ftunit_{\Excl{\Scfun\Excl X}}
      \Compl(\Excl{\Scfun\Excl{\Ftcounit_X}})
      \Compl(\Excl{}\Scfun\Sdiffst_{\Scfun X})
      \Compl\Excl{\Ftunit_{\Excl{\Scfun X}}}
      \Compl\Digg{\Scfun X}\\
    &=\Diffofst\Sdiffst_{\Excl X}
      \Compl\Excl{\Diffofst\Sdiffst_X}
      \Compl\Digg{\Scfun X}
\end{align*}
  % \begin{align*}
  %   \Evlin\Compl\Tensp{\Limplp{\Into}{\Digg X}}{\Into}
  %   \Compl\Tensp{\Diffofst{\Sdiffst_X}}{\Into}
  %   &=\Digg X\Compl\Evlin\Compl\Tensp{\Diffofst{\Sdiffst_X}}{\Into}\\
  %   &=\Digg X\Compl\Excl\Evlin\Compl\Sdiffst_{\Limpl\Into X}\text{\quad by def.~of }\Diffofst{\Sdiffst_X}\\
  %   &=\Excll\Evlin\Compl\Digg{\Tens{\Limplp\Into X}{\Into}}
  %     \Compl\Sdiffst_{\Limpl\Into X}\\
  %   &=\Excll\Evlin\Compl\Excl{\Sdiffst_{\Limpl\Into X}}\Compl\Sdiffst_{\Excl{\Limplp\Into X}}\Compl\Tensp{\Digg{\Limpl\Into X}}\Into\\
  %   &=\Excl\Evlin\Compl\Excl{\Tensp{\Diffofst{\Sdiffst_X}}{\Into}}\Compl\Sdiffst_{\Excl{\Limplp\Into X}}\Compl\Tensp{\Digg{\Limpl\Into X}}\Into
  %     \text{\quad by def.~of }\Diffofst{\Sdiffst_X}\\
  %   &=\Excl\Evlin\Compl\Sdiffst_{\Limpl\Into{\Excl X}}\Compl\Tensp{\Excl{\Diffofst{\Sdiffst_X}}}{\Into}\Compl\Tensp{\Digg{\Limpl\Into X}}\Into
  %     \text{\quad by nat.~of }\Sdiffst\\
  %   &=\Evlin\Compl\Tensp{\Diffofst{\Sdiffst_{\Excl X}}}{\Into}\Compl\Tensp{\Excl{\Diffofst{\Sdiffst_X}}}{\Into}\Compl\Tensp{\Digg{\Limpl\Into X}}\Into
  %     \text{\quad by def.~of }\Diffofst{\Sdiffst_X}
  % \end{align*}
  % whence the announced commutation.
  The other computations are similar.
\end{proof}

% We know from Theorem~\ref{th:Sfun-mont} that $\Scfun$ is equipped with
% exactly one lax symmetric monoidal monoidal structure $(\Sin0,\Smont)$
% such that $\Sproj0\Compl\Smont=\Tens{\Sproj0}{\Sproj0}$ and
% $\Sproj1\Compl\Smont=\Tens{\Sproj0}{\Sproj1}+\Tens{\Sproj1}{\Sproj0}$. An
% easy verification shows that $\Smont$ is induced by $\Scmont$, more
% precisely, the morphism
% $\Smont_{X_0,X_1}\in\cL(\Tens{\Scfun X_0}{\Scfun
%   X_1},\Scfun(\Tens{X_0}{X_1}))$ is the curry transpose of the
% following morphism
% \[
%   \begin{tikzcd}
%     \Limplp\Into{X_0}\ITens\Limplp\Into{X_1}
%     \ITens\Into\ar[d,"\Id\ITens\Scmont"]\\[-0.4em]
%     \Limplp\Into{X_0}\ITens\Limplp\Into{X_1}\ITens\Into\ITens\Into
%     \ar[d,"\sigma"]\\[-0.4em]
%     \Limplp\Into{X_0}\ITens\Into\ITens\Limplp\Into{X_1}\ITens\Into
%     \ar[d,"\Tens\Evlin\Evlin"]\\[-0.4em]
%     \Tens{X_0}{X_1}
%   \end{tikzcd}
% \]
% where $\sigma$ is an iso uniquely defined using the SMC structure of $\cL$.

Let $\Sdiffst_X\in\cL(\Tens{\Excl X}{\Into},\Excl{\Tensp X\Into})$
satisfying~\Daxcchain. We introduce additional conditions. We keep
implicit some of the monoidal isos associated with $\ITens$ to
increase readability.

\begin{Axicond}{\Daxclocal}
  \[
    \begin{tikzcd}
      \Tens{\Excl X}{\Into}\ar[rr,"\Sdiffst_X"]
      &&\Excl{\Tensp X\Into}\\
      \Tens{\Excl X}\Sone
      \ar[u,"\Tens{\Excl X}{\Win0}"]
      \ar[r,"\Rightu_{\Excl X}"]
      &
      \Excl X\ar[r,"\Excl{\Inv{\Rightu_X}}"]
      &
      \Excl{\Tensp X\Sone}\ar[u,swap,"\Excl{\Tensp X{\Win0}}"]
    \end{tikzcd}
  \]
\end{Axicond}

\begin{Axicond}{\Daxclin}
  \[
    \begin{tikzcd}
      \Tens{\Excl X}\Into\ar[rr,"\Sdiffst_X"]
      \ar[d,swap,"\Tens{\Excl X}{\Proj0}"]
      &&
      \Excl{\Tensp X\Into}\ar[d,"\Excl{\Tensp X{\Proj0}}"]\\
      \Tens{\Excl X}\Sone
      \ar[r,"\Rightu_{\Excl X}"]
      &\Excl X\ar[r,"\Excl{\Inv{\Rightu_X}}"]
      &\Excl{\Tensp X\Sone}
    \end{tikzcd}
    \Treesep
    \begin{tikzcd}
      \Excl X\ITens\Into\ar[r,"\Tens{\Excl X}{\Scmont}"]
      \ar[d,swap,"\Sdiffst_X"]
      &
      \Excl X\ITens\Into\ITens\Into\ar[r,"\Tens{\Sdiffst_X}\Into"]
      &
      \Excl{\Tensp X\Into}\ITens\Into\ar[d,"\Sdiffst_{\Tens X\Into}"]\\
      \Excl{\Tensp X\Into}\ar[rr,"\Exclp{\Tens X{\Scmont}}"]
      &&
      \Exclp{X\ITens\Into\ITens\Into}
    \end{tikzcd}
  \]
\end{Axicond}

\begin{Axicond}{\Daxcwith}
  \footnotesize{
    \[
    \begin{tikzcd}
      \Tens\Sone\Into\ar[rr,"\Tens\Sone{\Proj 0}"]
      \ar[d,swap,"\Tens{\Seelyz}\Into"]
      &[-0.8em]&[-0.8em]
      \Sone\ar[d,"\Seelyz"]\\
      \Tens{\Excl\Top}{\Into}\ar[r,"\Sdiffst_\Top"]
      &
      \Exclp{\Tens\Top\Into}\ar[r,"\Excl 0"]
      &\Excl\Top
    \end{tikzcd}
    \Treesep
    \begin{tikzcd}
      \Excl{X_0}\ITens\Excl{X_1}\ITens\Into\ar[r,"\Tens\Id{\Scmont}"]
      \ar[d,swap,"\Tens\Seelyt\Into"]
      &[1.4em]
      \Excl{X_0}\ITens\Excl{X_1}\ITens\Into\ITens\Into
      \ar[r,"\Tens{\Sdiffst_{X_0}}{\Sdiffst_{X_1}}"]
      &[2.8em]
      \Exclp{\Tens{X_0}\Into}\ITens\Exclp{\Tens{X_1}\Into}
      \ar[d,"\Seelyt"]
      \\
      \Exclp{\With{X_0}{X_1}}\ITens\Into\ar[r,"\Sdiffst_{\With{X_0}{X_1}}"]
      &\Exclp{\Tens{\Withp{X_0}{X_1}}{\Into}}
      \ar[r,"\Excl{\Tuple{\Tens{\Proj0}{\Into},\Tens{\Proj1}{\Into}}}"]
      &\Exclp{\With{\Tensp{X_0}\Into}{\Tensp{X_0}\Into}}
    \end{tikzcd}
  \]}
\end{Axicond}

\begin{Axicond}{\Daxcschwarz}
  \[
    \begin{tikzcd}
      \Excl X\ITens\Into\ITens\Into\ar[r,"\Sdiffst_X\ITens\Into"]
      \ar[d,swap,"\Tens{\Excl X}{\Scflip}"]
      &\Exclp{\Tens X\Into}\ITens\Into\ar[r,"\Sdiffst_{\Tens X\Into}"]
      &\Exclp{X\ITens\Into\ITens\Into}
      \ar[d,"\Exclp{\Tens X\Scflip}"]\\
      \Excl X\ITens\Into\ITens\Into\ar[r,"\Sdiffst_X\ITens\Into"]
      &\Exclp{\Tens X\Into}\ITens\Into\ar[r,"\Sdiffst_{\Tens X\Into}"]
      &\Exclp{X\ITens\Into\ITens\Into}
    \end{tikzcd}
  \]
\end{Axicond}

\begin{theorem} %
  \label{th:sdiffst-sdiff}
  Let $\Sdiffst_X\in\cL(\Tens{\Excl X}{\Into},\Excl{\Tensp X\Into})$
  be a natural transformation. The two following conditions are equivalent.
  \begin{itemize}
  \item $\Sdiffst$ satisfies \Daxcchain, \Daxclocal, \Daxclin{},
    \Daxcwith{} and \Daxcschwarz.
  \item $\Diffofst\Sdiffst$ is a differentiation in $(\cL,\Scfun)$ (in
    the sense of Definition~\ref{def:Sfun-diff}).
  \end{itemize}
\end{theorem}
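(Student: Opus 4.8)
The plan is to establish the equivalence one axiom at a time, using the adjunction $\Tinto\Adj\Scfun$ together with the mutually inverse transforms $\Diffofst$ and $\Stofdiff$ of the lemma just above, which set up a bijection between natural transformations $\Tinto\Excl X\to\Excl{\Tinto X}$ and natural transformations $\Excl{\Scfun X}\to\Scfun{\Excl X}$. Since $\Sdiffst$ and $\Diffofst\Sdiffst$ determine each other, the proof reduces, for each of the five conditions, to a diagram chase that translates the condition imposed on $\Diffofst\Sdiffst$ into the corresponding condition on $\Sdiffst$, by inserting the triangle identities and the naturality of $\Ftunit$ and $\Ftcounit$ --- exactly in the style already carried out for the chain rule in Lemma~\ref{lemma:nat-Into-strength}. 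Concretely it suffices to match the conditions pairwise: \Daxcchain{} with \Daxchain, \Daxclocal{} with \Daxlocal, \Daxclin{} with \Daxlin, \Daxcwith{} with \Daxwith, and \Daxcschwarz{} with \Daxschwarz.

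The pair \Daxcchain{}/\Daxchain{} is already settled by Lemma~\ref{lemma:nat-Into-strength}, so nothing is left to prove there. For the local and linearity conditions the crux is that the structural maps on the two sides are adjoint transposes of one another. By Lemma~\ref{lemma:into-point-tnat} the projection $\Sproj0=\Scfunnt{\Win0}$ corresponds to precomposition with $\Tens{\Excl X}{\Win0}$, so the triangle $\Sproj0\Compl\Sdiff_X=\Excl{\Sproj0}$ of \Daxlocal{} transposes to the square of \Daxclocal. Likewise, as recalled in Section~\ref{sec:strength-can}, the monad unit $\Sin0$ and multiplication $\Sfunadd$ of $\Scfun$ are the transposes of the comonad counit $\Proj0$ and comultiplication $\Scmont$ of $\Tinto$; transposing the two diagrams of \Daxlin{} (the $\Sin0$ triangle and the $\Sfunadd$ square) therefore produces precisely the two diagrams of \Daxclin{} (the $\Proj0$ square and the $\Scmont$ square).

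The pair \Daxcschwarz{}/\Daxschwarz{} is handled by transposing the flip: the involution $\Sflip$ on $\Scfun^2$ was defined, in the proof of Theorem~\ref{th:can-summable} (case \Saxass), as a Curry transpose built from the symmetry $\Scflip=\Sym_{\Into,\Into}$, so that $\Sflip$ corresponds to $\Scflip$ under the adjunction; the symmetry of the second derivative then passes across the transposition directly.

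The hard part will be the pair \Daxcwith{}/\Daxwith, where the Seely isomorphisms $\Seelyz,\Seelyt$ and the lax monoidal structure $\Smont$ all intervene. Here one must carry the entire diagram through the transposition while controlling how $\Seelyz$ and $\Seelyt$ interact with $\Scfun$, with $\Tinto$, and with the comultiplication $\Scmont$. The essential ingredient is the compatibility, recalled in Section~\ref{sec:strength-can}, between $\Smont$ and $\Scmont$ --- namely that $\Smont$ arises by transposing the morphism built from $\Scmont$ --- combined with the naturality of the Seely isomorphisms. Together these let one rewrite the contraction (product) part of \Daxwith{} as the $\Scmont$ diagram of \Daxcwith{}, and the weakening (terminal) part of \Daxwith{} as the $\Proj0$/$\Seelyz$ diagram of \Daxcwith. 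Once this bookkeeping is discharged the two formulations collapse to the same equation after transposition, and the theorem follows.
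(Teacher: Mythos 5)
Your proposal is correct and takes essentially the same route as the paper: the paper's own proof is just the one-line observation that each of \Daxcchain, \Daxclocal, \Daxclin, \Daxcwith, \Daxcschwarz{} on $\Sdiffst$ corresponds pairwise to \Daxchain, \Daxlocal, \Daxlin, \Daxwith, \Daxschwarz{} on $\Diffofst\Sdiffst$ through the adjunction $\Tinto\Adj\Scfun$, with Lemma~\ref{lemma:nat-Into-strength} as the worked example. Your identification of the transposed structural maps ($\Win0$ for $\Sproj0$, the $\Proj0$/$\Scmont$ comonoid for the $\Sin0$/$\Sfunadd$ monad, $\Scflip$ for $\Sflip$, and $\Scmont$ for $\Smont$) is exactly the bookkeeping the paper leaves implicit.
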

\begin{proof}
  Simple categorical computations: there is a simple direct
  correspondence between the conditions %
  \Daxcchain, \Daxclocal, \Daxclin{}, \Daxcwith{} and \Daxcschwarz{}
  on %
  $\Sdiffst$ and the conditions the conditions %
  \Daxchain, \Daxlocal, \Daxlin{}, \Daxwith{} and \Daxschwarz{} on %
  $\Diffofst\Sdiffst$ through the adjunction $\Tinto\Adj\Scfun$.
  % TODO ??????? Trivial et très chiant à écrire...
  %
  % As an example, let us prove the equivalence between the second
  % diagrams of the properties %
  % \Daxwith{} and \Daxcwith{}. Let us first assume that the second
  % diagram of \Daxcwith{} holds for all $X_0,X_1$. We have
  % \begin{align*}
  %   \Diffofst\Sdiffst_{\With{X_0}{X_1}}
  %   &=(\Scfun\Excl{\Ftcounit_{\With{X_0}{X_1}})}
  %     \Compl(\Scfun\Sdiffst_{\Scfun{\Withp{X_0}{X_1}}})
  %     \Compl\Ftunit_{\Excl{\Scfun{\Withp{X_0}{X_1}}}}
  % \end{align*}
  %
\end{proof}

\begin{definition}\label{def:diff-can-sum-cat}
  A \emph{differential canonically summable resource category} is a
  canonically summable resource category $\cL$ equipped with a natural
  transformation
  $\Sdiffst_X\in\cL(\Tens{\Excl X}{\Into},\Excl{\Tensp X\Into})$
  satisfying \Daxcchain, \Daxclocal, \Daxclin{}, \Daxcwith{} and
  \Daxcschwarz. Then we set $\Sdiff=\Diffofst\Sdiffst$.
\end{definition}

We show now that this differential structure boils down to a much
simpler one.

\subsection{A $\oc$-coalgebra structure on $\Into$ induced by a
  canonical differential structure}
Let %
$\Sdiffst_X\in\cL(\Tens{\Excl X}{\Into},\Excl{\Tensp X\Into})$ be a
natural transformation which satisfies the conditions of
Definition~\ref{def:diff-can-sum-cat}.

\begin{lemma}\label{lemma:sdiffst-seelyt}
  Given objects $X_0,X_1$ of $\cL$, the following diagram commutes
  \[
    \begin{tikzcd}
      \Excl{X_0}\ITens\Excl{X_1}\ITens\Into
      \ar[rr,"\Tens{\Excl{X_0}}{\Sdiffst_{X_1}}"]
      \ar[d,swap,"\Tens{\Seelyt_{X_0,X_1}}{\Into}"]
      &[1em]&
      \Tens{\Excl{X_0}}{\Excl{\Tensp{X_1}{\Into}}}
      \ar[d,"\Seelyt_{X_0,\Tens{X_1}{\Into}}"]
      \\
      \Tens{\Exclp{\With{X_0}{X_1}}}{\Into}
      \ar[r,"\Sdiffst_{\With{X_0}{X_1}}"]
      &
      \Excl{\Tensp{\Withp{X_0}{X_1}}{\Into}}
      \ar[r,"\Excl{q_0}"]
      &
      \Exclp{\With{X_0}{\Tensp{X_1}{\Into}}}
    \end{tikzcd}
  \]
  where %
  $q_0=\Tuple{\Rightu_{X_0}\Tensp{\Proj0}{\Proj0},\Tens{\Proj1}{\Into}}
  \in\cL(\Tens{\Withp{X_0}{X_1}}{\Into},\With{X_0}{\Tensp{X_1}{\Into}})$.
\end{lemma}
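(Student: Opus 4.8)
The plan is to obtain this partial Leibniz rule from the full one, \Daxcwith, by forcing the $X_0$-derivative to degenerate. Write $s=\Rightu_{X_0}\Compl\Tensp{X_0}{\Proj0}\in\cL(\Tens{X_0}\Into,X_0)$. The first step is the purely equational observation that $q_0$ factors through the comparison map occurring in \Daxcwith{}, namely
\[
  q_0=\Withp{s}{\Id}\Compl\Tuple{\Tens{\Proj0}\Into,\Tens{\Proj1}\Into}\,,
\]
which one checks componentwise: on the first component $s\Compl\Tensp{\Proj0}\Into=\Rightu_{X_0}\Compl\Tensp{\Proj0}{\Proj0}$ by functoriality of $\ITens$, and on the second $\Id\Compl\Tensp{\Proj1}\Into=\Tens{\Proj1}\Into$.

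Next I would rewrite the lower path $\Excl{q_0}\Compl\Sdiffst_{\With{X_0}{X_1}}\Compl\Tens{\Seelyt_{X_0,X_1}}\Into$ of the square. Since $\Excl{q_0}=\Excl{\Withp s\Id}\Compl\Excl{\Tuple{\Tens{\Proj0}\Into,\Tens{\Proj1}\Into}}$, the second diagram of \Daxcwith{} turns the tail of this composite into $\Seelyt_{\Tens{X_0}\Into,\Tens{X_1}\Into}\Compl\Tensp{\Sdiffst_{X_0}}{\Sdiffst_{X_1}}\Compl\Tens\Id\Scmont$, so the lower path becomes
\[
  \Excl{\Withp s\Id}\Compl\Seelyt_{\Tens{X_0}\Into,\Tens{X_1}\Into}\Compl\Tensp{\Sdiffst_{X_0}}{\Sdiffst_{X_1}}\Compl\Tens\Id\Scmont\,.
\]
By naturality of $\Seelyt$ along $s$ in its first argument (and the identity in its second) this equals
\[
  \Seelyt_{X_0,\Tens{X_1}\Into}\Compl\Tensp{\Excl s\Compl\Sdiffst_{X_0}}{\Sdiffst_{X_1}}\Compl\Tens\Id\Scmont\,.
\]

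The third step computes the collapsed $X_0$-factor. Expanding $\Excl s=\Excl{\Rightu_{X_0}}\Compl\Excl{\Tensp{X_0}{\Proj0}}$ and applying the first diagram of \Daxclin{} gives
\[
  \Excl s\Compl\Sdiffst_{X_0}=\Rightu_{\Excl{X_0}}\Compl\Tensp{\Excl{X_0}}{\Proj0}\,,
\]
using $\Excl{\Rightu_{X_0}}\Compl\Excl{\Invp{\Rightu_{X_0}}}=\Id$. Thus the $X_0$-differentiation has degenerated into the counit $\Proj0\in\cL(\Into,\Sone)$ of $\Into$ applied to the direction, tensored trivially with $\Excl{X_0}$.

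The final step closes the computation with the comonoid structure of $\Into$ from Theorem~\ref{th:can-comonoid-Into}. What remains is a composite that duplicates the single direction $\Into$ through $\Scmont$, feeds the left copy into $\Proj0$ (inside the degenerate $s$-factor) and the right copy into $\Sdiffst_{X_1}$. By the counit law $\Leftu_\Into\Compl\Tensp{\Proj0}\Into\Compl\Scmont=\Id_\Into$ the direction is returned unchanged to $\Sdiffst_{X_1}$, while the spurious $\Sone$ produced by $\Proj0$ is absorbed into $\Excl{X_0}$; hence the lower path reduces to $\Seelyt_{X_0,\Tens{X_1}\Into}\Compl\Tensp{\Excl{X_0}}{\Sdiffst_{X_1}}$, which is the upper path, as required. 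The one delicate point — and the part I expect to cost the most care — is precisely this last reduction, where the $\Rightu$ hidden in $s$ absorbs the dead $\Sone$ on the $\Excl{X_0}$ side whereas the counit law phrases the absorption via $\Leftu$ on the $\Into$ side; checking that the implicit associativity and unit isomorphisms reconcile the two is a routine MacLane-coherence argument, carried out by tracking the single $\Into$ variable explicitly, exactly in the style in which the paper keeps these isos implicit.
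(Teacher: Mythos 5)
Your proof is correct and takes essentially the same route as the paper's: the identical factorization of $q_0$ through $\Tuple{\Tens{\Proj0}{\Into},\Tens{\Proj1}{\Into}}$, then \Daxcwith{}, naturality of $\Seelyt$, the first diagram of \Daxclin{} to collapse the $X_0$-factor to $\Rightu_{\Excl{X_0}}\Compl\Tensp{\Excl{X_0}}{\Proj0}$, and finally the coneutrality of $\Proj0$ for $\Scmont$. The only cosmetic difference is that the paper makes the implicit symmetry explicit as $\Sym_{2,3}$ throughout, which is precisely the coherence bookkeeping you flag as the delicate last step.
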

\begin{proof}
  Observe first that
  \(
    q_0=\Withp{\Rightu_{X_0}\Compl\Tensp{X_0}{\Proj0}}{\Tensp{X_1}{\Into}}
    \Compl
    \Tuple{\Tens{\Proj0}{\Into},{\Tens{\Proj0}{\Into}}}
  \).
  We have
  \begin{align*}
    \Excl{q_0}
    \Compl
    \Sdiffst_{\With{X_0}{X_1}}
    \Compl
    \Tensp{\Seelyt_{X_0,X_1}}{\Into}
    % &=\Excl{q_0}
    %   \Compl
    %   \Sdiffst_{\With{X_0}{X_1}}
    %   \Compl
    %   \Tensp{\Seelyt_{X_0,X_1}}{\Into}\\
    &=\Exclp{\With{\Rightu_{X_0}\Compl\Tensp{X_0}{\Proj0}}{\Tensp{X_1}{\Into}}}
      \Compl
      \Excl{\Tuple{\Tens{\Proj0}{\Into},{\Tens{\Proj0}{\Into}}}}
      \Compl
      \Sdiffst_{\With{X_0}{X_1}}
      \Compl
      \Tensp{\Seelyt_{X_0,X_1}}{\Into}\\
    &=\Exclp{\With{\Rightu_{X_0}\Compl\Tensp{X_0}{\Proj0}}{\Tensp{X_1}{\Into}}}
      \Compl{\Seelyt_{\Tens{X_0}{\Into},\Tens{X_1}{\Into}}}
      \Compl\Tensp{\Sdiffst_{X_0}}{\Sdiffst_{X_1}}
      \Compl\Sym_{2,3}
      \Compl(\Excl{X_0}\ITens\Excl{X_1}\ITens\Scmont)
  \end{align*}
  by \Daxcwith{}, and notice that %
  $\Sym_{2,3}\in\cL(\Excl{X_0}\ITens\Excl{X_0}\ITens\Into\ITens\Into,
  \Excl{X_0}\ITens\Into\ITens\Excl{X_1}\ITens\Into)$. %
  So we have
  \begin{align*}
    \Excl{q_0}
    \Compl
    \Sdiffst_{\With{X_0}{X_1}}
    \Compl
    \Tensp{\Seelyt_{X_0,X_1}}{\Into}
    &=\Seelyt_{X_0,\Tens{X_1}{\Into}}
      \Compl
      \Tensp{\Exclp{\Rightu_{X_0}\Compl\Tensp{X_0}{\Proj0}}}
      {\Excl{\Tensp{X_1}{\Into}}}
      \Compl
      \Compl\Tensp{\Sdiffst_{X_0}}{\Sdiffst_{X_1}}
      \Compl\Sym_{2,3}
      \Compl(\Excl{X_0}\ITens\Excl{X_1}\ITens\Scmont)\\
    &=\Seelyt_{X_0,\Tens{X_1}{\Into}}
      \Compl\Tensp{(\Rightu_{\Excl{X_0}}
      \Compl\Tensp{\Excl{X_0}}{\Proj0})}{\Sdiffst_{X_1}}
      \Compl\Sym_{2,3}
      \Compl(\Excl{X_0}\ITens\Excl{X_1}\ITens\Scmont)
    \text{\quad by \Daxclin{}}\\
    &=\Seelyt_{X_0,\Tens{X_1}{\Into}}
      \Compl\Rightu_{\Excl{X_0}\ITens\Excl{\Tensp{X_1}\Into}}
      \Compl(\Excl{X_0}\ITens\Sdiffst_{X_1}\ITens\Proj0)
      \Compl(\Excl{X_0}\ITens\Excl{X_1}\ITens\Scmont)\\
    &=\Seelyt_{X_0,\Tens{X_1}{\Into}}
      \Compl\Tensp{\Excl{X_0}}{\Sdiffst_{X_1}}
  \end{align*}
  where, in the last equation, we use the following commutation
  \[
    \begin{tikzcd}
      \Excl{X_0}\ITens\Excl{X_1}\ITens\Into
      \ar[r,"\Excl{X_0}\ITens\Excl{X_1}\ITens\Scmont"]
      \ar[rrd,swap,"\Tens{\Excl{X_0}}{\Sdiffst_{X_1}}"]
      &[2em]
      \Excl{X_0}\ITens\Excl{X_1}\ITens\Into\ITens\Into
      \ar[r,"\Excl{X_0}\ITens\Sdiffst_{X_1}\ITens\Proj0"]
      %\ar[d,"\Excl{X_0}\ITens\Excl{X_1}\ITens\Proj0"]
      &[3em]
      \Excl{X_0}\ITens\Excl{\Tensp{X_1}{\Into}}\ITens\Sone
      \ar[d,"\Rightu_{\Excl{X_0}\ITens\Excl{\Tensp{X_1}\Into}}"]
      \\
      &
      &
      \Tens{\Excl{X_0}}{\Excl{\Tensp{X_1}{\Into}}}
    \end{tikzcd}
  \]
  which results from the coneutrality $\Proj0$ for $\Scmont$.
\end{proof}

The next result will be technically useful in the sequel and has also
its own interest as it deals with differentiation with respect to a
tensor product, showing essentially that it boils down to
differentiation with respect to one of the components of the tensor
product.
\begin{theorem}\label{th:sdiffst-mon-tens}
  The following diagram commutes, for all objects $X_0$, $X_1$ of $\cL$.
  \[
    \begin{tikzcd}
      \Excl{X_0}\ITens\Excl{X_1}\ITens\Into
      \ar[r,"\Tens{\Excl{X_0}}{\Sdiffst_{X_1}}"]
      \ar[d,swap,"\Mont_{X_0,X_1}\ITens\Into"]
      &[1.6em]
      \Excl{X_0}\ITens\Excl{\Tensp{X_1}\Into}
      \ar[d,"\Mont_{X_0,\Tens{X_1}\Into}"]\\
      \Excl{\Tensp{X_0}{X_1}}\ITens\Into
      \ar[r,"\Sdiffst_{\Tens{X_0}{X_1}}"]
      &
      \Exclp{X_0\ITens X_1\ITens\Into}
    \end{tikzcd}
  \]
\end{theorem}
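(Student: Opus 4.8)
The plan is to unfold the lax monoidal transformation that appears on both legs of the square by means of its explicit definition
\[
  \Mont_{Y_0,Y_1}=\Excl{\Tensp{\Der{Y_0}}{\Der{Y_1}}}
  \Compl\Excl{\Invp{\Seelyt_{Y_0,Y_1}}}
  \Compl\Digg{\With{Y_0}{Y_1}}
  \Compl\Seelyt_{Y_0,Y_1}\,,
\]
taken at $(Y_0,Y_1)=(X_0,X_1)$ on the left-bottom leg and at $(Y_0,Y_1)=(X_0,\Tens{X_1}\Into)$ on the top-right leg, and then to transport $\Sdiffst$ through the four factors $\Seelyt$, $\Digg$, $\Excl{\Invp{\Seelyt}}$ and $\Excl{\Tensp{\Der{}}{\Der{}}}$ one at a time, using for each factor a compatibility already available.

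First I would rewrite the outermost occurrence $\Seelyt_{X_0,\Tens{X_1}\Into}\Compl(\Tens{\Excl{X_0}}{\Sdiffst_{X_1}})$ on the top leg by Lemma~\ref{lemma:sdiffst-seelyt}, turning it into $\Excl{q_0}\Compl\Sdiffst_{\With{X_0}{X_1}}\Compl(\Tens{\Seelyt_{X_0,X_1}}{\Into})$; this is exactly the step that moves differentiation from the single component $X_1$ to the product $\With{X_0}{X_1}$, at the price of the reindexing $\Excl{q_0}$. Next I would commute the resulting $\Sdiffst_{\With{X_0}{X_1}}$ past $\Digg$ by the second diagram of \Daxcchain, the co-distributive law $\Digg{\Tensp{\Withp{X_0}{X_1}}{\Into}}\Compl\Sdiffst_{\With{X_0}{X_1}}=\Excl{\Sdiffst_{\With{X_0}{X_1}}}\Compl\Sdiffst_{\Excl{\Withp{X_0}{X_1}}}\Compl(\Tens{\Digg{\With{X_0}{X_1}}}{\Into})$, after first using naturality of $\Digg$ against $\Excl{q_0}$. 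On the other leg I would dually use the naturality of $\Sdiffst$ to slide $\Sdiffst_{\Tens{X_0}{X_1}}$ to the right across the factors $\Excl{\Tensp{\Der{}}{\Der{}}}$ and $\Excl{\Invp{\Seelyt}}$, both of which are $\Excl{}$ of maps not touching the $\Into$ factor. After these moves both legs acquire the common right-hand part $\Sdiffst_{\Excl{\Withp{X_0}{X_1}}}\Compl(\Tens{\Digg{\With{X_0}{X_1}}}{\Into})\Compl(\Tens{\Seelyt_{X_0,X_1}}{\Into})$, so that by functoriality of $\Excl{}$ the whole statement reduces to a single equation between two morphisms $\Tens{\Excl{\Withp{X_0}{X_1}}}{\Into}\to X_0\ITens X_1\ITens\Into$ sitting inside $\Excl{}$.

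The hard part will be this last equation, which still carries a residual $\Sdiffst_{\With{X_0}{X_1}}$ on one side (produced by the $\Digg$-law) whereas the other side is purely structural. To eliminate it I would commute the two derelictions $\Der{X_0},\Der{\Tens{X_1}\Into}$ and the inverse Seely iso $\Invp{\Seelyt}$ inward past $\Excl{q_0}$, using naturality of $\Der{}$ and of $\Seelyt$, until a factor $\Der{\Tensp{\Withp{X_0}{X_1}}{\Into}}$ sits directly in front of $\Sdiffst_{\With{X_0}{X_1}}$; the first diagram of \Daxcchain then collapses $\Der{\Tensp{\Withp{X_0}{X_1}}{\Into}}\Compl\Sdiffst_{\With{X_0}{X_1}}=\Tens{\Der{\With{X_0}{X_1}}}{\Into}$, removing the differential entirely. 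What remains is a commutation involving only $\Der{}$, $\Seelyt$ and the explicit shape $q_0=\Tuple{\Rightu_{X_0}\Compl\Tensp{\Proj0}{\Proj0},\Tens{\Proj1}{\Into}}$, which I expect to follow from the standard coherence of the Seely isomorphisms, in the style of the computations in Lemma~\ref{lemma:seelyt-mont-commut}. The genuine delicacy lies in managing the asymmetry introduced by $q_0$, which treats the $X_0$ and $X_1$ components of the product differently, and in checking that the two copies of $\Sdiffst$ delivered by the co-distributive law recombine with the reindexings without leaving any discrepancy.
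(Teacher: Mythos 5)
Your reduction of the square to a single equation under $\Excl{}$ is correct and is, up to reorganization, the same computation as the paper's: the paper likewise unfolds $\Mont$, applies Lemma~\ref{lemma:sdiffst-seelyt}, slides $\Digg{}$ out by naturality, invokes the second diagram of \Daxcchain{}, and uses naturality of $\Sdiffst$; your extraction of the common suffix $\Sdiffst_{\Excl{\Withp{X_0}{X_1}}}\Compl\Tensp{\Digg{\With{X_0}{X_1}}}{\Into}\Compl\Tensp{\Seelyt_{X_0,X_1}}{\Into}$ is a tidy repackaging of those moves. The gap is in your endgame. You claim that naturality of $\Der{}$ and of $\Seelyt$ lets you push the derelictions inward past $\Excl{q_0}$ until a single factor $\Der{\Tensp{\Withp{X_0}{X_1}}{\Into}}$ sits directly in front of $\Sdiffst_{\With{X_0}{X_1}}$. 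This cannot happen: $q_0$ is a \emph{pairing}, $q_0=\Tuple{g_0,g_1}$ with $g_0=\Rightu_{X_0}\Compl\Tensp{\Proj0}{\Proj0}$ and $g_1=\Tens{\Proj1}{\Into}$, so that, writing $Z=\Tens{\Withp{X_0}{X_1}}{\Into}$,
\[
  \Invp{\Seelyt_{X_0,\Tens{X_1}{\Into}}}\Compl\Excl{q_0}
  =\Tensp{\Excl{g_0}}{\Excl{g_1}}\Compl\Contr{Z}
\]
and the residual morphism you must identify is
\[
  \Tensp{g_0}{g_1}\Compl\Tensp{\Der{Z}}{\Der{Z}}\Compl\Contr{Z}\Compl\Sdiffst_{\With{X_0}{X_1}}\,.
\]
There are \emph{two} derelictions here, separated from $\Sdiffst_{\With{X_0}{X_1}}$ by a contraction. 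The composite $\Tensp{\Der{Z}}{\Der{Z}}\Compl\Contr{Z}$ does not factor through a single dereliction (that would amount to having a diagonal $Z\to\Tens{Z}{Z}$ in $\cL$; concretely, in $\COH$ its domain elements are two-element multisets, not singletons), and commuting $\Contr{Z}$ past $\Sdiffst$ is not a naturality but a Leibniz-type law (a consequence of \Daxcwith{}) which reintroduces $\Scmont$ and two copies of $\Sdiffst$. So the differential does not disappear by the mechanism you describe, and the purely structural identity you expect to be left with is not the one you would actually face.

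The repair is short and is exactly what the paper does: apply Lemma~\ref{lemma:sdiffst-seelyt} a \emph{second} time (the paper does this ``under the functor $\Excl{}$''), reading it as
\[
  \Excl{q_0}\Compl\Sdiffst_{\With{X_0}{X_1}}
  =\Seelyt_{X_0,\Tens{X_1}{\Into}}
  \Compl\Tensp{\Excl{X_0}}{\Sdiffst_{X_1}}
  \Compl\Tensp{\Invp{\Seelyt_{X_0,X_1}}}{\Into}\,.
\]
Then $\Invp{\Seelyt_{X_0,\Tens{X_1}{\Into}}}$ cancels against $\Seelyt_{X_0,\Tens{X_1}{\Into}}$, and the first diagram of \Daxcchain{} collapses $\Der{\Tens{X_1}{\Into}}\Compl\Sdiffst_{X_1}=\Tens{\Der{X_1}}{\Into}$ --- the dereliction annihilates $\Sdiffst_{X_1}$, not $\Sdiffst_{\With{X_0}{X_1}}$ --- leaving $(\Der{X_0}\ITens\Der{X_1}\ITens\Into)\Compl\Tensp{\Invp{\Seelyt_{X_0,X_1}}}{\Into}$, which is precisely the other side of your residual equation. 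With this substitution your argument goes through and coincides with the paper's proof.
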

\begin{proof}
  We recall that %
  $\Mont_{X,Y}\in\cL(\Tens{\Excl X}{\Excl Y},\Excl{\Tensp XY})$ is
  defined as the following composition of morphisms
  \[
    \begin{tikzcd}
      \Tens{\Excl X}{\Excl Y}
      \ar[r,"\Seelyt_{X,Y}"]
      &
      \Excl{\Withp XY}
      \ar[r,"\Digg{\With XY}"]
      &[1em]
      \Excll{\Withp XY}
      \ar[r,"\Excl{\Invp{\Seelyt_{X,Y}}}"]
      &[1em]
      \Excl{\Tensp{\Excl X}{\Excl Y}}
      \ar[r,"\Excl{\Tensp{\Der X}{\Der Y}}"]
      &[2.6em]
      \Exclp{\Tens XY}
    \end{tikzcd}
  \]
  so that we have
  \begin{align*}
    \Mont_{X_0,\Tens{X_1}{\Into}}\Compl\Tensp{\Excl{X_0}}{\Sdiffst_{X_1}}
    =\Excl{\Tensp{\Der{X_0}}{\Der{\Tens{X_1}{\Into}}}}
    \Compl\Excl{\Invp{\Seelyt_{X_0,\Tens{X_1}\Into}}}
    \Compl\Digg{\With{X_0}{\Tensp{X_1}\Into}}
    \Compl\Seelyt_{X_0,\Tens{X_1}\Into}
    \Compl\Tensp{\Excl{X_0}}{\Sdiffst_{X_1}}
  \end{align*}
  We start rewriting the right hand expression. We introduce notations
  $f_0,f_1\dots$ for subexpressions. %
  By Lemma~\ref{lemma:sdiffst-seelyt} we have
  \begin{align*}
    f_0=\Seelyt_{X_0,\Tens{X_1}\Into}
    \Compl\Tensp{\Excl{X_0}}{\Sdiffst_{X_1}}
    =\Excl{q_0}
    \Compl\Sdiffst_{\With{X_0}{X_1}}
    \Compl\Tensp{\Seelyt_{X_0,X_1}}\Into
  \end{align*}
  where $q_0=\Tuple{\Rightu_{X_0}\Tensp{\Proj0}{\Proj0},\Tens{\Proj1}{\Into}}
  \in\cL(\Tens{\Withp{X_0}{X_1}}{\Into},\With{X_0}{\Tensp{X_1}{\Into}})$. %
  Then
  \begin{align*}
    f_1
    =\Digg{\With{X_0}{\Tensp{X_1}\Into}}\Compl f_0
    =\Excll{q_0}
    \Compl\Digg{\Tens{\Withp{X_0}{X_1}}\Into}
    \Compl\Sdiffst_{\With{X_0}{X_1}}
    \Compl\Tensp{\Seelyt_{X_0,X_1}}\Into\,.
  \end{align*}
  by naturality of $\Digg{}$. Next
  \begin{align*}
    f_2=\Digg{\Tens{\Withp{X_0}{X_1}}\Into}
    \Compl\Sdiffst_{\With{X_0}{X_1}}
    =\Excl{\Sdiffst_{\With{X_0}{X_1}}}
    \Compl\Sdiffst_{\Excl{\Withp{X_0}{X_1}}}
    \Compl\Tensp{\Digg{\With{X_0}{X_1}}}\Into
  \end{align*}
  by \Daxcchain. By Lemma~\ref{lemma:sdiffst-seelyt} again (applied under
  the functor $\Excl\_$) we have
  \begin{align*}
    f_3
    &=\Excl{\Invp{\Seelyt_{X_0,\With{X_1}\Into}}}
      \Compl f_1\\
    &=\Excl{\Invp{\Seelyt_{X_0,\With{X_1}\Into}}}
      \Compl\Excll{q_0}
      \Compl\Excl{\Sdiffst_{\With{X_0}{X_1}}}
      \Compl\Sdiffst_{\Excl{\Withp{X_0}{X_1}}}
      \Compl\Tensp{\Digg{\With{X_0}{X_1}}}\Into
      \Compl\Tensp{\Seelyt_{X_0,X_1}}\Into\\
    &=\Excl{\Tensp{\Excl{X_0}}{\Sdiffst_{X_1}}}
      \Compl\Excl{\Tensp{\Invp{\Seelyt_{X_0,X_1}}}\Into}
      \Compl\Sdiffst_{\Excl{\Withp{X_0}{X_1}}}
      \Compl\Tensp{\Digg{\With{X_0}{X_1}}}\Into
      \Compl\Tensp{\Seelyt_{X_0,X_1}}\Into\\
    &=\Excl{\Tensp{\Excl{X_0}}{\Sdiffst_{X_1}}}
      \Compl\Sdiffst_{\Tens{\Excl{X_0}}{\Excl{X_1}}}
      \Compl(\Excl{\Invp{\Seelyt_{X_0,X_1}}}\ITens\Into)
      \Compl\Tensp{\Digg{\With{X_0}{X_1}}}\Into
      \Compl\Tensp{\Seelyt_{X_0,X_1}}\Into\,.
  \end{align*}
  Finally we have
  \begin{align*}
    \Mont_{X_0,\Tens{X_1}{\Into}}
    &\Compl\Tensp{\Excl{X_0}}{\Sdiffst_{X_1}}
    =\Excl{\Tensp{\Der{X_0}}{\Der{\Tens{X_1}{\Into}}}}
    \Compl f_3\\
    &=\Excl{\Tensp{\Der{X_0}}{\Der{\Tens{X_1}{\Into}}}}
      \Compl\Excl{\Tensp{\Excl{X_0}}{\Sdiffst_{X_1}}}
      \Compl\Sdiffst_{\Tens{\Excl{X_0}}{\Excl{X_1}}}
      \Compl(\Excl{\Invp{\Seelyt_{X_0,X_1}}}\ITens\Into)
      \Compl\Tensp{\Digg{\With{X_0}{X_1}}}\Into
      \Compl\Tensp{\Seelyt_{X_0,X_1}}\Into\\
    &=\Excl{\Tensp{\Der{X_0}}{\Tens{\Der{X_1}}\Into}}
      % \Compl\Excl{\Tensp{\Excl{X_0}}{\Sdiffst_{X_1}}}
      \Compl\Sdiffst_{\Tens{\Excl{X_0}}{\Excl{X_1}}}
      \Compl(\Excl{\Invp{\Seelyt_{X_0,X_1}}}\ITens\Into)
      \Compl\Tensp{\Digg{\With{X_0}{X_1}}}\Into
      \Compl\Tensp{\Seelyt_{X_0,X_1}}\Into\\
    &\Textsep\text{by \Daxcchain}\\
    &=\Sdiffst_{\Tens{X_0}{X_1}}
      \Compl\Tensp{\Excl{\Tensp{\Der{X_0}}{\Der{X_1}}}}\Into
      % \Compl\Sdiffst_{\Tens{\Excl{X_0}}{\Excl{X_1}}}
      \Compl(\Excl{\Invp{\Seelyt_{X_0,X_1}}}\ITens\Into)
      \Compl\Tensp{\Digg{\With{X_0}{X_1}}}\Into
      \Compl\Tensp{\Seelyt_{X_0,X_1}}\Into\\
    &\Textsep\text{by naturality of }\Sdiffst\\
    &=\Sdiffst_{\Tens{X_0}{X_1}}
      \Compl\Tensp{\Mont_{X_0,X_1}}\Into
  \end{align*}
  by definition of $\Mont_{X_0,X_1}$.
\end{proof}
We define %
$\Sdiffca\in\cL(\Into,\Excl\Into)$ as the following
composition of morphisms
\[
  \begin{tikzcd}
    \Into\ar[r,"\Inv{\Leftu_\Into}"]
    &
    \Tens\Sone\Into\ar[r,"\Tens\Monz\Into"]
    &
    \Tens{\Excl\Sone}\Into\ar[r,"\Sdiffst_\Sone"]
    &
    \Excl{\Tensp\Sone\Into}\ar[r,"\Excl{\Leftu_\Into}"]
    &
    \Excl\Into\,.
  \end{tikzcd}
\]
\begin{lemma}\label{lemma:sdiffca-to-sdiffst}
  The following diagram commutes
  \[
    \begin{tikzcd}
      \Tens{\Excl X}\Into
      \ar[r,"\Tens{\Excl X}{\Sdiffca}"]
      \ar[rd,swap,"\Sdiffst_X"]
      &
      \Tens{\Excl X}{\Excl\Into}
      \ar[d,"\Mont_{X,\Into}"]
      \\
      &
      \Exclp{\Tens X\Into}
    \end{tikzcd}
  \]
\end{lemma}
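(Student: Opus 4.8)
The plan is to unfold the definition of $\Sdiffca$ and rewrite the composite $\Mont_{X,\Into}\Compl\Tensp{\Excl X}{\Sdiffca}$ step by step, using Theorem~\ref{th:sdiffst-mon-tens} as the central tool, until it reduces to $\Sdiffst_X$. Recall that $\Sdiffca=\Excl{\Leftu_\Into}\Compl\Sdiffst_\Sone\Compl\Tens\Monz\Into\Compl\Inv{\Leftu_\Into}$, so that $\Tensp{\Excl X}{\Sdiffca}$ is the composite $(\Excl X\ITens\Excl{\Leftu_\Into})\Compl(\Excl X\ITens\Sdiffst_\Sone)\Compl(\Excl X\ITens(\Tens\Monz\Into))\Compl(\Excl X\ITens\Inv{\Leftu_\Into})$ running from $\Tens{\Excl X}\Into$ to $\Tens{\Excl X}{\Excl\Into}$, with the associators between $\Tens{\Excl X}{\Tens{\Excl\Sone}\Into}$ and $\Tens{\Tens{\Excl X}{\Excl\Sone}}\Into$ left implicit as in the rest of the paper.

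First I would use the naturality of $\Mont$ in its second argument (with $\Leftu_\Into\in\cL(\Tens\Sone\Into,\Into)$) to slide the outermost factor past $\Mont$, giving $\Mont_{X,\Into}\Compl(\Excl X\ITens\Excl{\Leftu_\Into})=\Excl{\Tensp X{\Leftu_\Into}}\Compl\Mont_{X,\Tensp\Sone\Into}$. Then I would apply Theorem~\ref{th:sdiffst-mon-tens} with $X_0=X$ and $X_1=\Sone$ to the remaining prefix, turning $\Mont_{X,\Tensp\Sone\Into}\Compl(\Excl X\ITens\Sdiffst_\Sone)$ into $\Sdiffst_{\Tens X\Sone}\Compl(\Mont_{X,\Sone}\ITens\Into)$. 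At this point the morphism reads $\Excl{\Tensp X{\Leftu_\Into}}\Compl\Sdiffst_{\Tens X\Sone}\Compl(\Mont_{X,\Sone}\ITens\Into)\Compl(\Excl X\ITens(\Tens\Monz\Into))\Compl(\Excl X\ITens\Inv{\Leftu_\Into})$.

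The next key step is to recognise the subexpression $\Mont_{X,\Sone}\Compl\Tensp{\Excl X}{\Monz}$ and simplify it by the right-unit coherence law of the lax monoidal functor $\Excl\_$, namely $\Excl{\Rightu_X}\Compl\Mont_{X,\Sone}\Compl\Tensp{\Excl X}{\Monz}=\Rightu_{\Excl X}$, so that this piece equals $\Excl{\Inv{\Rightu_X}}\Compl\Rightu_{\Excl X}$. Combining the resulting $\Rightu_{\Excl X}\ITens\Into$ with the leading $\Excl X\ITens\Inv{\Leftu_\Into}$ and invoking the Mac Lane triangle identity $\Tensp{\Excl X}{\Leftu_\Into}\Compl\Assoc_{\Excl X,\Sone,\Into}=\Rightu_{\Excl X}\ITens\Into$, the unitor traffic collapses to the identity, leaving $\Excl{\Tensp X{\Leftu_\Into}}\Compl\Sdiffst_{\Tens X\Sone}\Compl(\Excl{\Inv{\Rightu_X}}\ITens\Into)$. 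I would then push $\Excl{\Inv{\Rightu_X}}$ through $\Sdiffst$ by the naturality of $\Sdiffst$ (applied to $\Inv{\Rightu_X}\in\cL(X,\Tens X\Sone)$), obtaining $\Excl{\Tensp X{\Leftu_\Into}}\Compl\Excl{\Tensp{\Inv{\Rightu_X}}\Into}\Compl\Sdiffst_X$; a final application of the triangle identity shows that $\Tensp X{\Leftu_\Into}\Compl\Assoc_{X,\Sone,\Into}\Compl\Tensp{\Inv{\Rightu_X}}\Into=\Tensp{\Rightu_X}\Into\Compl\Tensp{\Inv{\Rightu_X}}\Into=\Id$, so the $\Excl{\,\cdot\,}$ prefix is $\Excl\Id=\Id$ and we are left with $\Sdiffst_X$, as required.

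The main obstacle I anticipate is purely bookkeeping: keeping track of the associators and unitors that the paper systematically suppresses, and verifying that the three structural cancellations (after naturality of $\Mont$, after the lax-unit law, and after naturality of $\Sdiffst$) genuinely collapse to identities via Mac Lane coherence. The conceptual content is carried entirely by Theorem~\ref{th:sdiffst-mon-tens} together with the right-unit law of the lax monoidal functor $\Excl\_$; everything else is routine symmetric-monoidal coherence.
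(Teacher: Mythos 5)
Your proof is correct and follows essentially the same route as the paper's: unfold $\Sdiffca$, slide $\Excl{\Leftu_\Into}$ past $\Mont$ by naturality, apply Theorem~\ref{th:sdiffst-mon-tens} with $X_1=\Sone$, then finish with the lax-unit identity $\Mont_{X,\Sone}\Compl\Tensp{\Excl X}{\Monz}=\Excl{\Invp{\Rightu_X}}\Compl\Rightu_{\Excl X}$, the naturality of $\Sdiffst$, and monoidal coherence. The only difference is that you spell out the associator and triangle-identity bookkeeping that the paper compresses into the single remark $\Tens{\Rightu_X}{Y}=\Tens{X}{\Leftu_Y}$.
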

\begin{proof}
  We have
  \begin{align*}
    \Mont_{X,\Into}
    \Compl\Tensp{\Excl X}{\Sdiffca}
    &=\Mont_{X,\Into}
      \Compl\Tensp{\Excl X}{\Excl{\Leftu_\Into}}
      \Compl\Tensp{\Excl X}{\Sdiffst_\Sone}
      \Compl\Tensp{\Excl X}{\Tens\Monz\Into}
      \Compl\Tensp{\Excl X}{\Inv{\Leftu_\Into}}\\
    &=\Exclp{X\ITens\Leftu_\Into}
      \Compl\Mont_{X,\Tens\Sone\Into}
      \Compl\Tensp{\Excl X}{\Sdiffst_\Sone}
      \Compl\Tensp{\Excl X}{\Tens\Monz\Into}
      \Compl\Tensp{\Excl X}{\Inv{\Leftu_\Into}}\\
    &=\Exclp{X\ITens\Leftu_\Into}
      \Compl\Sdiffst_{\Tens X\Sone}
      \Compl\Tensp{\Mont_{X,\Sone}}{\Into}
      \Compl\Tensp{\Excl X}{\Tens\Monz\Into}
      \Compl\Tensp{\Excl X}{\Inv{\Leftu_\Into}}
  \end{align*}
  by Theorem~\ref{th:sdiffst-mon-tens}. %
  We obtain the announced equation by %
  $\Mont_{X,\Sone}\Compl\Tensp{\Excl X}{\Monz}
  =\Excl{\Invp{\Rightu_X}}\Compl\Rightu_{\Excl X}$, %
  the naturality of $\Sdiffst$ and the fact that %
  $\Tens{\Rightu_X}{Y}
  =\Tens{X}{\Leftu_Y}\in\cL(X\ITens\Sone\ITens Y,\Tens XY)$.
\end{proof}
\begin{theorem}
  The morphism $\Sdiffca$ is a $\oc$-coalgebra structure on $\Into$. %
  Moreover the following commutations hold.
  \begin{Axicond}{\Daxcalocal}
    \[
      \begin{tikzcd}
        \Into\ar[r,"\Sdiffca"]
        &
        \Excl\Into
        \\
        \Sone\ar[r,"\Monz"]\ar[u,"\Win0"]
        &
        \Excl\Sone\ar[u,swap,"\Excl{\Win0}"]
      \end{tikzcd}
    \]
  \end{Axicond}
  \begin{Axicond}{\Daxcalin}
    \[
      \begin{tikzcd}
        \Into\ar[r,"\Sdiffca"]\ar[d,swap,"\Proj0"]
        &
        \Excl\Into\ar[d,"\Excl{\Proj0}"]
        \\
        \Sone\ar[r,"\Monz"]
        &
        \Excl\Sone
      \end{tikzcd}
      \Treesep
      % \begin{tikzcd}
      %   \Into\ar[r,"\Sdiffca"]\ar[d,swap,"\Scmont"]
      %   &
      %   \Excl\Into\ar[d,"\Contr\Into"]
      %   \\
      %   \Tens\Into\Into\ar[r,"\Tens\Sdiffca\Sdiffca"]
      %   &
      %   \Tens{\Excl\Into}{\Excl\Into}
      % \end{tikzcd}
      \begin{tikzcd}
        \Into\ar[rr,"\Sdiffca"]\ar[d,swap,"\Scmont"]
        &&
        \Excl\Into\ar[d,"\Excl\Scmont"]
        \\
        \Tens\Into\Into\ar[r,"\Tens\Sdiffca\Sdiffca"]
        &
        \Tens{\Excl\Into}{\Excl\Into}\ar[r,"\Mont"]
        &
        \Excl{\Tensp\Into\Into}
      \end{tikzcd}
    \]
  \end{Axicond}
\end{theorem}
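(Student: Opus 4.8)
The plan is to reduce each of the five required commutations --- the two $\oc$-coalgebra laws (counit and comultiplication) for $\Sdiffca$ together with \Daxcalocal{} and the two diagrams of \Daxcalin{} --- to the corresponding axiom on $\Sdiffst$ evaluated at $X=\Sone$, using the defining formula $\Sdiffca=\Excl{\Leftu_\Into}\Compl\Sdiffst_\Sone\Compl\Tens\Monz\Into\Compl\Inv{\Leftu_\Into}$ together with Lemma~\ref{lemma:sdiffca-to-sdiffst}. Conceptually the last three diagrams assert that $\Win0$, $\Proj0$ and $\Scmont$ are $\oc$-coalgebra morphisms (respectively from and to $(\Sone,\Monz)$, and into the tensor-product coalgebra on $\Tens\Into\Into$), so the statement as a whole says that $(\Into,\Sdiffca)$ is a coalgebra whose induced comonoid structure is exactly $(\Proj0,\Scmont)$. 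Throughout I rely on the standard facts that $(\Sone,\Monz)$ is a $\oc$-coalgebra, i.e. $\Der\Sone\Compl\Monz=\Id$ and $\Digg\Sone\Compl\Monz=\Excl\Monz\Compl\Monz$, on the coincidence $\Leftu_\Sone=\Rightu_\Sone$, and on naturality of $\Leftu$, $\Rightu$, $\Der{}$, $\Digg{}$ and $\Sdiffst$.

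First I would dispatch the three ``first-diagram'' commutations. For the counit law $\Der\Into\Compl\Sdiffca=\Id$, pushing $\Der\Into$ through $\Excl{\Leftu_\Into}$ by naturality and applying the first diagram of \Daxcchain{} at $\Sone$ leaves $\Leftu_\Into\Compl\Tens{(\Der\Sone\Compl\Monz)}\Into\Compl\Inv{\Leftu_\Into}=\Id$. The diagram \Daxcalocal{}, i.e. $\Sdiffca\Compl\Win0=\Excl{\Win0}\Compl\Monz$, follows from \Daxclocal{} at $\Sone$ after rewriting $\Inv{\Leftu_\Into}\Compl\Win0=\Tens\Sone{\Win0}\Compl\Inv{\Leftu_\Sone}$ by naturality of $\Leftu$ and collapsing the leftover unitors with $\Leftu_\Sone=\Rightu_\Sone$. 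The first diagram of \Daxcalin{}, i.e. $\Excl{\Proj0}\Compl\Sdiffca=\Monz\Compl\Proj0$, comes the same way from the first diagram of \Daxclin{} at $\Sone$, moving $\Proj0$ across $\Leftu$ by naturality. Each of these is a short chain of naturality steps plus one unit collapse.

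Next, the comultiplication law $\Digg\Into\Compl\Sdiffca=\Excl\Sdiffca\Compl\Sdiffca$. Here I would move $\Digg\Into$ through $\Excl{\Leftu_\Into}$ by naturality and apply the second (co-distributivity) diagram of \Daxcchain{} at $\Sone$; the factor $\Tens{\Digg\Sone}\Into\Compl\Tens\Monz\Into$ then rewrites via the coalgebra law $\Digg\Sone\Compl\Monz=\Excl\Monz\Compl\Monz$, and naturality of $\Sdiffst$ in $X$ (applied to $\Monz$) turns $\Sdiffst_{\Excl\Sone}\Compl\Tens{\Excl\Monz}\Into$ into $\Excl{(\Tens\Monz\Into)}\Compl\Sdiffst_\Sone$. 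The left-hand side has then become $\Excll{\Leftu_\Into}\Compl\Excl{\Sdiffst_\Sone}\Compl\Excl{(\Tens\Monz\Into)}\Compl\Sdiffst_\Sone\Compl\Tens\Monz\Into\Compl\Inv{\Leftu_\Into}$, which is visibly $\Excl\Sdiffca\Compl\Sdiffca$ once one cancels $\Excl{\Inv{\Leftu_\Into}}\Compl\Excl{\Leftu_\Into}=\Id$ appearing inside the right-hand side; so this law comes out as a clean identity.

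The main obstacle is the second diagram of \Daxcalin{}, $\Excl\Scmont\Compl\Sdiffca=\Mont\Compl\Tens\Sdiffca\Sdiffca\Compl\Scmont$, because its right-hand side mixes the comultiplication $\Scmont$ of $\Into$ with the monoidal structure $\Mont$. The plan is to simplify the right-hand side first: by Lemma~\ref{lemma:sdiffca-to-sdiffst} one has $\Mont_{\Into,\Into}\Compl\Tens\Sdiffca\Sdiffca=\Sdiffst_\Into\Compl\Tens\Sdiffca\Into$, so the target becomes $\Excl\Scmont\Compl\Sdiffca=\Sdiffst_\Into\Compl\Tens\Sdiffca\Into\Compl\Scmont$. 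For the left-hand side I would expand $\Sdiffca$, move $\Scmont$ across $\Leftu$ by naturality, and apply the second diagram of \Daxclin{} at $\Sone$; bridging the resulting expression in $\Sdiffst_{\Tens\Sone\Into}$ to the wanted $\Sdiffst_\Into$ is exactly where Theorem~\ref{th:sdiffst-mon-tens} (differentiation with respect to a tensor product) and the unit coherences of $(\Monz,\Mont)$ are needed, and this coherence bookkeeping is the delicate part. As an independent check, when \Ccsaxepi{} is available one may precompose both sides with $\Win0$ and $\Win1$: the $\Win0$ branch reduces, via \Daxcalocal{} and naturality of $\Mont$, to the lax-monoidal unit identity $\Mont_{\Sone,\Sone}\Compl\Tens\Monz\Monz\Compl\Inv{\Rightu_\Sone}=\Excl{\Inv{\Rightu_\Sone}}\Compl\Monz$, whereas the $\Win1$ branch is harder because $\Sdiffca\Compl\Win1$ has no elementary closed form; this is why I expect the direct reduction through \Daxclin{} and Theorem~\ref{th:sdiffst-mon-tens} to be the cleaner route.
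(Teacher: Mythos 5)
Your proposal is correct and follows essentially the same route as the paper's own proof: every one of the five commutations is reduced to the corresponding axiom on $\Sdiffst$ instantiated at $X=\Sone$ via the defining formula for $\Sdiffca$, naturality of the unitors, of $\Der{}$, $\Digg{}$ and $\Sdiffst$, the coalgebra laws of $(\Sone,\Monz)$, and Lemma~\ref{lemma:sdiffca-to-sdiffst} for the $\Scmont$-compatibility diagram. The only cosmetic difference is in that last diagram: you apply the lemma once at $X=\Into$ to rewrite the right-hand side as $\Sdiffst_\Into\Compl\Tensp{\Sdiffca}{\Into}\Compl\Scmont$ and then bridge via naturality of $\Sdiffst$ (at $\Leftu_\Into$) and unit coherence, whereas the paper applies the lemma twice on the left-hand side (to $\Sdiffst_{\Tens\Sone\Into}$ and $\Sdiffst_\Sone$) and collapses with the lax-monoidality coherences of $(\Monz,\Mont)$ — the ingredients are identical.
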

\begin{proof}
  We have, using the fact that $(\Sone,\Monz)$ is an %
  $\oc$-coalgebra,
  \begin{align*}
    \Der\Into\Compl\Sdiffca
    &=\Der\Into
      \Compl\Excl\Leftu
      \Compl\Sdiffst_\Sone
      \Compl\Tensp\Monz\Into
      \Compl\Inv\Leftu\\
    &=\Leftu
      \Compl\Der{\Tens\Sone\Into}
      \Compl\Sdiffst_\Sone
      \Compl\Tensp\Monz\Into
      \Compl\Inv\Leftu\\
    &=\Leftu
      \Compl\Tensp{\Der\Sone}\Into
      \Compl\Tensp\Monz\Into
      \Compl\Inv\Leftu
    \text{\quad by \Daxcchain}\\
    &=\Leftu\Compl\Inv\Leftu=\Id
  \end{align*}
  and
  \begin{align*}
    \Digg\Into\Compl\Sdiffca
    &=\Digg\Into
      \Compl\Excl\Leftu
      \Compl\Sdiffst_\Sone
      \Compl\Tensp\Monz\Into
      \Compl\Inv\Leftu\\
    &=\Excll\Leftu
      \Compl\Excl{\Sdiffst_\Sone}
      \Compl\Sdiffst_{\Excl\Sone}
      \Compl\Tensp{\Digg\Sone}{\Into}
      \Compl\Tensp\Monz\Into
      \Compl\Inv\Leftu
      \text{\quad by \Daxcchain}\\
    &=\Excll\Leftu
      \Compl\Excl{\Sdiffst_\Sone}
      \Compl\Sdiffst_{\Excl\Sone}
      \Compl\Tensp{\Excl\Monz}{\Into}
      \Compl\Tensp\Monz\Into
      \Compl\Inv\Leftu\\
    &=\Excll\Leftu
      \Compl\Excl{\Sdiffst_\Sone}
      \Compl\Excl{\Tensp{\Monz}{\Into}}
      \Compl\Sdiffst_{\Sone}
      \Compl\Tensp\Monz\Into
      \Compl\Inv\Leftu
    \text{\quad by nat.~of }\Sdiffst
  \end{align*}
  and observe now that %
  $\Sdiffst_{\Sone}\Compl\Tensp\Monz\Into
  =\Excl{\Inv\Leftu}\Compl\Sdiffca\Compl\Leftu$. %
  It follows that
  \begin{align*}
    \Digg\Into\Compl\Sdiffca
    =\Excll\Leftu
      \Compl\Exclp{\Excl{\Inv\Leftu}\Compl\Sdiffca\Compl\Leftu}
      \Excl{\Inv\Leftu}\Compl\Sdiffca\Compl\Leftu
      \Compl\Inv\Leftu
    =\Excl\Sdiffca\Compl\Sdiffca\,.
  \end{align*}
  We have proven that $(\Into,\Sdiffca)$ is an $\oc$-coalgebra.

  Let us prove that %
  $\Win0\in\Em\cL((\Sone,\Monz),(\Into,\Sdiffca))$. We have
  \begin{align*}
    \Sdiffca\Compl\Win0
    &=\Excl\Leftu
      \Compl\Sdiffst_\Sone
      \Compl\Tensp\Monz\Into
      \Compl\Inv\Leftu
      \Compl\Win0\\
    &=\Excl\Leftu
      \Compl\Sdiffst_\Sone
      \Compl\Tensp\Monz\Into
      \Compl\Tensp{\Sone}{\Win0}
      \Compl\Inv\Leftu\\
    &=\Excl\Leftu
      \Compl\Sdiffst_\Sone
      \Compl\Tensp{\Excl\Sone}{\Win0}
      \Compl\Tensp\Monz\Sone
      \Compl\Inv\Leftu\\
    &=\Excl\Leftu_\Into
      \Compl\Excl{\Tensp{\Sone}{\Win0}}
      \Compl\Excl{\Inv{\Rightu_\Sone}}
      \Compl\Rightu_{\Excl\Sone}
      \Compl\Tensp\Monz\Sone
      \Compl\Inv{\Leftu_\Sone}
      \text{\quad by \Daxclocal}\\
    &=\Excl\Leftu_\Into
      \Compl\Excl{\Tensp{\Sone}{\Win0}}
      \Compl\Excl{\Inv{\Rightu_\Sone}}
      \Compl\Monz
      \Compl\Rightu_\Sone
      \Compl\Inv{\Leftu_\Sone}\\
    &=\Excl{\Win0}
      \Compl\Excl{\Leftu_\Sone}
      \Compl\Excl{\Inv{\Rightu_\Sone}}
      \Compl\Monz
      \Compl\Rightu_\Sone
      \Compl\Inv{\Leftu_\Sone}\\
    &=\Excl{\Win0}\Compl\Monz
  \end{align*}
  since $\Rightu_\Sone=\Leftu_\Sone$.

  Let us prove that %
  $\Proj0\in\Em\cL((\Into,\Sdiffca),(\Sone,\Monz))$. %
  We have
  \begin{align*}
    \Excl{\Proj0}\Compl\Sdiffca
    &=\Excl{\Proj0}
      \Compl\Excl{\Leftu_\Into}
      \Compl\Sdiffst_\Sone
      \Compl\Tensp\Monz\Into
      \Compl\Inv{\Leftu_\Into}\\
    &=\Excl{\Leftu_\Sone}
      \Compl\Excl{\Tensp\Sone{\Proj0}}
      \Compl\Sdiffst_\Sone
      \Compl\Tensp\Monz\Into
      \Compl\Inv{\Leftu_\Into}\\
    &=\Excl\Leftu_\Sone
      \Compl\Excl{\Inv{\Rightu_\Sone}}
      \Compl\Rightu_{\Excl\Sone}
      \Compl\Tensp{\Excl\Sone}{\Proj0}
      \Compl\Tensp\Monz\Into
      \Compl\Inv{\Leftu_\Into}
    \text{\quad by \Daxclin}\\
    &=\Excl\Leftu_\Sone
      \Compl\Excl{\Inv{\Rightu_\Sone}}
      \Compl\Rightu_{\Excl\Sone}
      \Compl\Tensp\Monz\Sone
      \Compl\Tensp{\Sone}{\Proj0}
      \Compl\Inv{\Leftu_\Into}\\
    &=\Excl\Leftu_\Sone
      \Compl\Excl{\Inv{\Rightu_\Sone}}
      \Compl\Monz
      \Compl\Rightu_\Sone
      \Compl\Inv{\Leftu_\Sone}
      \Compl\Proj0\\
    &=\Monz\Compl\Proj0
  \end{align*}
  Last we prove that %
  $\Scmont
  \in\Em\cL((\Into,\Sdiffca),(\Into,\Sdiffca)\ITens(\Into,\Sdiffca))$.
  We have
  \begin{align*}
    \Excl\Scmont\Compl\Sdiffca
    &=\Excl\Scmont
      \Compl\Excl{\Leftu_\Into}
      \Compl\Sdiffst_\Sone
      \Compl\Tensp\Monz\Into
      \Compl\Inv{\Leftu_\Into}\\
    &=\Excl{\Leftu_{\Tens\Into\Into}}
      \Compl\Excl{\Tensp\Sone\Scmont}
      \Compl\Sdiffst_\Sone
      \Compl\Tensp\Monz\Into
      \Compl\Inv{\Leftu_\Into}\\
    &=\Excl{\Leftu_{\Tens\Into\Into}}
      \Compl\Sdiffst_{\Tens\Sone\Into}
      \Compl\Tensp{\Sdiffst_\Sone}{\Into}
      \Compl\Tensp{\Excl\Sone}{\Scmont}
      \Compl\Tensp\Monz\Into
      \Compl\Inv{\Leftu_\Into}
      \text{\quad by \Daxclin}\\
    &=\Excl{\Leftu_{\Tens\Into\Into}}
      \Compl\Mont_{\Tens\Sone\Into,\Into}
      \Compl\Tensp{\Excl{\Tensp\Sone\Into}}{\Sdiffca}
      \Compl\Tensp{(\Mont_{\Sone,\Sone}\Compl\Tensp{\Excl\Sone}
        {\Sdiffca})}{\Into}
      \Compl\Tensp{\Excl\Sone}{\Scmont}
      \Compl\Tensp\Monz\Into
      \Compl\Inv{\Leftu_\Into}
      \text{\quad by Lemma~\ref{lemma:sdiffca-to-sdiffst}, twice}\\
    &=\Excl{\Leftu_{\Tens\Into\Into}}
      \Compl\Mont_{\Tens\Sone\Into,\Into}
      \Compl\Tensp{\Excl{\Tensp\Sone\Into}}{\Sdiffca}
      \Compl\Tensp{\Mont_{\Sone,\Into}}{\Into}
      \Compl\Tensp{\Tens{\Excl\Sone}{\Sdiffca}}{\Into}
      \Compl\Tensp{\Excl\Sone}{\Scmont}
      \Compl\Tensp\Monz\Into
      \Compl\Inv{\Leftu_\Into}\\
    &=\Excl{\Leftu_{\Tens\Into\Into}}
      \Compl\Mont_{\Tens\Sone\Into,\Into}
      \Compl\Tensp{\Mont_{\Sone,\Into}}{\Into}
      \Compl\Tensp{\Tens{\Excl\Sone}{\Excl\Into}}{\Sdiffca}
      \Compl\Tensp{\Tens{\Excl\Sone}{\Sdiffca}}{\Into}
      \Compl\Tensp{\Excl\Sone}{\Scmont}
      \Compl\Tensp\Monz\Into
      \Compl\Inv{\Leftu_\Into}\\
    &=\Excl{\Leftu_{\Tens\Into\Into}}
      \Compl\Mont_{\Tens\Sone\Into,\Into}
      \Compl\Tensp{\Mont_{\Sone,\Into}}{\Into}
      \Compl\Tensp{\Tens\Monz{\Excl\Into}}{\Excl\Into}
      \Compl\Tensp{\Tens{\Sone}{\Sdiffca}}{\Sdiffca}
      \Compl\Tensp{\Sone}{\Scmont}
      \Compl\Inv{\Leftu_\Into}
      \text{\quad by functoriality of }\ITens\\
    &=\Excl{\Leftu_{\Tens\Into\Into}}
      \Compl\Mont_{\Tens\Sone\Into,\Into}
      \Compl\Tensp{\Mont_{\Sone,\Into}}{\Into}
      \Compl\Tensp{\Tens\Monz{\Excl\Into}}{\Excl\Into}
      \Compl\Inv{\Leftu_{\Tens{\Excl\Into}{\Excl\Into}}}
      \Compl\Tensp{\Sdiffca}{\Sdiffca}
      \Compl{\Scmont}\\
    &=\Tensp{\Sdiffca}{\Sdiffca}
      \Compl{\Scmont}
  \end{align*}
  by standard properties of the lax monoidality structure %
  $(\Monz,\Mont)$ of $\Excl\_$.
\end{proof}

\subsection{From a coalgebra structure on $\Into$ to a canonical
  differential structure.} %
\label{sec:coalg-to-diff}
Assume now conversely that $\cL$ is a canonically summable resource
category where $\Into$ is exponentiable and that we have a morphism %
$\Sdiffca\in\cL(\Into,\Excl\Into)$. Then we can define a morphism %
$\Sdiffst_X\in\cL(\Tens{\Excl X}{\Into},\Exclp{\Tens X\Into})$ as the
following composition of morphisms.
\[
  \begin{tikzcd}
    \Tens{\Excl X}{\Into}\ar[r,"\Tens{\Excl X}{\Sdiffca}"]
    &
    \Tens{\Excl X}{\Excl\Into}\ar[r,"\Mont_{X,\Into}"]
    &
    \Excl{\Tensp X\Into}
  \end{tikzcd}
\]
This morphism is natural in $X$ by the naturality of $\Mont$.

\begin{theorem} %
  \label{th:sdiffca-to-sdiffst}
  If $\Sdiffca$ satisfies the following properties:
  \begin{enumerate}
  \item\label{cond:sdiffca-coalg} $(\Into,\Sdiffca)$ is a $\oc$-coalgebra
  \item\label{cond:sdiffca-local} \Daxcalocal{} 
  \item\label{cond:sdiffca-lin} and \Daxcalin{}
  \end{enumerate}
  then the natural transformation $\Sdiffst$ satisfies \Daxcchain,
  \Daxclocal, \Daxclin, \Daxcwith{} and \Daxcschwarz.
\end{theorem}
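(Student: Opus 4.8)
The plan is to work throughout from the explicit formula $\Sdiffst_X=\Mont_{X,\Into}\Compl(\Excl X\ITens\Sdiffca)$ and to reduce each of the five target conditions on $\Sdiffst$ to the corresponding ``ca'' hypothesis on $\Sdiffca$, with all the bookkeeping carried out by the naturality, symmetry, associativity and unit coherences of the lax symmetric monoidal comonad structure $(\Monz,\Mont)$ of $\Excl\_$, together with its compatibilities with $\Der{}$ and $\Digg{}$. The slogan is that $\Mont$ transports the coalgebra identities satisfied by $\Sdiffca$ into the required identities for $\Sdiffst$.

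First I would dispose of the conditions needing only the comonad coherences and the two simplest coalgebra laws. For \Daxcchain, precomposing the formula with $\Der{\Tens X\Into}$ and using $\Der{\Tens X\Into}\Compl\Mont_{X,\Into}=\Tens{\Der X}{\Der\Into}$ with the counit law $\Der\Into\Compl\Sdiffca=\Id$ yields the first square; the second square follows from the compatibility of $\Digg{}$ with $\Mont$ (a consequence of \Eqref{eq:seely-digg-comm}) and the comultiplication law $\Digg\Into\Compl\Sdiffca=\Excl\Sdiffca\Compl\Sdiffca$, both legs collapsing to $\Excl{\Mont_{X,\Into}}\Compl\Mont_{\Excl X,\Excl\Into}\Compl(\Digg X\ITens(\Excl\Sdiffca\Compl\Sdiffca))$. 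For \Daxclocal and the first square of \Daxclin I would push $\Excl{(\Tens X{\Win0})}$, respectively $\Excl{(\Tens X{\Proj0})}$, through $\Mont$ by naturality, apply \Daxcalocal, respectively the first square of \Daxcalin, and finish with the standard unit identity $\Mont_{X,\Sone}\Compl(\Excl X\ITens\Monz)=\Excl{\Inv{\Rightu_X}}\Compl\Rightu_{\Excl X}$.

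Next comes the observation that makes the two second-order conditions almost free: from the formula and the associativity of $\Mont$ one computes, with no recourse to any differential axiom, that $\Sdiffst_{\Tens X\Into}\Compl(\Tens{\Sdiffst_X}\Into)=\Mont_{X,\Tens\Into\Into}\Compl(\Excl X\ITens(\Mont_{\Into,\Into}\Compl(\Sdiffca\ITens\Sdiffca)))$; this is the unconditional incarnation of Theorem~\ref{th:sdiffst-mon-tens} in the present setting. Feeding the second square of \Daxcalin, namely $\Excl\Scmont\Compl\Sdiffca=\Mont_{\Into,\Into}\Compl(\Sdiffca\ITens\Sdiffca)\Compl\Scmont$, into this identity immediately gives the second square of \Daxclin (the left-hand side being $\Mont_{X,\Tens\Into\Into}\Compl(\Excl X\ITens(\Excl\Scmont\Compl\Sdiffca))$ after naturality). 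For \Daxcschwarz I would note that pre- and post-composing that same composite with $\Scflip=\Sym_{\Into,\Into}$ reduces, after the associativity step, to comparing $\Excl\Scflip\Compl\Mont_{\Into,\Into}\Compl(\Sdiffca\ITens\Sdiffca)$ with $\Mont_{\Into,\Into}\Compl(\Sdiffca\ITens\Sdiffca)\Compl\Scflip$, which agree by the symmetric-monoidality law $\Excl\Sym\Compl\Mont=\Mont\Compl\Sym$ and naturality of $\Sym$; thus Schwarz needs neither cocommutativity of $\Scmont$ nor any coalgebra law beyond the formula itself.

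The genuine work is in \Daxcwith, which I expect to be the main obstacle. Its first square reduces, after substituting $\Sdiffst_\Top$ and pulling the weakening $\Excl0$ through $\Mont$ by the first square of Lemma~\ref{lemma:seelyt-mont-commut} (taken at $Y=\Into$), to the identity $\Weak\Into\Compl\Sdiffca=\Proj0$, which I would obtain by writing the terminal map as $0_\Into=0_\Sone\Compl\Proj0$, applying the first square of \Daxcalin and then $\Weak\Sone\Compl\Monz=\Id_\Sone$. For the substantial second square I would reduce both legs to the common shape
\[
\Seelyt\Compl(\Tens{\Mont_{X_0,\Into}}{\Mont_{X_1,\Into}})\Compl\Sym_{2,3}\Compl(\Excl{X_0}\ITens\Excl{X_1}\ITens W),
\]
where the bottom leg, via the second square of Lemma~\ref{lemma:seelyt-mont-commut} at $Y=\Into$, produces $W=\Contr\Into\Compl\Sdiffca$, while the top leg produces $W=(\Sdiffca\ITens\Sdiffca)\Compl\Scmont$. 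The whole square therefore collapses to the single equation $\Contr\Into\Compl\Sdiffca=(\Sdiffca\ITens\Sdiffca)\Compl\Scmont$, and the crux is to prove it. The key is to invoke Proposition~\ref{prop:coalg-comon} with the coalgebra $P=(\Into,\Sdiffca)$, taking $u=\Proj0$ and $d=\Scmont$: its hypotheses are exactly that $\Proj0$ and $\Scmont$ are coalgebra morphisms (the two squares of \Daxcalin) and that they obey the comonoid counit laws (Theorem~\ref{th:can-comonoid-Into}), and its concluding square is precisely the desired identity. This route needs no Lafont assumption, since Proposition~\ref{prop:coalg-comon} holds in an arbitrary resource category. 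The only remaining care is the patient tracking of the structural isomorphisms $\Sym_{2,3}$, the unitors and the associator so that the two legs coincide on the nose.
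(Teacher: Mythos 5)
Your proposal is correct and follows essentially the same route as the paper's own proof: each of the five conditions is reduced to the corresponding hypothesis on $\Sdiffca$ through the naturality, unit, associativity and symmetry laws of the lax monoidality $(\Monz,\Mont)$, with Lemma~\ref{lemma:seelyt-mont-commut} (at $Y=\Into$) handling both squares of \Daxcwith{} and Proposition~\ref{prop:coalg-comon}, applied to $P=(\Into,\Sdiffca)$, $u=\Proj0$, $d=\Scmont$, supplying the crucial identity $\Tensp\Sdiffca\Sdiffca\Compl\Scmont=\Contr\Into\Compl\Sdiffca$. The only differences are organizational: you invoke the concluding square of Proposition~\ref{prop:coalg-comon} directly (and derive $\Weak\Into\Compl\Sdiffca=\Proj0$ by hand), whereas the paper cites that proposition's other conclusions and re-derives the contraction square inline; this is a cleaner packaging of the same argument.
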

\begin{proof}
  \Proofcase %
  \Daxcchain. We have
  \begin{align*}
    \Der{\Tens{X}{\Into}}\Compl\Sdiffst_X
    &=\Der{\Tens{X}{\Into}}
      \Compl\Mont_{X,\Into}
      \Compl\Tensp{\Excl X}{\Sdiffca}\\
    &=\Tensp{\Der X}{\Der\Into}
      \Compl\Tensp{\Excl X}{\Sdiffca}\\
    &=\Tens{\Der X}{\Into}\text{\quad by \ref{cond:sdiffca-coalg}.}
  \end{align*}
  and
  \begin{align*}
    \Digg{\Tens X\Into}\Compl\Sdiffst_X
    &=\Digg{\Tens X\Into}
      \Compl\Mont_{X,\Into}
      \Compl\Tensp{\Excl X}{\Sdiffca}\\
    &=\Excl{\Mont_{X,\Into}}
      \Compl\Mont_{\Excl X,\Excl\Into}
      \Compl\Tensp{\Digg X}{\Digg\Into}
      \Compl\Tensp{\Excl X}{\Sdiffca}\\
    &=\Excl\Mont_{X,\Into}
      \Compl\Mont_{\Excl X,\Excl\Into}
      \Compl\Tensp{\Digg X}{(\Excl\Sdiffca\Compl\Sdiffca)}
    \text{\quad by \ref{cond:sdiffca-coalg}.}\\
    &=\Excl\Mont_{X,\Into}
      \Compl\Mont_{\Excl X,\Excl\Into}
      \Compl\Tensp{\Excll X}{\Excl{\Sdiffca}}
      \Compl\Tensp{\Digg X}{\Sdiffca}\\
    &=\Excl\Mont_{X,\Into}
      \Compl\Excl{\Tensp{\Excl X}{\Sdiffca}}
      \Compl\Mont_{\Excl X,\Into}
      \Compl\Tensp{\Digg X}{\Sdiffca}
      \text{\quad by naturality of }\Mont\\
    &=\Excl\Mont_{X,\Into}
      \Compl\Excl{\Tensp{\Excl X}{\Sdiffca}}
      \Compl\Mont_{\Excl X,\Into}
      \Compl\Tensp{\Excll X}{\Sdiffca}
      \Compl\Tensp{\Digg X}{\Into}\\
    &=\Excl{\Sdiffst_X}
      \Compl\Sdiffst_{\Excl X}
      \Compl\Tensp{\Digg X}{\Into}\\
    % &=\Excl{\Sdiffst_X}
    %   \Compl\Sdiffst_{\Excl X}
    %   \Compl\Tensp{\Digg X}{\Into}\\
  \end{align*}
  as required.

  \Proofcase %
  \Daxclocal. We have
  \begin{align*}
    \Sdiffst_X\Compl\Tensp{\Excl X}{\Win 0}
    &=\Mont_{X,\Into}
      \Compl\Tensp{\Excl X}{(\Sdiffca\Compl\Win0)}\\
    &=\Mont_{X,\Into}
      \Compl\Tensp{\Excl X}{(\Excl{\Win0}\Compl\Monz)}
    \text{\quad by \ref{cond:sdiffca-local}.}\\
    &=\Excl{\Tensp X{\Win 0}}
      \Compl\Mont_{X,\Sone}
      \Compl\Tensp{\Excl X}{\Monz}\\
    &=\Excl{\Tensp X{\Win 0}}
      \Compl\Excl{\Inv{\Rightu_X}}
      \Compl\Rightu_{\Excl X}
  \end{align*}
  by the properties of the lax monoidality $(\Mont,\Monz)$.

  \Proofcase %
  \Daxclin. We have
  \begin{align*}
    \Excl{\Tensp X{\Proj0}}\Compl\Sdiffst_X
    &=\Excl{\Tensp X{\Proj0}}
      \Compl\Mont_{X,\Into}
      \Compl\Tensp{\Excl X}{\Sdiffca}\\
    &=\Mont_{X,\Sone}
      \Compl\Tensp{\Excl X}{\Excl{\Proj 0}}
      \Compl\Tensp{\Excl X}{\Sdiffca}\\      
    &=\Mont_{X,\Sone}
      \Compl\Tensp{\Excl X}{\Monz}
      \Compl\Tensp{\Excl X}{\Proj0}
    \text{\quad by \ref{cond:sdiffca-lin}.}\\      
    &=\Excl{\Inv{\Rightu_X}}
      \Compl\Rightu_{\Excl X}
      \Compl\Tensp{\Excl X}{\Proj0}
  \end{align*}
  and
  \begin{align*}
    \Excl{\Tensp X{\Scmont}}\Compl\Sdiffst_X
    &=\Excl{\Tensp X{\Scmont}}
      \Compl\Mont_{X,\Into}
      \Compl\Tensp{\Excl X}{\Sdiffca}\\
    &=\Mont_{X,\Tens\Into\Into}
      \Compl\Tensp{\Excl X}{\Excl{\Scmont}}
      \Compl\Tensp{\Excl X}{\Sdiffca}\\      
    &=\Mont_{X,\Tens\Into\Into}
      \Compl\Tensp{\Excl X}{\Mont_{\Into,\Into}}
      \Compl\Tensp{\Excl X}{\Tens{\Sdiffca}{\Sdiffca}}
      \Compl\Tensp{\Excl X}{\Scmont}
      \text{\quad by \ref{cond:sdiffca-lin}.}\\      
    &=\Mont_{\Tens X\Into,\Into}
      \Compl\Tensp{\Mont_{X,\Into}}\Into
      \Compl\Tensp{\Excl X}{\Tens{\Sdiffca}{\Sdiffca}}
      \Compl\Tensp{\Excl X}{\Scmont}\\
    &=\Mont_{\Tens X\Into,\Into}
      \Compl(X\ITens\Into\ITens\Sdiffca)
      \Compl\Tensp{\Mont_{X,\Into}}\Into
      \Compl\Tensp{\Excl X}{\Tens{\Sdiffca}{\Into}}
      \Compl\Tensp{\Excl X}{\Scmont}\\
    &=\Sdiffst_{\Tens X\Into}
      \Compl\Tensp{\Sdiffst_X}\Into
      \Compl\Tensp{\Excl X}{\Scmont}\,.
   \end{align*}

   \Proofcase{} %
   \Daxcwith. By Proposition~\ref{prop:coalg-comon} we
   have %
   $\Proj0=\Weak\Into\Compl\Sdiffca$ and %
   $\Scmont=\Tensp{\Der\Into}{\Der\Into}\Compl\Contr\Into\Compl\Sdiffca$. We
   us these expressions in the next computations.

For the first diagram, we have
\begin{align*}
  \Seelyz
  \Compl\Leftu_\Into
  \Compl\Tensp{\One}{\Proj0}
  &=\Seelyz
    \Compl\Leftu_\Into
    \Compl\Tensp\Sone{\Weak\Into}
    \Compl\Tensp\Sone\Sdiffca\\
  &=\Excl 0
    \Compl\Mont_{\Top,\Into}
    \Compl\Tensp{\Seelyz}{\Excl\Into}
    \Compl\Tensp\Sone\Sdiffca
    \text{\quad by Lemma~\ref{lemma:seelyt-mont-commut}}\\
  &=\Excl 0
    \Compl\Mont_{\Top,\Into}
    \Compl\Tensp{\Excl\Top}{\Sdiffca}
    \Compl\Tensp{\Seelyz}{\Into}\\
  &=\Excl 0
    \Compl\Sdiffst_\Top
    \Compl\Tensp{\Seelyz}{\Into}\,.
\end{align*}
And the second one is proved by the following computation.
\begin{align*}
  \Seelyt_{\Tens{X_0}\Into,\Tens{X_1}\Into}
  \Compl&\Tensp{\Sdiffst_{X_0}}{\Sdiffst_{X_1}}
          \Compl\Sym_{2,3}
          \Compl(\Excl{X_0}\ITens\Excl{X_1}\ITens\Scmont)\\
        &=\Seelyt_{\Tens{X_0}\Into,\Tens{X_1}\Into}
          \Compl\Tensp{\Mont_{X_0,\Into}}{\Mont_{X_0,\Into}}
          \Compl\Sym_{2,3}\\
        &\Textsep
          (\Excl{X_0}\ITens\Excl{X_1}
          \ITens(\Sdiffca\Compl\Der\Into)
          \ITens(\Sdiffca\Compl\Der\Into))
          \Compl(\Excl{X_0}\ITens\Excl{X_1}\ITens\Contr\Into)
          \Compl(\Excl{X_0}\ITens\Excl{X_1}\ITens\Sdiffca)\\
        % &\Textsep
        %   \text{by Lemma~\ref{lemma:seelyt-mont-commut}}\\
        &=\Seelyt_{\Tens{X_0}\Into,\Tens{X_1}\Into}
          \Compl\Tensp{\Mont_{X_0,\Into}}{\Mont_{X_0,\Into}}
          \Compl\Sym_{2,3}\\
        &\Textsep
          (\Excl{X_0}\ITens\Excl{X_1}
          \ITens(\Der{\Excl\Into}\Compl\Excl\Sdiffca)
          \ITens(\Der{\Excl\Into}\Compl\Excl\Sdiffca))
          \Compl(\Excl{X_0}\ITens\Excl{X_1}\ITens\Contr\Into)
          \Compl(\Excl{X_0}\ITens\Excl{X_1}\ITens\Sdiffca)\\
        &=\Seelyt_{\Tens{X_0}\Into,\Tens{X_1}\Into}
          \Compl\Tensp{\Mont_{X_0,\Into}}{\Mont_{X_0,\Into}}
          \Compl\Sym_{2,3}\\
        &\Textsep
          (\Excl{X_0}\ITens\Excl{X_1}
          \ITens\Der{\Excl\Into}
          \ITens\Der{\Excl\Into})
          \Compl(\Excl{X_0}\ITens\Excl{X_1}\ITens\Contr{\Excl\Into})
          \Compl(\Excl{X_0}\ITens\Excl{X_1}\ITens(\Excl\Sdiffca\Compl\Sdiffca))\\
        &=\Seelyt_{\Tens{X_0}\Into,\Tens{X_1}\Into}
          \Compl\Tensp{\Mont_{X_0,\Into}}{\Mont_{X_0,\Into}}
          \Compl\Sym_{2,3}\\
        &\Textsep(\Excl{X_0}\ITens\Excl{X_1}
          \ITens\Der{\Excl\Into}
          \ITens\Der{\Excl\Into})
          \Compl(\Excl{X_0}\ITens\Excl{X_1}\ITens\Contr{\Excl\Into})
          \Compl(\Excl{X_0}\ITens\Excl{X_1}\ITens(\Digg\Into\Compl\Sdiffca))\\
        &\Textsep\text{because }\Sdiffca\text{ is a coalgebra structure}\\
        &=\Seelyt_{\Tens{X_0}\Into,\Tens{X_1}\Into}
          \Compl\Tensp{\Mont_{X_0,\Into}}{\Mont_{X_0,\Into}}
          \Compl\Sym_{2,3}\\
        &\Textsep
          (\Excl{X_0}\ITens\Excl{X_1}
          \ITens(\Der{\Excl\Into}\Compl\Digg\Into)
          \ITens(\Der{\Excl\Into}\Compl\Digg\Into))
          \Compl(\Excl{X_0}\ITens\Excl{X_1}\ITens\Contr\Into)
          \Compl(\Excl{X_0}\ITens\Excl{X_1}\ITens\Sdiffca)\\
        &=\Seelyt_{\Tens{X_0}\Into,\Tens{X_1}\Into}
          \Compl\Tensp{\Mont_{X_0,\Into}}{\Mont_{X_0,\Into}}
          \Compl\Sym_{2,3}
          \Compl(\Excl{X_0}\ITens\Excl{X_1}\ITens\Contr\Into)
          \Compl(\Excl{X_0}\ITens\Excl{X_1}\ITens\Sdiffca)\\
        &=\Excl{\Tuple{\Tens{\Proj0}\Into,\Tens{\Proj1}\Into}}
          \Compl\Mont_{\With{X_0}{X_1},\Into}
          \Compl\Tensp{\Seelyt_{X_0,X_1}}{\Excl\Into}
          \Compl(\Excl{X_0}\ITens\Excl{X_1}\ITens\Sdiffca)\\
        &\Textsep\text{by Lemma~\ref{lemma:seelyt-mont-commut}}\\
        &=\Excl{\Tuple{\Tens{\Proj0}\Into,\Tens{\Proj1}\Into}}
          \Compl\Mont_{\With{X_0}{X_1},\Into}
          \Compl\Tensp{\Excl{\Withp{X_0}{X_1}}}{\Sdiffca}
          \Compl\Tensp{\Seelyt_{X_0,X_1}}{\Into}\\
        &=\Excl{\Tuple{\Tens{\Proj0}\Into,\Tens{\Proj1}\Into}}
          \Compl\Sdiffst_{\With{X_0}{X_1}}
          \Compl\Tensp{\Seelyt_{X_0,X_1}}{\Into}
\end{align*}
as required.

\Proofcase %
\Daxcschwarz. We have
\begin{align*}
  \Excl{\Tensp X{\Sym_{\Into,\Into}}}
  \Compl\Sdiffst_{\Tens X\Into}
  \Compl\Tensp{\Sdiffst_X}\Into
  &=\Excl{\Tensp X{\Sym_{\Into,\Into}}}
    \Compl\Mont_{\Tens X\Into,\Into}
    \Compl\Tensp{\Excl{\Tensp X\Into}}{\Sdiffca}
    \Compl\Tensp{\Mont_{X,\Into}}{\Into}
    \Compl(\Excl X\ITens\Sdiffca\ITens\Into)\\
  &=\Excl{\Tensp X{\Sym_{\Into,\Into}}}
    \Compl\Monoidal^3_{X,\Into,\Into}
    \Compl(\Excl X\ITens\Sdiffca\ITens\Sdiffca)\\
  &=\Excl{\Tensp X{\Sym_{\Into,\Into}}}
    \Compl\Mont_{X,\Tens\Into\Into}
    \Compl\Tensp{\Excl X}{\Mont_{\Into,\Into}}
    \Compl(\Excl X\ITens\Sdiffca\ITens\Sdiffca)\\
  &=\Mont_{X,\Tens\Into\Into}
    \Compl\Tensp{\Excl X}{\Excl{\Sym_{\Into,\Into}}}
    \Compl\Tensp{\Excl X}{\Mont_{\Into,\Into}}
    \Compl(\Excl X\ITens\Sdiffca\ITens\Sdiffca)
    \text{\quad by naturality of }\Mont\\
  &=\Mont_{X,\Tens\Into\Into}
    \Compl\Tensp{\Excl X}{\Mont_{\Into,\Into}}
    \Compl\Tensp{\Excl X}{{\Sym_{\Excl\Into,\Excl\Into}}}
    \Compl(\Excl X\ITens\Sdiffca\ITens\Sdiffca)
    \text{\quad by symmetry of }\Mont\\
  &=\Mont_{X,\Tens\Into\Into}
    \Compl\Tensp{\Excl X}{\Mont_{\Into,\Into}}
    \Compl(\Excl X\ITens\Sdiffca\ITens\Sdiffca)
    \Compl\Tensp{\Excl X}{{\Sym_{\Into,\Into}}}
    \text{\quad by naturality of }\Sym
\end{align*}
which ends the proof of the theorem.
\end{proof}

We can summarize the results obtained in this section as follows.
\begin{theorem}
  Let $\cL$ be a resource category which is canonically summable. Then
  there is a bijective correspondence between
  \begin{itemize}
  \item the differential structures %
    $(\Sdiff_X)_{X\in\cL}$ on the canonical summability structure %
    $(\Scfun,\Sproj0,\Sproj1,\Ssum)$ of $\cL$
  \item and the $\oc$-coalgebra structures %
    $\Sdiffca$ on $\Into$ which satisfy \Daxcalocal{} and \Daxcalin.
  \end{itemize}
  When the second condition holds, the associated differentiation
  $\Sdiff_X\in\cL(\Excl{\Scfun X},\Scfun{\Excl X})$ is $\Curlin d$
  where $d$ is the following composition of morphisms.
  \[
    \begin{tikzcd}
      \Tens{\Exclp{\Limpl\Into X}}{\Into}
      \ar[r,"\Tens{\Exclp{\Limpl\Into X}}{\Sdiffca}"]
      &[2em]
      \Tens{\Exclp{\Limpl\Into X}}{\Excl\Into}
      \ar[r,"\Mont_{\Limpl\Into X,\Into}"]
      &[0.6em]
      \Exclp{\Tens{\Limplp\Into X}{\Into}}
      \ar[r,"\Excl\Evlin"]
      &[-1em]
      \Excl X
    \end{tikzcd}\,.
  \]
\end{theorem}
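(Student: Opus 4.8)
The plan is to assemble the asserted bijection from the correspondences already established in this section, and then to read off the closed formula for $\Sdiff_X$ from the transpose description of $\Diffofst$. I first recall that the lemma showing $\Diffofst{\Stofdiff\phi}=\phi$ and $\Stofdiff{\Diffofst\psi}=\psi$ already exhibits $\Diffofst$ as a bijection (with inverse $\Stofdiff$) between natural transformations $\psi_X\in\cL(\Tinto\Excl X,\Excl{\Tinto X})$ and natural transformations $\phi_X\in\cL(\Excl{\Scfun X},\Scfun\Excl X)$. By Theorem~\ref{th:sdiffst-sdiff} this bijection restricts to one between those $\Sdiffst$ satisfying \Daxcchain, \Daxclocal, \Daxclin, \Daxcwith{} and \Daxcschwarz{} and the differentiations $\Sdiff=\Diffofst\Sdiffst$ in the sense of Definition~\ref{def:Sfun-diff}. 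Hence it suffices to produce a bijection between these distinguished $\Sdiffst$ and the $\oc$-coalgebra structures $\Sdiffca$ on $\Into$ satisfying \Daxcalocal{} and \Daxcalin.

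Next I would set up the two maps at the coalgebra level. In one direction, Theorem~\ref{th:sdiffca-to-sdiffst} sends a coalgebra $\Sdiffca$ obeying the coalgebra laws, \Daxcalocal{} and \Daxcalin, to the natural transformation $\Sdiffst_X=\Mont_{X,\Into}\Compl\Tensp{\Excl X}{\Sdiffca}$, and guarantees that it satisfies the five conditions above. In the other direction, the theorem of the preceding subsection sends a distinguished $\Sdiffst$ to $\Sdiffca=\Excl{\Leftu_\Into}\Compl\Sdiffst_\Sone\Compl\Tensp\Monz\Into\Compl\Inv{\Leftu_\Into}$ and guarantees that this is an $\oc$-coalgebra structure satisfying \Daxcalocal{} and \Daxcalin. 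Both maps are therefore well defined onto the correct sets; what remains is that they are mutually inverse.

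Then I would check the two round trips. The composite sending $\Sdiffst$ to its associated $\Sdiffca$ and back to $\Sdiffst$ is exactly the content of Lemma~\ref{lemma:sdiffca-to-sdiffst}, which states that the reconstruction $\Mont_{X,\Into}\Compl\Tensp{\Excl X}{\Sdiffca}$ recovers the original $\Sdiffst$. For the composite starting from a coalgebra $\Sdiffca$, I substitute $\Sdiffst_\Sone=\Mont_{\Sone,\Into}\Compl\Tensp{\Excl\Sone}{\Sdiffca}$ into the reconstruction formula and compute
\[
\Excl{\Leftu_\Into}\Compl\Mont_{\Sone,\Into}\Compl\Tensp{\Monz}{\Excl\Into}\Compl\Tensp\Sone{\Sdiffca}\Compl\Inv{\Leftu_\Into}
=\Leftu_{\Excl\Into}\Compl\Tensp\Sone{\Sdiffca}\Compl\Inv{\Leftu_\Into}
=\Sdiffca,
\]
using the left unitality law of the lax monoidal structure $(\Monz,\Mont)$ of $\Excl\_$ for the first equality and naturality of $\Leftu$ for the second. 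This is the only genuinely new calculation, and it is short; the substantive work (the equivalences of the five pairs of axioms) has already been carried out in the cited results, so the main obstacle is purely organizational, namely checking that domains and codomains match so that the two bijections compose.

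Finally I would derive the explicit formula. Unfolding $\Diffofst$ and using the triangle identities, one has $\Diffofst\psi_X=\Curlin{(\Excl{\Ftcounit_X}\Compl\psi_{\Scfun X})}$ for any natural $\psi$, since $\Scfun\Excl{\Ftcounit_X}\Compl\Scfun\psi_{\Scfun X}\Compl\Ftunit_{\Excl{\Scfun X}}$ is precisely the transpose across $\Tinto\Adj\Scfun$ of the morphism $\Excl{\Ftcounit_X}\Compl\psi_{\Scfun X}\in\cL(\Tinto\Excl{\Scfun X},\Excl X)$. Taking $\psi=\Sdiffst$ with $\Sdiffst_{\Scfun X}=\Mont_{\Limpl\Into X,\Into}\Compl\Tensp{\Excl{\Limpl\Into X}}{\Sdiffca}$ and $\Ftcounit_X=\Evlin$ then yields
\[
\Sdiff_X=\Curlin{\bigl(\Excl\Evlin\Compl\Mont_{\Limpl\Into X,\Into}\Compl\Tensp{\Exclp{\Limpl\Into X}}{\Sdiffca}\bigr)},
\]
which is exactly the asserted expression for $d$, completing the proof.
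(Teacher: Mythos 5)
Your proof is correct and follows essentially the paper's own (implicit) argument: the paper states this theorem as a summary of the section's results, namely Theorem~\ref{th:sdiffst-sdiff}, the unnamed theorem extracting $\Sdiffca$ from a canonical $\Sdiffst$, Theorem~\ref{th:sdiffca-to-sdiffst}, and Lemma~\ref{lemma:sdiffca-to-sdiffst}, and this assembly is exactly what you carry out. Your two added computations --- the round trip $\Sdiffca\mapsto\Sdiffst\mapsto\Sdiffca$ via the unit law of the lax monoidality $(\Monz,\Mont)$, and the identification $\Diffofst\psi_X=\Curlin{(\Excl{\Ftcounit_X}\Compl\psi_{\Scfun X})}$ giving the explicit formula for $\Sdiff_X$ --- are correct fill-ins of details the paper leaves implicit.
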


\begin{remark}
  This correspondence can certainly be made functorial, this is
  postponed to further work.
\end{remark}

\begin{theorem} %
  \label{th:lafont-differential}
  If $\cL$ is a Lafont resource category which is canonically summable
  then there is exactly one differential structure on the canonical
  summability structure of $\cL$.
\end{theorem}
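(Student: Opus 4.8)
The plan is to exploit the bijective correspondence established just above: differential structures on the canonical summability structure $(\Scfun,\Sproj0,\Sproj1,\Ssum)$ are in bijection with $\oc$-coalgebra structures $\Sdiffca\in\cL(\Into,\Excl\Into)$ on $\Into$ satisfying \Daxcalocal{} and \Daxcalin. It therefore suffices to show that, when $\cL$ is Lafont, there is exactly one such $\oc$-coalgebra structure. The whole argument rests on the isomorphism of categories $\Calgofcmon\colon\Cm\cL\to\Em\cL$ (with inverse $\Cmonofcalg$, both acting as the identity on morphisms) of Theorem~\ref{th:comon-coalg-lafont}, together with the canonical cocommutative comonoid structure $(\Into,\Proj0,\Scmont)$ on $\Into$ exhibited in Theorem~\ref{th:can-comonoid-Into}.

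For existence, I would take $\Sdiffca=\delta_\Into$, the unique morphism associated by Lemma~\ref{lemma:comon-coalg} with the commutative comonoid $(\Into,\Proj0,\Scmont)$; that lemma already guarantees that $(\Into,\delta_\Into)$ is a $\oc$-coalgebra. The two diagrams of \Daxcalin{} are then exactly the two commutations of Theorem~\ref{th:lafont-weak-contr-coalg-morph} instantiated at $C=\Into$ with $\Comonw C=\Proj0$ and $\Comonc C=\Scmont$, so they hold automatically. For \Daxcalocal{} I would check that $\Win0$ is a comonoid morphism from the terminal comonoid $\Sone=(\Sone,\Id_\Sone,\Inv{\Leftu_\Sone})$ to $(\Into,\Proj0,\Scmont)$, i.e.\ that $\Proj0\Compl\Win0=\Id_\Sone$ and $\Scmont\Compl\Win0=\Tensp{\Win0}{\Win0}\Compl\Inv{\Rightu_\Sone}$; the first is immediate and the second is precisely the defining equation of $\Scmont$. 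Since $\Calgofcmon$ is the identity on morphisms and $\delta_\Sone=\Monz$ by Lemma~\ref{lemma:lafont-term-prod-coalg}, it follows that $\Win0$ is a coalgebra morphism $(\Sone,\Monz)\to(\Into,\delta_\Into)$, which is exactly the diagram \Daxcalocal.

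For uniqueness, I would observe that the two conditions of \Daxcalin{} alone already pin down $\Sdiffca$. They say precisely that $\Proj0\in\Em\cL((\Into,\Sdiffca),\Sone)$ and that $\Scmont\in\Em\cL((\Into,\Sdiffca),(\Into,\Sdiffca)\otimes(\Into,\Sdiffca))$. The counit laws of the comonoid $(\Into,\Proj0,\Scmont)$ from Theorem~\ref{th:can-comonoid-Into} are exactly the two unit diagrams required by Proposition~\ref{prop:coalg-comon} applied to $P=(\Into,\Sdiffca)$, $u=\Proj0$, $d=\Scmont$. That proposition then forces $\Proj0=\Weak\Into\Compl\Sdiffca$ and $\Scmont=\Tensp{\Der\Into}{\Der\Into}\Compl\Contr\Into\Compl\Sdiffca$, which is to say $\Cmonofcalg(\Into,\Sdiffca)=(\Into,\Proj0,\Scmont)$. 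As $\Cmonofcalg(\Into,\delta_\Into)=(\Into,\Proj0,\Scmont)$ as well, and $\Cmonofcalg$ is a bijection on objects by Theorem~\ref{th:comon-coalg-lafont}, we conclude $\Sdiffca=\delta_\Into$. Thus there is precisely one admissible coalgebra structure, hence precisely one differential structure.

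I do not expect a genuine obstacle: there is no hard computation, everything reducing to the Lafont isomorphism $\Cm\cL\cong\Em\cL$ and to $\Into$ carrying its canonical comonoid structure. The only point demanding care is the bookkeeping that matches the \Daxcalin{} diagrams, on the one hand, with the \emph{conclusion} of Theorem~\ref{th:lafont-weak-contr-coalg-morph} (for existence) and, on the other hand, with the \emph{hypotheses} $u\in\Em\cL(P,\Sone)$ and $d\in\Em\cL(P,\Tens PP)$ of Proposition~\ref{prop:coalg-comon} (for uniqueness), together with verifying that the counit laws of $(\Into,\Proj0,\Scmont)$ coincide with the unit diagrams of that proposition. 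Notably, \Daxcalocal{} plays no role in uniqueness; it is only checked to hold for the unique candidate $\delta_\Into$.
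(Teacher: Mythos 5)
Your proposal is correct and takes essentially the same route as the paper: existence comes from the unique morphism $\delta_\Into$ of Lemma~\ref{lemma:comon-coalg} applied to the comonoid $(\Into,\Proj0,\Scmont)$, with \Daxcalin{} given by Theorem~\ref{th:lafont-weak-contr-coalg-morph} and \Daxcalocal{} by the fact that $\Win0\in\Cm\cL(\Sone,(\Into,\Proj0,\Scmont))$ combined with the isomorphism $\Cm\cL\cong\Em\cL$ of Theorem~\ref{th:comon-coalg-lafont}. The only difference is that you spell out the uniqueness half explicitly via Proposition~\ref{prop:coalg-comon} (showing any admissible coalgebra structure satisfies the three characterizing diagrams, hence equals $\delta_\Into$), which is precisely the step the paper leaves implicit when it invokes the uniqueness clause of Lemma~\ref{lemma:comon-coalg}.
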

\begin{proof}
  Since $(\Into,\Proj0,\Scmont)$ is a commutative comonoid, we know by
  Lemma~\ref{lemma:comon-coalg} that there is exactly one morphism %
  $\Sdiffca\in\cL(\Into,\Excl\Into)$ such that the following diagrams
  commute
  \[
    \begin{tikzcd}
      \Into \ar[r,"\Sdiffca"] \ar[rd,swap,"\Id"]
      & \Excl{\Into}
      \ar[d,"\Der{\Into}"]
      \\
      & \Into
    \end{tikzcd}
    \Treesep
    \begin{tikzcd}
      \Into \ar[r,"\Sdiffca"] \ar[rd,swap,"\Proj0"] &
      \Excl{\Into} \ar[d,"\Weak\Into"]
      \\
      & \Sone
    \end{tikzcd}
    \Treesep
    \begin{tikzcd}
      \Into \ar[r,"\Sdiffca"]
      \ar[d,swap,"\Scmont"]
      &[1em]
      \Excl{\Into}
      \ar[d,"\Contr{\Into}"]
      \\
      \Tens{\Into}{\Into}
      \ar[r,"\Tens{\Sdiffca}{\Sdiffca}"]
      & \Tens{\Excl{\Into}}{\Excl{\Into}}
    \end{tikzcd}
  \]
  By Theorem~\ref{th:lafont-weak-contr-coalg-morph} %
  $\Sdiffca$ satisfies \Daxcalin{} and hence we are left with proving
  \Daxcalocal. This readily follows from the bijective correspondence
  of Theorem~\ref{th:comon-coalg-lafont} and from the fact that
  $\Win0\in\Cm\cL(\Sone,(\Into,\Proj0,\Scmont))$. %
  Indeed $\Proj0\Compl\Win0=\Id_\Sone$ and %
  $\Scmont\Compl\Win0=\Tens{\Win0}{\Win0}$.
\end{proof}

\section{The differential structure of coherence spaces} %
\label{sec:coh-diff-str} %
Equipped with the multiset exponential introduced in
Section~\ref{sec:coh-diff} it is well known that $\COH$ is a Lafont
resource category as observed initially by Van de Wiele (unpublished,
see~\cite{Mellies09}). Since $\COH$ is canonically summable, we
already know that it has a unique differential structure by
Theorem~\ref{th:lafont-differential}. We will show that we retrieve in
that way the differential structure outlined in
Section~\ref{sec:coh-diff}.

Remember that $\Into=\With\Sone\Sone$ so that %
$\Web\Into=\{0,1\}\times\{(0,1)\}$ with %
$\Coh\Into{(i,j)}{(i',j')}$ for each $i,j,i',j'\in\{0,1\}$.  The
comonoid structure of %
$\Into=\With\Sone\Sone$ is given by %
$\Proj0=\{(0,\Sonelem)\}\in\COH(\Into,\Sone)$ and %
$\Scmont=\{(0,(0,0)),(1,(1,0)),(1,(0,1))\}\in\COH(\Into,\Tens\Into\Into)$. The
$n$-ary comultiplication of this comonoid is %
$\Scmontn n\in\COH(\Into,\Tot\Into n)$ given by
\begin{align*}
  \Scmontn n
  &=\{(0,(0,\dots,0))\}\cup
    \{(1,(\overbrace{0,\dots,0}^{k-1},1,\overbrace{0,\dots,0}^{n-k}))
    \St k\in\{1,\dots,n\}\}
\end{align*}
The unique $\Sdiffca\in\COH(\Into,\Excl\Into)$ specified by
Theorem~\ref{th:lafont-differential} is given by
\begin{align*}
  \Sdiffca
  =\{(0,k\Mset 0)\St k\in\Nat\}\cup\{(1,k\Mset 0+\Mset 1)\St k\in\Nat\}
\end{align*}
and then the associated natural transformation %
$\Sdiff_E\in\COH(\Excl{\Scfun E},)$ is $\Curlin d$ where $d$ is the
following composition of morphisms:
\[
  \begin{tikzcd}
    \Tens{\Exclp{\Limpl\Into E}}{\Into}
    \ar[r,"\Tens{\Exclp{\Limpl\Into E}}{\Sdiffca}"]
    &[2em]
    \Tens{\Exclp{\Limpl\Into E}}{\Excl\Into}
    \ar[r,"\Mont_{\Limpl\Into E,\Into}"]
    &[0.6em]
    \Exclp{\Tens{\Limplp\Into E}{\Into}}
    \ar[r,"\Excl\Evlin"]
    &[-1em]
    \Excl E
  \end{tikzcd}\,.
\]
Since %
$\Mont_{E_0,E_1}\in\COH(\Tens{\Excl{E_0}}{\Excl{E_1}},\Exclp{\Tens{E_0}{E_1}})$
is given by
\begin{multline*}
  \Mont_{E_0,E_1}
  =\{((\Mset{\List a1n},\Mset{\List b1n}),\Mset{(a_1,b_1),\dots,(a_n,b_n)})
  \St\\
  \Mset{\List a1n}\in\Web{\Excl{E_0}}\text{ and }
  \Mset{\List b1n}\in\Web{\Excl{E_1}}\}
\end{multline*}
and since
\begin{align*}
  \Excl\Evlin
  =\{(\Mset{((i_1,a_1),i_1),\dots,((i_n,a_n),i_n)},\Mset{\List a1n})\St
  \Mset{\List a1n}\in\Web{\Excl E}\text{ and }\List i1n\in\{0,1\}\}
\end{align*}
we have
\begin{multline*}
  d=\{((\Mset{(0,a_1),\dots,(0,a_n)},0),\Mset{\List a1n})\St
     \Mset{\List a1n}+\Mset a\in\Web{\Excl E}\}\\
   \cup\{((\Mset{(0,a_1),\dots,(0,a_n),(1,a)},1),
   \Mset{\List a1n}+\Mset a)\St\\
     \Mset{\List a1n}+\Mset a\in\Web{\Excl E}
     \text{ and }a\notin\{\List a1n\}\}\,.
\end{multline*}
The proviso that $a\notin\{\List a1n\}$ arises from uniformity of the
exponential: we must have
\[
  \{(0,a_1),\dots,(0,a_n),(1,a)\}\in\COH(\Limpl\Into E)\,.
\]
Finally, upon
identifying $\Web{\Excl{\Limpl\Into E}}$ with
\begin{align*}
  \{(m_0,m_1)\in\Web{\Excl E}^2\St m_0+m_1\in\Web{\Excl E}
  \text{ and }\Supp{m_0}\cap\Supp{m_1}=\emptyset\}
\end{align*}
we get
\begin{multline*}
  \Sdiff_E=\{((m_0,\Emptymset),(0,m_0))\St m_0\in\Web{\Excl E}\}\\
  \cup\{((m_0,\Mset a),(1,m_0+\Mset a))
  \St m_0+\Mset a\in\Web{\Excl E}\text{ and }a\notin \Supp{m_0}\}
\end{multline*}
which is exactly the definition announced in
Equation~\Eqref{eq:def-sdiff-coh}. The fact that this is a natural
transformation satisfying all the commutations required to turn $\COH$
into a differential summable category results from
Theorem~\ref{eq:def-sdiff-coh} and Theorem~\ref{th:sdiffst-sdiff}.

\subsection{Differentiation in non-uniform coherence spaces}
\label{sec:nucs-diff}
In Remark~\ref{rk:diff-coh-uniform} we have pointed out that
the uniform definition of $\Excl E$ in coherence spaces makes our
differentials ``too thin'' in general although they
are non trivial and satisfy all the required rules of the differential
calculus.  We show briefly how this situation can be
remedied using non-uniform coherence spaces.

A non-uniform coherence space (NUCS) is a triple
$E=(\Web E,\Scoh{E}{}{},\Sincoh{E}{}{})$ where $\Web E$ is a set and
$\Scoh E{}{}$ and $\Sincoh E{}{}$ are two \emph{disjoint} binary
symmetric relations on $\Web E$ called \emph{strict coherence} and
\emph{strict incoherence}. The important point of this definition is
not what is written but what is not: contrarily to usual coherence
spaces \emph{we do not require} the complement of the union of these
two relations to be the diagonal: it can be any (of course symmetric)
binary relation on $\Web E$ that we call \emph{neutrality} and denote
as $\Neu E{}{}$ (warning: it needs not even be an equivalence
relation!). Then we define coherence as
$\mathord{\Coh E{}{}}=\mathord{\Scoh E{}{}}\cup\mathord{\Neu E{}{}}$
and incoherence
$\mathord{\Incoh E{}{}}=\mathord{\Sincoh E{}{}}\cup\mathord{\Neu
  E{}{}}$ and any pair of relations among these 5 (with suitable
relation between them such as
$\mathord{\Neu E{}{}}\subseteq\mathord{\Incoh E{}{}}$), apart from the
trivially complementary ones
$(\mathord{\Sincoh E{}{}},\mathord{\Coh E{}{}})$ and
$(\mathord{\Scoh E{}{}},\mathord{\Incoh E{}{}})$, are sufficient to
define such a structure.

Cliques are defined as usual:
$\Cl E=\{x\subseteq\Web E\St \forall a,a'\in x\ \Coh E a{a'}\}$. Then
$(\Cl E,\mathord\subseteq)$ is a cpo (a dI-domain actually) but now
there can be some $a\in\Web E$ such that $\Sincoh Eaa$, and hence
$\Eset a\notin\Cl E$ (we show below that this really happens). Given
NUCS $E$ and $F$ we define $\Limpl EF$ by
$\Web{\Limpl EF}=\Web E\times\Web F$ and:
$\Coh{\Limpl EF}{(a_0,b_0)}{(a_1,b_1)}$ if
$\Coh E{a_0}{a_1}\Implies(\Coh F{b_0}{b_1}\text{ and }\Neu
F{b_0}{b_1}\Implies\Neu E{a_0}{a_1})$ and
$\Neu{\Limpl EF}{(a_0,b_0)}{(a_1,b_1)}$ if $\Neu E{a_0}{a_1}$ and
$\Neu F{b_0}{b_1}$. Then we define a category $\NCOH$ by
$\NCOH(E,F)=\Cl{\Limpl EF}$, taking the diagonal relations as
identities and ordinary composition of relations as composition of
morphisms.

This is a cartesian SMCC with tensor product given by
$\Web{\Tens{E_0}{E_1}}=\Web{E_0}\times\Web{E_1}$ and
$\Coh{\Tens{E_0}{E_1}}{(a_{00},a_{01})}{(a_{10},a_{11})}$ if
$\Coh{E_j}{a_{0j}}{a_{1j}}$ for $j=0,1$, and
$\Neu{\Tens{E_0}{E_1}}{}{}$ is defined similarly; the unit is $\Sone$
with $\Web\Sone=\Eset\Sonelem$ and $\Neu\Sone\Sonelem\Sonelem$ %
(so that $\Orth\Sone=\Sone$ meaning that the model satisfies a strong
form of the MIX rule of LL). The object of linear morphisms from $E$
to $F$ is of course $\Limpl EF$ and $\NCOH$ is $\ast$-autonomous with
$\Sone$ as dualizing object. The dual $\Orth E$ is given by
$\Web{\Orth E}=\Web E$,
$\mathord{\Scoh{\Orth E}{}{}}=\mathord{\Sincoh E{}{}}$ and
$\mathord{\Sincoh{\Orth E}{}{}}=\mathord{\Scoh E{}{}}$. The cartesian
product $\Bwith_{i\in I}E_i$ of a family $(E_i)_{i\in I}$ of NUCS is
given by
$\Web{\Bwith_{i\in I}E_i}=\cup_{i\in I}\Eset i\times\Web{E_i}$ with
$\Neu{\Bwith_{i\in I}E_i}{(i_0,a_0)}{(i_1,a_1)}$ if $i_0=i_1=i$ and
$\Neu{E_i}{a_0}{a_1}$, and
$\Coh{\Bwith_{i\in I}E_i}{(i_0,a_0)}{(i_1,a_1)}$ if
$i_0=i_1=i\Implies\Coh{E_i}{a_0}{a_1}$. We do not give the definition
of the operations on morphisms as they are the most obvious ones (the
projections of the product are the relations
$\Proj i=\{((i,a),a)\St i\in I\text{ and }a\in\Web{E_i}\}$). Notice
that in the object $\Bool=\Plus\Sone\Sone=\Orth{(\With\One\One)}$, the
two elements $0,1$ of the web satisfy $\Sincoh\Bool 01$ so that
$\{0,1\}\notin\Cl{\Bool}$ which is expected in a model of
deterministic computations.

We come to the most interesting feature of this model, which is the
possibility of defining a \emph{non-uniform} exponential $\Excl E$; we
choose here the one of~\cite{Boudes11} which is the free exponential
(so that $\NCOH$ is a Lafont resource category).
One sets $\Web{\Excl E}=\Mfin{\Web E}$ (without any uniformity
restrictions), $\Coh{\Excl E}{m_0}{m_1}$ if
$\forall a_0\in\Supp{m_0},a_1\in\Supp{m_1}\ \Coh{E}{a_0}{a_1}$, and
$\Neu{\Excl E}{m_0}{m_1}$ if $\Coh{\Excl E}{m_0}{m_1}$ and
$m_j=\Mset{a_{j0},\dots,a_{jn}}$ (for $j=0,1$) with
$\forall i\in\Eset{1,\dots,n}\ \Neu E{a_{0i}}{a_{1i}}$ (in particular
$m_0$ and $m_1$ must have the same size). Observe that
$\Mset{0,1}\in\Web{\Excl\Bool}$ and that
$\Sincoh{\Excl\Bool}{\Mset{0,1}}{\Mset{0,1}}$.  The action of this
functor on morphisms is defined as in the relational model of LL: if
$s\in\NCOH(E,F)$ then
$\Excl s=\Eset{(\Mset{\List a1n},\Mset{\List b1n})\St n\in\Nat\text{
    and }\forall i\ (a_i,b_i)\in s}\in\NCOH(\Excl E,\Excl F)$.

The object $\Into=\With\Sone\Sone$ is characterized by %
$\Web\Into=\{0,1\}$ and $\Scoh\Into01$, %
$\Neu\Into ii$ for $i\in\Web\Into$. %
The injections $\Win0,\Win1\in\NCOH(\Sone,\Into)$ are given by %
$\Win i=\{(\Sonelem,i)\}$ and are clearly jointly epic.
Two cliques $x_0,x_1\in\Cl E$ are summable if there is
$x\in\Cl{\Limpl\Into E}$ such that $x_i=x\Compl\Win i$, that is, if %
$\Eset 0\times x_0\cup\Eset 1\times x_1\in\Cl{\Limpl\Into E}$ which means that
\begin{align*}
  \forall a_0\in x_0,\, a_1\in x_1\quad \Scoh E{a_0}{a_1}\,.
\end{align*}
This implies $x_0\cup x_1\in\Cl E$ but not $x_0\cap x_1=\emptyset$
since we can have $\Scoh Eaa$ in a non-uniform coherence space.
It follows that the condition \Saxwit{} holds: %
let $x_{ij}\in\Cl E$ for $i,j\in\Eset{0,1}$ and assume that %
$x_{i0},x_{i1}$ are summable for $i=0,1$ and that moreover %
$x_{00}\cup x_{01},x_{10}\cup x_{11}$ are summable. 
Given $a_{ij}\in x_{ij}$ for $i,j\in\Eset{0,1}$, the two first
assumptions imply that %
$\Scoh E{a_{i0}}{a_{i1}}$ for $i\in\Eset{0,1}$ and the second
condition implies that %
$\Scoh E{a_{0i}}{a_{1j}}$ for $i,j\in\Eset{0,1}$. %
Finally, if $(i,j)\not=(i',j')\in\Eset{0,1}^2$ then %
$\Scoh E{a_{ij}}{a_{i'j'}}$ which implies that %
$\Eset 0\times x_{00}\cup\Eset 1\times x_{01}$ and %
$\Eset 0\times x_{10}\cup\Eset 1\times x_{11}$ are summable in %
$\Limpl\Into E$.

The comonoid structure $(\Proj0,\Scmont)$ is exactly the same as in
$\COH$ and therefore the morphism %
$\Sdiffca\in\NCOH(\Into,\Excl\Into)$ (whose existence and properties
result from the fact that $\NCOH$ is Lafont) is defined exactly as in
$\COH$:
\begin{align*}
  \Sdiffca
  =\{(0,k\Mset 0)\St k\in\Nat\}\cup\{(1,k\Mset 0+\Mset 1)\St k\in\Nat\}
\end{align*}

The functor $\Scfun$ can be described as follows:
$\Web{\Scfun E}=\{0,1\}\times\Web E$ and
$\Neu{\Scfun E}{(i_0,a_0)}{(i_1,a_1)}$ if $i_0=i_1$ and
$\Neu E{a_0}{a_1}$, and $\Coh{\Scfun E}{(i_0,a_0)}{(i_1,a_1)}$ if
($\Coh E{a_0}{a_1}$ and $\Neu E{a_0}{a_1}\Implies i_0=i_1$). Given
$s\in\cL(E,F)$ we have
$\Scfun s=\{((i,a),(i,b))\St i\in\Eset{0,1}\text{ and }(a,b)\in s\}$.
By the same computation as in $\COH$ (but now without the uniformity
restrictions of $\COH$) we get that
\begin{multline*}
  \Sdiff_E=\{((m_0,\Emptymset),(0,m_0))\St m_0\in\Mfin{\Web E}\}
  \cup\{((m_0,\Mset a),(1,m_0+\Mset a)) \St m_0+\Mset a\in\Mfin{\Web E}\}
\end{multline*}
which is in $\NCOH(\Excl{\Scfun E},\Scfun{\Excl E})$ and satisfies all
the required properties by Theorem~\ref{eq:def-sdiff-coh} and
Theorem~\ref{th:sdiffst-sdiff}.

\begin{remark}
  This means that the issue with Girard's uniform coherence spaces
  with respect to differentiation that we explained in
  Remarks~\ref{rk:diff-coh-multi} and~\ref{rk:diff-coh-uniform}
  disappears in the non-uniform coherence space setting, at least if
  we use Boude's exponentials
  % \footnote{Concerning the Bucciarelli-Ehrhard exponential, we do
  % not know, we should simply check whether the $\Sdiffst_X$
  % definition given here is also a clique for this exponential.},
  so that any morphism will coincide with its Taylor expansion in this
  model. This non-uniform model preserves the main feature of
  coherence spaces, namely that in the type $\Bool$ for instance, the
  only possible values are $\mathsf{true}$ and $\mathsf{false}$ (and
  not the non-deterministic superposition of these values) as we have
  seen above with the description of $\Plus\Sone\Sone$.
\end{remark}

\begin{remark}
  The category $\REL$ of sets of relation, being a model of
  differential linear logic, is a special case of summable
  differential resource category. That model is actually \emph{exactly
    the same} as $\NUCS$ where objects are stripped from their
  coherence structure: the logical constructs in $\REL$ coincide with
  the constructs we perform on the webs of the objects of $\NUCS$. For
  instance, given a set $X$, the object $\Excl X$ in $\REL$ is simply
  $\Mfin X$. And similarly for the operation on morphisms: as
  constructions on relations, they are exactly the same as in
  $\NUCS$. This identification extends even to $\Sdiff_X$. So one of
  the outcomes of this paper is the fact that the constructions of
  differential linear logic in $\REL$ are compatible with the
  coherence structure of $\NUCS$, \emph{if we are careful enough with
    morphism addition}. This is all the point of our categorical
  axiomatization to explain what this carefulness means.
\end{remark}

\section{Summability in a SMCC}\label{sec:SMCC-summability}
Assume now that $\cL$ is a summable resource category which is closed
with respect to its monoidal product $\ITens$, so that $\Kl\cL$ is
cartesian closed. We use $\Limpl XY$ for the internal hom object and
$\Evlin\in\cL(\Tens{(\Limpl XY)}{X},Y)$ for the evaluation
morphism. If $f\in\cL(\Tens ZX,Y)$ we use $\Curlin f$ for its
transpose $\in\cL(Z,\Limpl XY)$.

We can define a natural morphism
$\Sstrc=\Curlin{((\Sfun\Evlin)\Compl\Sstrs_{\Limpl XY,X}{})}
\in\cL(\Sfun(\Limpl XY),\Limpl X{\Sfun Y})$ where
$\Evlin\in\cL(\Tens{\Limplp XY}X,Y)$.

\begin{lemma}\label{lemma:Sstrc-sproj-ssum}
  We have $\Limplp X{\Sproj i}\Compl\Sstrc=\Sproj i$ for $i=0,1$ and
  $\Limplp X{\Ssum_Y}\Compl\Sstrc=\Ssum_{\Limpl XY}$.
\end{lemma}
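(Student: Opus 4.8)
The plan is to prove each of the three equations by exploiting the definition of $\Sstrc$ as a Curry transpose together with the jointly-monic property of $\Sproj0,\Sproj1$ and the defining relations of $\Sstrs$. Recall that $\Sstrc=\Curlin{((\Sfun\Evlin)\Compl\Sstrs_{\Limpl XY,X})}$, so by the universal property of currying, $\Sstrc$ is the unique morphism satisfying $\Evlin\Compl\Tensp{\Sstrc}{X}=(\Sfun\Evlin)\Compl\Sstrs_{\Limpl XY,X}$, where the evaluation on the left is now at type $\Tens{\Limplp X{\Sfun Y}}{X}\to\Sfun Y$. The key observation is that the three claimed equations live in $\cL(\Sfun(\Limpl XY),\Limpl XY)$ (for the projections) and in $\cL(\Sfun(\Limpl XY),\Limpl XY)$ (for $\Ssum$), and since $\Limpl X{\Dummy}$ is a functor, both sides can be Curry-transposed and compared after composing with $\Evlin$.

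First I would treat the projection equations. To show $\Limplp X{\Sproj i}\Compl\Sstrc=\Sproj i$, I would post-compose both sides with $\Tens{(\Dummy)}{X}$ and then with $\Evlin$, reducing the claim to an equality in $\cL(\Tens{\Sfun(\Limpl XY)}{X},Y)$. On the left, using the naturality of $\Evlin$ with respect to $\Limpl X{\Sproj i}$ and the defining equation of $\Sstrc$, I expect to obtain $\Sproj i\Compl(\Sfun\Evlin)\Compl\Sstrs_{\Limpl XY,X}$, which by the characterization of $\Sstrs$ as $\Stuple{\Tens{\Sproj0}{X},\Tens{\Sproj1}{X}}$ and the relation $\Sproj i\Compl\Stuple{g_0,g_1}=g_i$ collapses to $\Evlin\Compl\Tensp{\Sproj i}{X}$. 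On the right, $\Sproj i$ (the target being $\Limpl XY$) Curry-transposes to exactly $\Evlin\Compl\Tensp{\Sproj i}{X}$ as well, using naturality of $\Sproj i$ against $\Evlin$. So both transposes agree, and since Curry transposition is a bijection the original equation follows.

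For the sum equation $\Limplp X{\Ssum_Y}\Compl\Sstrc=\Ssum_{\Limpl XY}$ I would run the identical argument, replacing the projection $\Sproj i$ by $\Ssum$ throughout and using that $\Ssum\Compl\Stuple{\Tens{\Sproj0}{X},\Tens{\Sproj1}{X}}$ equals $\Tens{\Sproj0}{X}+\Tens{\Sproj1}{X}$, which by Lemma~\ref{lemma:Sstr-sum} (or directly by \Saxdist{} and $\Sproj0+\Sproj1=\Ssum$) equals $\Tens{\Ssum_{\Limpl XY}}{X}$ after identifying $\Sstrs$ with the symmetric analogue of $\Sstr$. Again the right-hand side $\Ssum_{\Limpl XY}$ transposes to $\Evlin\Compl\Tensp{\Ssum}{X}$ by naturality of $\Ssum$, matching the left.

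The main obstacle I anticipate is purely bookkeeping: keeping the symmetry isomorphism hidden inside $\Sstrs=\Sstr\Compl\Sym$ straight while chasing $\Evlin$ through the tensor, and making sure the naturality square for $\Evlin$ is applied with the correct variance (since $\Evlin$ is natural in $Y$ but dinatural in $X$, only the $Y$-naturality is needed here, which is the harmless direction). No deep structural input beyond \Saxdist{} and the jointly-monic property is required; the content is that $\Sstrc$ faithfully transports the summability structure through the internal hom, which is forced by the adjunction. I would write the projection case in full and remark that the sum case is strictly analogous.
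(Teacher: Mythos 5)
Your proof is correct and follows essentially the same route as the paper: the paper's own (very terse) argument invokes exactly the two facts you rely on, namely $\Sproj i\Compl\Sstrs=\Tens{\Sproj i}{X}$ for the projection equations and Lemma~\ref{lemma:Sstr-sum} for the sum equation. Your write-up merely makes explicit the adjunction/naturality bookkeeping (uncurrying, naturality of $\Evlin$ and of $\Sproj i$, $\Ssum$) that the paper leaves implicit.
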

\begin{proof}
  The first two equations come from the fact that
  $\Sproj i\Compl\Sstrs=\Tens{\Sproj i}{X}$. The last one results from
  Lemma~\ref{lemma:Sstr-sum}.
\end{proof}

Then we introduce a further axiom, required in the case of an SMCC. Its
intuitive meaning is that two morphisms $f_0,f_1$ are summable if they
map any element to a pair of summable elements, and that their sum is
computed pointwise.

\begin{Axicond}{\Saxfun}
  The morphism $\Sstrc$ is an iso.
\end{Axicond}

\begin{lemma}
  If \Saxfun{} holds then $f_0,f_1\in\cL(\Tens ZX,Y)$ are summable iff
  $\Curlin{f_0}$ and $\Curlin{f_1}$ are summable. Moreover when this
  property holds we have
  $\Curlin{(f_0+f_1)}=\Curlin{f_0}+\Curlin{f_1}$. 
\end{lemma}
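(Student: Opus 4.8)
The plan is to transport witnesses of summability across the Curry bijection $\cL(\Tens ZX,Y)\cong\cL(Z,\Limpl XY)$ by means of the morphism $\Sstrc$, which \Saxfun{} makes invertible. Two closed-structure identities will be used repeatedly: for $u\in\cL(Y,Y')$ and $g\in\cL(\Tens ZX,Y)$ one has $\Limplp Xu\Compl\Curlin g=\Curlin{(u\Compl g)}$ (both sides uncurry to $u\Compl g$), and $\Evlin\Compl\Tensp{\Curlin g}X=g$ by definition of the transpose. These will be combined with Lemma~\ref{lemma:Sstrc-sproj-ssum}, stating $\Limplp X{\Sproj i}\Compl\Sstrc=\Sproj i$ and $\Limplp X{\Ssum_Y}\Compl\Sstrc=\Ssum_{\Limpl XY}$.

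First I would prove that summability of $\Curlin{f_0},\Curlin{f_1}$ entails summability of $f_0,f_1$; this direction needs only the existence of $\Sstrc$. Let $h=\Stuple{\Curlin{f_0},\Curlin{f_1}}\in\cL(Z,\Sfun\Limplp XY)$, so that $\Sproj i\Compl h=\Curlin{f_i}$, and set $g=\Evlin\Compl\Tensp{\Sstrc\Compl h}X\in\cL(\Tens ZX,\Sfun Y)$. Since $\Sproj i\Compl\Evlin=\Evlin\Compl\Tensp{\Limplp X{\Sproj i}}X$, Lemma~\ref{lemma:Sstrc-sproj-ssum} gives $\Sproj i\Compl g=\Evlin\Compl\Tensp{\Limplp X{\Sproj i}\Compl\Sstrc\Compl h}X=\Evlin\Compl\Tensp{\Sproj i\Compl h}X=\Evlin\Compl\Tensp{\Curlin{f_i}}X=f_i$. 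By joint monicity of $\Sproj0,\Sproj1$ this forces $g=\Stuple{f_0,f_1}$, so $f_0$ and $f_1$ are summable.

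For the converse I would use the invertibility of $\Sstrc$, which is exactly where \Saxfun{} is needed. Assume $g=\Stuple{f_0,f_1}\in\cL(\Tens ZX,\Sfun Y)$ witnesses summability of $f_0,f_1$, and put $h=\Invp{\Sstrc}\Compl\Curlin g\in\cL(Z,\Sfun\Limplp XY)$. From Lemma~\ref{lemma:Sstrc-sproj-ssum} one has $\Sproj i\Compl\Invp{\Sstrc}=\Limplp X{\Sproj i}$, hence $\Sproj i\Compl h=\Limplp X{\Sproj i}\Compl\Curlin g=\Curlin{(\Sproj i\Compl g)}=\Curlin{f_i}$, so $h=\Stuple{\Curlin{f_0},\Curlin{f_1}}$ and $\Curlin{f_0},\Curlin{f_1}$ are summable.

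Finally, for the equation on sums I would reuse the witness $h$ from the converse together with the second half of Lemma~\ref{lemma:Sstrc-sproj-ssum}, namely $\Ssum_{\Limpl XY}\Compl\Invp{\Sstrc}=\Limplp X{\Ssum_Y}$. This yields $\Curlin{f_0}+\Curlin{f_1}=\Ssum_{\Limpl XY}\Compl h=\Limplp X{\Ssum_Y}\Compl\Curlin g=\Curlin{(\Ssum_Y\Compl g)}=\Curlin{(f_0+f_1)}$. The argument is almost entirely bookkeeping of the adjunction identities; the one substantive ingredient is the invertibility of $\Sstrc$ in the converse direction, which is what allows the pointwise data of a summable pair of functions to be packaged into a single witness living in $\Sfun\Limplp XY$, and this is the only place where \Saxfun{} is genuinely used.
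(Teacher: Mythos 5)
Your proof is correct and follows essentially the same route as the paper's: both directions are handled by transporting witnesses through $\Sstrc$ (using only its existence for the direction from summability of $\Curlin{f_0},\Curlin{f_1}$ to that of $f_0,f_1$, and its invertibility — i.e.\ \Saxfun{} — for the converse), and the sum equation is obtained from Lemma~\ref{lemma:Sstrc-sproj-ssum} exactly as in the paper. The only differences are cosmetic: you present the two implications in the opposite order and spell out the adjunction identities more explicitly.
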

\begin{proof}
  Assume that $f_0,f_1$ are summable so that we have the witness
  $\Stuple{f_0,f_1}\in\cL(\Tens ZX,\Sfun Y)$ and hence
  $\Curlin\Stuple{f_0,f_1}\in\cL(Z,\Limpl X{\Sfun Y})$, so let
  $h=\Inv{(\Sstrc)}\Compl\Curlin\Stuple{f_0,f_1}\in\cL(Z,\Sfun{\Limplp
    XY})$. By Lemma~\ref{lemma:Sstrc-sproj-ssum} we have
  $\Sproj i\Compl h=\Limplp\Into{\Sproj
    i}\Compl\Curlin\Stuple{f_0,f_1}=\Curlin{f_i}$ for
  $i=0,1$. Conversely if $\Curlin{f_0},\Curlin{f_1}$ are summable we
  have the witness
  $\Stuple{\Curlin{f_0},\Curlin{f_1}}\in\cL(Z,\Sfun\Limplp XY)$ and
  hence
  $\Sstrc\Compl\Stuple{\Curlin{f_0},\Curlin{f_1}}\in\cL(Z,\Limpl
  X{\Sfun Y})$ so that
  $g=\Evlin\Tensp{(\Sstrc\Compl\Stuple{\Curlin{f_0},\Curlin{f_1}})}{X}
  \in\cL(\Tens ZX,Y)$. Then by naturality of $\Evlin$ and by
  Lemma~\ref{lemma:Sstrc-sproj-ssum} we get $\Sproj i\Compl g=f_i$ for
  $i=0,1$ and hence $f_0,f_1$ are summable.

  Assume that these equivalent properties hold so that
  $\Stuple{\Curlin{f_0},\Curlin{f_1}}
  =\Inv{(\Sstrc)}\Compl\Curlin{\Stuple{f_0,f_1}}$. So
  $\Curlin{f_0}+\Curlin{f_1}=\Ssum_{\Limpl
    XY}\Stuple{\Curlin{f_0},\Curlin{f_1}}
  =\Limplp\Into{\Ssum_Y}\Compl\Curlin{\Stuple{f_0,f_1}}
  =\Curlin{(\Ssum_Y\Compl\Stuple{f_0,f_1})}=\Curlin{(f_0+f_1)}$.
\end{proof}

\begin{theorem}
  If $\cL$ is canonically summable then the axiom \Saxfun{} holds.
\end{theorem}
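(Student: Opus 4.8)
The plan is to use canonical summability to turn both ends of $\Sstrc$ into iterated internal hom objects, to recognise $\Sstrc$ as the canonical ``swap of exponents'' isomorphism of the SMCC, and then to check invertibility by exhibiting the transpose of the reversed composite and using joint monicity of $\Sproj0,\Sproj1$.

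First I would note that, $\cL$ being canonically summable, $\Sfun=\Scfun=\Limplp\Into{\_}$ and the strengths $\Sstr,\Sstrs$ are the canonical ones of Section~\ref{sec:strength-can}. Hence $\Sfun(\Limpl XY)=\Limplp\Into{(\Limpl XY)}$ and $\Limpl X{\Sfun Y}=\Limpl X{(\Limplp\Into Y)}$ are two iterated internal homs differing only in the order of the exponents $\Into$ and $X$; by the standard iso $\Limpl A{(\Limpl BC)}\Isom\Limplp{\Tens AB}C$ together with the symmetry $\Tens\Into X\Isom\Tens X\Into$ they are canonically isomorphic, and the claim is that $\Sstrc$ is exactly this isomorphism.

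Second, I would define a candidate inverse $\psi\in\cL(\Limpl X{\Sfun Y},\Sfun(\Limpl XY))$, without any reference to summability, as the double Curry transpose of the composite that reassociates $\Tens{(\Tens{(\Limpl X{\Sfun Y})}{\Into})}{X}$ into $\Tens{(\Tens{(\Limpl X{\Sfun Y})}{X})}{\Into}$ by symmetry, then evaluates the inner $\Limpl X{\_}$ to reach $\Tens{\Sfun Y}{\Into}$, and finally applies the counit $\Evlin$ of $\Tinto\Adj\Scfun$. Using Lemma~\ref{lemma:into-point-tnat} with $\phi=\Win i$ (so that $\Sproj i=\Scfunnt{\Win i}$), a direct computation then gives $\Sproj i\Compl\psi=\Limplp X{\Sproj i}$ for $i=0,1$: evaluating the $\Into$-slot of $\psi$ at $\Win i$ selects the $i$-th projection.

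Third, I would verify the two composites. For $\psi\Compl\Sstrc=\Id_{\Sfun(\Limpl XY)}$, post-composing with the jointly monic $\Sproj i$ gives $\Sproj i\Compl\psi\Compl\Sstrc=\Limplp X{\Sproj i}\Compl\Sstrc=\Sproj i$ by Lemma~\ref{lemma:Sstrc-sproj-ssum}, so $\psi\Compl\Sstrc$ and $\Id$ agree after both projections and are therefore equal. For $\Sstrc\Compl\psi=\Id_{\Limpl X{\Sfun Y}}$, I would transpose along $\Tens\_X\Adj\Limpl X{\_}$ and again post-compose with $\Sproj i$: the defining equation $\Evlin\Compl\Tensp\Sstrc X=(\Sfun\Evlin)\Compl\Sstrs_{\Limpl XY,X}$, the identity $\Sproj i\Compl\Sstrs_{\Limpl XY,X}=\Tens{\Sproj i}X$ and naturality of $\Sproj i$ yield $\Sproj i\Compl\Evlin\Compl\Tensp\Sstrc X=\Evlin\Compl\Tensp{\Sproj i}X$; combining this with $\Sproj i\Compl\psi=\Limplp X{\Sproj i}$ and naturality of the evaluation counit produces $\Sproj i\Compl\Evlin\Compl\Tensp{(\Sstrc\Compl\psi)}X=\Sproj i\Compl\Evlin$. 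Joint monicity of $\Sproj0,\Sproj1$ and uniqueness of transposes then force $\Sstrc\Compl\psi=\Id$, so $\Sstrc$ is an iso and \Saxfun{} holds. The main obstacle will be the coherence bookkeeping, in two places: matching the abstract strengths $\Sstr,\Sstrs$ of Section~\ref{sec:sum-moncat} with the canonical associativity/symmetry description of Section~\ref{sec:strength-can}, and establishing $\Sproj i\Compl\psi=\Limplp X{\Sproj i}$, both of which require carefully tracking the monoidal isos $\Rightu,\Assoc,\Sym$ and repeated use of Lemma~\ref{lemma:into-point-tnat}; once these are in place, the invertibility argument is purely formal, driven entirely by the joint monicity of the projections.
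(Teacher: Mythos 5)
Your proposal is correct, but it follows a different route than the paper. The paper's own proof is a one-liner: it invokes the identification, made in Section~\ref{sec:strength-can}, of the abstract strength with the canonical one induced by the comonoid structure of $\Into$, so that $\Sstrc$ is literally the double Curry transpose of
$\Evlin\Compl\Tensp{\Evlin}{X}\Compl\Tensp{\Id}{\Sym}$ on
$\Limplp{\Into}{\Limplp XY}\ITens X\ITens\Into$, i.e.\ the canonical exponent-swapping isomorphism of the SMCC, and stops there. You state this identification as motivation in your first paragraph but never actually prove it; instead you construct the candidate inverse $\psi$ (which is exactly the swap in the other direction, built from symmetry, inner evaluation and the counit of $\Tinto\Adj\Scfun$) and verify both composites $\psi\Compl\Sstrc=\Id$ and $\Sstrc\Compl\psi=\Id$ by post-composing with the jointly monic pair $\Sproj0,\Sproj1$, using Lemma~\ref{lemma:Sstrc-sproj-ssum}, Lemma~\ref{lemma:into-point-tnat}, the identity $\Sproj i\Compl\Sstrs=\Tens{\Sproj i}{X}$, and naturality of $\Evlin$. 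Both arguments are sound; the trade-off is clear. The paper's proof is shorter but leans entirely on the prior (and itself nontrivial) matching of the general strength of Section~\ref{sec:sum-moncat} with the canonical one of Section~\ref{sec:strength-can}; your proof never needs that matching, since the joint-monicity trick reduces invertibility to projection equations already available in the general summable setting — at the price of the monoidal coherence bookkeeping you correctly identify as the main burden, and of the computation $\Sproj i\Compl\psi=\Limplp X{\Sproj i}$, which is the one genuinely new verification your argument requires.
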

\begin{proof}
  In this case, we know from Section~\ref{sec:strength-can} that $\Sstrc$ is the double transpose of the following morphism of $\cL$
  \[
    \begin{tikzcd}
      \Limplp{\Into}{\Limplp XY}\ITens X\ITens\Into
      \ar[r,"\Tens\Id\Sym"]
      &\Limplp{\Into}{\Limplp XY}\ITens\Into\ITens X\ar[r,"\Evlin\ITens X"]
      &\Limplp XY\ITens X\ar[r,"\Evlin"]
      & Y
    \end{tikzcd}
  \]
  and therefore is an iso.
\end{proof}

We know that $\Kl\cL$ is a cartesian closed category, with internal
hom-object $(\Simpl XY,\Ev)$ (with $\Simpl XY=\Limplp{\Excl X}Y$ and
$\Ev$ defined using $\Evlin$). Then if $\cL$ is a differential
summable resource category which is closed wrt.~$\ITens$ and satisfies
\Saxfun{}, we have a canonical iso between $\Sdfun{(\Simpl XY)}$ and
$\Simpl X{\Sdfun Y}$ and two morphisms $f_0,f_1\in\Kl\cL(\With ZX,Y)$
are summable (in $\cL$) iff
$\Cur{f_0},\Cur{f_1}\in\Kl\cL(Z,\Simpl XY)$ are summable and then
$\Cur{f_0}+\Cur{f_1}=\Cur{(f_0+f_1)}$.

%%%%%%%%%%%%%%%%%%%%%%%%%%%%%%%%%%%%%%%%%%%%%%%%%%%%%%%%%%%%%%%%%%%%%%%%
% La version de la syntaxe dans le papier soumis à POPL est trop fausse
% pour être maintenue, il faut au moins donner des degrés d'enfouissement
% aux constructions basiques comme \Lproj etc.
%
% Comme je dis bien que c'est une tentative de syntaxe je ne crois pas
% que ce soit un motif pour retirer la soumission. On verra bien ce qu'ils
% diront si jamais le papier n'est pas rejeté d'emblée.
%%%%%%%%%%%%%%%%%%%%%%%%%%%%%%%%%%%%%%%%%%%%%%%%%%%%%%%%%%%%%%%%%%%%%%%%

\section{Sketch of a syntax}\label{sec:syntax}
We outline a tentative syntax corresponding the semantic framework of
this paper and strongly inspired by it. Our choice of notations is
fully coherent with the notations chosen to describe the model,
suggesting a straightforward denotational interpretation. This section
should only be considered as an introduction for another paper which
will introduce a differential version of PCF fully compatible with our
new semantics.

The types are
\[
  A,B,\dots
  \Bnfeq \Tdnat d
  \Bnfor \Timpl AB
\]
and then for any type $A$ we define $\Tdiff A$ as follows:
$\Tdiff{(\Tdnat d)}=\Tdnat{d+1}$ and
$\Tdiff{(\Timpl AB)}=(\Timpl A{\Tdiff B})$.  Terms are
given by
$$
M,N,\dots
\Bnfeq x
\Bnfor \Abst xAM
\Bnfor \App MN
\Bnfor \Ldiff M
\Bnfor \Lprojd idM
\Bnfor \Linjd idM
\Bnfor \Lsumd dM
\Bnfor \Lflipd dM
\Bnfor \Lzero
\Bnfor\Lplus MN
$$
where $i\in\Eset{0,1}$ and $d\in\Nat$. The integer $d$ represents the
depth (in terms of applications of the functor $\Sdfun$) where the
corresponding construct is applied.
\begin{figure}
  \begin{center}
    \begin{prooftree}
      \hypo{}
      \infer1{\Tseq{\Gamma,x:A}xA}
    \end{prooftree}
    \Treesep
    \begin{prooftree}
      \hypo{\Tseq{\Gamma,x:A}{M}{B}}
      \infer1{\Tseq{\Gamma}{\Abst xAM}{\Timpl AB}}
    \end{prooftree}
    \Treesep
    \begin{prooftree}
      \hypo{\Tseq\Gamma M{\Timpl AB}}
      \hypo{\Tseq\Gamma NA}
      \infer2{\Tseq\Gamma{\App MN}B}
    \end{prooftree}
  \end{center}
  \begin{center}
    \begin{prooftree}
      \infer0{\Tseq\Gamma\Lzero A}
    \end{prooftree}
    \Treesep
    \begin{prooftree}
      \hypo{\Tseq\Gamma{M}{\Tdiffm{d+1} A}}
      \hypo{i\in\Eset{0,1}}
      \infer2{\Tseq\Gamma{\Lprojd idM}{\Tdiffm dA}}
    \end{prooftree}
    \Treesep
    \begin{prooftree}
      \hypo{\Tseq\Gamma M{\Tdiffm dA}}
      \hypo{i\in\Eset{0,1}}
      \infer2{\Tseq\Gamma{\Linjd idM}{\Tdiffm{d+1}A}}
    \end{prooftree}
  \end{center}
  \begin{center}
    \begin{prooftree}
      \hypo{\Tseq\Gamma M{\Tdiffm{d+2}A}}
      \infer1{\Tseq\Gamma{\Lsumd dM}{\Tdiffm{d+1}A}}
    \end{prooftree}
    \Treesep
    \begin{prooftree}
      \hypo{\Tseq\Gamma M{\Tdiffm{d+2}A}}
      \infer1{\Tseq\Gamma{\Lflipd dM}{\Tdiffm{d+2}A}}
    \end{prooftree}
    \Treesep
    \begin{prooftree}
      \hypo{\Tseq\Gamma M{\Timpl AB}}
      \infer1{\Tseq\Gamma{\Ldiff M}{\Timpl{\Tdiff A}{\Tdiff B}}}
    \end{prooftree}
  % \Treesep
  % \begin{prooftree}
  %   \hypo{\Wseq\Gamma{M_1}A}
  %   \hypo{\Wseq\Gamma{M_2}A}
  %   \infer2{\Wseq\Gamma{\Lplus{M_1}{M_2}}A}
  % \end{prooftree}
  \end{center}
  \begin{center}
    \begin{prooftree}
      \hypo{\Tseq\Gamma M{\Tdiffm{d+1}A}}
      \infer1{\Tseq\Gamma{\Lplus{\Lprojd0dM}{\Lprojd1dM}}{\Tdiffm dA}}
    \end{prooftree}
    \Treesep
    \begin{prooftree}
      \hypo{\Tseq\Gamma{M_0+M_1}{\Tdiffm{d+1}A}}
      \infer1{\Tseq\Gamma{\Lprojd1d{M_0}+\Lprojd0d{M_1}}{\Tdiffm dA}}
    \end{prooftree}
    \Treesep
    \begin{prooftree}
      \hypo{\Tseq\Gamma MA}
      \hypo{M\Linred M'}
      \infer2{\Tseq\Gamma{M'}A}
    \end{prooftree}
  \end{center}
  \caption{Typing rules}
  \label{fig:typing-rules}
\end{figure}
% We also introduce a weak typing system allowing to prove typing
% judgments $\Wseq\Gamma MA$; this weak system has exactly the same
% rules, plus one rule which is also given in
% Figure~\ref{fig:typing-rules}.
%
Given a variable $x$ and a term $N$, we define a term $\Ldlet xNM$ as
follows.
\begin{align*}
  \Ldlet xNy
  &=
    \begin{cases}
      N & \text{if }y=x\\
      \Linj0 y & \text{otherwise}
    \end{cases}
        &\Ldlet xN{\Abst yBP}&=\Abst yB{\Ldlet xNP}\\
  \Ldlet xN{\Ldiff M}
  &=\Lflipd 0{\Ldiff{\Ldlet xNM}}
        &\Ldlet xN{\App PQ}
               &=\Lsumd 0{\App{\Ldiff{\Ldlet xNP}}{\Ldlet xNQ}}\\
  \Ldlet xN\Lzero
  &=\Lzero
        &\Ldlet xN{\Lplus{M_0}{M_1}}
               &=\Lplus{\Ldlet xN{M_0}}{\Ldlet xN{M_1}}\\
  \Ldlet xN{\Lprojd idM}
  &=\Lprojd i{d+1}{\Ldlet xNM}
        &\Ldlet xN{\Lsumd dM}
               &=\Lsumd{d+1}{\Ldlet xNM}\\
  \Ldlet xN{\Linjd idM}&=\Linjd i{d+1}{\Ldlet xNM}
        &\Ldlet xN{\Lflipd dM}
          &=\Lflipd{d+1}{\Ldlet xNM}\,.
\end{align*}
One checks easily that if $\Tseq{\Gamma,x:A}{M}{B}$ and
$\Tseq\Gamma N{\Tdiff A}$ then $\Tseq\Gamma{\Ldlet xNM}{\Tdiff B}$.
% We
% deal with a few cases. Assume that $M=\Ldiff P$ with
% $\Tseq{\Gamma,x:A}P{\Timpl CE}$ so that
% $B=\Timplp{\Tdiff C}{\Tdiff E}$. By inductive hypothesis we have
% $\Tseq\Gamma{\Ldlet xNP}{\Timpl C{\Tdiff E}}$ and hence
% $\Tseq\Gamma{\Ldiff{\Ldlet xNP}}{\Timpl{\Tdiff C}{\Tdiffm
%     2E}}=\Tdiff{\Timplp{\Tdiff C}{\Tdiff E}}$. Assume that $M=\App PQ$
% with $\Tseq{\Gamma,x:A}P{\Timpl CB}$ and $\Tseq{\Gamma,x:A}QC$. Then
% we have $\Tseq\Gamma{\Ldlet xNP}{\Timpl C{\Tdiff B}}$ and hence
% $\Tseq\Gamma{\Ldiff{\Ldlet xNP}}{\Timpl{\Tdiff C}{\Tdiffm2 B}}$. On
% the other hand we have $\Tseq\Gamma{\Ldlet xNQ}{\Tdiff C}$ and
% therefore
% $\Tseq{\Gamma}{\App{\Ldiff{\Ldlet xNP}}{\Ldlet xNQ}}{\Tdiffm2 B}$ so
% that finally $\Tseq\Gamma{\Ldlet xN{\App PQ}}{\Tdiff B}$ as required.
%
            %             The reduction rules are as follows.

\paragraph{Typing rules.} %
We provide some of the typing rules in
Figure~\ref{fig:typing-rules}. The most important feature of this
typing system is that \emph{it does not contain} the rule
\begin{center}
  \begin{prooftree}
    \hypo{\Tseq\Gamma{M_0}A}
    \hypo{\Tseq\Gamma{M_1}A}
    \infer2{\Tseq\Gamma{M_0+M_1}A}
  \end{prooftree}
\end{center}
typical of the original differential $\lambda$-calculus
of~\cite{EhrhardRegnier02}. So the most tricky rules have to do with
term addition: some such rules are required since sums are allowed in
the syntax, and actually occur during the reduction. We arrived to the
three rules mentioned in this figure, where $\Linred$ is a very simple
rewriting system expressing that sums commute with the linear
constructs of the syntax, for instance %
$\App{\Lplus{M_0}{M_1}}N\Linred\App{M_0}N+\App{M_0}N$.

It is then possible to prove that if %
$\Tseq{\Gamma,x:A}{M}{B}$ and %
$\Tseq\Gamma NA$ then $\Tseq\Gamma{\Subst MNx}B$, and if %
$\Tseq\Gamma P{\Tdiff A}$ then %
$\Tseq\Gamma{\Ldlet xPM}{\Tdiff B}$.

\paragraph{Reduction rules.} %
Our rewriting system contains the rules of the already mentioned
system %
$\Linred$ which expresses that most constructs are linear with respect
to $\Lzero$ and to sums of terms, for instance %
$\Ldiff{(\Lplus{M_0}{M_1})}\Linred\Lplus{\Ldiff{M_0}}{\Ldiff{M_1}}$ %
or $\App\Lzero M\Linred\Lzero$; the only non-linear construct is the
argument side of application. Here are some of the other reduction
rules:
\begin{align*}
  \App{\Abst xAM}{N}
  &\Red\Subst MNx
  &\Ldiffp{\Abst xAM}
  &\Red\Abst y{\Tdiff A}{\Ldlet xyM}\\
  \Lprojd id{\Abst xAM}
  &\Red\Abst xA{\Lprojd idM}
  &\Lprojd id{\App MN}
  &\Red \App{\Lprojd idM}N\\
  \Lprojd id{\Linjd jdM}
    &\Red
      \begin{cases}
        M & \text{if }i=j\\
        \Lzero & \text{otherwise}
      \end{cases}
  &\Lprojd 0d{\Lsumd dM}
  &\Red \Lprojd0d{\Lprojd0 dM}\\
  \Lprojd 1d{\Lsumd dM}
  &\Red \Lprojd 1d{\Lprojd 0dM}+\Lprojd 0d{\Lprojd 1dM}
\end{align*}
Some additional rules are also required, expressing in particular how
constructs applied at different depths commute.

Semantically, the definition of $\Ldlet xNM$ and the reduction rules
are justified by the fact that when $\cL$ is a differential summable
resource SMCC, the category $\Kl\cL$ is cartesian closed and the
functor $\Sdfun$ acts on it as a strong monad; of course the type
$\Tdiff A$ will be interpreted by $\Sdfun X$ where $X$ is the
% TYPO
% interpretation of $X$. The syntactic construct $\Ldiff M$ corresponds
interpretation of $A$. The syntactic construct $\Ldiff M$ corresponds
to the ``internalization'' $\Timplp XY\to\Timplp{\Sdfun X}{\Sdfun Y}$
made possible by the strength of $\Sdfun$ (see
Section~\ref{sec:Kleisli-derivatives}). The reduction rules concerning
$\Lprojs i$ are based on the basic properties of the functor $\Sfun$
and on the definition of the ``multiplication'' $\Sdfmult$ of the
monad $\Sdfun$.

% One could add more reduction rules corresponding to the categorical
% properties of the model, such as for instance
% $\App{\Ldiff M}{\Linj 0N}\Red\Linj 0{\App MN}$. This is probably not
% necessary if we are mainly interested in reducing terms of types $A$
% which are not of shape $\Tdiff B$: the reduction rules involving the
% construction $\Lproj iM$ seem sufficient for extracting the required
% information. Of course this requires a proof.

With these reduction rules, one can prove a form of subject reduction:
if %
$\Tseq\Gamma MA$ and $M\Red M'$ then $\Tseq\Gamma{M'}A$.

\begin{remark}
  The only rule introducing sums of terms is the reduction of
  $\Lprojd1d{\Lsumd dM}$. Since the terms $\Lsumd dM$ are created only
  by the definition of $\Ldlet xN{\App PQ}$ we retrieve the fact that,
  in the differential $\lambda$-calculus, sums are introduced by the
  definition of $\Diffp{\App st}xu$. Therefore the reduction of a term
  which contains no $\Lprojs i^d$'s will lead to as sum-free term. It
  is only when we will want to ``read'' some information about the
  differential content of this term that we will apply to it some
  $\Lprojs i^d$ which will possibly create sums when interacting with
  the $\Lsums$'s contained in the term and typically created by the
  reduction. These $\Lsums$'s are markers of the places where sums
  will be created. But we can try to be clever and create as few sums
  as necessary, whereas the differential $\lambda$-calculus creates
  all possible sums immediately in the course of the reduction. This
  possible parsimony in the creation of sums is very much in the
  spirit of the effectiveness considerations
  of~\cite{BrunelMazzaPagani20,MazzaPagani21}.
\end{remark}

\begin{remark}
  This is only the purely functional core of a differential
  programming language where the ground type $\Tnat$ is
  unspecified. We will extend the language with constants
  $\Num n:\Tnat$ for $n\in\Nat$, and with successor, predecessor, and
  conditional constructs turning it into a type of natural
  numbers. Since these primitives (as well as many others such as
  arbitrary recursive types) are easy to interpret in our coherent
  differential models (such as $\COH$, $\NCOH$ or PCS), they can be
  integrated smoothly in the language as well. Notice to finish that,
  contrarily to what happens in Automatic Differentiation, the
  operation $\Lplus{}{}$ on terms is not related to an operation of
  addition on a ground numerical data type: in AD, one of the the
  ground types is $\Real$ and the $+$ on terms extends the usual
  addition of real numbers pointwise. In AD, the derivatives are
  accordingly defined with respect to this structure of ground types
  whereas in our setting the derivatives are taken with respect to the
  summability structure.
\end{remark}

\subsection{Recursion}
One major feature of the models of the differential $\lambda$-calculus
that we can tackle with the new approach developed in this paper is
that they can have fixpoint operators in $\Kl\cL(\Timpl XX,X)$
implementing general recursion. This is often impossible in an
additive category (typically the categories of topological vector
spaces where the differential $\lambda$-calculus is usually
interpreted): given a closed term $M$ of type $A$, one can define a
term $\Abst xA{(x+M)}:\Timpl AA$ which cannot have a fixpoint in
general if addition is not idempotent.

In contrast consider for instance the category
$\PCOH$~\cite{DanosEhrhard08}. It is a differential canonically summable
resource SMCC where addition is not idempotent and where all least
fixpoint operators are available. And accordingly we can extend our
language with a construct $\Lfix M$ typed by $\Tseq\Gamma{\Lfix M}A$
if $\Tseq\Gamma M{\Timpl AA}$, with the usual reduction rule
$\Lfix M\Red\App M{\Lfix M}$ and so morphisms defined by such
fixpoints can also be differentiated. It turns out that we can easily
extend the definition of $\Ldlet xN{P}$ to the case where
$P=\Lfix M$ with $\Tseq{\Gamma,x:A}{M}{\Timpl BB}$ and
$\Tseq\Gamma N{\Tdiff A}$. The correct definition seems to be
\begin{align*}
  \Ldlet xN{\Lfix M}
  =\Lfix{(\Abst y{\Tdiff B}{\Lsumd 0{\App{\Ldiff{\Ldlet xNM}}{y}}})}\,.
\end{align*}

%%% Local Variables:
%%% mode: latex
%%% TeX-master: "cohdiff"
%%% End:

\section*{Conclusion}
This coherent setting for the formal differentiation of functional
programs should allow to integrate differentiation as an ordinary
construct in any functional programming language, without breaking the
determinism of its evaluation, contrarily to the original differential
$\lambda$-calculus, whose operational meaning was unclear essentially
for its non-determinism. Moreover the differential construct features
commutative monadic structures strongly suggesting to consider it as
an effect. The fact that this differentiation is compatible with
models such as (non uniform) coherence spaces which have nothing to do
with ``analytic'' differentiation suggests that it could also be used
for other operational goals, more internal to the scope of general
purpose functional languages, such as \emph{incremental computing}.

% This work was partly funded by the ANR project \emph{Probabilistic
%   Program Semantics} ANR-19-CE48-0014.

%%% Local Variables:
%%% mode: latex
%%% TeX-master: "cohdiff.tex"
%%% End:

%% Bibliography
\bibliography{newbiblio.bib}

%% Appendix
% \appendix
% \section{Appendix}
% \input{poplapp.tex}

\end{document}